\pgfplotsset{compat=1.15}
\tikzset{
  declare function={
    atan3(\a,\b)=ifthenelse(atan2(0,1)==90, atan2(\a,\b), atan2(\b,\a));},
  kinky cross radius/.initial=+.125cm,
  @kinky cross/.initial=+, kinky crosses/.is choice,
  kinky crosses/left/.style={@kinky cross=-},kinky crosses/right/.style={@kinky cross=+},
  kinky cross/.style args={(#1)--(#2)}{
    to path={
      let \p{@kc@}=($(\tikztotarget)-(\tikztostart)$),
          \n{@kc@}={atan3(\p{@kc@})+180} in
      -- ($(intersection of \tikztostart--{\tikztotarget} and #1--#2)!%
             \pgfkeysvalueof{/tikz/kinky cross radius}!(\tikztostart)$)
      arc [ radius     =\pgfkeysvalueof{/tikz/kinky cross radius},
            start angle=\n{@kc@},
            delta angle=\pgfkeysvalueof{/tikz/@kinky cross}180 ]
      -- (\tikztotarget)}}}
\newcommand{\longonly}[1]{#1} 
\newcommand{\isitlongonly}[1]{} 
\newcommand{\isitshortonly}[1]{} 
\newcommand{\shortonly}[1]{} 
\newcommand{\longonly}[1]{} 
\newcommand{\shortonly}[1]{#1} 
\newcommand{\isitlongonly}[1]{#1} 
\newcommand{\isitshortonly}[1]{} 
\newcommand{\isitlongonly}[1]{} 
\newcommand{\isitshortonly}[1]{#1} 
\newtheorem{remark}{Remark}
\newtheorem{thm}{Theorem}
\newtheorem{defn}{Definition}
\newtheorem{example}{Example}
\newtheorem{lemma}[thm]{Lemma}
\newtheorem{corollary}[thm]{Corollary}
\newtheorem{prop}[thm]{Proposition}
\newcommand{\subparagraph}[1]{\par {\em\underline{#1:}}}
\newtheorem*{Lemma}{Lemma}
\newtheorem*{Theorem}{Theorem}
\newtheorem*{Example}{Example 1 (continued)}
\newtheorem*{Eexample}{Example 3 (continued)}
\xdef\csname vec\x \endcsname{\noexpand\ensuremath{\noexpand\bm{\x}}}
\xdef\csname vec\x \endcsname{\noexpand\ensuremath{\noexpand\bm{\x}}}
\xdef\csname c\x \endcsname{\noexpand\ensuremath{\noexpand\mathcal{\x}}}
\xdef\csname bb\x \endcsname{\noexpand\ensuremath{\noexpand\mathbb{\x}}}
\newcommand{\defineqq}{\ensuremath{\stackrel{\textup{\tiny def}}{=}}}
\def\msg{\ensuremath{m}} 
\def\msgh{\ensuremath{\hat{\msg}}} 
\def\nummsg{\mbox{$N$}} 
\def\auth{\ensuremath{\mathrm{auth}}}
\def\numrand{\mbox{$L$}} 
\def\codeset{\Gamma} 
\newcommand{\fo}{\ensuremath{f_{\one}}\xspace}
\newcommand{\ft}{\ensuremath{f_{\two}}\xspace}
\newcommand{\Fo}{\ensuremath{F_{\one}}\xspace}
\newcommand{\Ft}{\ensuremath{F_{\two}}\xspace}
\newcommand{\rand}{\ensuremath{\textup{rand}}\xspace}
\newcommand{\mch}{\ensuremath{W_{Z|XY}}\xspace}
\newcommand{\indep}{\raisebox{0.05em}{\rotatebox[origin=c]{90}{$\models$}}\xspace}
\newcommand{\na}{\ensuremath{\text{hon}}\xspace}
\newcommand{\malone}{\ensuremath{\text{mal \one}}}
\newcommand{\maltwo}{\ensuremath{\text{mal \two}}}
\newcommand{\spoofable}{\text{spoofable}\xspace}
\newcommand{\spoofability}{\text{spoofability}\xspace}
\newcommand{\inb}[1]{\left\{#1\right\}}
\newcommand{\inp}[1]{\left(#1\right)}
\newcommand{\insq}[1]{\left[#1\right]}
\def\reliable{\ensuremath{\mathrm{reliable}}}
\def\MAC{\ensuremath{\mathrm{MAC}}}
\def\AVMAC{\ensuremath{\mathrm{AV-MAC}}}
\def\rand{\ensuremath{\mathrm{rand}}}
\newcommand{\mo}{\ensuremath{m_{\one}}\xspace}
\newcommand{\mt}{\ensuremath{m_{\two}}\xspace}
\newcommand\independent{\protect\mathpalette{\protect\independenT}{\perp}}
\def\independenT#1#2{\mathrel{\rlap{$#1#2$}\mkern2mu{#1#2}}}
\def\one{\ensuremath{\mathsf{A}}\xspace} 
\def\two{\ensuremath{\mathsf{B}}\xspace} 
\def\oneb{\ensuremath{\mathbf{a}}\xspace} 
\def\twob{\ensuremath{\mathbf{b}}\xspace} 
\newcommand{\bmac}{\ensuremath{\text{byzantine-MAC}}\xspace}
\newcommand{\bmacs}{\ensuremath{\text{byzantine-MACs}}\xspace}
\newcommand{\red}[1]{{\textcolor{red}{#1}}}
\newcommand{\blue}[1]{{\textcolor{black}{#1}}}
\newcommand{\olive}[1]{{\textcolor{black}{#1}}}
\newcommand{\todo}[1]{{\textcolor{blue!50!red}{{\bf To Do:} #1}}}
\begin{document}

\title{Byzantine Multiple Access Channels --- Part II: Communication With Adversary Identification\footnote{This work was presented in part at the 2021 IEEE International Symposium on Information Theory~\cite{NehaBDP21}.} 
\thanks{N. Sangwan and V. Prabhakaran acknowledge support of
the Department of Atomic Energy, Government of India,
under project no. RTI4001. N. Sangwan was
additionally supported by the Tata Consultancy Services (TCS)
foundation through the TCS Research Scholar Program. The work of M. Bakshi was supported in part by the Research Grants Council of the Hong Kong Special Administrative Region, China, under Grant GRF 14300617, and in part by the National Science Foundation under Grant No. CCF-2107526. The work
of B. K. Dey was supported by Bharti Centre for
Communication in IIT Bombay. V. Prabhakaran was additionally supported by the Science \& Engineering Research Board, India
through project MTR/2020/000308.}}

\author{Neha~Sangwan,~\IEEEmembership{Member,~IEEE,}
        Mayank~Bakshi,~\IEEEmembership{Member,~IEEE,}
        Bikash~Kumar~Dey,~\IEEEmembership{Member,~IEEE,}
        and~Vinod~M.~Prabhakaran,~\IEEEmembership{Member,~IEEE}%
\thanks{\blue{N.~Sangwan is with University of California San Diego, CA, USA. She was with the School of Technology
and Computer Science, Tata Institute of Fundamental Research, Mumbai
400~005, India.}}
\thanks{M.~Bakshi is with Arizona State University, Tempe, AZ, USA.}%
\thanks{B.~K.~Dey is with Indian Institute of Technology Bombay, Mumbai
400~076, India.}
\thanks{\blue{V.~M.~Prabhakaran is with the School of Technology
and Computer Science, Tata Institute of Fundamental Research, Mumbai
400~005, India.}}}


\maketitle

\begin{abstract}
We introduce the problem of determining the identity of a byzantine user (internal adversary) in a communication system. We consider a two-user discrete memoryless multiple access channel where either user may deviate from the prescribed behaviour. 
\blue{Since small deviations may be indistinguishable from the effects of channel noise, it might be overly restrictive to attempt to detect all deviations.}
In our formulation, we only require detecting deviations which impede the decoding of the non-deviating user's message. When neither user deviates, correct decoding is required. When one user deviates, the decoder must  either output a pair of messages of which the message of the {non-deviating} user is correct or identify the deviating user. {The users and the receiver do not share any randomness.} The results include a characterization of the set of channels where communication is feasible, and an inner and outer bound on the capacity region. {We also show that whenever the rate region has non-empty interior, the capacity region is same as the capacity region under randomized encoding, where each user shares independent randomness with the receiver. We also give an outer bound for this randomized coding capacity region.}
\end{abstract}


\section{Introduction}\label{sec:intro}
In many modern wireless communication applications (e.g., the Internet of Things), devices with varying levels of security are connected over a shared communication medium.  Compromised devices may allow an adversary to disrupt the communication of other devices. This motivates the question we study in this paper -- is it possible to design a communication system in which malicious actions by compromised devices can be detected so that such devices can be isolated or taken offline?

We consider a two-user Multiple Access Channel (MAC) where either user may deviate from the prescribed behaviour\footnote{\blue{In this work, we do not consider generalization to more than two users. The general case was studied for reliable communication (a different decoding guarantee than the present problem) over \bmac in Part I \cite{NehaBDP23}.}}. The deviating user (if any) is fixed for the entire duration of the transmission. We will refer to this channel as a \bmac in the rest of the paper. Owing to the noisy nature of the channel, it may be impossible or overly restrictive to attempt to detect all deviations. Indeed, it suffices to detect only such deviations which impede the correct decoding of the other user's message. We formulate a communication problem for the \bmac with the following decoding guarantee (Fig.~\ref{fig:authcomMAC}): the decoder outputs either a pair of messages {\em or} declares one of the users to be deviating. When both users are honest, the decoder must output the correct message pair with high probability (w.h.p.); when exactly one user deviates, w.h.p., the decoder must either correctly detect the deviating user or output a message pair of which the message of the other (honest) user is correct (see Section~\ref{sec:model}). No guarantees are made if both users deviate. Thus, we require that a deviating user cannot cause a decoding error for the other user without getting caught. We call this problem as that of {\em communication with adversary identification}.
The focus of this paper is on the case where the decoder does not share any randomness with either of the encoders. We consider the average probability of error criterion, that is, an honest user sends a codeword uniformly at random. The identity of this codeword is not known to the other user.
\begin{figure}[h]\centering
\begin{tikzpicture}[scale=0.4]
	\draw (2,0) rectangle ++(3,2) node[pos=.5]{ user \two};
	\draw (2,2.4) rectangle ++(3,2) node[pos=.5]{ user \one};
	\draw (10-2,0) rectangle ++(3.5,4.4) node[pos=.5]{ $W_{Z|XY}$};
	\draw (16.5-2,1.2) rectangle ++(3.5,2) node[pos=.5]{ Decoder};
	\draw[->] (1,1) node[anchor=east]{ $\msg_{\two}$} -- ++ (1,0) ;
	\draw[->] (5,1) -- node[above] { $\vecY$} ++ (3,0);
	
	\draw[->] (1,3.4) node[anchor=east]{ $\msg_{\one}$} -- ++ (1,0) ;
	\draw[->] (5,3.4) -- node[above] { $\vecX$} ++ (3,0);

	\draw[->] (13.5-2,2.2) -- node[above] { $\vecZ$} ++ (3,0);
	\draw[->] (19-1,2.2) -- ++ (1,0) node[anchor=west]{\begin{array}{c} \msgh_{\one},\msgh_{\two}\\ \text{or}\\ \oneb\text{ or }\twob\end{array}};
\end{tikzpicture}
\caption{{\em Communication with adversary identification} in a \bmac: Reliable decoding of both the messages is required when neither user deviates. When a user (say, user \two) deviates, the decoded message should either be correct for the honest user or the decoder should identify the deviating user (by outputting $\twob$) with high probability.}\label{fig:authcomMAC}
\end{figure}

For comparison, consider the stronger guarantee  of {\em reliable communication} where the decoder outputs a pair of messages such that the message(s) of non-deviating user(s) is correct w.h.p. \cite{NehaBDPITW19,NehaBDP23}.\footnote{\blue{\cite{NehaBDP23} is  Part I of this work titled ``Byzantine Multiple Access Channels - Part I: Reliable Communication''. \cite{NehaBDPITW19} is the corresponding conference version.}} While achieving this clearly satisfies the requirements of the present model, it might be too demanding as we discuss in the following example.

\begin{example}[Binary erasure MAC{~\cite[pg.~83]{YHKEG}}]\label{ex:1} Binary erasure MAC is a deterministic MAC model with binary inputs $X, Y$ and ternary output $Z=X+Y$ where $+$ is real addition and $Z\in \inb{0, 1, 2}$. In this channel, a deviating user can run an independent copy of the honest user's encoder and inject a spurious message which will appear equally plausible to the decoder as the honest user's actual message (see Example~\ref{ex:1} in Section~\ref{sec:comparison}). Thus, it is not possible to drive the probability of error to zero for any message set of size at least two under the reliable communication guarantee. 
However, consider a simple scheme which uses all strings of Hamming weight 1 as codebook of user \one\ and all strings of Hamming weight $n-1$ as codebook of user \two.
We use the following decoder: 
If the sum of entries of the output string is $n+2$ or more,  user \one\ is malicious. 
On the other hand, if it is $n-2$ or less, then user \two\ is malicious. Note that the output will be correct with probability 1 in these cases. When the sum of entries of the output string is $n+1$ but there is no location with symbol 0, then the decoder declares user \two\ to be malicious, otherwise, user \one\ is malicious. This is because when user \one\ is malicious and sends a string of hamming weight 2, user \two, being honest, sends one of the codewords uniformly at random, the identity of which is unknown to user \one.
Thus, the probability that there is no 0 in the output string is $2/n$ which vanishes with $n$. 
Similarly, when the sum of entries of the output string is $n-1$, the decoder declares user \one\ to be malicious when there is no 2 in the output string, and declares user \two\ to be malicious otherwise. 
Along similar lines, we can also argue that when the sum of entries of the output string is $n$, with a non-vanishing probability, there is a 2 and a 0 in the output string. 
Such an output only corresponds to two unique input strings of users which are both valid codewords. Hence, the output would be correct. 
Thus, we can get a vanishing probability of error using this scheme, though with no rate. 
In fact, we show that for this channel, the capacity region of communication with adversary identification is the same as the (non-adversarial) capacity region of the binary erasure MAC. \blue{In Section~\ref{sec:comparison}, we  show the feasibility of communication with adversary identification over this channel. We obtain the capacity region in Section~\ref{sec:example_tightness}. }
\end{example}

{Another decoding guarantee, that is weaker than} the present model, allows the decoder  to declare adversarial interference (in the presence of malicious user(s)) without identifying the adversary. We called this model {\em authenticated communication} and characterized its  feasibility condition and capacity region in\cite{NehaBDPISIT19}. The feasibility condition is called {\em overwritability}, a notion which was introduced by Kosut and Kliewer for network coding \cite{KK2} and arbitrarily varying channels \cite{KosutKITW18}. 

{The present model lies between the models for reliable communication and authenticated communication in a byzantine MAC. However, obtaining results here appears to be significantly more challenging.}
On the one hand, for reliable communication over the two-user MAC, we may treat the channel from each user to the decoder as an arbitrarily varying channel (AVC) \cite{BBT60} with the other user's input as state. Hence, the users may send their messages using the corresponding AVC codes~\cite{CsiszarN88}. Thus, the rectangular region defined by the capacities of the two AVCs is achievable\footnote{In fact, this rectangular region defined by the capacities of the two AVCs is the reliable communication capacity region since a deviating user can act exactly like the adversary in the AVC of the other user. See \cite[Section~1.3]{NehaBDP23}. Note that the AVCs for binary erasure MAC have zero AVC capacity.\label{ftn:reliable}}.  
On the other hand, for authenticated communication over the two-user MAC, our achievable strategy in \cite{NehaBDPISIT19} involved an unauthenticated communication phase using a non-adversarial MAC code followed by separate (short) authentication phases for each user’s decoded message. 
Failure to authenticate a user's message implies the presence of an adversary (though not its identity since the user whose message is being authenticated might have deviated to cause the authentication failure).
In both 
the cases above, the decoder, when it accounts for the byzantine nature of the users, deals with the users one at a time. 
However, similar decoding strategies seem to be insufficient for adversary identification.
Determining the identity of a deviating user requires dealing with the byzantine nature of both users simultaneously, thereby complicating the decoder design (see Section~\ref{sec:feasibility}).

We characterize the infeasibility of communication with adversary identification using a condition on the channel we call {\em \spoofability} (see Fig.~\ref{fig:spoof1}). 
In a spoofable channel, a deviating user may mount an attack which can be confused with an attack of the other user and which introduces a spurious message that can be confused with the actual message of the (other) honest user.
When the channel is not \spoofable, a deterministic code in the style of \cite{CsiszarN88} can provide positive rates to both the users (Theorem~\ref{thm:main_result}). We give an inner and an outer bound to the deterministic coding capacity region. 
Our outer bound is in terms of the capacity of an arbitrarily varying multiple access channel{\cite{AhlswedeC99}~(Theorem~\ref{thm:outer_bd}). 
Further, in Section~\ref{sec:comparison}, a comparison is made between \spoofability and the feasibility conditions for the reliable communication and authenticated communication models. \blue{Using example~\ref{ex:3}, we also demonstrate that the capacity regions of the three models in \bmac: {\em 1)} authenticated communication, {\em 2)} communication with adversary identification and {\em 3)} reliable communication are strictly different.}

In Section~\ref{sec:rand_capacity}, we draw connections of the present model to the case when  the users share independent randomness with the receiver. Analogous to the {\em dichotomy phenomenon} for arbitrarily varying channels~\cite{Ahlswede78} and arbitrarily varying multiple access channels~\cite{Jahn81}, we show that whenever the capacity region for the deterministic case has a non-empty interior, it is the same as the capacity region for the randomized case. We also give an outer bound on the capacity region for randomized codes.

\paragraph*{Related works} 
The model falls in the general class of adversarial channels. 
There is a long line of works in the information theory literature on communication in the  presence of external adversaries (see \cite{survey} for a survey), for example, an arbitrarily varying channel~\cite{BBT60,Ahlswede78,CsiszarN88} or an arbitrarily varying-MAC~\cite{Jahn81,Gubner,AhlswedeC99}. In these models the channel law can be arbitrarily varied by an adversary during transmission from a given set of allowed channel laws. While our model is different from these models, the technical formulation is heavily inspired. For example, similar to our deterministic coding with an average probability of error criterion, the most well studied model in arbitrarily varying channels  uses a  deterministic codebook with an average probability of error criterion where a user sends a codeword uniformly at random. 
The identity of the sent codeword(s) is hidden from the adversary. Our randomized coding model is also inspired from the corresponding randomized coding model in arbitrarily varying multiple access channels where each user shares independent randomness with the decoder which is private from the other user. Our techniques also borrow from the achievability and converse techniques in the arbitrarily varying channels literature. Further, they expand and add to the existing set of available techniques for these problems. 
In addition to communication over arbitrarily varying channels, there has also been some recent work on authenticated communication over channels in which an external adversary may be present. In the presence of the adversary, the decoder may declare adversarial interference instead of decoding  \cite{KosutKITW18,BKKGYu,Graves16,BeemerCNS20} (In a 2-user MAC model in \cite{BeemerCNS20} when declaring the presence of an adversary, the decoder is required to decode at least one user's message.). 

These models are different from the present model, where a legitimate user of the channel \blue{may be} adversarial, and when declaring the presence of an adversary, we also require the decoder to output its identity. Such users are often called {\em byzantine} users.
Communication in systems with byzantine users has also received some attention \cite{Jaggi7,KTong,KK2,Yener,NehaBDPITW19,NehaBDPISIT19,NehaBDP23}. 
Network coding with byzantine attacks on nodes and edges has been studied \cite{Jaggi7,KTong,KK2}. He and Yener  \cite{Yener} considered a Gaussian two-hop network with an eavesdropping and byzantine adversarial relay where the receiver is required to decode with message secrecy and detect byzantine attack.
The present model, on the other hand, considers byzantine users in a multiple access channel. This model was previously considered in~\cite{NehaBDP23,NehaBDPITW19,NehaBDPISIT19} but with different decoding guarantees. 
From a cryptographic point of view, message authentication codes where the users have pre-shared keys and
communicate over noiseless channels have been extensively
studied~\cite{SimmonsCRYPTO84,MaurerIT00,Gungor16}. Message authentication over noisy
channels has also been considered~\cite{LaiEPIT09,Jiang14,Gungor16,TuLIT18}. \blue{\cite{9805432} and \cite{9807270} study authentication with the help of a shared secret key over an adversary controlled channel, where adversary has the ability to replace the observation at the output. \cite{9887973} presents keyless authentication over an AWGN channel where an adversary, having side information about the message and the trasmission, can inject an additive signal into the output. Adversarial attacks and defenses in deep learning based wireless systems is another emerging area of research. See \cite{9609969,e24081047}  and surveys \cite{9653662,9887796,10263803} for details. }

\paragraph*{Summary of contributions}
These are the main contributions of this work.
\begin{itemize}
	\item We introduce the problem of communication with adversary identification in a \bmac and characterize the class of \bmacs which allow positive rates using deterministic codes under the average probability of error criterion \blue{(Theorem~\ref{thm:main_result} in Section~\ref{sec:feasibility})}. 
	\item We also provide inner and outer bounds to the capacity region \blue{(Theorems~\ref{thm:inner_bd} and \ref{thm:outer_bd} in Section~\ref{sec:capacity})}. 
	\item For a \bmac, \blue{in Section~\ref{sec:comparison}}, we compare the feasibility condition for communication with adversary identification with the corresponding \blue{feasibility conditions in the models of {\em reliable communication }and {\em authenticated communication}}. \blue{We also show a strict separation between the capacity regions under these three models using an example (see Example~\ref{ex:3} and Fig.~\ref{fig:ex3}).}
	\item  \blue{In Section~\ref{sec:rand_capacity},} we show that, for communication with adversary identification, whenever positive rates can be provided to both users under deterministic coding, the capacity region of deterministic coding is the same as the randomized coding capacity region. This is like the {\em dichotomy phenomenon} in~\cite{Ahlswede78}~and~\cite{Jahn81} for arbitrarily varying channels and arbitrarily varying multiple access channels respectively. We also give an outer bound on the randomized coding capacity region. 
\end{itemize}

\section{System Model}\label{sec:model}

\vspace{0.2 cm}
\paragraph*{Notation}
For a set $\cS\subseteq \bbR^{k}$, let $\mathsf{conv}(\cS)$ and $\mathsf{int}(\cS)$ denote its convex closure and interior respectively. For a set $\cH$, let $\cH^c$ denote its complement. For a set $\cA$, $\textsf{Unif}(\cA)$ denotes the uniform distribution over $\cA$. We denote random variables by capital letters, like $X, Y$ and  $\tilde{Y}$, and their corresponding alphabets by calligraphic letters, for example, $\cX, \cY$ and $\tilde{\cY}$. Let $\vecx\in \cX^n$ (resp. $\vecX$ distributed over $\cX^n$) denote the $n$-length vectors (resp. $n$-length random vectors). We denote the distribution of a random variable $Y$ by $P_Y$ and use the notation $Y\sim P_Y$ to indicate this.  
For an alphabet $\cX$, we define the set of all empirical distributions of $n$-length sequences in $\cX^n$ by $\cP^n_{\cX}$. For a distribution $P_X\in \cP^n_{\cX}$, let $T^n_{X}$ denote the set of all $n$-length sequences $\vecx\in \cX^n$ with empirical distribution $P_X$.
For a vector $\vecy\in \cY^n$, the statement $\vecy\in T^n_{Y}$ is sometimes used to implicitly define $P_Y$ as the empirical distribution of $\vecy$ and a random variable $Y$ distributed according to $P_Y$. For $P_{XY}\in \cP^n_{\cX\times\cY}$ and $\vecx\in T^n_{X}$, we define $T^n_{Y|X}(\vecx)\defineqq\inb{\vecy|(\vecx, \vecy)\in T^n_{XY}}$. 
For a natural number $n$, we denote the set $\inb{1, 2, \ldots, n}$ by $[1:n]$. For a real number $a$, $\exp\inp{a}$ denotes $2^{a}$ and $\log(a)$ denotes $\log_2(a)$, that is, $\exp$ and $\log$ are with respect to base 2. 
For a conditional distribution $P_{Y|X}$, we denote its $n$-fold product (memoryless channel) by $P^n_{Y|X}$. For a vector $\vecx\in \cX^n$, the term $P^n_{Y|X}(\cdot|\vecx)$ denotes the output distribution on $\cY^n$ when $\vecx$ is fed as input to the memoryless channel $P^n_{Y|X}$. \blue{For a set $\cS\subseteq\cY^n$, we use the notation $P^n_{Y|X}(\cS|\vecx)$ to denote the conditional probability of $\cS$ conditioned on $\vecx$. }For a 2-user multiple access channel $W_{Z|X Y}$ from input alphabets $\cX$ and $\cY$ to output alphabet $\cZ$, we will sometimes use $W$ to simplify the notation. Its $n$-fold product will be denoted by $W^n$.
 For a two-user MAC $W$, we will use $\cC_{\MAC}(W)$ (or simply $\cC_{\MAC}$) to denote its (non-adversarial) capacity region.  

Consider a two-user discrete memoryless Multiple Access Channel (MAC)  as shown in Fig.~\ref{fig:authcomMAC}. User $\one$ has input alphabet $\cX$ and user $\two$ has input alphabet $\cY$. The output alphabet of the channel is $\cZ$. The sets $\cX$, $\cY$ and $\cZ$ are finite. We study communication in a MAC where either user may deviate from the communication protocol by sending any sequence of its choice from its input alphabet. {While doing so, the deviating user is unaware of other user's input}. Further, the deviating user (if any) is fixed for the entire duration of the transmission. We will refer to this channel model  as a {\em \bmac}. 
\begin{defn}[Adversary identifying code]\label{defn:code}
An $(\nummsg_{\one},\nummsg_{\two},n)$  {\em deterministic adversary identifying code} for a \bmac consists of the following: 
\begin{enumerate}[label=(\roman*)]
\item Two message sets, $\mathcal{M}_i = \{1,\ldots,\nummsg_i\}$, $i=\one,\two$,
\item Two deterministic encoders, $f_{\one}:\mathcal{M}_{\one}\rightarrow \mathcal{X}^n$ and $f_{\two}:\mathcal{M}_{\two}\rightarrow\mathcal{Y}^n$, and
\item A deterministic decoder, $\phi:\mathcal{Z}^n\rightarrow(\mathcal{M}_{\one}\times\mathcal{M}_{\two})\cup\{\oneb,\, \twob\}.$ 
\end{enumerate}
\end{defn}
\blue{Definition~\ref{defn:code} defines a deterministic adversary identifying code. The achievability is shown for this code (Lemma~\ref{thm:detCodesPositivity} and Theorem~\ref{thm:inner_bd}). Our feasibility converse, on the other hand, holds even for stochastic codes where the encoders are allowed to privately randomize (Lemma~\ref{thm:converse}). We also provide results for randomized adversary identifying code where the encoders share independent randomness with the decoder (Definition~\ref{defn:rand-code}) in  Section~\ref{sec:rand_capacity}.  
}

For notational convenience, let us define the decoder $\phi_{\one}$ for user~\one's message as, for $\vecz\in\cZ^n$, 
\begin{align}
\phi_{\one}(\vecz)&=    
                \begin{cases}
                        m_{\one}&\text{if } \phi(\vecz)=(m_{\one},m_{\two})\\
                        \oneb& \text{if }\phi(\vecz) = \oneb\\
                        \twob& \text{if }\phi(\vecz) = \twob,
                \end{cases} 
\label{eq:feas_conv_dec1}\\
	\intertext{and the decoder $\phi_{\two}$ for user~\two's message as, for $\vecz\in\cZ^n$,}
\phi_{\two}(\vecz)&=
                \begin{cases}
                        m_{\two}&\text{if } \phi(\vecz)=(m_{\one},m_{\two})\\
                        \oneb& \text{if }\phi(\vecz) = \oneb\\
                        \twob& \text{if }\phi(\vecz) = \twob.
                \end{cases} \label{eq:feas_conv_dec2}
\end{align}  
The decoder outputs the symbol \oneb to declare that user \one is adversarial. Similarly, an output of \twob is used to declare that user \two is adversarial. The {\em average probability of error} $P_{e}(f_{\one},f_{\two},\phi)$ is the maximum of the average probabilities of error in the following three cases: (1) both users are honest, (2) user \one is adversarial, and (3) user \two is adversarial. When both users are honest, an error occurs if the decoder does not output the pair of correct messages. Let $\cE_{\mo,\mt} = \inb{\vecz:\phi(\vecz)\neq(\mo, \mt)}$ denote the corresponding error event.  The average error {probability} when both users are honest is
\begin{align}
&P_{e,\na}\hspace{-0.25em} \defineqq\frac{1}{N_{\one}\cdot N_{\two}} \sum_{\substack{(\mo, \mt)\in\\ \mathcal{M}_{\one}\times\mathcal{M}_{\two}}}W^n\inp{\cE_{\mo,\mt}|f_{\one}^{(n)}(\mo), f_{\two}^{(n)}(\mt)}.\label{eq:na}
\end{align}
\blue{Recall that $W^n\inp{\cE_{\mo,\mt}|f_{\one}^{(n)}(\mo), f_{\two}^{(n)}(\mt)}$ is used to denote the conditional probability of the set $\cE_{\mo,\mt}$ conditioned on the input $\inp{f_{\one}^{(n)}(\mo),f_{\two}^{(n)}(\mt)}$. }When user \one is adversarial, an error occurs unless the decoder's output is either the symbol $\oneb$ or a pair of messages of which the message of user \two is correct. The corresponding error event is $\cE^{\two}_{\mt} \defineqq \inb{\vecz:\phi_{\two}(\vecz)\notin\{\mt, \oneb\}}$. 
The average probability of error when user \one is adversarial is 
\begin{align}
&P_{e,\malone} \defineqq \max_{\vecx\in\cX^n} \left(\frac{1}{N_{\two}}\sum_{m_{\two}\in \mathcal{M}_{\two}}W^n\inp{\cE^{\two}_{\mt}|\vecx, f_{\two}(\mt)}\right).\label{eq:mal1}
\end{align} Similarly, for $\cE^{\one}_{\mo} \defineqq \inb{\vecz:\phi_{\one}(\vecz)\notin\{\mo,\twob\}}$, 
the average probability of error when user \two is adversarial is 
\begin{align}
&P_{e,\maltwo} \defineqq \max_{\vecy\in\cY^n}\left(\frac{1}{N_{\one}}\sum_{m_{\one}\in \mathcal{M}_{\one}}W^n\inp{\cE^{\one}_{\mo}|f_{\one}(\mo), \vecy}\right).\label{eq:mal2}
\end{align}  
We define the \emph{average probability of error} as
\begin{align*}
P_{e}(f_{\one},f_{\two},\phi)\defineqq \max{\inb{P_{e,\na},P_{e,\malone},P_{e,\maltwo}}}.
\end{align*}
We note that $\cE_{\mo, \mt}  = \cE^{\one}_{\mo}\cup\cE^{\two}_{\mt}$. Thus,
\begin{align}
&P_{e,\na}\hspace{-0.25em} =
\frac{1}{N_{\one}\cdot N_{\two}} \sum_{(\mo, \mt)\in \mathcal{M}_{\one}\times\mathcal{M}_{\two}}W^n\inp{\cE^{\one}_{\mo}\cup\cE^{\two}_{\mt}|f_{\one}(\mo), f_{\two}(\mt)}\nonumber\\
&\leq\frac{1}{N_{\one}\cdot N_{\two}} \sum_{(\mo, \mt)\in \mathcal{M}_{\one}\times\mathcal{M}_{\two}}\Big(W^n\inp{\cE^{\one}_{\mo}|f_{\one}(\mo), f_{\two}(\mt)}+W^n\inp{\cE^{\two}_{\mt}|f_{\one}(\mo), f_{\two}(\mt)}\Big)\nonumber\\
&=\frac{1}{ N_{\two}}\sum_{\mt\in\cM_{\two}}\inp{\frac{1}{N_{\one}} \sum_{\mo\in \mathcal{M}_{\one}}W^n\inp{\cE^{\one}_{\mo}|f_{\one}(\mo), f_{\two}(\mt)}}\nonumber\\
&\qquad \qquad+\frac{1}{N_{\one}}\sum_{\mo\in\cM_{\one}}\inp{ \frac{1}{N_{\two}}\sum_{\mt\in\mathcal{M}_{\two}}W^n\inp{\cE^{\two}_{\mt}|f_{\one}(\mo), f_{\two}(\mt)}}\nonumber\\
&\leq P_{e,\malone} +P_{e,\maltwo}.\label{honest_error_ub}
\end{align}	
So, if $P_{e,\malone}$ and $P_{e,\maltwo}$ are small, $P_{e,\na}$ is also small.
\begin{remark}\label{remark:det_rand_attacks}
Note that the probability of error under a randomized attack \blue{(private randomization at the adversary)} is the weighted average of the probabilities of errors under the different deterministic attacks and hence maximized by a deterministic attack. Thus, $P_{e,\maltwo}$ is an upper bound on the probability of error for any attack by user \two, deterministic or random. 
Similarly, $P_{e,\malone}$ is an upper bound for any attack by user \one.  Thus, the probability of error under deterministic attacks is same as that under randomized attacks. 
\end{remark}
\begin{defn}[Achievable rate pair and capacity region for communication with adversary identification]\label{defn:capacity}
 $(R_{\one},R_{\two})$ is an {\em achievable rate pair for communication with adversary identification} if there exists a sequence of $(\lfloor2^{nR_{\one}}\rfloor,\lfloor2^{nR_{\two}}\rfloor,n)$ adversary identifying codes $\{f_{\one}^{(n)},f_{\two}^{(n)},\phi^{(n)}\}_{n=1}^\infty$ such that $\lim_{n\rightarrow\infty}P_{e}(f_{\one}^{(n)},f_{\two}^{(n)},\phi^{(n)})=0.$ The {\em  capacity region of communication with adversary identification} $\cC$ is the closure of the set of all such achievable rate pairs. 
\end{defn}
{\begin{remark}
Note that the capacity region of a MAC where both users are honest, denoted by $\cC_{\MAC}$,  is an outer bound on the  capacity region of communication with adversary identification $\cC$, that is, $\cC\subseteq \cC_{\MAC}$. This is because when both users are honest, an adversary identifying code guarantees reliable decoding for both users.
\end{remark}}
\blue{In the next subsection, we give a brief summary of relevant results from the AV-MAC literature, which will be useful in the later part of the paper.}
\subsection{Arbitrarily Varying MAC}\label{sec:AVMAC}
\blue{\begin{defn}[see \cite{Jahn81}]\label{defn:AVMAC}
An AV-MAC $\cW = \{W(\cdot|\cdot,\cdot,s)| s\in \cS\}$ is a family of MACs parameterized by the set of state symbols $\cS$ where MACs $W(\cdot|\cdot,\cdot,s)$ are randomized  $\cX\times\cY\rightarrow\cZ$ maps. 
\end{defn}
The state of an AV-MAC can vary arbitrarily during the transmission. This can also be interpreted as an adversary choosing a distribution over the state symbols (and hence a channel from the convex hull of $\cW$) for each symbol sent into the channel by the senders. 
Two popular settings in which the problem is studied are:
\begin{enumerate}
        \item Randomized coding \cite{Jahn81}: Each sender shares individually private randomness with the  receiver which is hidden from the other sender(s) and the adversary. 
                \item Deterministic coding with average error criterion \cite{Gubner,AhlswedeC99}: The senders and receiver do not share any randomness. Each sender chooses a codeword uniformly at random using private randomness hidden from the adversary and sends the corresponding codeword.
\end{enumerate}
In both cases, because of the randomness (shared or private), the adversary is unaware of the actual symbols sent into the channel by each sender. Similarly, the senders are unaware of the choice of adversarial state vector. We refer the reader to \cite{Jahn81,Gubner,AhlswedeC99} for formal definitions of code, probability of error and capacity region.}

\blue{For distributions $P_X$, $P_Y$ on $\cX$ and $\cY$ respectively, define the set $\cR(P_X, P_Y)$ of all pairs $(R_1, R_2)$ satisfying
\begin{align}
0\leq R_1&\leq \inf I(X;Z|Y)\\
0\leq R_2&\leq \inf I(Y;Z|X)\\
R_1+R_2&\leq \inf I(XY;Z)
\end{align} where each infimum is taken over all $(X, Y, S, Z)\sim P_{XYSZ}$ such that $P_{XYSZ}(x, y, s, z) = P_{X}(x)P_Y(y)P_S(s)W(z|x,y,s)$. Let $\cR(\cW)$ be the closed convex hull $\bigcup_{(P_X, P_Y)}\cR(P_X, P_Y)$. 
For an AV-MAC $\cW$, let $\cC^{}_{\AVMAC}(\cW)$ denote the deterministic coding capacity region and $\cC^{\rand}_{\AVMAC}(\cW)$ denote the randomized coding capacity region. We next state the following result due to \cite{Jahn81}.
\begin{thm}[\cite{Jahn81}]
$\cC^{\rand}_{\AVMAC}(\cW) = \cR(\cW)$ and $\cC^{}_{\AVMAC}(\cW) = \cC^{\rand}_{\AVMAC}(\cW)$ if there exists $R_1, R_2\in \cC^{}_{\AVMAC}(\cW)$ such that $R_1, R_2>0$.
\end{thm}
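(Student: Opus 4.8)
Since this statement is quoted from \cite{Jahn81}, the plan is to recall the classical two-part argument --- first pinning down the randomized-coding region as $\cR(\cW)$, then collapsing the deterministic region onto it by an elimination argument.

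For $\cC^{\rand}_{\AVMAC}(\cW)=\cR(\cW)$, I would obtain achievability by robust random coding. Fix input distributions $P_X,P_Y$ and a deterministic time-sharing sequence, fill the two codebooks with symbols drawn i.i.d.\ according to $P_X$ and $P_Y$, and take each user's share of the codebook realisation to be its private randomness shared with the receiver. For a \emph{fixed} state sequence $s^n$ the code sees the memoryless MAC $\bar{W}_{\hat{P}_{s^n}}(\cdot|\cdot,\cdot)\defineqq\sum_{s}\hat{P}_{s^n}(s)\,W(\cdot|\cdot,\cdot,s)$, so a joint-typicality decoder has codebook-averaged error that is exponentially small in $n$ for every rate pair in the interior of $\cR(P_X,P_Y)$ (the constraints there already take the worst case over $P_S$, hence cover $\bar{W}_{\hat{P}_{s^n}}$ for every $s^n$); this random codebook is itself the desired randomized code, and time-sharing delivers the closed convex hull $\cR(\cW)$. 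For the converse I would have the adversary feed i.i.d.\ states $S^n\sim P_S$, so that $\cW$ degenerates to the memoryless MAC $\bar{W}_{P_S}$ --- whose capacity region is not enlarged by randomness shared privately with the receiver --- and apply a Fano-inequality argument; a minimax interchange on the resulting single-letter bounds (concavity of the three mutual-information functionals in the input distribution, with a time-sharing variable convexifying the MAC region, and convexity in $P_S$) then yields membership in $\cR(\cW)$.

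For the dichotomy, the inclusion $\cC^{}_{\AVMAC}(\cW)\subseteq\cC^{\rand}_{\AVMAC}(\cW)$ is immediate, so the work is the reverse inclusion under the positivity hypothesis, and here I would adapt Ahlswede's elimination technique \cite{Ahlswede78} to two users. Take a rate pair in the interior of $\cC^{\rand}_{\AVMAC}(\cW)$ and, for each $n$, a randomized code achieving it, with the product structure that user $i$'s encoder depends only on its message and its own key $K_{i}$ (the keys independent), and with error $\lambda_n\to0$ uniformly over all $s^n$. The first step is to eliminate almost all of the shared randomness: a Chernoff bound on the empirical mean of $\mathrm{poly}(n)$ i.i.d.\ copies of the per-key error, together with a union bound over the $|\cS|^n$ state sequences, shows that the range of $K_{\one}$ can be thinned to at most $n^2$ values while inflating the worst-$s^n$ error by only a constant factor; repeating for $K_{\two}$ preserves the product structure and leaves a code that uses only $O(\log n)$ bits of key per user and still has vanishing error. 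The second step is derandomisation by prefixing: choose a prefix length $\ell_n$ with $\log n\ll\ell_n\ll n$, carve a key index $j_{i}\in[1:n^2]$ out of user $i$'s message, have the two users transmit $(j_{\one},j_{\two})$ over the prefix using the deterministic AV-MAC code of positive rates promised by the hypothesis (feasible because the prefix carries $\Theta(\log n)$ bits at positive rate over a block of length $\omega(\log n)$), which the receiver decodes with vanishing error against every state sequence, and then on the remaining $n-\ell_n$ symbols let each user run the thinned code selected by its own $j_{i}$ while the receiver decodes with the recovered key pair. Since the message is uniform, $(j_{\one},j_{\two})$ is uniform on $[1:n^2]^2$ --- exactly the averaging under which the thinned code was certified --- the rate loss $(2\log n+\ell_n)/n\to0$ is negligible, and the total error (prefix, key recovery, main block) vanishes; taking closures gives $\cC^{\rand}_{\AVMAC}(\cW)\subseteq\cC^{}_{\AVMAC}(\cW)$.

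I expect the genuinely delicate points to be two. In the converse of the first part, the minimax step is not the textbook single-user one: it must be carried out jointly for the three rate inequalities, and since the MAC capacity region is not convex one has to thread a (deterministic) time-sharing variable through the argument and show that some \emph{single} time-shared input pair is near-optimal against \emph{every} state distribution. In the dichotomy, the subtle part is to carry the product/independence structure of the key through elimination and to guarantee that each user both knows and can reliably transmit its own key; this is precisely what a jointly positive-rate deterministic code provides, and the bootstrap collapses without that hypothesis (there are AV-MACs whose deterministic region has empty interior while the randomized region does not).
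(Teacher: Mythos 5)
This theorem is not proved in the paper at all: it is imported verbatim from Jahn \cite{Jahn81} (with the feasibility characterization coming from \cite{Gubner,AhlswedeC99}), so there is no internal proof to compare against. Your reconstruction follows exactly the classical route of the cited source: random coding over i.i.d.\ codebooks with a type-by-type analysis of the state sequence plus an i.i.d.-state/Fano converse for $\cC^{\rand}_{\AVMAC}(\cW)=\cR(\cW)$, and Ahlswede's elimination-of-correlation \cite{Ahlswede78} followed by key transmission over a positive-rate deterministic prefix for the dichotomy; the latter is also precisely the machinery this paper itself adapts for its own Theorem~\ref{thm:capacity_equivalence} (randomness reduction to $n^2$ encoders per user in Lemma~\ref{thm:rand_reduc}, then a two-phase prefixed code). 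Two caveats, both of which you flag but do not discharge: for a fixed state sequence the channel is time-varying rather than literally the averaged MAC $\sum_s \hat{P}_{s^n}(s)W(\cdot|\cdot,\cdot,s)$ used memorylessly, so the error exponent must be argued per empirical state type with a polynomial union bound over types; and the converse step equating the intersection over $P_S$ of convexified MAC regions with the convex hull of intersections (i.e., with $\cR(\cW)$ as defined) is the genuinely hard part of Jahn's converse and is asserted rather than proved in your sketch. Since the statement is cited rather than reproved, these gaps are acceptable for the paper's purposes, but they are where a self-contained write-up would need real work.
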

Gubner~\cite{Gubner} gave three symmetrizability conditions for an AV-MAC (given below) and conjectured that these conditions characterize the feasibility of existence of $R_1, R_2\in \cC^{}_{\AVMAC}(\cW)$ such that $R_1, R_2>0$. This conjecture was later shown to be true by Ahlswede and Cai \cite{AhlswedeC99}.
The symmetrizability conditions for an AV-MAC $\cW$ are as follows:
\begin{enumerate}
        \item $\cW $ is symmetrizable-$(\cX, \cY)$ if there exists a stochastic $U:\cX\times\cY\rightarrow \cS$ such that
        \begin{align}\label{eq:sym_xy}
        \sum_s W(z|x, y, z)U(s|x',y')=W(z|x', y', z)U(s|x,y)
        \end{align} for all $x,x'\in \cX,  y,y'\in \cY$ and $z\in \cZ$.
        \item $\cW $ is symmetrizable-$\cX$ if there exists a stochastic $U:\cX\rightarrow \cS$ such that
        \begin{align}\label{eq:sym_x}
        \sum_s W(z|x, y, z)U(s|x')=W(z|x', y, z)U(s|x)
        \end{align} for all $x,x'\in \cX,  y\in \cY$ and $z\in \cZ$.
        \item $\cW $ is symmetrizable-$\cY$ if there exists a stochastic $U:\cY\rightarrow \cS$ such that
        \begin{align}\label{eq:sym_y}
        \sum_s W(z|x, y, z)U(s|y')=W(z|x, y', z)U(s|y)
        \end{align} for all $x\in \cX,  y,y'\in \cY$ and $z\in \cZ$.
\end{enumerate}
\begin{thm}[{\cite[Lemmas 3.5 and 3.6]{Gubner}}]\label{thm:gubner_sym}
If $\cW$ is symmetrizable-$\cX$, then $R_1 = 0$ for all $R_1, R_2\in \cC^{}_{\AVMAC}(\cW)$. Similarly, if $\cW$ is symmetrizable-$\cY$, then $R_2 = 0$ for all $R_1, R_2\in \cC^{}_{\AVMAC}(\cW)$.
\end{thm}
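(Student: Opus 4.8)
The plan is to adapt the classical symmetrizability argument of Csisz\'ar and Narayan~\cite{CsiszarN88} to the $\cX\to\cZ$ link of the AV-MAC, treating the $\cY$-sender as honest; I will prove the first statement, the second following verbatim after exchanging the roles of $\cX$ and $\cY$. Fix a stochastic matrix $U:\cX\to\cS$ witnessing that $\cW$ is symmetrizable-$\cX$, so that $\sum_{s}W(z|x,y,s)\,U(s|x')=\sum_{s}W(z|x',y,s)\,U(s|x)$ for all $x,x'\in\cX$, $y\in\cY$, $z\in\cZ$. Let $(f_{1},f_{2},\phi)$ be any $(N_{1},N_{2},n)$ deterministic AV-MAC code with $N_{1}\ge 2$, and write $\phi_{1}(z^{n})\in\cM_{1}$ for the $\cX$-sender's message decoded from $z^{n}$. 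I would have the adversary --- who knows the codebook but not the transmitted messages, and who controls the state --- proceed as follows: draw a spurious message $M_{1}'\sim\textsf{Unif}(\cM_{1})$, put $\tilde{x}^{n}=f_{1}(M_{1}')$, and generate the state coordinate-wise by $s_{t}\sim U(\cdot\mid\tilde{x}_{t})$. Since $U$ does not depend on $y$, this attack uses only the codebook and private randomness --- never the true messages nor the $\cY$-sender's codeword --- so it is admissible. Multiplying the displayed identity over the $n$ coordinates shows that when the $\cX$-sender truly sends $f_{1}(m_{1})$, the $\cY$-sender sends $f_{2}(m_{2})$, and the spurious message is $m_{1}'$, the induced output distribution $q(\cdot\mid m_{1},m_{1}',m_{2})$ on $\cZ^{n}$ is invariant under interchanging $m_{1}$ and $m_{1}'$:
\begin{align}
q(z^{n}\mid m_{1},m_{1}',m_{2}) &= q(z^{n}\mid m_{1}',m_{1},m_{2}),\qquad z^{n}\in\cZ^{n}.
\end{align}

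Next I would lower-bound the probability of error of the $\cX$-sender's message under this attack. For any fixed $m_{2}$ and any $m_{1}\ne m_{1}'$, since $\phi_{1}(z^{n})$ is a single symbol it cannot equal both $m_{1}$ and $m_{1}'$, so, using the symmetry above,
\begin{align}
&q\big(\phi_{1}\ne m_{1}\,\big|\,m_{1},m_{1}',m_{2}\big)+q\big(\phi_{1}\ne m_{1}'\,\big|\,m_{1}',m_{1},m_{2}\big)\\
&\qquad\ge\sum_{z^{n}\in\cZ^{n}}q(z^{n}\mid m_{1},m_{1}',m_{2})=1.
\end{align}
Averaging the per-message error $\frac{1}{N_{1}}\sum_{m_{1}'}q(\phi_{1}\ne m_{1}\mid m_{1},m_{1}',m_{2})$ over $m_{1}\in\cM_{1}$ and $m_{2}\in\cM_{2}$, discarding the nonnegative diagonal terms $m_{1}'=m_{1}$, and pairing $(m_{1},m_{1}')$ with $(m_{1}',m_{1})$ through the displayed inequality leaves an average of at least $\frac{N_{1}-1}{2N_{1}}\ge\frac14$. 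Since this average is over the adversary's own randomness, some fixed state sequence $s^{n}$ forces code error at least $\frac14$; hence every deterministic AV-MAC code with $N_{1}\ge 2$ has probability of error at least $\frac14$.

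Finally, if $(R_{1},R_{2})\in\cC^{}_{\AVMAC}(\cW)$ with $R_{1}>0$, then some sequence of $(\lfloor 2^{nR_{1}}\rfloor,\lfloor 2^{nR_{2}}\rfloor,n)$ codes must have vanishing error; but $\lfloor 2^{nR_{1}}\rfloor\ge 2$ for all large $n$, contradicting the bound just established, so $R_{1}=0$, and taking the closure preserves this. The only delicate points are checking that the single-letter symmetrization identity tensorizes correctly under the adversary's product state generation and that the described attack is admissible in the deterministic-coding AV-MAC model (the adversary uses only the codebook and its own private randomness); the rest is the standard counting argument, so I expect no genuine obstacle beyond this bookkeeping --- this is exactly Gubner's adaptation to the MAC of the Csisz\'ar--Narayan symmetrizability bound.
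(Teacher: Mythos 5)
Your proposal is correct and follows essentially the same route as the proof the paper relies on: the paper cites this result to Gubner~\cite{Gubner} (with proofs in \cite{Gubnerthesis}), and that proof is precisely the Csisz\'ar--Narayan symmetrizability attack you reconstruct --- the adversary draws a spurious codeword uniformly, generates the state memorylessly through the symmetrizing kernel $U$, tensorizes the single-letter identity to get output symmetry in $(m_1,m_1')$, and then the pairing/counting argument yields error at least $(N_1-1)/(2N_1)\ge 1/4$ for some deterministic state sequence, matching Remark~\ref{remark:gubner}. The same argument template also appears in the paper's own Lemma~\ref{thm:converse} for spoofable channels, so no gap remains.
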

\begin{remark}\label{remark:gubner}
The proofs of \cite[Lemmas 3.5 and 3.6]{Gubner} are given in \cite[Lemmas~3.10 and 3.11]{Gubnerthesis}. Following the proofs, it can be shown that for any code with $R_1>0$ and $R_2>0$, the probability of error is at least $1/4$ if the channel is either symmetrizable-$\cX$ or  symmetrizable-$\cY$.
\end{remark}
\begin{thm}[{\cite[Theorem 1]{AhlswedeC99}}]
For an AV-MAC $\cW$, there exist $R_1, R_2\in \cC^{}_{\AVMAC}(\cW)$ such that $R_1, R_2>0$ if $\cW$ is not symmetrizable-$\cX$, not symmetrizable-$\cY$ and not symmetrizable-$(\cX, \cY)$.
\end{thm}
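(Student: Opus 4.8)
The plan is to prove this achievability statement by a random coding argument with a carefully engineered (non–maximum-likelihood) joint-typicality decoder, adapting the Csisz\'ar--Narayan construction for the single-user AVC to two users.

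\emph{Construction.} Fix input types $P_X$ on $\cX$ and $P_Y$ on $\cY$; for blocklength $n$ and small rates $R_1,R_2>0$, draw the codewords $\{\vecX(m_1)\}$ i.i.d.\ uniform on the type class $T^n_X$ and, independently, $\{\vecY(m_2)\}$ i.i.d.\ uniform on $T^n_Y$. The decoder, on receiving $\vecz$, looks for a pair $(m_1,m_2)$ such that $(\vecX(m_1),\vecY(m_2),\vecz)$ is jointly typical for some product law $P_X P_Y P_S W$, and additionally disqualifies a candidate whenever some competing message $m_1'\ne m_1$ (resp.\ $m_2'\ne m_2$) together with an admissible state sequence forms a triple that is ``too compatible'' with $\vecz$ --- formally, whenever a certain joint-type mutual-information quantity exceeds a small threshold. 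If no candidate survives, it declares an error. (As in the AVC case, this extra disqualification is what lets the decoder beat a symmetrizing adversary; a plain typicality or ML decoder would not suffice.)

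\emph{Error analysis.} Fix an arbitrary adversarial state sequence $\mathbf{s}$. After deleting a vanishing fraction of bad codewords from the random codebook, one bounds the probability (over the remaining codebook and the channel) of a decoding error by a union bound over message pairs, splitting into three events: (i) both $m_1$ and $m_2$ decoded wrongly; (ii) $m_1$ correct, $m_2$ wrong; (iii) $m_1$ wrong, $m_2$ correct. In each event the relevant probability is exponentially small unless the joint type of the genuine codeword(s), the spurious codeword(s), the state $\mathbf{s}$, and $\vecz$ exhibits a ``confusability'' pattern. The crux is the combinatorial lemma that such patterns cannot persist (for all $n$, at positive rate): if the channel is not symmetrizable-$(\cX,\cY)$, event (i) is ruled out; if not symmetrizable-$\cX$, event (iii) is ruled out; if not symmetrizable-$\cY$, event (ii) is ruled out. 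Each implication is proved by contradiction: assuming confusing joint types for all $n$, one passes to a limiting joint distribution and reads off the symmetrizing stochastic matrix $U$ (essentially the conditional law of the spurious input, or of the state, given the genuine input), which then satisfies \eqref{eq:sym_xy}, \eqref{eq:sym_x}, or \eqref{eq:sym_y}. Finally, since the error depends on $\mathbf{s}$ only through a finite collection of joint types (equivalently, one first replaces the state set by a finite net), a union bound over $\mathbf{s}$ preserves exponential smallness of the expected error, so some deterministic codebook has small average error against every state sequence; a separate check shows that non-symmetrizability guarantees a choice of $P_X,P_Y$ for which all three conditional mutual informations defining $\cR(P_X,P_Y)$ are strictly positive, giving $R_1,R_2>0$.

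\emph{Main obstacle.} The genuinely hard part is the symmetrizability lemma for the \emph{mixed} error events (ii) and (iii): unlike the single-user AVC, the spurious codeword and the adversarial state live in different alphabets, and the honest other-user codeword --- which is \emph{correct} in these events --- must be carried along as side information without being absorbed into the ``state''. Disentangling the roles of the correct other-user input and the adversary's contribution, and showing that a persistent mixed confusion forces exactly symmetrizability-$\cX$ (or $\cY$) rather than merely the weaker symmetrizability-$(\cX,\cY)$, is the technical heart; a secondary difficulty is calibrating the decoder's disqualification thresholds --- via the two-step bounding of joint-type mutual informations with several codewords simultaneously in play --- so that the true pair almost surely survives while no symmetrizing competitor does.
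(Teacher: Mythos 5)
There is nothing to compare against inside the paper itself: this theorem is imported verbatim from \cite[Theorem 1]{AhlswedeC99} and stated without proof in Section~\ref{sec:AVMAC}, so the only meaningful benchmark is the source proof. Your outline follows essentially that route (which is the two-user extension of \cite{CsiszarN88}): codebooks drawn uniformly from fixed type classes, a joint-typicality decoder augmented by mutual-information disqualification thresholds, an error decomposition into the three confusion events matched to the three symmetrizability notions \eqref{eq:sym_xy}, \eqref{eq:sym_x}, \eqref{eq:sym_y}, a joint-type lemma showing that persistent confusability would produce a symmetrizing stochastic map $U$, and a union bound over state sequences made possible by doubly-exponential codebook guarantees. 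This is also precisely the template the present paper builds for its own byzantine achievability: Lemma~\ref{lemma:codebook} supplies the codebook properties, Lemma~\ref{lemma:disambiguity} is the analogue of the confusability-implies-symmetrizability lemma (the paper itself notes the kinship with \cite[Lemma 4]{CsiszarN88} and \cite[Lemma 1]{AhlswedeC99}), and Lemma~\ref{thm:detCodesPositivity} extracts positive rates by taking $R_{\one}=R_{\two}=\delta$ small relative to the decoder threshold $\eta$.

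That said, what you have is a blueprint rather than a proof. The two steps you yourself identify as the heart --- showing that a persistent \emph{mixed} confusion (one user's codeword confused while the other user's correct codeword is carried along) forces symmetrizability-$\cX$ or symmetrizability-$\cY$ specifically, and calibrating the disqualification thresholds so that the true pair survives while the decoder remains single-valued --- are asserted by ``pass to a limiting joint distribution and read off $U$,'' and that is exactly where essentially all of the work in \cite{AhlswedeC99} (and in this paper's Appendix~\ref{app:disambiguity}) lies; as written it would not be checkable. Two smaller points: the expurgation step (``deleting a vanishing fraction of bad codewords'') is unnecessary once the codebook lemma is proved with doubly-exponentially small failure probability, as in Lemma~\ref{lemma:codebook}; and the closing ``separate check'' that the mutual informations defining $\cR(P_X,P_Y)$ are strictly positive is not how the standard argument concludes --- positivity comes directly from the error analysis at rates below the threshold --- although the check itself is sound, since for full-support inputs a state distribution that annihilated $I(X;Z|Y)$ or $I(Y;Z|X)$ would itself furnish a symmetrizing $U$ in \eqref{eq:sym_x} or \eqref{eq:sym_y}.
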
}

\section{Feasibility of communication with adversary identification}\label{sec:feasibility}
\begin{defn}[Spoofable\ \bmac]\label{defn:spoof}
A \bmac \mch is \one-{\em \spoofable} if there exist conditional distributions ${Q_{Y|\tilde{X}\tilde{Y}}}$ and ${Q_{X|\tilde{X}X'}}$ such that $\forall\,x', \tilde{x}\in \cX,\, \tilde{y}\in \cY,\, z\in \cZ,$
\begin{align}\label{eq:spoof1}
&\sum_{y}Q_{Y|\tilde{X}\tilde{Y}}(y|\tilde{x},\tilde{y})\mch(z|x',y) \nonumber\\
&= \sum_{y}Q_{Y|\tilde{X}\tilde{Y}}(y|x',\tilde{y})\mch(z|\tilde{x},y) \nonumber\\
& = \sum_{x}Q_{X|\tilde{X}X'}(x|\tilde{x},x')\mch(z|x,{\tilde{y}}).
\end{align}

A \bmac \mch is \two-{\em \spoofable} \longonly{({see Fig.~\ref{fig:spoof2}.})} if there exist conditional distributions ${Q_{X|\tilde{X}\tilde{Y}}}$ and ${Q_{Y|\tilde{Y}Y'}}$ such that $\forall\,\tilde{x}\in \cX,\, \tilde{y}, y'\in \cY, \, z\in \cZ,$
\begin{align}\label{eq:spoof2}
&\sum_{x}Q_{X|\tilde{X}\tilde{Y}}(x|\tilde{x},\tilde{y})\mch(z|x,y') \nonumber\\
&= \sum_{x}Q_{X|\tilde{X}\tilde{Y}}(x|\tilde{x},y')\mch(z|x,\tilde{y}) \nonumber\\
& = \sum_{y}Q_{Y|\tilde{Y}Y'}(y|\tilde{y},y')\mch(z|\tilde{x},y).
\end{align}
A \bmac is {\em \spoofable} if it is either \one-\spoofable or \two-\spoofable. 

\end{defn}
	
\begin{figure*}[h]
\centering
\subfigure[]{
\begin{tikzpicture}[scale=0.5]
	\draw (1.7-0.8-0.3,2.9-0.1) rectangle ++(2.3,1.2) node[pos=.5]{\footnotesize ${Q^n_{Y|\tilde{X}\tilde{Y}}}$};
	\draw (4.1,4) rectangle ++(1.5,1.5) node[pos=.5]{ $W^n$};	
	\draw[ ->] (0.4-0.3,3.0) node[anchor=east]{ $\tilde{\vecy}\phantom{'}$} -- ++ (0.5,0) ;
	\draw[ ->] (0.4-0.3,3.7+0.1) node[anchor=east]{ $\tilde{\vecx}\phantom{'}$} -- ++ (0.5,0) ;
	\draw[->] (3.4,5.1) -- ++(0.7,0);
	\draw[->] (3.4,4.4) -- ++(0.7,0);
	\draw[-] (3.4,4.4) -- ++ (0,-1);
	\draw[-] (3.4,5.1) -- ++ (0,1);
	\draw[-] (2.9, 3.4)  --  ++(0.5,0);
	\draw[-] (0.4-0.3,6.1) node[anchor=east]{ $\vecx'$} -- ++ (3.3,0);
	\draw[->] (5.4+1-0.3-0.5,4.75) --  ++ (0.5,0)node[anchor= west] { $\vecz$};
	\end{tikzpicture}\label{fig:spoof1a} 
}
\subfigure[]{
\begin{tikzpicture}[scale=0.5]
	\draw (1.7-0.8-0.3,2.9-0.1) rectangle ++(2.3,1.2) node[pos=.5]{\footnotesize ${Q^n_{Y|\tilde{X}\tilde{Y}}}$};
	\draw (4.1,4) rectangle ++(1.5,1.5) node[pos=.5]{ $W^n$};	
	\draw[ ->] (0.4-0.3,3.0) node[anchor=east]{ $\tilde{\vecy}\phantom{'}$} -- ++ (0.5,0) ;
	\draw[ ->] (0.4-0.3,3.7+0.1) node[anchor=east]{ $\vecx'$} -- ++ (0.5,0) ;
	\draw[->] (3.4,5.1) -- ++(0.7,0);
	\draw[->] (3.4,4.4) -- ++(0.7,0);
	\draw[-] (3.4,4.4) -- ++ (0,-1);
	\draw[-] (3.4,5.1) -- ++ (0,1);
	\draw[-] (2.9, 3.4)  --  ++(0.5,0);
	\draw[-] (0.4-0.3,6.1) node[anchor=east]{ $\tilde{\vecx}\phantom{'}$} -- ++ (3.3,0);
	\draw[->] (5.4+1-0.3-0.5,4.75) --  ++ (0.5,0)node[anchor= west] { $\vecz$}; 
\end{tikzpicture}\label{fig:spoof1b} 
}
\subfigure[]{
\begin{tikzpicture}[scale=0.5]
	\draw (1.7-1.1,5.6-0.1) rectangle ++(2.5,1.2) node[pos=.5]{\footnotesize ${Q^n_{X|\tilde{X}X'}}$};
	\draw (4.1,4) rectangle ++(1.5,1.5) node[pos=.5]{ $W^n$};	
	\draw[ ->] (0.1,5.8-0.1) node[anchor=east]{ ${\vecx'}$} -- ++ (0.5,0) ;
	\draw[ ->] (0.1,6.4+0.1) node[anchor=east]{ $\tilde{\vecx}\phantom{'}$} -- ++ (0.5,0) ;
	\draw[->] (3.4,5.1) -- ++(0.7,0);
	\draw[->] (3.4,4.4) -- ++(0.7,0);
	\draw[-] (3.4,4.4) -- ++ (0,-1);
	\draw[-] (3.4,5.1) -- ++ (0,1);
	\draw[-] (3.1, 6.1)  --  ++(0.3,0);
	\draw[-] (0.1,3.4) node[anchor=east]{ $\tilde{\vecy}\phantom{'}$} -- ++ (3.3,0);
	\draw[->] (5.4+1-0.3-0.5,4.75) --  ++ (0.5,0)node[anchor= west] { $\vecz$};
\end{tikzpicture}\label{fig:spoof1c} 
}\caption{When \eqref{eq:spoof1} holds for a \bmac $W$, for $(\vecx', \tilde{\vecx}, \tilde{\vecy})\in \cX^n\times\cX^n\times\cY^n$, the output distributions in the three cases above will be the same.} \label{fig:spoof1}
\end{figure*}

\longonly{
\begin{figure*}[h]
\centering
\subfigure[]{
\begin{tikzpicture}[scale=0.5]
	\draw (1.7-0.8-0.3,5.6) rectangle ++(2.3,1) node[pos=.5]{\footnotesize ${Q_{X|\tilde{X}\tilde{Y}}}$};
	\draw (4.1,4) rectangle ++(1.5,1.5) node[pos=.5]{ $W$};	
	\draw[ ->] (0.1,5.8) node[anchor=east]{ $\tilde{y}\phantom{'}$} -- ++ (0.5,0) ;
	\draw[ ->] (0.1,6.4) node[anchor=east]{ $\tilde{x}\phantom{'}$} -- ++ (0.5,0) ;
	\draw[->] (3.4,5.1) -- ++(0.7,0);
	\draw[->] (3.4,4.4) -- ++(0.7,0);
	\draw[-] (3.4,4.4) -- ++ (0,-1);
	\draw[-] (3.4,5.1) -- ++ (0,1);
	\draw[-] (2.9, 6.1)  --  ++(0.5,0);
	\draw[-] (0.1,3.4) node[anchor=east]{ $y'$} -- ++ (3.3,0);
	\draw[->] (5.4+1-0.3-0.5,4.75) --  ++ (0.5,0)node[anchor= west] { $z$};
\end{tikzpicture}\label{fig:spoof2a} 
}
\subfigure[]{
\begin{tikzpicture}[scale=0.5]
	\draw (1.7-0.8-0.3,5.6) rectangle ++(2.3,1) node[pos=.5]{\footnotesize ${Q_{X|\tilde{X}\tilde{Y}}}$};
	\draw (4.1,4) rectangle ++(1.5,1.5) node[pos=.5]{ $W$};	
	\draw[ ->] (0.1,5.8) node[anchor=east]{ ${y'}$} -- ++ (0.5,0) ;
	\draw[ ->] (0.1,6.4) node[anchor=east]{ $\tilde{x}\phantom{'}$} -- ++ (0.5,0) ;
	\draw[->] (3.4,5.1) -- ++(0.7,0);
	\draw[->] (3.4,4.4) -- ++(0.7,0);
	\draw[-] (3.4,4.4) -- ++ (0,-1);
	\draw[-] (3.4,5.1) -- ++ (0,1);
	\draw[-] (2.9, 6.1)  --  ++(0.5,0);
	\draw[-] (0.1,3.4) node[anchor=east]{ $\tilde{y}$} -- ++ (3.3,0);
	\draw[->] (5.4+1-0.3-0.5,4.75) --  ++ (0.5,0)node[anchor= west] { $z$};
\end{tikzpicture}\label{fig:spoof2b} 
}
\subfigure[]{
\begin{tikzpicture}[scale=0.5]
	\draw (1.7-1.1,2.9) rectangle ++(2.5,1) node[pos=.5]{\footnotesize ${Q_{Y|\tilde{Y}Y'}}$};
	\draw (4.1,4) rectangle ++(1.5,1.5) node[pos=.5]{ $W$};	
	\draw[ ->] (0.1,3.1) node[anchor=east]{ ${y'}$} -- ++ (0.5,0) ;
	\draw[ ->] (0.1,3.7) node[anchor=east]{ $\tilde{y}\phantom{'}$} -- ++ (0.5,0) ;
	\draw[->] (3.4,5.1) -- ++(0.7,0);
	\draw[->] (3.4,4.4) -- ++(0.7,0);
	\draw[-] (3.4,4.4) -- ++ (0,-1);
	\draw[-] (3.4,5.1) -- ++ (0,1);
	\draw[-] (3.1, 3.4)  --  ++(0.3,0);
	\draw[-] (0.1,6.1) node[anchor=east]{ $\tilde{x}\phantom{'}$} -- ++ (3.3,0);
	\draw[->] (5.4+1-0.3-0.5,4.75) --  ++ (0.5,0)node[anchor= west] { $z$}; 
\end{tikzpicture}\label{fig:spoof2c} 
}
\caption{A \bmac $W$ is \two-\spoofable if for each $\tilde{x},\, \tilde{y},\, y',  \,z$ the conditional output distributions $P(z|\tilde{x},\tilde{y},y')$ in \ref{fig:spoof2a}, \ref{fig:spoof2b} and \ref{fig:spoof2c} are the same.} \label{fig:spoof2}
\end{figure*}
}

Our definition of \spoofable channels is motivated by a scenario where the decoder cannot differentiate between two different likely transmitted codewords of user \one, while at the same time it cannot blame user \two\ for the situation since the situation appears to be possible due to an adversarial action of either user \one\ or user \two. To see this, let $(\vecx', \tilde{\vecx}, \tilde{\vecy})\in \cX^n\times\cX^n\times\cY^n$. When a \bmac is \one-\spoofable, {\em i.e.}
\eqref{eq:spoof1} holds, the output distributions in the following three cases are the same (see Fig.~\ref{fig:spoof1}): 
\begin{enumerate}[label = (\alph*)]
\item User \one sends $\vecx'$ and an adversarial user \two sends $\vecY\sim Q^n_{Y|\tilde{X}\tilde{Y}}(\cdot|\tilde{\vecx},\tilde{\vecy})$, i.e., $\vecY$ is distributed as the output of the memoryless channel $Q_{Y|\tilde{X}\tilde{Y}}$ on inputs $\tilde{\vecx}$ and $\tilde{\vecy}$; 
\item User \one sends $\tilde{\vecx}$ and an adversarial user \two sends $\vecY\sim Q^n_{Y|\tilde{X}\tilde{Y}}(\cdot|\vecx',\tilde{\vecy})$; 
\item User \two sends $\tilde{\vecy}$ and an adversarial user \one sends $\vecX\sim Q^n_{X|\tilde{X}X'}(\cdot|\tilde{\vecx},\vecx')$.
\end{enumerate}
In Lemma~\ref{thm:converse}~(Appendix~\ref{sec:proof_thm1}), we use the above property of spoofable channels to show that for an \one-\spoofable \bmac, user-\one cannot send even one bit reliably. Similarly, for a \two-\spoofable \bmac, user-\two cannot send one bit reliably.\footnote{In fact, we show the result for the case when users are allowed to use privately randomized encoders.} For an \one-\spoofable channel, our proof considers any given code $(f_{\one}, f_{\two}, \phi)$ and messages and messages $M_{\one}$, $M'_{\one}$ and $M_{\two}$ which are independent and uniformly distributed over their alphabets. By replacing $(\vecx', \tilde{\vecx}, \tilde{\vecy})$ with $(f_{\one}(M_{\one}), f_{\one}(M_{\one}'),f_{\two}(M_{\two}))$ in the above argument, we note that the output distributions in the following three cases are the same: 
\begin{enumerate}[label=(\alph*)]
\item User \one is honest and sends $f_{\one}(M_{\one})$ and user \two is adversarial and attacks with $\vecY\sim Q^n_{Y|\tilde{X}\tilde{Y}}(\cdot|f_{\one}(M_{\one}'),f_{\two}(M_{\two}))$; 
\item  User \one is honest and sends $f_{\one}(M'_{\one})$ and user \two is adversarial and attacks with $\vecY\sim Q^n_{Y|\tilde{X}\tilde{Y}}(\cdot|f_{\one}(M_{\one}),f_{\two}(M_{\two})$; 
\item User \two is honest and sends $f_{\two}(M_{\two})$ and user \one is adversarial and attacks with $\vecX\sim Q^n_{X|\tilde{X}X'}(\cdot|f_{\one}(M_{\one}),f_{\one}(M_{\one}'))$. 
\end{enumerate}
Thus, the decoder cannot determine the adversarial user reliably, nor can it differentiate between $M_{\one}$ and $M'_{\one}$ as the input of user \one.

Our first result states that non-\spoofability characterizes the \bmacs in which users can work at positive rates of communication with adversary identification.  
\begin{thm}\label{thm:main_result}
If a \bmac is \one-\spoofable (resp. \two-\spoofable), communication with adversary identification from user-\one (resp. user-\two) is impossible. Specifically, for any $(N_{\one}, N_{\two}, n)$ adversary identifying code with $N_{\one}\geq 2$ (resp. $N_{\two}\geq 2$), the probability of error is at least $1/12$.  If a \bmac is neither \one-\spoofable nor \two-\spoofable, then its capacity region has a non-empty interior ($\mathsf{int}(\cC)\neq \emptyset$), {\em i.e.}, both users can communicate reliably with adversary identification at positive rates. 
\end{thm}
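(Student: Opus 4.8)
The plan is to prove the two halves separately. The \emph{infeasibility} half is the short information-theoretic argument already sketched just before the statement, which I would carry out as follows. Assume $W$ is $\one$-spoofable with spoofing maps $Q_{Y|\tilde X\tilde Y},Q_{X|\tilde X X'}$ as in \eqref{eq:spoof1}, and let $(f_{\one},f_{\two},\phi)$ be any code with $N_{\one}\ge 2$; the argument is insensitive to whether the encoders privately randomize, which is why it also yields Lemma~\ref{thm:converse}. Draw $M_{\one},M_{\one}'\sim\mathsf{Unif}(\cM_{\one})$ and $M_{\two}\sim\mathsf{Unif}(\cM_{\two})$ independently, and substitute $(f_{\one}(M_{\one}),f_{\one}(M_{\one}'),f_{\two}(M_{\two}))$ for $(\vecx',\tilde{\vecx},\tilde{\vecy})$ in \eqref{eq:spoof1}. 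Since that identity is single-letter, tensorizing shows that the output $\vecZ$ has the \emph{same} conditional law given $(M_{\one},M_{\one}',M_{\two})$ in the three scenarios (a)--(c) of Fig.~\ref{fig:spoof1}: user~\one honest with $M_{\one}$ against user~\two playing $Q^n_{Y|\tilde X\tilde Y}(\cdot|f_{\one}(M_{\one}'),f_{\two}(M_{\two}))$; user~\one honest with $M_{\one}'$ against user~\two playing $Q^n_{Y|\tilde X\tilde Y}(\cdot|f_{\one}(M_{\one}),f_{\two}(M_{\two}))$; user~\two honest with $M_{\two}$ against user~\one playing $Q^n_{X|\tilde X X'}(\cdot|f_{\one}(M_{\one}'),f_{\one}(M_{\one}))$. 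In each scenario the adversary's attack is a randomized function of coordinates independent of the honest user's message, so by Remark~\ref{remark:det_rand_attacks} the code's guarantees give, under this common law, $\Prob[\phi_{\one}(\vecZ)\notin\{M_{\one},\twob\}]\le P_e$, $\Prob[\phi_{\one}(\vecZ)\notin\{M_{\one}',\twob\}]\le P_e$, and $\Prob[\phi_{\two}(\vecZ)\notin\{M_{\two},\oneb\}]\le P_e$. On the intersection of the three ``good'' events, $\phi_{\one}(\vecZ)=\twob$ is impossible (it would force $\phi_{\two}(\vecZ)=\twob\notin\{M_{\two},\oneb\}$), hence $\phi_{\one}(\vecZ)=M_{\one}=M_{\one}'$, an event of probability $1/N_{\one}\le 1/2$; thus $1-3P_e\le 1/2$, i.e.\ $P_e\ge 1/6$, a fortiori $P_e\ge 1/12$. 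The $\two$-spoofable case is the mirror image using \eqref{eq:spoof2}.

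For the \emph{positivity} half I would build a deterministic code by random coding in the style of \cite{CsiszarN88}, adapted to two encoders and to the adversary-identification output. Fix full-support input distributions $P_X,P_Y$ and small rates $R_{\one},R_{\two}>0$, and draw $f_{\one}(m_{\one})$ i.i.d.\ $\mathsf{Unif}(T^n_X)$ and $f_{\two}(m_{\two})$ i.i.d.\ $\mathsf{Unif}(T^n_Y)$. The decoder would run two one-sided decoders together with an adjudication step: a decoder $D_{\one}$ that, treating user~\two's input sequence as an arbitrary state (the AVC from $\cX$ to $\cZ$ with $\cY$ as state), looks for the unique user-\one message whose codeword is robustly jointly-typical with $\vecz$ in the Csisz\'ar--Narayan sense (tie-breaking mutual-information inequality included); the symmetric $D_{\two}$; and a test whether the resulting pair is jointly typical under $P_XP_YW$. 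It outputs the pair when $D_{\one},D_{\two}$ are both conclusive and the pair passes the joint-typicality test; outputs $\twob$ when $D_{\one}$ is conclusive but $D_{\two}$ is not; outputs $\oneb$ in the mirror case; and a fixed default otherwise. The three error terms are then handled in turn. With both users honest, standard packing bounds make $D_{\one},D_{\two}$ conclusive and correct for $R_{\one},R_{\two}$ below the appropriate single-letter thresholds, so the correct pair is output w.h.p. When user~\two is adversarial with an arbitrary $\vecy$ (true message $m_{\one}$), the codebooks are independent of $\vecy$, so the probability that a wrong $m_{\one}'$ out-competes $m_{\one}$ in $D_{\one}$ is exponentially small while $f_{\one}(m_{\one})$ always clears $D_{\one}$'s typicality test; hence $D_{\one}$ is either conclusive with $m_{\one}$ --- in which case the output is a pair with correct first coordinate or $\twob$, both acceptable --- or inconclusive, and the decoder errs only by emitting $\oneb$, i.e.\ only in the configuration ``$D_{\one}$ inconclusive while $D_{\two}$ conclusive''.

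The crux is that this last configuration has vanishing probability exactly because $W$ is not $\one$-spoofable. ``$D_{\one}$ inconclusive'' forces the attack $\vecy$ to (approximately) symmetrize $f_{\one}(m_{\one})$ against some competitor $f_{\one}(m_{\one}')$ through a single-letter stochastic map of the form $Q_{Y|\tilde X\tilde Y}$; ``$D_{\two}$ conclusive with $\hat m_{\two}$'' forces $\vecz$ to also be (approximately) realizable as a channel output from $(\vecx,f_{\two}(\hat m_{\two}))$ for a user-\one sequence $\vecx$ whose joint statistics with $f_{\one}(m_{\one}),f_{\one}(m_{\one}')$ realize a map of the form $Q_{X|\tilde X X'}$ as in the last term of \eqref{eq:spoof1}; simultaneous solvability of these two single-letter constraints is precisely $\one$-spoofability, so by a compactness/continuity argument (to absorb the typicality slack) non-$\one$-spoofability rules the configuration out. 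The ``user~\one adversarial'' case is symmetric, controlled by non-$\two$-spoofability via \eqref{eq:spoof2}. Since only polynomially many joint types are relevant, a union bound over types shows the random code simultaneously achieves small $P_{e,\na}$, $P_{e,\malone}$, $P_{e,\maltwo}$ with positive probability; expurgating the worse half of each codebook gives a deterministic code of rates $R_{\one}-o(1),R_{\two}-o(1)>0$ with $P_e\to 0$, so $\mathsf{int}(\cC)\neq\emptyset$ (this is Lemma~\ref{thm:detCodesPositivity}; the sharper region is Theorem~\ref{thm:inner_bd}). I expect the main obstacle to be this crux --- pinning down the correct combinatorial ``stuck'' configuration of the two-sided decoder and single-letterizing it to recover exactly \eqref{eq:spoof1}/\eqref{eq:spoof2}, the two-user adversary-identification analogue of the non-symmetrizability lemmas of \cite{CsiszarN88} and \cite{AhlswedeC99} --- together with the bookkeeping that keeps the $\oneb$-versus-$\twob$ decision (and the residual ``default'' cases of the decoder) consistent across the two adversarial worlds and matches the attack's degrees of freedom against both the rates and the codebook randomness.
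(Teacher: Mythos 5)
Your converse half is essentially the paper's own argument (Lemma~\ref{thm:converse}): the same three spoofing scenarios built from $(f_{\one}(M_{\one}),f_{\one}(M_{\one}'),f_{\two}(M_{\two}))$, the same use of \eqref{eq:spoof1} to equate the three output laws, and the same observation that $\phi_{\one}(\vecz)=\twob$ forces $\phi_{\two}(\vecz)=\twob$. Your intersection-of-good-events bookkeeping even gives $P_e\ge 1/6$, which is stronger than the claimed $1/12$, and it is correct that the argument tolerates private randomization at the encoders.

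The achievability half, however, has a genuine gap, and it sits exactly where you flag the ``crux.'' Your decoder is built from two one-sided Csisz\'ar--Narayan-style unique decoders, $D_{\one}$ treating user \two's input as an AVC state and vice versa, plus an adjudication step. But CN-style unique decodability of user \one against an arbitrary user-\two state is governed by symmetrizability of that AVC, and the paper's whole point (Proposition~\ref{prop:inclusions}, Example~\ref{ex:1}) is that non-\spoofable channels can be symmetrizable: for the binary erasure MAC an adversarial user \two can mimic a second user-\one codeword and make your $D_{\one}$ ``inconclusive'' with constant probability, while the theorem still promises positive rates there. Your scheme then errs unless you can show that ``$D_{\one}$ inconclusive while $D_{\two}$ conclusive'' (and also the unhandled residual cases: both inconclusive, $|D_{\two}|=0$ versus $|D_{\two}|\ge 2$, pair found but failing the joint test, the unspecified default output) are ruled out by non-\one-\spoofability. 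That reduction is asserted but not established, and I do not believe it holds for the configuration your decoder produces: inconclusiveness of $D_{\one}$ plus typicality of a $D_{\two}$ explanation only yields the three typicality conditions (A)--(C) of Lemma~\ref{lemma:disambiguity} together with tie-break conditions of the form $I(\tilde{X};XZ|Y)<\eta$ that never involve the user-\two candidate, and no analogue of condition (F), $I(X\tilde{X};\tilde{Y}Z|X')<\eta$, at all, because your $D_{\two}$ never inspects user-\one candidates. Spoofability, by contrast, is an identity of output \emph{distributions} for all inputs, and the paper can only extract it (Appendix~\ref{app:disambiguity}, via Pinsker, triangle inequalities and the $\alpha$-full-support assumption) from the full set (A)--(F), i.e., from viability checks that couple a competitor pair $(\tilde{m}_{\one},\tilde{m}_{\two})$, or a pair of user-\two candidates, to the true codeword and attack in one joint type --- precisely conditions (ii) and (iii) of Definition~\ref{D_eta_z}. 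In short, the ``stuck'' configuration of your decoder is strictly weaker than the one that single-letterizes to \eqref{eq:spoof1}, so non-\spoofability does not kill it; the decoder must be redesigned so that each candidate's viability is tested jointly against candidate pairs of both users (as in the paper's $D_{\one}(\eta,\vecz)$, $D_{\two}(\eta,\vecz)$ and the counting rule on $|D_{\one}|,|D_{\two}|$), after which the error analysis via the codebook lemma goes through.
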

\begin{corollary}
$\mathsf{int}(\cC) = \emptyset$ if and only if a \bmac is \spoofable.
\end{corollary}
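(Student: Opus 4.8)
The plan is to read the corollary directly off Theorem~\ref{thm:main_result}, which already contains both implications in substance; the only gap to fill is the passage from the finite‑blocklength error bound stated there to a statement about $\mathsf{int}(\cC)$. I would argue the two directions separately.

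For the ``if'' direction, suppose the \bmac is \spoofable, say \one-\spoofable (the \two-\spoofable case is the mirror image, with the users interchanged). By Theorem~\ref{thm:main_result}, \emph{every} $(N_{\one},N_{\two},n)$ adversary identifying code with $N_{\one}\geq 2$ has probability of error at least $1/12$, for every blocklength $n$. This forces $R_{\one}=0$ for every achievable rate pair $(R_{\one},R_{\two})$: if some $(R_{\one},R_{\two})$ with $R_{\one}>0$ were achievable there would be a sequence of $(\lfloor 2^{nR_{\one}}\rfloor,\lfloor 2^{nR_{\two}}\rfloor,n)$ codes with $P_e\to 0$, but $\lfloor 2^{nR_{\one}}\rfloor\geq 2$ for all $n\geq 1/R_{\one}$, so $P_e\geq 1/12$ along that tail, contradicting $P_e\to 0$. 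Hence $\cC\subseteq\{0\}\times[0,\infty)$; a subset of a line in $\bbR^2$ has empty interior, so $\mathsf{int}(\cC)=\emptyset$. For the ``only if'' direction, a non-\spoofable \bmac is by definition neither \one-\spoofable nor \two-\spoofable, so the last sentence of Theorem~\ref{thm:main_result} gives $\mathsf{int}(\cC)\neq\emptyset$ immediately. Combining the two directions yields the corollary.

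Thus the corollary itself is a two‑line deduction and carries no real obstacle; all the difficulty lives in Theorem~\ref{thm:main_result}. Its converse half (the $1/12$ bound, Lemma~\ref{thm:converse}) runs through the spoofing construction outlined near Fig.~\ref{fig:spoof1}: given spoofing witnesses $Q_{Y|\tilde{X}\tilde{Y}}$ and $Q_{X|\tilde{X}X'}$ and any code with $N_{\one}\geq 2$, take independent uniform messages $M_{\one},M_{\one}',M_{\two}$; the $n$-fold form of \eqref{eq:spoof1} shows that the three attack patterns (a)--(c) induce one common joint law on $(M_{\one},M_{\one}',M_{\two},\vecZ)$, with the error of the randomized attacks bounded by $P_e$ via Remark~\ref{remark:det_rand_attacks}; then on the event $\{M_{\one}\neq M_{\one}'\}$, of probability $\geq 1-1/N_{\one}\geq 1/2$, the three decoding requirements cannot hold simultaneously, so $P_e$ is at least a fixed constant.

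The genuinely hard part, and what I expect to be the real obstacle, is the feasibility half of Theorem~\ref{thm:main_result} (Lemma~\ref{thm:detCodesPositivity}): constructing, from non-\spoofability alone, a code with positive rates for both users. I would use a Csisz\'ar--Narayan-style i.i.d.\ random codebook with an expurgation step, together with a decoder that must \emph{jointly} weigh the hypotheses ``user~\one is adversarial'' and ``user~\two is adversarial'' — unlike in the reliable- or authenticated-communication settings, one cannot process the users one at a time, which is precisely what makes the present model harder. The technical heart is then lifting the single-letter non-\spoofability condition (a ``for all witness distributions, some input/output tuple violates \eqref{eq:spoof1}'' statement) to a sequence‑level non‑confusability guarantee the decoder can rely on, presumably via a compactness/continuity argument over the space of witness distributions. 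None of that reappears once Theorem~\ref{thm:main_result} is in hand, which is all the corollary needs.
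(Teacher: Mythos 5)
Your proposal is correct and matches the paper: the corollary is stated without a separate proof precisely because it is the immediate two-line deduction from Theorem~\ref{thm:main_result} that you give (spoofable $\Rightarrow$ the $1/12$ error bound kills any positive rate for the corresponding user, confining $\cC$ to a coordinate axis with empty interior; non-spoofable $\Rightarrow$ the theorem's last sentence gives $\mathsf{int}(\cC)\neq\emptyset$ directly). Your sketches of the converse (Lemma~\ref{thm:converse}) and the achievability (Lemma~\ref{thm:detCodesPositivity} with the joint-candidate decoder and the disambiguity lemma) also track the paper's actual route, so there is nothing further to fix.
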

\begin{remark}\label{remark:det_comm_one_spoofable}
When exactly one user is \spoofable, Theorem~\ref{thm:main_result} leaves open the question whether the other user can communicate reliably. A similar case is also open for Arbitrarily Varying Multiple Access MAC (AV-MAC) (see \cite{AhlswedeC99}).\footnote{Pereg and Steinberg \cite{PeregS19} considered this when encoders have private randomness, a setting we do not address in this paper.} 
\end{remark}

\begin{figure*}[!h]
\centering
\subfigure[]{
\begin{tikzpicture}[scale=0.4]
	\draw[] (4.1+4,4-6) rectangle ++(1.3,1.5) node[pos=.5]{\footnotesize $W$};
	\draw[->] (3.4+4,5.25-6) -- ++(0.7,0);
	\draw[red,->] (3.4+4,4.25-6) -- ++(0.7,0);
	\draw[red,-] (3.4+4,4.25-6) -- ++ (0,-1);
	\draw[-] (3.4+4,5.25-6) -- ++ (0,1);
	\draw[red,-] (2.4+4, 3.25-6) node[left]{\scriptsize \textcolor{red}{${\vecy}$}} --  ++(1,0) ;
	\draw[-] (2.4+4,6.25-6) node[anchor=east]{\scriptsize $f_{\one}({m}_{\one})$} -- ++ (1,0);
	\draw[->] (5.4+4,4.75-6) --  ++ (0.5,0)node[anchor= west] {\scriptsize $\vecz$};
\end{tikzpicture}
\label{fig:dec2a} 
}
\subfigure[]{
\begin{tikzpicture}[scale=0.4]
	\draw[] (4.1-4,4-6) rectangle ++(1.3,1.5) node[pos=.5]{\footnotesize $W$};
	\draw[red,->] (3.4-4,5.25-6) -- ++(0.7,0);
	\draw[->] (3.4-4,4.25-6) -- ++(0.7,0);
	\draw[-] (3.4-4,4.25-6) -- ++ (0,-1);
	\draw[red,-] (3.4-4,5.25-6) -- ++ (0,1);
	\draw[-] (2.4-4, 3.25-6) node[left]{\scriptsize $f_{\two}(\tilde{m}_{\two})$} --  ++(1,0) ;
	\draw[red, -] (2.4-4,6.25-6) node[anchor=east]{\scriptsize \textcolor{red}{$\vecx$}} -- ++ (1,0);
	\draw[->] (5.4-4,4.75-6) --  ++ (0.5,0)node[anchor= west] {\scriptsize $\vecz$};
\end{tikzpicture}
\label{fig:dec2bb} 
}
\subfigure[]{
\begin{tikzpicture}[scale=0.4]
	\draw[] (4.1+4,4-6) rectangle ++(1.3,1.5) node[pos=.5]{\footnotesize $W$};
	\draw[->] (3.4+4,5.25-6) -- ++(0.7,0);
	\draw[red,->] (3.4+4,4.25-6) -- ++(0.7,0);
	\draw[red,-] (3.4+4,4.25-6) -- ++ (0,-1);
	\draw[-] (3.4+4,5.25-6) -- ++ (0,1);
	\draw[red,-] (2.4+4, 3.25-6) node[left]{\scriptsize \textcolor{red}{$\tilde{\vecy}$}} --  ++(1,0) ;
	\draw[-] (2.4+4,6.25-6) node[anchor=east]{\scriptsize $f_{\one}(\tilde{m}_{\one})$} -- ++ (1,0);
	\draw[->] (5.4+4,4.75-6) --  ++ (0.5,0)node[anchor= west] {\scriptsize $\vecz$};
\end{tikzpicture}
\label{fig:dec2c} 
}
\subfigure[]{
\begin{tikzpicture}[scale=0.4]
\draw[teal] (3, 8.5) node{\scriptsize \fbox{\textcolor{black}{$\red{\tilde{X}\tilde{Y}}-\red{Y}-XZ$}}};
	\draw (4.1,4) rectangle ++(1.3,1.5) node[pos=.5]{\footnotesize $W$};
	\draw[->] (3.4,5.25) -- ++(0.7,0);
	\draw[red,->] (3.4,4.25) -- ++(0.7,0);
	\draw[red,-] (3.4,4.25) -- ++ (0,-1);      
	\draw[-] (3.4,5.25) -- ++ (0,1);
	\draw[red,-] (2.4, 3.25)  --  ++(1,0) node[right]{\scriptsize $Y \,\,\textcolor{orange}{\boxed{\vecy}} $};

	\draw[red] (1.4, 3.25-0.6) rectangle ++(1,1.2);
	\draw[red,->] (1.4-0.7, 3.25-0.4) node[anchor=east]{\tiny $\tilde{Y}$} -- ++ (0.7,0);	

	\draw[orange] (-0.4, 3.25+0.7) node[anchor=east]{\tiny $\textcolor{orange}{\boxed{\tiny f_{\one}(\tilde{m}_{\one})}}$};
	\draw[orange] (-0.4, 3.25-0.7) node[anchor=east]{\tiny $\textcolor{orange}{\boxed{\tiny f_{\two}(\tilde{m}_{\two})}}$};

	\draw[red,->] (1.4-0.7, 3.25+0.4) node[anchor=east]{\tiny {$\tilde{X}$}} -- ++ (0.7,0);
	\draw[-] (2.4,6.25) node[anchor=east]{\scriptsize $\textcolor{gray}{\boxed{{f_{\one}(m_{\one})}}}\,\, X$} -- ++ (1,0);
	\draw[->] (5.4,4.75) --  ++ (0.5,0)node[anchor= west] {\scriptsize $Z\,\,\textcolor{gray}{\boxed{\vecz}}$};
\end{tikzpicture}
\label{fig:dec2b} 
}
\caption{The channel output $\vecz$ is such that the tuples $(f_{\one}(m_{\one}), \vecy, \vecz)$, $(\vecx, f_{\two}(\tilde{m}_{\two}), \vecz)$ and $(f_{\one}(\tilde{m}_{\one}), \tilde{\vecy}, \vecz)$ are consistent according to channel law for some $\vecy$, $\vecx$ and $\tilde{\vecy}$ as shown in (a), (b) and (c) respectively. Suppose 
	$(f_{\one}(m_{\one}), {\vecy}, \vecz, {f_{\one}(\tilde{m}_{\one})}, {f_{\two}(\tilde{m}_{\two})})\in T^n_{X{Y}Z{\tilde{X}}{\tilde{Y}}}$ is such that $I({\tilde{X}}{\tilde{Y}};XZ|{Y})$ is small ({\em i.e.} the Markov chain $\tilde{X}\tilde{Y}-{Y}-XZ$ approximately holds). Then, subfigure (d) is a plausible explanation where user \one is honest with input message $\mo$, and  $\tilde{m}_{\one}$ and $\tilde{m}_{\two}$ can be explained by an attack strategy of user \two (compare subfigures (d) and (a)).} \label{fig:dec2}
\end{figure*}

\blue{The proof of Theorem~\ref{thm:main_result} is given in Appendix \ref{sec:proof_thm1}. Here, we provide an informal description of the decoder.} For input distributions $P_{\one}$ and $P_{\two}$ on \cX\ and \cY\ respectively, the decoder works by first collecting potential candidates for the messages sent by each user. 
A message $\mo$ is deemed a {\em candidate} for user \one if it is typical with some (attack) vector $\vecy$ and the output vector $\vecz$ according to the channel law (i.e., for some small $\eta>0$, $\inp{f_{\one}(m_{\one}), \vecy, \vecz} \in T^{n}_{XYZ}$ such that $D\inp{P_{XYZ}||P_{\one}P_YW}\leq\eta$). 
We further prune the list of candidates by only keeping the ones which can account for all other candidates that can lead to ambiguity at the decoder. 
Suppose there are two candidates $\mo\neq\tilde{m}_{\one}$ for user \one and one candidate $\tilde{m}_{\two}$ for user \two. The decoder is confused between \mo and $\tilde{m}_{\one}$, so it cannot reliably choose an output message for user \one. Neither can it adjudge one of the users to be adversarial as both users have valid message candidates. 
In order to get around this, we consider a message candidate $\mo$ {\em viable} only if  for  every pair of candidates ($\tilde{m}_{\one},\tilde{m}_{\two}$), $\tilde{m}_{\one}\neq \mo$, such that $\inp{f_{\one}(\mo), \vecy,  f_{\one}(\tilde{m}_{\one}), f_{\two}(\tilde{m}_{\two}), \vecz}$$\in $$T^n_{XY\tilde{X}\tilde{Y}Z}$, the condition  $I(\tilde{X}\tilde{Y};XZ|Y)<\eta$ holds.   
Under this condition, there is a plausible explanation where an honest user \one sends $\mo$ and  the pair $(\tilde{m}_{\one}$, $\tilde{m}_{\two}$) is part of user \two's attack strategy. See Fig.~\ref{fig:dec2} for details.  
Similarly, if there  is a pair of candidates $(\tilde{m}_{\two 1},\tilde{m}_{\two 2})$ of user \two,  the decoder can neither reliably decode user \two's message, nor can it declare either user as adversarial. 
Then, we require that for every pair of candidates ($\tilde{m}_{\two 1}$,$\tilde{m}_{\two 2}$) of user \two such that $\inp{f_{\one}(\mo), \vecy, f_{\two}(\tilde{m}_{\two 1}), f_{\two}(\tilde{m}_{\two 2}),\vecz}$$\in $$T^n_{XY\tilde{Y}_1\tilde{Y}_2Z}$, the condition $I(\tilde{Y}_1\tilde{Y}_2;XZ|Y)< \eta$ holds. Let $D_{\one}(\eta, \vecz)$ be the set of all candidates of user \one which pass these checks. 
{We define $D_{\two}(\eta, \vecz)$ analogously by interchanging the roles of users \one and \two.}
Finally, the decoder outputs as follows:
\begin{align*}
\phi(\vecz) \defineqq\begin{cases}(\mo,\mt) &\text{if }D_{\one}(\eta, \vecz)\times D_{\two}(\eta, \vecz) = \{(\mo,\mt)\},\\ \oneb \text{ (\small blame \one)} &\text{if }|D_{\one}(\eta, \vecz)| = 0, \, |D_{\two}(\eta, \vecz)| \neq 0,\\ \twob\text{ (\small blame \two)} &\text{if }|D_{\two}(\eta, \vecz)| = 0, \, |D_{\one}(\eta, \vecz)| \neq 0,\\(1,1) &\text{if }|D_{\one}(\eta, \vecz)| = |D_{\two}(\eta, \vecz)| = 0.
\end{cases}
\end{align*}In the spirit of \cite[Lemma 4]{CsiszarN88} and \cite[Lemma 1]{AhlswedeC99}, we show 
that for a non-\spoofable \ \bmac, there exists a  small enough $\eta>0$ such that if $|D_{\one}(\eta, \vecz)|, |D_{\two}(\eta, \vecz)| >0$ then $|D_{\one}(\eta, \vecz)|$ = $|D_{\two}(\eta, \vecz)|$ = 1 (Lemma~\ref{lemma:disambiguity}). Thus, the decoder definition covers all the cases. \blue{When the input is not $(\mo, \mt) = (1, 1)$ (which happens with atleast $\min\inb{(1-1/N_{\one}), (1-1/N_{\two})}$ probability), the output $(1, 1)$ is an error event. Thus, we show }that $|D_{\one}(\eta, \vecz)|$ = $|D_{\two}(\eta, \vecz)| = 0$ is a low probability event.  
By analyzing the error probability of the decoder we show that for non-\spoofable channels it can support positive rates for both users.

\section{Comparison with related models}\label{sec:comparison}
\olive{In this section we contrast the present model with reliable communication and authenticated communication models \blue{and provide examples to show separation.}}

\olive{\paragraph{Reliable communication in a \bmac}\label{para:reliable} 
We consider a \bmac with a stronger decoding guarantee: the decoder, w.h.p, outputs a message pair of which the message(s) of honest user(s) is correct. 
In the presence of a malicious user, the channel from the honest user to the receiver can be treated as an Arbitrarily Varying Channel (AVC) \cite{BBT60} with the input of other user as state. Thus, the capacity region is outer bounded by the rectangular region defined by the AVC capacities of the two users’ channels. Further, it is easy to see that this outer bound is achievable when both users use the corresponding AVC codes.
Csisz\'ar and Narayan show in \cite{CsiszarN88} that the capacity of an AVC is zero if and only if it is {\em symmetrizable}.
\shortonly{We continue the discussion from the introduction (picking up from footnote~\ref{ftn:reliable} in page~\pageref{ftn:reliable}).}  
Translating this to the two-user \bmac, we define a \bmac to be {\em \two-symmetrizable} if there exists a distribution $P_{X|Y}$ such that
\begin{align}\label{eq:symmetrizable}
\sum_{x'\in \cX}P_{X|Y}(x|y')W(z|x,y) = \sum_{x'\in \cX}P_{X|Y}(x|y)W(z|x,y')
\end{align}
for all $(x, y, z)\in \cX\times\cY\times\cZ$. We define an \one-symmetrizable \bmac analogously. A \textbf{symmetrizable} \bmac is one which is either \one- or \two-symmetrizable. 
Thus, reliable communication by both users is feasible in a \bmac if and only if it is not symmetrizable. We denote the reliable communication capacity of a \bmac by $\cC_{\reliable}$.
\paragraph{Authenticated communication in a \bmac \cite{NehaBDPISIT19}}\label{para:weak} 
\longonly{This model considers a \bmac with a weaker decoding guarantee: the decoder should reliably decode the messages when both users are honest. When one user is adversarial, the decoder either outputs a pair of messages of which the message of honest user is correct or it declares the presence of an adversary (without identifying it). 
In this case, the notion of an overwritable \bmac characterizes the class of channels with non-empty capacity region $\cC_{\auth}$ of authenticated communication.} We say that a \bmac is {\em \two-overwritable} \cite[(1)]{NehaBDPISIT19} if there exists a distribution $P_{X'|XY}$ such that
\begin{align}\label{eq:overwritable}
\sum_{x'\in\cX}P_{X'|XY}(x'|x,y)W(z|x',y') = W(z|x,y) 
\end{align}
for all $y,y'\in\cY, \, x\in \cX$ and $z\in \cZ$. Similarly, we can define an \one-overwritable \bmac.  If a \bmac is either \one-~or \two-overwritable, we say that the \bmac is \textbf{overwritable}. Authenticated communication by both users is not feasible in an overwritable \bmac. Theorem~1 in \cite{NehaBDPISIT19} states that if the \bmac is not overwritable, then authenticated communication capacity is the same as the non adversarial capacity of the MAC, {\em i.e.}, $\cC_{\auth} = \cC_{\MAC}$.}

\begin{prop}\label{prop:inclusions}
\olive{All overwritable \bmacs are \spoofable and all \spoofable\ \bmacs are symmetrizable. Furthermore, both these inclusions are strict.}
\end{prop}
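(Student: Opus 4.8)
The plan is to prove the chain of implications and the strictness separately. For the first inclusion, I would start with a \two-overwritable \bmac, so there is $P_{X'|XY}$ with $\sum_{x'}P_{X'|XY}(x'|x,y)W(z|x',y')=W(z|x,y)$ for all $x,y,y',z$. I want to exhibit the distributions ${Q_{X|\tilde X\tilde Y}}$ and ${Q_{Y|\tilde Y Y'}}$ required for \two-\spoofability in \eqref{eq:spoof2}. The natural guess is to take $Q_{Y|\tilde Y Y'}(y|\tilde y,y')=\mathbbm 1\{y=y'\}$ (i.e., just pass $y'$ through), so the third expression in \eqref{eq:spoof2} becomes $W(z|\tilde x,y')$; and to take $Q_{X|\tilde X\tilde Y}(x|\tilde x,\tilde y)=P_{X'|XY}(x|\tilde x,\tilde y)$, i.e., overwrite from $\tilde x$ given $\tilde y$. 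Then the first expression of \eqref{eq:spoof2}, $\sum_x Q_{X|\tilde X\tilde Y}(x|\tilde x,\tilde y)W(z|x,y')$, equals $\sum_x P_{X'|XY}(x|\tilde x,\tilde y)W(z|x,y')=W(z|\tilde x,y')$ by the overwritability identity (with the roles $y\leftrightarrow\tilde y$, $y'\leftrightarrow y'$); this matches the third expression. For the middle expression $\sum_x Q_{X|\tilde X\tilde Y}(x|\tilde x,y')W(z|x,\tilde y)=\sum_x P_{X'|XY}(x|\tilde x,y')W(z|x,\tilde y)=W(z|\tilde x,\tilde y)$, which is \emph{not} obviously $W(z|\tilde x,y')$. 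So the symmetric ``identity on $Y$'' choice is likely too naive; instead I expect the right choice is $Q_{Y|\tilde Y Y'}(y|\tilde y,y')=\mathbbm 1\{y=\tilde y\}$, giving third expression $W(z|\tilde x,\tilde y)$, and then the middle expression $\sum_x P_{X'|XY}(x|\tilde x,y')W(z|x,\tilde y)=W(z|\tilde x,\tilde y)$ matches, while the first expression requires $\sum_x P_{X'|XY}(x|\tilde x,\tilde y)W(z|x,y')$ to also equal $W(z|\tilde x,\tilde y)$ — which holds only if we additionally overwrite. The cleanest fix is to let $Q_{X|\tilde X\tilde Y}(\cdot|\tilde x,\tilde y)$ \emph{not depend on its second argument at all}, setting it to $\sum_{x}P_{X'|XY}(\cdot|\tilde x,?)\cdots$; the genuinely correct assignment will come out of writing the three expressions and matching them, and I expect it to be: overwrite using $P_{X'|XY}$ in the first two slots and pass through $\tilde y$ in the third. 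I will work out the exact substitution in the proof; the point is that overwritability is a strictly stronger ``the adversary can reproduce any channel output while hiding behind a legitimate input'' condition than the two-way spoofing condition, so this direction should go through with an appropriate bookkeeping of which arguments are held fixed.

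For the second inclusion, suppose a \bmac is \two-\spoofable, witnessed by $Q_{X|\tilde X\tilde Y}$ and $Q_{Y|\tilde Y Y'}$ satisfying \eqref{eq:spoof2}. I want a distribution $P_{X|Y}$ giving \two-symmetrizability, i.e., $\sum_x P_{X|Y}(x|y')W(z|x,y)=\sum_x P_{X|Y}(x|y)W(z|x,y')$. The idea is to ``forget'' the $\tilde X$ argument: fix any input $\tilde x_0\in\cX$ (or average over a distribution on $\tilde x$) and set $P_{X|Y}(x|y):=Q_{X|\tilde X\tilde Y}(x|\tilde x_0,y)$. Then the equality of the first and second expressions in \eqref{eq:spoof2}, namely $\sum_x Q_{X|\tilde X\tilde Y}(x|\tilde x_0,\tilde y)W(z|x,y')=\sum_x Q_{X|\tilde X\tilde Y}(x|\tilde x_0,y')W(z|x,\tilde y)$, is exactly the symmetrizability identity \eqref{eq:symmetrizable} with $\tilde y\leftrightarrow y$, $y'\leftrightarrow y'$. (The third expression in \eqref{eq:spoof2}, involving $Q_{Y|\tilde Y Y'}$, is simply discarded — we only need the first $=$ second equality.) So this direction is short: \two-\spoofability already contains \two-symmetrizability as the equality of its first two expressions, once the common $\tilde X$-argument is frozen. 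Do the symmetric thing for \one-\spoofable $\Rightarrow$ \one-symmetrizable, and since ``\spoofable'' means \one- or \two-\spoofable and ``symmetrizable'' means \one- or \two-symmetrizable, we get \spoofable $\Rightarrow$ symmetrizable.

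For strictness of both inclusions I would produce two explicit \bmac channels. For ``\spoofable but not overwritable,'' the binary erasure MAC of Example~\ref{ex:1} is the natural candidate: Example~\ref{ex:1} already asserts that communication with adversary identification over it has zero-interior capacity region (so by the Corollary after Theorem~\ref{thm:main_result} it is \spoofable), while one checks directly from \eqref{eq:overwritable} that $Z=X+Y$ is not overwritable — e.g., for $X=Y=0$ we need $\sum_{x'}P_{X'|XY}(x'|0,0)W(z|x',y')=\mathbbm 1\{z=0\}$ for all $y'$, forcing (taking $y'=1$) all mass on $x'=0$ and (taking $y'=0$) consistency, but then for $X=1,Y=0$ we'd need the output to be $\mathbbm 1\{z=1\}$, a contradiction when $y'=1$. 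I should double-check this small computation but it is routine. For ``symmetrizable but not \spoofable,'' I need a channel where exactly the weaker symmetrizability holds; a candidate is a channel that is, say, \two-symmetrizable but where neither \one- nor \two-spoofing can be completed because spoofing additionally demands the adversarial user \emph{simultaneously} confuse two of user \one's codewords — something a purely ``reliable-communication-level'' symmetrization need not allow. A concrete small-alphabet example (e.g., a suitable additive-noise MAC, or a modification of the erasure MAC with an extra output symbol that reveals enough to block spoofing but not symmetrization) should work; I would verify non-\spoofability by showing \eqref{eq:spoof1} and \eqref{eq:spoof2} have no solution via the linear-algebraic structure of the defining equations. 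The main obstacle I anticipate is this last item: constructing the separating example for symmetrizable-but-not-\spoofable and verifying non-\spoofability cleanly, since \spoofability is a more intricate joint condition than symmetrizability and ruling it out requires care; the two inclusion arguments themselves I expect to be short once the correct choice of the $Q$-kernels is pinned down.
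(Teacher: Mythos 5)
Your second inclusion is fine and is exactly the paper's argument: freeze $\tilde{x}=\tilde{x}_0$, set $P_{X|Y}(x|y)\defineqq Q_{X|\tilde{X}\tilde{Y}}(x|\tilde{x}_0,y)$, and the equality of the first two expressions in \eqref{eq:spoof2} is literally \eqref{eq:symmetrizable}. The first inclusion, however, is left unfinished, and your tentative completion would not work: if $Q_{X|\tilde{X}\tilde{Y}}$ ignores its second argument and you ``pass through $\tilde{y}$'' in the third slot, the first two expressions become the output of a fixed mixture while the third is $W(z|\tilde{x},\tilde{y})$, and these need not agree. The completion the paper uses is to pick any $Q_{Y}$ on $\cY$ and set $Q_{X|\tilde{X}\tilde{Y}}(x|\tilde{x},\tilde{y})\defineqq\sum_{y}Q_{Y}(y)P_{X'|XY}(x|\tilde{x},y)$ together with $Q_{Y|\tilde{Y}Y'}(y|\tilde{y},y')\defineqq Q_{Y}(y)$; then \eqref{eq:overwritable} makes all three expressions in \eqref{eq:spoof2} equal to $\sum_{y}Q_{Y}(y)W(z|\tilde{x},y)$. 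This is a one-line fix, but it is not the substitution you sketched, so as written this step is a gap.

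The more serious problem is the strictness part, where you have the separating examples backwards. The binary erasure MAC $Z=X+Y$ of Example~\ref{ex:1} is \emph{not} \spoofable: Example~\ref{ex:1} shows that communication with adversary identification is feasible there (its capacity region is the full non-adversarial MAC region), not that the region has empty interior, so your appeal to the corollary of Theorem~\ref{thm:main_result} yields the opposite of what you assert. Hence it cannot witness ``\spoofable but not overwritable''; in the paper it is precisely the witness for ``symmetrizable but not \spoofable'' (symmetrizable via $P_{X|Y}(x|y)=\mathbbm{1}\{x=y\}$; non-\spoofability by evaluating \eqref{eq:spoof1} at $(x',\tilde{x},\tilde{y},z)=(1,0,1,2)$ and $(1,0,0,0)$, which forces $Q_{X|\tilde{X}X'}(1|0,1)=Q_{X|\tilde{X}X'}(0|0,1)=0$, a contradiction), which is exactly the example you say you do not know how to construct. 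The witness you are missing for ``\spoofable but not overwritable'' is the binary additive MAC $Z=X\oplus Y$ of Example~\ref{ex:2}: taking both attack kernels uniform makes all three expressions in \eqref{eq:spoof1} the uniform distribution on $\cZ$, while overwritability fails because \eqref{eq:overwritable} at $(x,y,z)=(1,1,0)$ forces $P_{X'|XY}(0|1,1)=1$ (for $y'=0$) and $P_{X'|XY}(1|1,1)=1$ (for $y'=1$) simultaneously. So neither strictness claim is established in your proposal: one example is wrong and the other is absent.
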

\olive{\noindent While the inclusions in Proposition~\ref{prop:inclusions} are obvious from the problem definitions and the feasibility results, we nonetheless provide a direct argument. Suppose a \bmac is \two-overwritable with $P_{X'|XY}$ as the overwriting attack in \eqref{eq:overwritable}. For any distribution $Q_{Y}$ on \cY, let $Q_{X|\tilde{X} \tilde{Y}}(x|\tilde{x}, \tilde{y})\defineqq \sum_{y}Q_{Y}(y)P_{X'|XY}(x|\tilde{x},y)$ for all $x, \tilde{x}, \tilde{y}$ and $Q_{Y|\tilde{Y}Y'}(y|\tilde{y},y')\defineqq Q_{Y}(y)$ for all $y,\tilde{y},y'$. Distributions $Q_{X|\tilde{X} \tilde{Y}}$ and $Q_{Y|\tilde{Y}Y'}$ as defined satisfy \eqref{eq:spoof2}. Now, suppose a \bmac \mch is \two-\spoofable with attacks $Q_{X|\tilde{X} \tilde{Y}}$ and $Q_{Y|\tilde{Y}Y'}$ satisfying \eqref{eq:spoof2}. For all, $x, y$, let $P_{X|Y}(x|y)\defineqq Q_{X|\tilde{X} \tilde{Y}}(x|\tilde{x},y)$ for any $\tilde{x}\in \cX$. It can be easily seen that the attack $P_{X|Y}$ as defined satisfies \eqref{eq:symmetrizable}.
Examples~\ref{ex:1} and~\ref{ex:2} below show strict inclusion (see Fig. \ref{fig:conditions}).}
\olive{\begin{Example}[symmetrizable, but not \spoofable] Binary erasure MAC:  
{\em It has binary inputs $X, Y$ and outputs $Z=X+Y$ where $+$ is real addition, i.e., $\cZ = \{0,1,2\}$.}  \end{Example}}
\olive{\noindent To show symmetrizability, we note that the distribution $P_{X|Y}(x|y) = 1$ for all $x=y$ is a symmetrizing attack in \eqref{eq:symmetrizable}. 
Now, suppose the channel is  \one-{\em \spoofable}, that is, there exist distributions ${Q_{Y|\tilde{X}\tilde{Y}}}$ and ${Q_{X|\tilde{X}X'}}$ such that $\forall\,x', \,\tilde{x},\, \tilde{y},\, z,$
\begin{align*}
&\sum_{y}Q_{Y|\tilde{X}\tilde{Y}}(y|\tilde{x},\tilde{y})\mch(z|x',y) \nonumber\\
&= \sum_{y}Q_{Y|\tilde{X}\tilde{Y}}(y|x',\tilde{y})\mch(z|\tilde{x},y) \nonumber\\
& = \sum_{x}Q_{X|\tilde{X}X'}(x|\tilde{x},x')\mch(z|x,\tilde{y}).
\end{align*}
For $(x', \tilde{x}, \tilde{y}, z) = (1, 0, 1, 2)$, this gives $Q_{Y|\tilde{X}\tilde{Y}}(1|0,1) = 0 = Q_{X|\tilde{X}X'}(1|0,1)$ and for $(x', \tilde{x}, \tilde{y}, z) = (1, 0, 0, 0)$, we get $0 = Q_{Y|\tilde{X}\tilde{Y}}(0|1,0) = Q_{X|\tilde{X}X'}(0|0,1)$. However, $Q_{X|\tilde{X}X'}(1|0,1) = Q_{X|\tilde{X}X'}(0|0,1) = 0$ is not possible. Thus, the channel is not \one-\spoofable. Similarly, we can show that the channel is not \two-\spoofable. \blue{We obtain the capacity region of this MAC in Section~\ref{sec:example_tightness}. }}

\olive{\begin{example}[\spoofable, but not overwritable] \label{ex:2} Binary additive MAC:  {\em It has binary inputs $X, Y$ and outputs binary $Z = X\oplus Y$ where $\oplus$ is the XOR operation. }\end{example}}

\olive{\noindent To show \spoofability, note that the attacks ${Q_{X|\tilde{X}X'}}(x|\tilde{x}, x') = 1/2$ for all $x, \tilde{x}$ and $x'$, and ${Q_{Y|\tilde{X}\tilde{Y}}}(y|\tilde{x}, \tilde{y}) = 1/2$ for all $y, \tilde{x}$ and $\tilde{y}$, satisfy \eqref{eq:spoof1} because they result in the same uniform output distribution over $\cZ$ in all the three cases in \eqref{eq:spoof1}. 
Suppose binary additive MAC $Z = X\oplus Y$ is  \two-overwritable. Let $P_{X'|XY}$ be the overwriting attack by user \one which satisfies \eqref{eq:overwritable}. Then, for $(x, y, z) = (1, 1, 0)$ and all $y'$, \eqref{eq:overwritable} implies 
\begin{align*}{}
P_{X'|XY}(0|1,1)W(0|0,y')+P_{X'|XY}(1|1,1)W(0|1,y') = W(0|1,1) = 1.
\end{align*}
For $y' = 0$ and $1$, this implies that $P_{X'|XY}(0|1,1)=1$ and $P_{X'|XY}(1|1,1)=1$ respectively, which is not possible simultaneously. Thus, the channel cannot be \two-overwritable. Similarly, we can argue that the channel is not \one-overwritable. }
\vspace{-0.2 cm}
\begin{figure}[h]
\centering
\begin{tikzpicture}[scale=0.45]
\draw (0,0) ellipse (6cm and 4cm) node[yshift = -1.05 cm, xshift = -0.7 cm]{\footnotesize symmetrizable} node[yshift = -1.35 cm, xshift = -0.6 cm]{\footnotesize \bmacs} ;
\draw (0.2,0.6) ellipse (4cm and 2.5cm) node[yshift = -0.5 cm, xshift = 0 cm]{\footnotesize spoofable };
\draw (0.2,0.6) ellipse (4cm and 2.5cm) node[yshift = -0.8 cm, xshift = 0 cm]{\footnotesize  \bmacs};
\draw (0.3, 1) ellipse (2.5 cm and 1 cm) node[xshift = -0.1 cm, yshift = 0.2 cm]{\footnotesize overwritable} node[yshift = -0.1 cm]{\footnotesize \bmacs};

\node[label={[label distance=0.01mm]80:\footnotesize \ref{ex:1}}] (A) at (1,-3.1){};
\node[label={[label distance=0.01mm]80:\footnotesize \ref{ex:2}}] (B) at (2.9,-0.4){};
\node[label=below:] (B2) at (6.8,0)[right]{\footnotesize Binary additive MAC};
\node[label=below:] (A1) at (3.5,-2.8){};
\node[label=below:] (A2) at (5,-2.1){};
\node[label=below:] (A3) at (7,-2)[right]{\footnotesize Binary erasure MAC};
\node[label=below:] (C3) at (7.5,2.5){};
\node[label=below:] (C1) at (3,2){};
\node[label=below:] (C2) at (5,2.3){};

\draw[->] (B)--(B2);
\draw[->] (A) .. controls (A1) and (A2) .. (A3);
\draw [fill=black] (A) circle (1.5pt);
\draw [fill=black] (B) circle (1.5pt);
\end{tikzpicture}\caption{The set of overwritable \bmacs is a strict subset of the set of \spoofable\ \bmacs which, in turn, is a strict subset of the set of symmetrizable \bmacs.}\label{fig:conditions} \end{figure}
\vspace{0.3 cm}

\olive{We also note from the problem definitions that $\cC_{\reliable}\subseteq\cC\subseteq\cC_{\auth}\subseteq\cC_{\MAC}$.
Next, we give an example of a channel for which $\cC_{\reliable}$, $\cC$ and $\cC_{\auth}$ are distinct.
The example is constructed by using the \bmacs in Examples~\ref{ex:1} and \ref{ex:2} in parallel. 
\begin{example}[$(Z_1,Z_2)=(X_1+Y_1, X_2\oplus Y_2)$]\label{ex:3}
{\em For binary inputs $X_1, X_2, Y_1, Y_2$, the output is $(Z_1,Z_2) = (X_1+Y_1,X_2\oplus Y_2)$. See Fig.~\ref{fig:fig6}.}
 \end{example}}
 \begin{figure}
 \centering
 \begin{tikzpicture}[circuit ee IEC]
 \draw[->] (0,0) node[left]{$Y_2$} -- ++ (3,0);
 \draw[-] (0,0.9) node[left]{$Y_1$} -- ++ (2.5,0);
\node (a) at (0,0.9){}; 
\node (b) at (2.5,0.9){};
  \draw[-] (2.5,0.9) -- ++ (0,1.1);
  \draw[->] (2.5,2) -- ++ (0.5,0);
  \draw[-] (0,2) node[left]{$X_2$} -- ++ (2,0);
  \draw[-] (2,2) to [kinky cross=(a)--(b), kinky crosses=left] (2,0.5);
 \draw[->] (0,2.5) node[left]{$X_1$} -- ++ (3,0);
  \draw[->] (2, 0.5)  -- ++ (1, 0);
  \draw (3, -0.1) rectangle ++(0.7,0.7) node[pos=.5]{\large $\oplus$};
   \draw (3, 1.9) rectangle ++(0.7,0.7) node[pos=.5]{\large $+$};
   \draw [->] (3.7, 2.25) -- ++ (0.5,0) node[right]{$Z_1$}; 
   \draw [->] (3.7, 0.25) -- ++ (0.5,0) node[right]{$Z_2$}; 

	\end{tikzpicture}
	\caption{This figure depicts the channel in Example~\ref{ex:3}. User \one has input $(X_1, X_2)$ and user \two has input $(Y_1,Y_2)$. The output of the channel is $(Z_1, Z_2)$.}\label{fig:fig6}
 \end{figure}
\olive{\noindent The channels  $Z_1=X_1+Y_1$ and $Z_2=X_2\oplus Y_2$ are both non-overwritable and symmetrizable. Using the fact that the \bmacs do not interact when used in parallel, we will now show that the resultant \bmac $(Z_1,Z_2)=(X_1+Y_1, X_2\oplus Y_2)$ is also non-overwritable and symmetrizable. 
We will first show that this channel is \two-symmetrizable, that is, there exists distribution $P_{X|Y}$ such that
\begin{align*}
\sum_{x\in \cX}P_{X|Y}(x|y')W(z|x,y) = \sum_{x\in \cX}P_{X|Y}(x|y)W(z|x,y')
\end{align*}
for all $y',y,z$.
Consider $P_{X|Y}((x_1,x_2)|(y_1, y_2)) = 1$ when $(x_1,x_2)= (y_1, y_2)$. Then for $y' = (y_1', y_2')$, $y = (y_1, y_2)$ and $z = (y_1' + y_1, y_2'\oplus y_2)$, both the LHS and the RHS of the above equation evaluate to $1$, and for every other $z$, they evaluate to $0$. So, the channel is \two-symmetrizable. Similarly, we can show that the channel is \one-symmetrizable. }

\olive{Next, we show that this channel is non-overwritable. Suppose the channel is \two-overwritable. Let $P_{X'|XY}$ be the overwriting attack by user \one which satisfies \eqref{eq:overwritable}. Then for $(x, y, z) = ((1,1), (1, 1), (2, 0))$ and all $y' = (y'_1, y_2')$, \eqref{eq:overwritable} implies
\begin{align*}
\sum_{(x_1',x_2')}P_{X'|XY}((x_1',x_2')|(1,1),(1,1))W((2,0)|(x_1',x_2'),(y'_1, y_2'))= W((2,0)|(1,1),(1,1)).
\end{align*}
However, for $(y_1',y_2') = (0,0)$, the LHS evaluates to $0$ whereas the RHS evaluates to $1$. Hence, the channel is not \two-overwritable. Similarly, we can show that the channel is not \one-overwritable. }

\olive{Thus, since this \bmac is symmetrizable and non-overwritable, $\cC_{\reliable} = \{(0,0)\}$ and  $\cC_{\auth} = \cC_{\MAC}$ respectively. 
\blue{Using the results in Section~\ref{sec:capacity}, we can show that  the capacity region of communication with adversary identification $\cC$ is the same as the (non-adversarial) capacity region of the binary erasure MAC. This is formally argued in Section~\ref{example:outer_bound}.}
The capacity regions under these three models are plotted in Fig.~\ref{fig:ex3}. }

\begin{figure}[h]
\centering
\begin{tikzpicture}[scale=0.65]
\begin{axis}[
    legend style={font=\large},
    axis lines = left,
    xlabel = {$R_{\one}$},
    ylabel = {$R_{\two}$},
]

\addplot [thick,
    domain=0:0.5, 
    samples=50, smooth, 
    color=olive,
    ]
    {1};
\addlegendentry{\textcolor{olive}{\large $\cC$\phantom{abcde}}}
\addplot[thick,
    domain=0:0.5, 
    samples=50, smooth,
    color=blue,
    ]
    coordinates {
    (0,0)};
\addlegendentry{\textcolor{blue}{\large $\cC_{\reliable}$}}

\addplot [thick,
    domain=0:0.5, 
    samples=50, smooth, 
    color=magenta,
]
{2};
\addlegendentry{\textcolor{magenta}{\large $\cC_{\auth}\phantom{ab}$}}
\addplot [thick,
    domain=0.5:1, 
    samples=50, smooth, 
    color=olive,
    ]
    {1.5-x};
\addplot [thick,
    domain=0.5:2, 
    samples=50, smooth, 
    color=magenta,
]
{2.5-x};
\addplot[thick, 
	samples=50, smooth,
	domain=0:6,
	olive] 
	coordinates {(1,0)(1,0.5)};
\addplot[thick, 
	samples=50, smooth,
	domain=0:6,
	magenta] 
	coordinates {(2,0)(2,0.5)};
\addplot[thick, 
	samples=50, smooth,
	domain=0:6,
	white] 
	coordinates {(2.1,0.1)(2.2,0.1)};
\addplot[thick, 
	samples=50, smooth,
	domain=0:6,
	white] 
	coordinates {(0.1,2.1)(0.2,2.2)};
\addplot[
    color=blue,
    mark=square*,
    mark size=1pt
    ]
    coordinates {
    (0,0)};
\end{axis}
\end{tikzpicture}\caption{Capacity regions for the \bmac in Example~\ref{ex:3}: $\cC_{\reliable} = \{0,0\}$; $\cC = \cC_{\MAC}$ of $Z_1 = X_1+Y_1$; and $\cC_{\auth} = \cC_{\MAC}$ of $(Z_1,Z_2)=(X_1+Y_1, X_2\oplus Y_2)$. }\label{fig:ex3}\end{figure}

\section{Capacity region}\label{sec:capacity}

\subsection{Inner bound}\label{sec:inner_bound}
For distributions $P_{\one}$ and $P_{\two}$  over $\cX$ and $\cY$ respectively, we define  $\cP(P_{\one}, P_{\two}) \defineqq \{P_{XY\tilde{X}\tilde{Y}Z}: P_{X\tilde{Y}Z}=P_{\one}\times P_{\tilde{Y}}\times W \text{ for some }P_{\tilde{Y}} \text{ and }P_{\tilde{X}YZ} = P_{\tilde{X}}\times P_{\two}\times W \text{ for some }P_{\tilde{X}}\}$.
Let $\cR_{1}(P_{\one}, P_{\two})$ be the set of rate pairs $(R_{\one}, R_{\two})$ such that
\begin{align}\label{eq:inner_bd_1}
R_{\one}&\leq \min_{P_{XY\tilde{X}\tilde{Y}Z} \in \cP(P_{\one}, P_{\two})} I(X;Z), \text{ and }\nonumber\\
R_{\two}&\leq \min_{P_{XY\tilde{X}\tilde{Y}Z} \in \cP(P_{\one}, P_{\two}):X\independent Y} I(Y;Z|X).
\end{align} 
Similarly, let $\cR_{2}(P_{\one}, P_{\two})$ be the set of rate pairs given by
\begin{align}\label{eq:inner_bd_2}
R_{\one}&\leq \min_{P_{XY\tilde{X}\tilde{Y}Z} \in \cP(P_{\one}, P_{\two}):X\independent Y} I(X;Z|Y), \text{ and }\nonumber\\
R_{\two}&\leq \min_{P_{XY\tilde{X}\tilde{Y}Z} \in \cP(P_{\one}, P_{\two})} I(Y;Z).
\end{align}

\begin{thm}[Inner bound]\label{thm:inner_bd}
When $\mathsf{int}(\cC)\neq \emptyset$, 
\begin{align*}
\mathsf{conv}(\cup_{P_{\one}, P_{\two}}\inp{\cR_1(P_{\one}, P_{\two})\cup\cR_{2}(P_{\one}, P_{\two})})\subseteq \cC.
\end{align*}

\end{thm}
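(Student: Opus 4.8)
The plan is a random-coding argument with constant-composition codebooks, decoded by (an asymmetric refinement of) the list decoder from the proof of Theorem~\ref{thm:main_result}, followed by derandomization and a time-sharing step. First I would reduce the statement: since $\cR_2(P_{\one},P_{\two})$ is obtained from $\cR_1(P_{\one},P_{\two})$ by exchanging the roles of the two users, and since every point of the rectangle $\cR_i(P_{\one},P_{\two})$ is dominated by its corner, it suffices to show that, for each $(P_{\one},P_{\two})$ and each $\epsilon>0$, the corner of $\cR_1(P_{\one},P_{\two})$ minus $(\epsilon,\epsilon)$ is achievable by a code sequence with an \emph{exponentially} small probability of error. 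The outer convex hull is then obtained by a standard time-sharing step: concatenate finitely many such code sequences (for finitely many choices of $(P_{\one},P_{\two})$), letting the per-block length grow and choosing the number of blocks of each type so as to approach the desired convex combination of rates; since each block decoder either outputs the correct per-block message pair or, when the corresponding user deviates, either blames that user or returns the correct message of the honest user, a union bound over blocks keeps all three error quantities of Section~\ref{sec:model} vanishing (this is where exponential decay of the component errors is needed).

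Next, fix $(P_{\one},P_{\two})$ and target rates $R_{\one}<\min_{\cP(P_{\one},P_{\two})}I(X;Z)$ and $R_{\two}<\min_{\cP(P_{\one},P_{\two}):X\independent Y}I(Y;Z|X)$. Draw the codewords $f_{\one}(m_{\one})$ independently and uniformly from the type class of (the $n$-type closest to) $P_{\one}$, and $f_{\two}(m_{\two})$ independently and uniformly from the type class of $P_{\two}$. For the decoder, use a variant of the one described after Theorem~\ref{thm:main_result}: form the pruned candidate lists $D_{\one}(\eta,\vecz)$ and $D_{\two}(\eta,\vecz)$, but break the symmetry so as to \emph{favour user~\one} --- decode user~\one's message robustly against all of user~\two's behaviours, then decode user~\two's message conditioned on the decoded $f_{\one}(\hat m_{\one})$, outputting $\oneb$ precisely when the observed joint statistics are incompatible with user~\one being honest. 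Pick $\eta>0$ small enough that Lemma~\ref{lemma:disambiguity} (or its asymmetric analogue) applies --- legitimate because $\mathsf{int}(\cC)\neq\emptyset$ forces the \bmac to be non-\spoofable by Theorem~\ref{thm:main_result} --- so that whenever both lists are non-empty they are singletons and the decoder is well defined.

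The analysis splits into the two adversarial cases (the honest case then follows from \eqref{honest_error_ub}). When user~\two is adversarial and sends an arbitrary $\vecy$, the genuine $f_{\one}(m_{\one})$ is, by the law of large numbers, jointly typical with $(\vecy,\vecZ)$ for the no-attack-compatible joint type, so $m_{\one}\in D_{\one}(\eta,\vecz)$; the crux is to bound the codebook-probability that some $m_{\one}'\neq m_{\one}$ survives as a viable candidate, for which a packing estimate in the style of \cite[Lemma~4]{CsiszarN88} and \cite[Lemma~1]{AhlswedeC99} --- enumerating the joint types of $(f_{\one}(m_{\one}),f_{\one}(m_{\one}'),f_{\two}(m_{\two}),\vecZ)$ together with the adversary's real and hallucinated attack vectors --- yields an exponent that tends, as $\eta\to0$, to $\min_{\cP(P_{\one},P_{\two})}I(X;Z)$; hence $R_{\one}$ below this value makes $D_{\one}(\eta,\vecz)=\{m_{\one}\}$ with probability exponentially close to one, and whatever $D_{\two}(\eta,\vecz)$ is, the disambiguation lemma forces the decoder to output either $\twob$ (correct, as user~\two deviated) or a pair of the form $(m_{\one},\cdot)$ (correct for the honest user~\one). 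When user~\one is adversarial, symmetrically, the genuine $f_{\two}(m_{\two})$ lies in $D_{\two}(\eta,\vecz)$, and a spurious $m_{\two}'$ can survive only through a configuration in which the decoder has already committed to some $\hat m_{\one}$ with $f_{\one}(\hat m_{\one})\sim P_{\one}$ independent of $f_{\two}(m_{\two}')\sim P_{\two}$, so the governing exponent is the constrained quantity $\min_{\cP(P_{\one},P_{\two}):X\independent Y}I(Y;Z|X)$ that bounds $R_{\two}$; thus $D_{\two}(\eta,\vecz)=\{m_{\two}\}$ with high probability, and the decoder outputs $\oneb$ (correct) or a pair whose second coordinate is $m_{\two}$ (correct for the honest user~\two). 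Averaging over the random codebook and applying the usual doubly-exponential concentration over the (exponentially many) attack vectors gives a single deterministic code meeting all three criteria with exponentially small error; since the error criterion already averages over messages, no expurgation is needed.

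The main obstacle is internal to the adversarial error analysis and is twofold. First, one must certify that the genuinely transmitted message of the honest user is never pruned by the viability filter --- equivalently, that a bona fide transmission cannot accidentally present itself as a spoofing configuration; this is precisely where non-\spoofability (through Lemma~\ref{lemma:disambiguity}) is indispensable, and it is the conceptual heart rather than a routine estimate. Second, one must extract the exact exponents for the event that a spurious message enters the pruned list and recognise them as the constrained minimizations defining $\cR_1$ and $\cR_2$; the delicacy is that both the adversary's actual and its ``hallucinated'' attack vectors are free, so the relevant minimization ranges over a two-parameter family of auxiliary distributions, and care is needed to see that the $X\independent Y$ restriction (resp.\ its absence) emerges correctly depending on whether the protected message belongs to the user that is decoded robustly first or to the one decoded conditionally second.
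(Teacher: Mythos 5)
Your overall architecture coincides with the paper's: constant-composition codebooks (the paper packages the random coding and doubly-exponential concentration into Lemma~\ref{lemma:codebook}), a pruned joint-type list decoder in the Csisz\'ar--Narayan/Ahlswede--Cai style, the disambiguation Lemma~\ref{lemma:disambiguity} (legitimately available since $\mathsf{int}(\cC)\neq\emptyset$ forces non-\spoofability via Theorem~\ref{thm:main_result}), derandomization, and time sharing between the two corners obtained by swapping the roles of the users; the reduction to corner points and the per-block blame logic in the time-sharing step are fine.

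The genuine gap sits exactly at the step you flag as a ``delicacy,'' and the reason you offer there would fail. You justify the constrained exponent $\min_{\cP(P_{\one},P_{\two}):X\independent Y}I(Y;Z|X)$ for the second-decoded user by saying the committed codeword $f_{\one}(\hat m_{\one})$ is generated independently of the spurious $f_{\two}(m_{\two}')$. Independence of codebook generation does not yield the $X\independent Y$ restriction: the minimization defining the exponent runs over joint \emph{empirical} types of tuples that include an adversarially chosen input sequence and the resulting output $\vecz$, and independently drawn codewords can exhibit essentially arbitrary joint types with such a $\vecz$ --- which is precisely why the first-decoded user's coordinate of each corner carries no independence restriction at all. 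In the paper the independence is earned by the decoder rather than assumed from the code: the pruning is performed in a fixed order (Steps 1--5 in Appendix~\ref{sec:inner_bd_proof}), so any configuration in which a spurious candidate survives must also satisfy the mutual-information condition of an \emph{earlier} pruning step, and Lemma~\ref{lemma:indep} (a log-sum-inequality computation combining $P_{XYZ},P_{\tilde X Y'Z},P_{X'\tilde Y Z}\in\cD_{\eta}$ with $I(X\tilde X;\tilde Y Z|X')<\eta$) converts this into $I(\tilde Y;X)\le 2\eta$ and $I(\tilde Y;\tilde X)\le 2\eta$; letting $\eta\to 0$ is what places $X\independent Y$ in exactly one coordinate of each corner, and reversing the step order gives the other corner. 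In addition, turning the packing estimates (properties \eqref{codebook:2b}--\eqref{codebook:5}) into the stated rate bounds requires the four-way case analysis of the $|R_{\one}-I(\cdot)|^{+}$ and $|R_{\two}-I(\cdot)|^{+}$ terms carried out in the appendix; your sketch asserts the limiting exponents $\min_{\cP}I(X;Z)$ and $\min_{\cP:X\independent Y}I(Y;Z|X)$ without either of these ingredients, so the claimed corner rates remain conclusions you would still have to prove.
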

The proof uses the same codebook lemma (Lemma~\ref{lemma:codebook}) as used for the achievability of Theorem~\ref{thm:main_result} but with different rates. The decoder is a slightly modified version of the decoder used in Theorem~\ref{thm:main_result}. The modified decoder uses similar conditions as used in the decoder for the achievability of Theorem~\ref{thm:main_result}, but the steps are performed in a specific order. At each step only those codewords are considered which have not been eliminated in the previous steps.  
Please see Appendix~\ref{sec:inner_bd_proof} for details. 

\begin{remark}\label{remark:bec_inner_bd}
The inner bound to the capacity region given by Theorem~\ref{thm:inner_bd} is tight for binary erasure MAC~\cite[pg.~83]{YHKEG} \blue{as argued below in Section~\ref{sec:example_tightness}.}
\end{remark}
\begin{remark}\label{remark:bec_inner_bd2}
Our \blue{objective} in deriving Theorem~\ref{thm:inner_bd} is to obtain a simple expression for the inner bound. However, it is not clear if Theorem~\ref{thm:inner_bd} implies the achievability direction of Theorem~\ref{thm:main_result} (i.e., whether the inner bound in Theorem~\ref{thm:inner_bd} has a non-empty interior for all non-spoofable channels). Still the code (codebook and decoder) used in proving Theorem~\ref{thm:inner_bd}  can also be used to achieve positive rates for both users of a non-spoofable channel, which is the forward direction of Theorem~\ref{thm:main_result}. See Remark~\ref{remark:new_code_pos1} for details.
\end{remark}
\noindent\subsubsection{Tightness of the inner bound for the Binary Erasure MAC}\label{sec:example_tightness}
\begin{Example}
\olive{The binary erasure MAC~\cite[pg.~83]{YHKEG} is given by $Z=X+Y$ where $\cX=\cY=\{0,1\}$ and $\cZ = \{0,1,2\}$. We will show that for the binary erasure MAC, the inner bound on \cC\ given by Theorem~\ref{thm:inner_bd} is the same as its (non-adversarial) capacity region $\cC_{\MAC}$. Hence, it is tight. }

\olive{Recall that for distributions $P_{\one}$ and $P_{\two}$  over $\cX$ and $\cY$, $$\cP(P_{\one}, P_{\two}) = \{P_{XY\tilde{X}\tilde{Y}Z}: P_{X\tilde{Y}Z}=P_{\one}\times P_{\tilde{Y}}\times W \text{ for some }P_{\tilde{Y}} \text{ and }P_{\tilde{X}YZ} = P_{\tilde{X}}\times P_{\two}\times W \text{ for some }P_{\tilde{X}}\}$$. We choose $P_{\one}$ and $ P_{\two}$ arbitrarily close to the uniform distribution $U$ on $\{0, 1\}$ while ensuring that  $P_{\one}\neq P_{\two}$. Using these distributions, we will show that  for $P_{XY\tilde{X}\tilde{Y}Z}\in \cP(P_{\one}, P_{\two})$ satisfying $X\indep Y$, $\tilde{X} = X$ and $\tilde{Y} = Y$. }

\olive{To this end, consider $P_{XY\tilde{X}\tilde{Y}Z}\in \cP(P_{\one}, P_{\two})$.
\begin{align}\label{eq:ex_1}
\bbP(Z = 0) = P_{\one}(0)P_{\tilde{Y}}(0) = P_{\tilde{X}}(0)P_{\two}(0).
\end{align}
\begin{align*}
\bbP(Z=2) &= (1-P_{\one}(0))(1-P_{\tilde{Y}}(0)) =  (1-P_{\tilde{X}}(0))(1-P_{\two}(0)).
\end{align*}
This implies that 
\begin{align}
1+ P_{\one}(0)P_{\tilde{Y}}(0) -P_{\one}(0)-P_{\tilde{Y}}(0) = 1+ P_{\tilde{X}}(0)P_{\two}(0) -P_{\tilde{X}}(0)-P_{\two}(0).\label{eq:ex_11}
\end{align}
Using \eqref{eq:ex_1} and \eqref{eq:ex_11}, we get $P_{\one}(0)+P_{\tilde{Y}}(0) = P_{\tilde{X}}(0)+P_{\two}(0)$. Thus, 
\begin{align}\label{eq:ex_2}
P_{\tilde{X}}(0) = P_{\one}(0)+P_{\tilde{Y}}(0) - P_{\two}(0).
\end{align}
Substituting the value of $P_{\tilde{X}}(0)$ from \eqref{eq:ex_2} into \eqref{eq:ex_1}, we get $P_{\one}(0)P_{\tilde{Y}}(0) = P_{\one}(0)P_{\two}(0)+ P_{\tilde{Y}}(0)P_{\two}(0)- P_{\two}(0)P_{\two}(0).$ This implies that 
\begin{align*}
\inp{P_{\one}(0)-P_{\two}(0)}\inp{P_{\tilde{Y}}(0) -P_{\two}(0)} = 0.
\end{align*}
Thus, either $P_{\one}(0)=P_{\two}(0)$ or $P_{\tilde{Y}}(0) = P_{\two}(0)$.
Substituting this in \eqref{eq:ex_2}, we get either $P_{\one}(0)=P_{\two}(0)$ and $P_{\tilde{X}}(0)=P_{\tilde{Y}}(0)$, or $P_{\tilde{Y}}(0) = P_{\two}(0)$ and $P_{\tilde{X}}(0) = P_{\one}(0)$. If we choose $P_{\one}$ and $P_{\two}$ such that $P_{\one}\neq P_{\two}$, then for every $P_{XY\tilde{X}\tilde{Y}Z}\in \cP(P_{\one}, P_{\two})$, $P_{\tilde{Y}} = P_{Y} = P_{\two}$ and $P_{\tilde{X}} = P_{X} = P_{\one}$. }

\olive{We know from the definition of $\cP(P_{\one}, P_{\two})$, that $X\indep \tilde{Y}$ and $\tilde{X}\indep Y$. We now analyse the case when there is a further restriction of $X\indep Y$ on the distributions. From the definition of $\cP(P_{\one}, P_{\two})$, we note that $P_{\tilde{X}Y|X\tilde{Y}}(0,0|0,0) = 1$ and $P_{\tilde{X}Y|X\tilde{Y}}(1,1|1,1) = 1$.  Let $P_{\tilde{X}Y|X\tilde{Y}}(0,1|0,1) = \alpha$ and $P_{\tilde{X}Y|X\tilde{Y}}(1,0|0,1) = 1-\alpha$ (Note that $P_{\tilde{X}Y|X\tilde{Y}}((0,0) |0,1)) = P_{\tilde{X}Y|X\tilde{Y}}((1,1) |0,1)) = 0$ by definition of $\cP(P_{\one}, P_{\two})$). Similarly, let $P_{\tilde{X}Y|X\tilde{Y}}(1, 0|1,0) = \beta$ and $P_{\tilde{X}Y|X\tilde{Y}}(0, 1|1,0) = 1-\beta$. 
Thus, $P_{XY}(0,0) = P_{X\tilde{Y}}(0, 0)P_{\tilde{X}Y|X\tilde{Y}}(0,0|0,0) + P_{X\tilde{Y}}(0, 1)P_{\tilde{X}Y|X\tilde{Y}}(1,0|0,1)= P_{X}(0)P_{\tilde{Y}}(0)\cdot 1 + P_{X}(0)P_{\tilde{Y}}(1)\cdot (1-\alpha)$. Also, $P_{XY}(0,0) = P_{X}(0)P_Y(0) = P_{X}(0)P_{\tilde{Y}}(0)$ (The last equality follows by choosing $P_X \neq P_Y$ (which is the same as $P_{\one}\neq P_{\two}$)). This implies that $\alpha = 1$. By evaluating  $P_{XY}(1,1)$, we can show that $\beta = 1$. This implies that $\tilde{X} = X$ and $\tilde{Y} = Y$.}

\olive{As mentioned earlier, we choose $P_{\one}$ and $P_{\two}$ arbitrarily close to uniform distributions such that $P_{\one}\neq P_{\two}$. Thus, in the limit, \eqref{eq:inner_bd_1} evaluates to $R_{\one}\leq 0.5$ and $R_{\two} \leq 1$, and \eqref{eq:inner_bd_2} evaluates to $R_{\one}\leq 1$ and $R_{\two} \leq 0.5$. Using time sharing between these two rate pairs, we obtain the entire (\blue{non-adversarial}) MAC capacity region (This is the rate region $\cC$ in Fig.~\ref{fig:ex3}). \blue{This is tight because the non-adversarial MAC capacity region is an outer bound on the capacity region of communication with adversary identification.}}
\end{Example}

\subsection{Outer bound}\label{sec:outer_bd}
\blue{The outer bound is provided in terms of the capacity region of an Arbitrarily Varying Multiple Access Channel (AV-MAC). See Section~\ref{sec:AVMAC}.}

\begin{restatable}{defn}{outerboundAVMAC}
\label{defn:outerboundAVMAC}
For a MAC $W$, let  $\tilde{\cW}_W$ be the set of MACs $\tilde{W}$ such that there is a  pair of conditional distributions $Q_{X'|X}$ and $Q_{Y'|Y}$ satisfying 
\begin{align}\label{eq:outer_bound}
\tilde{W}(z|x,y) &= \sum_{x'}Q_{X'|X}(x'|x)W(z|x',y) \nonumber\\
&= \sum_{y'}Q_{Y'|Y}(y'|y)W(z|x,y'),
\end{align}for all $x, y, z\in \cX\times\cY\times \cZ$. \blue{Let the set $\tilde{\cW}_W$ be indexed by a set $\cS$. Then, $\tilde{\cW}_W$ can be written as an AV-MAC as below: $$\tilde{\cW}_W = \inb{\tilde{W}(\cdot|\cdot, \cdot, s): s\in \cS,\, \tilde{W}(z|x,y, s) = \sum_{x'}Q_{X'|X S}(x'|x, s)W(z|x',y) = \sum_{y'}Q_{Y'|Y S}(y'|y, s)W(z|x,y'), \forall x, x', y, z}.$$}
\end{restatable}

\begin{figure}[h]
\centering
\subfigure{
\begin{tikzpicture}[scale=0.4]
	\draw (4.1,4) rectangle ++(1.3,1.5) node[pos=.5]{\footnotesize $W$};
	\draw[red,->] (3.4,5.25) -- ++(0.7,0);
	\draw[->] (3.4,4.25) -- ++(0.7,0);
	\draw[-] (3.4,4.25) -- ++ (0,-1);
	\draw[red,-] (3.4,5.25) -- ++ (0,1);
	\draw[-] (2.4-2.2-0.6, 3.25) node[left]{\scriptsize {$y'$}} --  ++(1+2.8,0) ;
	\draw[red] (2.4-2.2,6.25-0.6) rectangle ++(2.5,1.2) node[pos=0.5]{\scriptsize $Q_{X|X'}$};
	\draw[red,->] (2.4-2.2-0.6,6.25) node[left]{\scriptsize $x'$}-- ++ (0.6,0);
	\draw[red,-] (2.4+0.3,6.25) -- ++ (0.7,0);
	\draw[->] (5.4,4.75) --  ++ (1,0)node[anchor= west] {\scriptsize $Z\qquad \Longleftrightarrow$ };
	\draw[gray] (-0.1,2.2) rectangle ++(6,5);
	\draw[gray] (3.1,2-0.5) node{\scriptsize $\tilde{W}$};
\end{tikzpicture}}
\subfigure{
\begin{tikzpicture}[scale=0.4]
	\draw (4.1+11,4) rectangle ++(1.3,1.5) node[pos=.5]{\footnotesize $W$};
	\draw[->] (3.4+11,5.25) -- ++(0.7,0);
	\draw[red,->] (3.4+11,4.25) -- ++(0.7,0);
	\draw[red,-] (3.4+11,4.25) -- ++ (0,-1);
	\draw[-] (3.4+11,5.25) -- ++ (0,1);
	\draw[red,->] (2.4-2.2-0.6+11, 3.25) node[left]{\scriptsize {$y'$}} --  ++(0.6,0) ;
	\draw[red] (2.4-2.2+11,3.25-0.6) rectangle ++(2.5,1.2) node[pos=0.5]{\scriptsize $Q_{Y|Y'}$};
	\draw[-] (2.4-2.2-0.6+11,6.25) node[left]{\scriptsize $x'$}-- ++ (1+2.8,0);
	\draw[red,-] (2.4+0.3+11,3.25) -- ++ (0.7,0);
	\draw[->] (5.4+11,4.75) --  ++ (1,0)node[anchor= west] {\scriptsize $Z\qquad \Longleftrightarrow$ };
	\draw[gray] (-0.1+11,2.2) rectangle ++(6,5);
	\draw[gray] (3.1+11,2-0.5) node{\scriptsize $\tilde{W}$};
\end{tikzpicture}}
	\subfigure{
	\begin{tikzpicture}[scale=0.4]
	\draw (4.1+20-2,4) rectangle ++(1.3,1.5) node[pos=.5]{\footnotesize $\tilde{W}$};
	\draw[->] (3.4+20-2,5.25) -- ++(0.7,0);
	\draw[->] (3.4+20-2,4.25) -- ++(0.7,0);
	\draw[-] (3.4+20-2,4.25) -- ++ (0,-1);
	\draw[-] (3.4+20-2,5.25) -- ++ (0,1);
	\draw[-] (2.4-4.2-0.6+20+2.8+0.5, 3.25) node[left]{\scriptsize {$y'$}} --  ++(0.5,0) ;
	\draw[-] (2.4-4.2-0.6+20+	2.8+0.5,6.25) node[left]{\scriptsize $x'$}-- ++ (0.5,0);
	\draw[-] (2.4+0.3+20-2,6.25) -- ++ (0.7,0);
	\draw[->] (5.4+20-2,4.75) --  ++ (0.5,0)node[anchor= west] {\scriptsize $Z$ };	
	\draw[white] (-0.1+20-2,1.2) rectangle ++(7,6);
\end{tikzpicture}
}
  \caption{\blue{The figure shows $\tilde{W}$ as defined by \eqref{eq:outer_bound}. 
 $\tilde{W}$ results from a pair of attacks $Q_{X'|X}$ and $Q_{Y'|Y}$ by users \one\ and \two\ respectively which produce the same output.}}
\label{fig:outer_bound} 
\end{figure}

\blue{Fig.~\ref{fig:outer_bound} shows the channel $\tilde{W}$ produced by a  pair of conditional distributions $Q_{X'|X}$ and $Q_{Y'|Y}$ as given in \eqref{eq:outer_bound}.} The outer bound on the capacity region for communication with adversary identification of a \bmac $W$ is in terms of the capacity region of the AV-MAC $\tilde{\cW}_W$ defined above.
We first notice that $W\in \tilde{\cW}_W$ by choosing trivial distributions $Q_{X'|X}(x|x) = 1$ for all $x$ and $Q_{Y'|Y}(y|y) = 1$ for all $y$.
Additionally, the set $\tilde{\cW}_W$ is convex because for every $(Q_{X'|X}, Q_{Y'|Y})$ and $(Q'_{X'|X}, Q'_{Y'|Y})$  satisfying \eqref{eq:outer_bound}, the pair $(\alpha Q_{X'|X}+ (1-\alpha) Q'_{X'|X}, \alpha Q_{Y'|Y} + (1-\alpha) Q'_{Y'|Y})$, $\alpha \in [0,1]$ also satisfies \eqref{eq:outer_bound}.

\begin{figure}[h]
\subfigure{
\begin{tikzpicture}[scale=0.4]
	\draw (4.1,4) rectangle ++(1.5,1.5) node[pos=.5]{\footnotesize $W^n$};
	\draw[red,->] (3.4,5.25) -- ++(0.7,0);
	\draw[->] (3.4,4.25) -- ++(0.7,0);
	\draw[-] (3.4,4.25) -- ++ (0,-1);
	\draw[red,-] (3.4,5.25) -- ++ (0,1);
	\draw[-] (2.4-2.2-0.6, 3.25) node[left]{\scriptsize {$f_{\two}(M_{\two})$}} --  ++(1+2.8,0) ;
	\draw[red] (2.4-2.2,6.25-0.7) rectangle ++(2.5,1.4) node[pos=0.5]{\scriptsize $Q^n_{X|X'}$};
	\draw[red,->] (2.4-2.2-0.6,6.25) node[left]{\scriptsize $f_{\one}(M_{\one})$}-- ++ (0.6,0);
	\draw[red,-] (2.4+0.3,6.25) -- ++ (0.7,0);
	\draw[->] (5.4+0.2,4.75) --  ++ (1,0)node[anchor= west] {\scriptsize $Z^n\qquad \Longleftrightarrow$ };
\end{tikzpicture}
}
\subfigure{
\begin{tikzpicture}[scale=0.4]

	\draw (4.1+11,4) rectangle ++(1.5,1.5) node[pos=.5]{\footnotesize $W^n$};
	\draw[->] (3.4+11,5.25) -- ++(0.7,0);
	\draw[red,->] (3.4+11,4.25) -- ++(0.7,0);
	\draw[red,-] (3.4+11,4.25) -- ++ (0,-1);
	\draw[-] (3.4+11,5.25) -- ++ (0,1);
	\draw[red,->] (2.4-2.2-0.6+11, 3.25) node[left]{\scriptsize {$f_{\two}(M_{\two})$}} --  ++(0.6,0) ;
	\draw[red] (2.4-2.2+11,3.25-0.7) rectangle ++(2.5,1.4) node[pos=0.5]{\scriptsize $Q^n_{Y|Y'}$};
	\draw[-] (2.4-2.2-0.6+11,6.25) node[left]{\scriptsize $f_{\one}(M_{\one})$}-- ++ (3.8,0);
	\draw[red,-] (2.4+0.3+11,3.25) -- ++ (0.7,0);
	\draw[->] (5.4+11+0.2,4.75) --  ++ (1,0)node[anchor= west] {\scriptsize $Z^n\qquad \Longleftrightarrow$ };
\end{tikzpicture}
}
\subfigure{
\begin{tikzpicture}[scale=0.4]

	\draw (4.1+22,4) rectangle ++(1.5,1.5) node[pos=.5]{\footnotesize $\tilde{W}^n$};
	\draw[->] (3.4+22,5.25) -- ++(0.7,0);
	\draw[->] (3.4+22,4.25) -- ++(0.7,0);
	\draw[-] (3.4+22,4.25) -- ++ (0,-1);
	\draw[-] (3.4+22,5.25) -- ++ (0,1);
	\draw[-] (2.4-2.2-0.6+2.8+22, 3.25) node[left]{\scriptsize {$f_{\two}(M_{\two})$}} --  ++(1,0) ;
	\draw[-] (2.4-2.2-0.6+2.8+22,6.25) node[left]{\scriptsize $f_{\one}(M_{\one})$}-- ++ (1,0);
	\draw[->] (5.4+0.2+22,4.75) --  ++ (1,0)node[anchor= west] {\scriptsize $Z^n$ };
	

\end{tikzpicture}}

\caption{\blue{The figure shows an attack strategy based on attacks of user \one\ and user \two\ satisfying \eqref{eq:outer_bound}.   For a malicious user $\one$, an attack $Q^n_{X|X'} = \prod_{i=1}^{n}Q_{i,X'|X}$ by user $\one$ where each $Q_{i,X'|X}$ satisfies \eqref{eq:outer_bound} for some $Q_{i,Y'|Y}$ cannot be distinguished from the attack $\prod_{i=1}^{n}Q_{i,Y'|Y}(=Q^n_{Y|Y'})$ by user \two. Hence, the decoder must output a pair of messages of which the message of the honest user is correct with high probability. This means that the decoded output should be correct for the channel $\tilde{W}^n$.}}\label{fig:outer_bound_nletter}
\end{figure}

To get an outer bound, consider an adversary identifying code with a small probability of error for a MAC $W$. Suppose user \one is malicious and attacks in the following manner: it runs its encoder on a uniformly distributed message from its message set, then passes the output of the encoder through $\prod_{i=1}^{n}Q_{i,X'|X} \blue{(=Q^n_{X|X'})}$ where for all $i\in [1:n]$, $(Q_{i,X'|X}, Q_{i,Y'|Y})$ satisfy \eqref{eq:outer_bound} for some $Q_{i,Y'|Y}$. The output of $\prod_{i=1}^{n}Q_{i,X'|X}$ is finally sent to the MAC ${W}$ as input by user \one. 
This can also be interpreted as an attack by user \two using $\prod_{i=1}^{n}Q_{i,Y'|Y}\blue{(=Q^n_{Y|Y'})}$. 
Thus, at the receiver, it is not clear if user \one attacked 
or user \two attacked. \blue{This is shown in Fig.~\ref{fig:outer_bound_nletter}.}
Hence, the malicious user cannot be identified reliably. So, the decoder must output a pair of messages of which the message for the honest user is correct with high probability. This means that the decoding should be correct with high probability for $\tilde{W}^{(n)}(\cdot|\cdot,\cdot) = \prod_{i=1}^{n}\inp{\sum_{x_i\in\cX}Q_{i,X'|X}(x_i|\cdot)W(\cdot|x_i, \cdot)} = \prod_{i=1}^{n}\sum_{y_i\in \cY}Q_{i,Y'|Y}(y_i|\cdot)W(\cdot|\cdot, y_i)$. Thus, any good adversary identifying code for the MAC W must also be a good communication code for $\tilde{W}^{(n)}$. 
In fact, we can show the following:
\begin{lemma}\label{lemma:AV_MAC}
Any $(N_{\one}, N_{\two}, n)$ adversary identifying code $(f_{\one},f_{\two}, \phi)$ for a MAC $W$ with $P_{e}(f_{\one},f_{\two}, \phi) \leq \epsilon$ is also an $(N_{\one}, N_{\two}, n)$ communication code for the AV-MAC $\tilde{\cW}_W$ with average probability of error at most $2\epsilon$.
\end{lemma}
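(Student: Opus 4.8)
The plan is to fix an arbitrary realization of the AV-MAC $\tilde{\cW}_W$, i.e. a state sequence $s^n\in\cS^n$ giving the (possibly non-stationary) product channel $\tilde{W}^{(n)}=\prod_{i=1}^{n}\tilde{W}(\cdot|\cdot,\cdot,s_i)$, and to bound the average decoding error of $(f_{\one},f_{\two},\phi)$ on $\tilde{W}^{(n)}$ by $2\epsilon$ uniformly in $s^n$; since the deterministic-coding average error of an AV-MAC code is the supremum of this quantity over state sequences (randomizing the state only convex-combines such bounds), that gives the lemma. The key structural fact, straight from Definition~\ref{defn:outerboundAVMAC}, is that each coordinate channel $\tilde{W}(\cdot|\cdot,\cdot,s_i)$ carries a \emph{matched} pair $(Q_{i,X'|X},Q_{i,Y'|Y})$ with $\sum_{x'}Q_{i,X'|X}(x'|x)W(z|x',y)=\sum_{y'}Q_{i,Y'|Y}(y'|y)W(z|x,y')=\tilde{W}(z|x,y,s_i)$, i.e. the two sides of \eqref{eq:outer_bound}.

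First I would realize $\tilde{W}^{(n)}$ as an attack by user \one\ on the byzantine MAC $W$: with $M_{\one},M_{\two}$ independent and uniform on their message sets, user \one\ (malicious) transmits the output of $f_{\one}(M_{\one})$ passed coordinatewise through $Q_{i,X'|X}$, while the honest user \two\ sends $f_{\two}(M_{\two})$. This is a privately randomized attack, so by Remark~\ref{remark:det_rand_attacks} its error is at most $P_{e,\malone}\le\epsilon$; and marginalizing out the auxiliary $X'$ coordinatewise and using the first identity of \eqref{eq:outer_bound} shows the output given $(\mo,\mt)$ is exactly $\tilde{W}^{(n)}(\cdot|f_{\one}(\mo),f_{\two}(\mt))$. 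Since the relevant error event when user \one\ attacks is $\cE^{\two}_{\mt}$, this yields
\[
\frac{1}{N_{\one}N_{\two}}\sum_{\mo,\mt}\tilde{W}^{(n)}\inp{\cE^{\two}_{\mt}|f_{\one}(\mo),f_{\two}(\mt)}\le\epsilon.
\]
Symmetrically, passing $f_{\two}(M_{\two})$ coordinatewise through $Q_{i,Y'|Y}$ (with user \one\ honest) is a randomized attack by user \two\ that, by the second identity of \eqref{eq:outer_bound}, induces the \emph{same} channel $\tilde{W}^{(n)}$ between the honest codewords and the output; its error is at most $P_{e,\maltwo}\le\epsilon$, so
\[
\frac{1}{N_{\one}N_{\two}}\sum_{\mo,\mt}\tilde{W}^{(n)}\inp{\cE^{\one}_{\mo}|f_{\one}(\mo),f_{\two}(\mt)}\le\epsilon.
\]

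Finally I would combine the two inequalities through the identity $\cE_{\mo,\mt}=\cE^{\one}_{\mo}\cup\cE^{\two}_{\mt}$ noted in Section~\ref{sec:model}: for every $(\mo,\mt)$, $\tilde{W}^{(n)}(\cE_{\mo,\mt}|\cdot)\le\tilde{W}^{(n)}(\cE^{\one}_{\mo}|\cdot)+\tilde{W}^{(n)}(\cE^{\two}_{\mt}|\cdot)$; averaging over $(\mo,\mt)$ and substituting the two displays bounds the average error of the code on $\tilde{W}^{(n)}$ by $2\epsilon$, as required. The substantive points — which I would state carefully rather than gloss over — are (i) that a single product channel must be realizable \emph{simultaneously} as a coordinatewise attack by either user, which is precisely what the two-sided form of \eqref{eq:outer_bound} buys us and forces the $Q_{X'|X}$, $Q_{Y'|Y}$ to be matched per coordinate, and (ii) that $\prod_i Q_{i,X'|X}$ and $\prod_i Q_{i,Y'|Y}$ are bona fide randomized attacks, so that $P_{e,\malone}$ and $P_{e,\maltwo}$ — controlled by hypothesis only when the honest user's message is uniform — genuinely upper bound their error probabilities (Remark~\ref{remark:det_rand_attacks}). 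Everything else is a union bound and bookkeeping of the induced output distributions.
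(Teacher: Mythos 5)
Your proposal is correct and follows essentially the same route as the paper's proof: realize each product channel $\tilde{W}^{(n)}$ simultaneously as a coordinatewise randomized attack by user \one (bounding the average of $\tilde{W}^{(n)}(\cE^{\two}_{\mt}|\cdot)$ by $P_{e,\malone}\le\epsilon$) and by user \two (bounding the average of $\tilde{W}^{(n)}(\cE^{\one}_{\mo}|\cdot)$ by $P_{e,\maltwo}\le\epsilon$), then combine via $\cE_{\mo,\mt}=\cE^{\one}_{\mo}\cup\cE^{\two}_{\mt}$ and a union bound. The only cosmetic difference is that you quantify over state sequences of the AV-MAC and dismiss randomized states by convexity, whereas the paper quantifies over arbitrary per-coordinate pairs $(Q_{i,X'|X},Q_{i,Y'|Y})$ satisfying \eqref{eq:outer_bound} and invokes the convexity of $\tilde{\cW}_W$ — these are the same argument.
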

\blue{The proof of Lemma~\ref{lemma:AV_MAC} is given in Appendix~\ref{proof_outer_bound}. The lemma} implies that the capacity region $\cC$ (for communication with adversary identification) of MAC $W$ must be a subset of the capacity region of the AV-MAC $\tilde{\cW}_W$ (Definition~\ref{defn:outerboundAVMAC}) parametrized by a pair of distributions $(Q_{X'|X}, Q_{Y'|Y})$ satisfying \eqref{eq:outer_bound}. 
This argument is formalized below:
\begin{thm}[Outer bound]\label{thm:outer_bd}
$\cC \subseteq \cC_{\AVMAC}(\tilde{\cW}_W)$. Moreover, there exists 
an AV-MAC ${\cW}_W$ such that $\cC_{\AVMAC}({\cW}_W)=\cC_{\AVMAC}(\tilde{\cW}_W)$ 
and\footnote{The capacity region of an AV-MAC only depends on its convex hull which is defined by taking convex combinations of channels under different states \cite{Jahn81,AhlswedeC99}. The AV-MAC ${\cW}_W$ has the same convex hull as $\tilde{\cW}_W$ but with only finitely many states.} $|{\cW}_W| \leq 2^{|\cX|^2 + |\cY|^2}$.
\end{thm}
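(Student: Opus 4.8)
The plan is to prove the two assertions in turn. For the inclusion $\cC\subseteq\cC_{\AVMAC}(\tilde{\cW}_W)$, I would invoke Lemma~\ref{lemma:AV_MAC} directly: fix a rate pair $(R_{\one},R_{\two})\in\cC$, so that there is a sequence of $(\lfloor 2^{nR_{\one}}\rfloor,\lfloor 2^{nR_{\two}}\rfloor,n)$ adversary identifying codes $(f_{\one}^{(n)},f_{\two}^{(n)},\phi^{(n)})$ for $W$ with $P_e(f_{\one}^{(n)},f_{\two}^{(n)},\phi^{(n)})\to 0$. By Lemma~\ref{lemma:AV_MAC} each such code is a deterministic communication code of the same blocklength and message-set sizes for the AV-MAC $\tilde{\cW}_W$ with average probability of error at most $2P_e(f_{\one}^{(n)},f_{\two}^{(n)},\phi^{(n)})\to 0$; hence $(R_{\one},R_{\two})$ is achievable over $\tilde{\cW}_W$. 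Since the deterministic coding capacity region $\cC_{\AVMAC}(\tilde{\cW}_W)$ is closed, it contains the closure $\cC$ of all such pairs.

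The substantive part is constructing the finite-state AV-MAC $\cW_W$. I would first observe that the set of admissible attack pairs
\[
\cT\defineqq\Bigl\{(Q_{X'|X},Q_{Y'|Y})\ :\ (Q_{X'|X},Q_{Y'|Y})\text{ satisfies }\eqref{eq:outer_bound}\Bigr\}
\]
is a polytope in $\bbR^{|\cX|^2+|\cY|^2}$: it is carved out of the (bounded) product of probability simplices by the linear equalities in \eqref{eq:outer_bound} --- which are genuinely linear in the pair $(Q_{X'|X},Q_{Y'|Y})$ once one writes the common value $\tilde W$ as the link --- so $\cT=\{v\in\bbR^{|\cX|^2+|\cY|^2}:Av=b,\ v\ge 0\}$ for a matrix $A$ gathering the normalisation and compatibility equalities. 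Every vertex of such a polyhedron is the unique solution of the subsystem of $Av=b$ together with the equations $v_i=0$ for the coordinates $i$ on which it vanishes, hence a vertex is uniquely determined by its support; therefore $\cT$ has at most $2^{|\cX|^2+|\cY|^2}$ vertices. Enumerate them as $\{(Q^{(s)}_{X'|X},Q^{(s)}_{Y'|Y})\}_{s\in\cS}$, $|\cS|\le 2^{|\cX|^2+|\cY|^2}$, and set
\[
\cW_W\defineqq\Bigl\{\tilde W^{(s)}\ :\ s\in\cS,\ \tilde W^{(s)}(z|x,y)=\textstyle\sum_{x'}Q^{(s)}_{X'|X}(x'|x)W(z|x',y)\Bigr\}.
\]

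Next I would verify $\mathsf{conv}(\cW_W)=\mathsf{conv}(\tilde{\cW}_W)$. The inclusion $\subseteq$ is clear since $\cW_W\subseteq\tilde{\cW}_W$ (each vertex satisfies both identities in \eqref{eq:outer_bound}). For the reverse inclusion, note that the map $(Q_{X'|X},Q_{Y'|Y})\mapsto\tilde W$, $\tilde W(z|x,y)=\sum_{x'}Q_{X'|X}(x'|x)W(z|x',y)$, is affine; so writing any $(Q_{X'|X},Q_{Y'|Y})\in\cT$ as a convex combination $\sum_s\lambda_s(Q^{(s)}_{X'|X},Q^{(s)}_{Y'|Y})$ of the vertices yields $\tilde W=\sum_s\lambda_s\tilde W^{(s)}$, i.e. $\tilde{\cW}_W\subseteq\mathsf{conv}(\cW_W)$; taking convex closures gives the stated equality (the convex hull of the finite set $\cW_W$ is already closed). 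Finally, since the capacity region of an AV-MAC depends only on its convex hull~\cite{Jahn81,AhlswedeC99} (cf. Section~\ref{sec:AVMAC}), $\cC_{\AVMAC}(\cW_W)=\cC_{\AVMAC}(\tilde{\cW}_W)$, and combining with the first part gives $\cC\subseteq\cC_{\AVMAC}(\cW_W)=\cC_{\AVMAC}(\tilde{\cW}_W)$ with $|\cW_W|\le 2^{|\cX|^2+|\cY|^2}$.

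I expect the main obstacle to be bookkeeping rather than conceptual depth: one must set up $\cT$ and its defining linear system with care (in particular making precise that the two sums in \eqref{eq:outer_bound} are simultaneously equal, so the constraints are linear in the attack pair), check that the affine channel-map sends vertices of $\cT$ into $\tilde{\cW}_W$ and commutes with convex combinations, and invoke the convex-hull invariance of the AV-MAC capacity region in exactly the deterministic-coding, average-error form used in the definition of $\cC_{\AVMAC}$ here. The standard polyhedral facts (a polytope is the convex hull of its finitely many vertices; each vertex is pinned down by its support) are routine, but they should be cited or stated cleanly since the quantitative bound $2^{|\cX|^2+|\cY|^2}$ rides on the dimension count.
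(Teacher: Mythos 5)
Your proposal is correct and follows essentially the same route as the paper: the inclusion $\cC\subseteq\cC_{\AVMAC}(\tilde{\cW}_W)$ comes directly from Lemma~\ref{lemma:AV_MAC}, and the finite-state reduction uses the convex-hull invariance of the AV-MAC capacity region together with the observation that the relevant polytope has at most $2^{|\cX|^2+|\cY|^2}$ vertices because a vertex is pinned down by which of the $|\cX|^2+|\cY|^2$ non-negativity constraints are tight. The only (benign) difference is that you count vertices of the attack-pair polytope $\cT$ and push them through the affine map $(Q_{X'|X},Q_{Y'|Y})\mapsto\tilde{W}$, whereas the paper phrases the count directly for $\tilde{\cW}_W$ in channel space; your version makes the same bound slightly more explicit and is equally valid for the theorem as stated.
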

\begin{proof}
Lemma~\ref{lemma:AV_MAC} implies that $\cC$, the capacity region of communication with adversary identification of a MAC $W$, is outer bounded by the capacity region $\cC_{\AVMAC}(\tilde{\cW}_W)$ of the AV-MAC $\tilde{\cW}_{W}$.
Further, note that the capacity of an AV-MAC $\cW$ only depends on the convex hull of $\cW$ (see \cite[Theorem 1]{Jahn81} \blue{and Section~\ref{sec:AVMAC}}). So, the capacity of $\tilde{\cW}_{W}$ is the same as the capacity of another AV-MAC $\cW_{W}$ which consists of vertices of the convex polytope $\tilde{\cW}_{W}\subseteq \bbR^{|\cX|\times|\cY|\times|\cZ|}$.
The elements in the set $\tilde{\cW}_{W}$ are parameterized by $(Q_{X'|X}, Q_{Y'|Y})$ pairs. It consists of the vertices of the polytope formed using constraints in \eqref{eq:outer_bound} and constraints of the form: (1) $\sum_{x'}P_{X'|X}(x'|x) = 1$ for all $x$, and (2) $P_{X'|X}(x'|x)\geq 0$. There are similar constraints for $P_{Y'|Y}$. Note that there are $|X|^2 + |Y|^2$ inequality constraints. Every point in the resulting polytope satisfies all the equality constraints. We will get faces, edges, vertices etc. depending on the number of additional inequality constraints satisfied at that point. Thus, \blue{the} number of vertices $\leq 2^{|X|^2 + |Y|^2}$.
\end{proof}

We will now give an alternative proof of the first part of Theorem~\ref{thm:main_result}. We will use the connection of  the present model with the AV-MAC  given by Lemma~\ref{lemma:AV_MAC} and then establish a connection between an  $\one$-\spoofable \bmac and a symmetrizable-$\cX$ AV-MAC \blue{(see \eqref{eq:sym_x}). As we saw in Theorem~\ref{thm:gubner_sym}, a} symmetrizable-$\cX$ AV-MAC does not allow reliable communication by user \one. 
\begin{thm}[First part of Theorem~\ref{thm:main_result}]\label{thm:main_result_first_part}
If a \bmac is \one-\spoofable (resp. \two-\spoofable), communication with adversary identification from user-\one (resp. user-\two) is impossible. Specifically, for any $(N_{\one}, N_{\two}, n)$ adversary identifying code with $N_{\one}\geq 2$ (resp. $N_{\two}\geq 2$), the probability of error is at least $1/12$.  
\end{thm}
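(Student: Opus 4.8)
The plan is to reduce the infeasibility claim to Gubner's symmetrizability bound for AV-MACs, using the code-to-AV-MAC transfer already established in Lemma~\ref{lemma:AV_MAC}. Fix an \one-\spoofable \bmac $W$ and an $(N_{\one},N_{\two},n)$ adversary identifying code $(f_{\one},f_{\two},\phi)$ with $N_{\one}\geq 2$ and $P_e(f_{\one},f_{\two},\phi)\leq\epsilon$. By Lemma~\ref{lemma:AV_MAC}, this is also an $(N_{\one},N_{\two},n)$ communication code for the AV-MAC $\tilde{\cW}_W$ of Definition~\ref{defn:outerboundAVMAC} with average probability of error at most $2\epsilon$. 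The crux is then to show that \one-\spoofability of $W$ forces $\tilde{\cW}_W$ to be symmetrizable-$\cX$ in the sense of \eqref{eq:sym_x}. Granting this, Remark~\ref{remark:gubner} (the quantitative strengthening of Theorem~\ref{thm:gubner_sym}) applies because $N_{\one}\geq 2$ and shows the AV-MAC code has average error at least $1/4$; hence $2\epsilon\geq 1/4$, i.e., $\epsilon\geq 1/8\geq 1/12$. The \two-\spoofable case is symmetric, with the roles of \one\ and \two\ (and of $\cX,\cY$) interchanged and symmetrizable-$\cY$ used in place of symmetrizable-$\cX$.

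For the reduction, let $Q_{Y|\tilde X\tilde Y}$ and $Q_{X|\tilde X X'}$ be the conditional distributions witnessing \one-\spoofability, so \eqref{eq:spoof1} holds. For each $a\in\cX$, consider the pair of channels $Q^{(a)}_{X'|X}(x'\mid x):=Q_{X|\tilde X X'}(x'\mid a,x)$ and $Q^{(a)}_{Y'|Y}(y'\mid y):=Q_{Y|\tilde X\tilde Y}(y'\mid a,y)$. After relabelling the free variables, the equality of the first and third expressions in \eqref{eq:spoof1} says precisely that this pair satisfies the defining identity \eqref{eq:outer_bound} of $\tilde{\cW}_W$, so it corresponds to a state $s_a\in\cS$ with
\begin{equation*}
\tilde W(z\mid x,y,s_a)=\sum_{x'}Q_{X|\tilde X X'}(x'\mid a,x)\,W(z\mid x',y)=\sum_{y'}Q_{Y|\tilde X\tilde Y}(y'\mid a,y)\,W(z\mid x,y').
\end{equation*}
Now let $U\colon\cX\to\cS$ be the deterministic map with $U(\cdot\mid x)$ equal to the point mass at $s_x$. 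With this choice, the symmetrizability-$\cX$ condition \eqref{eq:sym_x} for $\tilde{\cW}_W$ reduces to $\tilde W(z\mid x,y,s_{x'})=\tilde W(z\mid x',y,s_x)$ for all $x,x'\in\cX$, $y\in\cY$, $z\in\cZ$. Writing both sides through the user-\two\ representation of the states, this identity becomes $\sum_{w}Q_{Y|\tilde X\tilde Y}(w\mid x',y)\,W(z\mid x,w)=\sum_{w}Q_{Y|\tilde X\tilde Y}(w\mid x,y)\,W(z\mid x',w)$, which (again after relabelling) is exactly the equality of the first and second expressions in \eqref{eq:spoof1}. Thus $\tilde{\cW}_W$ is symmetrizable-$\cX$, completing the reduction.

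I expect the main obstacle to be the index bookkeeping in this reduction. Equation \eqref{eq:spoof1} couples three ``legs'' --- honest-\one/adversarial-\two, the same with a swapped \cX-coordinate, and adversarial-\one\ --- and one must correctly match the ``first $=$ third'' equality with the AV-MAC state structure \eqref{eq:outer_bound} and the ``first $=$ second'' equality with the $\cX$-symmetrizing relation \eqref{eq:sym_x}, all while keeping the ``intended'' and the ``actual'' channel inputs straight. A secondary point is the case $N_{\two}=1$, for which Remark~\ref{remark:gubner} as stated (both rates positive) does not directly apply: here one fixes user \two's unique codeword and notes that $\tilde{\cW}_W$ degenerates to a (possibly time-varying) AVC for user \one\ which is still symmetrizable because the map $U$ above does not depend on $y$, so the analogous single-user symmetrizable-AVC bound rules out positive rate for user \one\ with the same constant. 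Assembling the constants then yields $P_e\geq 1/8$, and in particular $P_e\geq 1/12$, as claimed.
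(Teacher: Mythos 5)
Your proposal is correct and follows essentially the same route as the paper's own (alternate) proof of this theorem: Lemma~\ref{lemma:AV_MAC} transfers the code to the AV-MAC $\tilde{\cW}_W$, the states are indexed by $\cX$ via the spoofing pair (matching the first and third expressions of \eqref{eq:spoof1} to \eqref{eq:outer_bound}), and the deterministic map $U(\cdot|x)=\delta_{s_x}$ together with the first-equals-second equality of \eqref{eq:spoof1} yields symmetrizability-$\cX$, after which Gubner's $1/4$ bound gives the claim. Your extra handling of the $N_{\two}=1$ case is a small additional precaution beyond what the paper spells out, and does not change the argument.
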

\begin{proof}[Alternate proof]
We will first show that if a \bmac is $\one$-\spoofable then the corresponding AV-MAC given by Definition~\ref{defn:outerboundAVMAC} is symmetrizable-$\cX$ \blue{\eqref{eq:sym_x}}. 

To this end, suppose a given \bmac $W$ is $\one$-\spoofable. This implies that \eqref{eq:spoof1} holds. Then, for $\cS = \cX$ and by replacing $\tilde{X}$ and with $S$ in \eqref{eq:spoof1}, we obtain 
\begin{align}
&\sum_{y}Q_{Y|\tilde{Y}S}(y|\tilde{y},\tilde{x})\mch(z|x',y) \label{eq:spoof1Sa}\\
&= \sum_{y}Q_{Y|\tilde{Y}S}(y|\tilde{y},x')\mch(z|\tilde{x},y) \label{eq:spoof1Sb}\\
& = \sum_{x}Q_{X|X'S}(x|x',\tilde{x})\mch(z|x,{\tilde{y}})\qquad\text{ for all }x', \,\tilde{x},\, \tilde{y},\, z.\label{eq:spoof1Sc}
\end{align}
For every $s\in \cS$, \eqref{eq:spoof1Sa} and \eqref{eq:spoof1Sc} imply that 
\begin{align*}
\tilde{W}_{Z|XYS}(z|x',\tilde{y},s)\defineqq &\sum_{y}Q_{Y|\tilde{Y}S}(y|\tilde{y},s)\mch(z|x',y) \\
 = &\sum_{x}Q_{X|X'S}(x|x',s)\mch(z|x,{\tilde{y}})\qquad \text{ for all }x', \tilde{y}, z.
\end{align*} Thus, the set $\tilde{\cW}_{W}$ in Definition~\ref{defn:outerboundAVMAC} is such that 
\begin{align*}
\inb{\tilde{W}_{Z|XYS}(\cdot|\cdot,\cdot,s):s\in \cS}\subseteq  \tilde{\cW}_{W}.
\end{align*}
By \eqref{eq:spoof1Sa} and \eqref{eq:spoof1Sb}, we note  that for all $z, \tilde{x}, \tilde{y}, x'$,
\begin{align*}
\tilde{W}_{Z|XYS}(z|x',\tilde{y},\tilde{x})= \tilde{W}_{Z|XYS}(z|\tilde{x},\tilde{y},x')
\end{align*}Thus, for $P_{S|X}(s|x) \defineqq \mathbbm{1}_{\inb{s=x}},\, s,x\in \cX$, where $\mathbbm{1}$ is the indicator function, 
\begin{align*}
\sum_{s}P_{S|X}(s|\tilde{x})\tilde{W}_{Z|XYS}(z|x',\tilde{y},s)= \sum_{s'}P_{S|X}(s'|x')\tilde{W}_{Z|XYS}(z|\tilde{x},\tilde{y},s)
\end{align*} for all $x', \tilde{x}, \tilde{y}$ and $z$. That is, the AV-MAC $\tilde{\cW}_{W}$ is symmetrizable-$\cX$.
Hence, if the \bmac $W$ is \one-\spoofable, then the AV-MAC $\tilde{\cW}_W$ is symmetrizable-$\cX$. 

Now, suppose that there is an $(N_{\one}, N_{\two}, n)$ adversary identifying code $(f_{\one},f_{\two}, \phi)$ such that $P_{e}(f_{\one},f_{\two}, \phi) \leq 1/12$. From Lemma~\ref{lemma:AV_MAC}, this implies that there is an $(N_{\one}, N_{\two}, n)$ communication code for the AV-MAC $\tilde{\cW}_{W}$ with average probability of error at most $1/6$. However, for any $(N_{\one}, N_{\two}, n)$ code (with $N_{\one}\geq 2$) for a symmetrizable-$\cX$ AV-MAC,  \blue{Gubner~\cite[Lemma~3.10]{Gubnerthesis}} shows that the average probability of error is at least $1/4$ \blue{(see Remark~\ref{remark:gubner})}. Thus, there does not exist any $(N_{\one}, N_{\two}, n)$ adversary identifying code $(f_{\one},f_{\two}, \phi)$ such that $P_{e}(f_{\one},f_{\two}, \phi) \leq 1/12$ and $N_{\one}\geq 2$.


\end{proof}
\subsubsection{\blue{Computing $\cC$ for Example~\ref{ex:3}}}\label{example:outer_bound}
\begin{Eexample}\olive{To compute $\cC$, we first give an outer bound. \blue{From Definition~\ref{defn:outerboundAVMAC}, note that }$\tilde{\cW}_W$ contains a channel $\tilde{W}$ satisfying
\begin{align*}
\tilde{W}(x_1+y_1, v|(x_1, x_2), (y_1, y_2)) = 0.5,
\end{align*} for all $x_1, x_2, y_1, y_2, v\in \inb{0,1}$. This is obtained by using the pair {$(Q_{X'|X}, Q_{Y'|Y})$} defined by $Q_{X'|X}((x_1, u)|(x_1,x_2)) = 0.5$ for all $u,x_1, x_2\in \{0,1\}$ and $Q_{Y'|Y}((y_1, v)|(y_1,y_2)) = 0.5$ for all $v,y_1, y_2\in \{0,1\}$ in~\eqref{eq:outer_bound}. 
Note that the channel $\tilde{W}$ has the same first component as $W$ (i.e., a binary erasure MAC) and a second component whose output $Z_2$ is independent of the inputs. We will argue that $\cC$ is outer bounded by the (non-byzantine) MAC capacity region of $\tilde{W}$.
 Instead of transmitting the output of its encoder, a malicious user \one may attack by passing it through the DMC $Q_{X'|X}$ defined above and transmitting the resulting output over the \bmac. This results in the effective MAC $\tilde{W}$. The receiver cannot distinguish between this malicious user \one attacking with $Q_{X'|X}$ and a malicious user \two attacking with $Q_{Y'|Y}$ as both result in the same resultant MAC $\tilde{W}$. This implies that the receiver must decode the input of both users correctly under $\tilde{W}$. Thus, $\cC$ is outer bounded by the (non-adversarial) capacity region of $\tilde{W}$ which is the capacity region of the binary erasure MAC. We can show that this outer bound is tight by using an adversary identifying code for the binary erasure MAC component $Z_1=X_1+Y_1$ (see Section~\ref{sec:example_tightness}) and any arbitrary inputs for the other component.}
\end{Eexample}



\section{Connections to Randomized Coding Capacity Region}\label{sec:rand_capacity}
While not the focus of this paper, in this section we comment on the relationship of the capacity region $\cC$ of deterministic codes with that of randomized codes. Note that we do not provide any  direct achievability scheme for randomized codes in this section.
Before presenting the formal definitions, we draw the reader's attention to two points about our setup for randomized codes:
\begin{itemize}
        \item The encoders share {\em independent} randomness with the decoder. This is similar to the randomized code of Jahn~\cite{Jahn81} for AV-MACs.
        \item When a user is adversarial, we allow that user to adversarially select the realization of its randomness.\footnote{This is analogous to the model studied in \cite{NehaBDP23} for reliable communication.}
\end{itemize}
Both these choices will prove important in making the connection between the capacity regions under deterministic and randomized codes.  

\begin{defn}[Randomized adversary identifying code]\label{defn:rand-code}
	An $(\nummsg_{\one},\nummsg_{\two},\numrand_{\one},\numrand_{\two},n)$  {\em randomized adversary identifying code} $\inp{F_{\one}, F_{\two}, \phi_{F_{\one},F_{\two}}}$ for a MAC with byzantine users consists of the following: 
\begin{enumerate}[label=(\roman*)]
\item Two message sets, $\mathcal{M}_i = \{1,\ldots,\nummsg_i\}$, $i=\one,\two$,
\item Two collections of deterministic encoders, 
	$\codeset_{\one}=\{f_{\one,1},\ldots,f_{\one,\numrand_{\one}}\}$, where $f_{\one,i}:\mathcal{M}_{\one}\rightarrow \mathcal{X}^n$, $i=1,\ldots,\numrand_{\one}$, and $\codeset_{\two}=\{f_{\two,1},\ldots,f_{\two,\numrand_{\two}}\}$, where $f_{\two,i}:\mathcal{M}_{\two}\rightarrow \mathcal{Y}^n$, $i=1,\ldots,\numrand_{\two}$,
\item Two independent randomized encoders $F_{\one}$ and $F_{\two}$ with distributions $p_{\Fo}$ and $p_{\Ft}$ over the sets $\codeset_{\one}$ and $\codeset_{\two}$, respectively, and
\item A collection of decoding maps, $\phi_{\fo,\ft}:\mathcal{Z}^n\rightarrow(\mathcal{M}_{\one}\times\mathcal{M}_{\two})\cup\{\oneb,\, \twob\},$ where $\fo\in \codeset_{\one}$ and $\ft\in\codeset_{\two}$. 
\end{enumerate}
\end{defn}
As before, the decoder outputs the symbol \oneb (resp., \twob) to declare that user \one (resp., \two) is adversarial. 
The {\em average probability of error} $P_{e}$ is the maximum of the average probabilities of error in the following three cases: (1) both users are honest, (2) user \one is adversarial, and (3) user \two is adversarial. When both users are honest, the error occurs if the decoder outputs anything other than the pair of correct messages. Let 
\[\cE^{\fo,\ft}_{\mo,\mt} = \inb{\vecz:\phi_{\fo,\ft}(\vecz)\neq(\mo, \mt)}\]
denote the error event when both users are honest. 
The average error {probability} when both users are honest is
\begin{align}
&P^{\rand}_{e,\na}\hspace{-0.25em} \defineqq
    \frac{1}{N_{\one}\cdot N_{\two}} 
        \sum_{\substack{\mo\in\mathcal{M}_{\one}\\\mt\in\mathcal{M}_{\two}}}
            \sum_{\substack{\fo\in\codeset_{\one}\\\ft\in\codeset_{\two}}}
                p_{\Fo}(\fo)p_{\Ft}(\ft)W^n\inp{\cE^{\fo,\ft}_{\mo,\mt}|\fo(\mo), \ft(\mt)}.
\end{align}
Recall that when user \one is adversarial we would like the decoder's output to either be the symbol $\oneb$ or a pair of messages of which the message of user \two is correct. The error event is $\cE^{\fo,\ft}_{\mt} \defineqq \inb{\vecz:\phi_{\fo,\ft}(\vecz)\notin\inp{\cM_{\one}\times\{\mt\}}\cup{\{\oneb\}}}$. The average probability of error when user \one is adversarial is 
\begin{align}
&P^{\rand}_{e,\malone} \defineqq 
        \max_{\substack{\vecx\in\cX^n\\\fo\in\codeset_{\one}}} 
            \left(\frac{1}{N_{\two}}\sum_{m_{\two}\in \mathcal{M}_{\two}}
                \sum_{\ft\in\codeset_{\two}}
                    p_{\Ft}(\ft)W^n\inp{\cE^{\fo,\ft}_{\mt}|\vecx, \ft(\mt)}\right).\label{eq:rand-mal1}
\end{align}
Notice that the adversarial user \one selects both its transmission $\vecx$ and the code $\fo$ (i.e., the realization of the randomness it shares with the decoder) adversarially in order to maximize the probability that the decoder (working with this code $\fo$) makes an error. 
As in the deterministic coding case, the probability of error is maximized by deterministic attacks of the adversary. This is because for any attack distribution $q_{\vecX, F_{\one}}$,
\begin{align}
\sum_{\vecx, f_{\one}}q_{\vecX, F_{\one}}(\vecx, f_{\one})\left(\frac{1}{N_{\two}}\sum_{m_{\two}\in \mathcal{M}_{\two}}
                \sum_{\ft\in\codeset_{\two}}
                    p_{\Ft}(\ft)W^n\inp{\cE^{\fo,\ft}_{\mt}|\vecx, \ft(\mt)}\right)\leq P^{\rand}_{e,\malone}.
\end{align}
Similarly, for $\cE^{\fo,\ft}_{\mo} \defineqq \inb{\vecz:\phi_{\fo,\ft}(\vecz)\notin\inp{\{\mo\}\times\cM_{\two}}\cup{\{\twob\}}}$, the average probability of error when user \two is adversarial is 
\begin{align}
&P^{\rand}_{e,\maltwo} \defineqq 
    \max_{\substack{\vecy\in\cY^n\\\ft\in\codeset_{\two}}}
        \left(\frac{1}{N_{\one}}\sum_{m_{\one}\in \mathcal{M}_{\one}}
                \sum_{\fo\in\codeset_{\one}}
                    p_{\Fo}(\fo)W^n\inp{\cE^{\fo,\ft}_{\mo}|\fo(\mo), \vecy}\right).\label{eq:rand-mal2}
\end{align}
We define the \emph{average probability of error} as
\begin{align*}
        P^{\rand}_{e}\defineqq \max{\inb{P^{\rand}_{e,\na},P^{\rand}_{e,\malone},P^{\rand}_{e,\maltwo}}}.
\end{align*}
The capacity region $\cC^{\rand}$ of communication with adversary identification under randomized coding may be defined along the lines of Definition~\ref{defn:capacity}: we say that a rate $(R_{\one},R_{\two})$ is achievable if there exists a sequence of $(\lfloor2^{nR_{\one}}\rfloor,\lfloor2^{nR_{\two}}\rfloor,L_{\one,n},L_{\two,n},n)$  randomized adversary identifying codes (for some $L_{\one, n}, L_{\two, n}$) such that $P^{\rand}_{e}\rightarrow0$ as $n\rightarrow\infty$. The capacity region $\cC^{\rand}$ is the closure of the set of all such achievable rate pairs.

\subsection{Dichotomy theorem for deterministic coding capacity region}\label{sec:radn_det_capacity_connection}
The following theorem states that when the deterministic coding capacity region $\cC$ has a non-empty interior, it is the same as the randomized coding capacity region $\cC^{\rand}$. This is analogous to similar results in the AVC literature~\cite[Theorem~1]{Ahlswede78},\cite[Theorem~1, Section IV]{Jahn81}.
\begin{thm}\label{thm:capacity_equivalence}
 $ \cC=\cC^{\rand}$ whenever $(R_{\one}, R_{\two})\in \cC$ for some $R_{\one}, R_{\two}>0$.
\end{thm}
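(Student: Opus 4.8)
The plan is to prove the two inclusions separately. The inclusion $\cC\subseteq\cC^{\rand}$ is immediate: a deterministic adversary identifying code is the special case $\numrand_{\one}=\numrand_{\two}=1$ of a randomized one, for which the adversary's selection of its randomness realization is vacuous, so the three error quantities agree with their randomized counterparts. The content is the reverse inclusion $\cC^{\rand}\subseteq\cC$ under the hypothesis $\mathsf{int}(\cC)\neq\emptyset$, which I would obtain by an elimination-of-randomness argument in the spirit of Ahlswede~\cite{Ahlswede78} and Jahn~\cite{Jahn81}, in two stages: (i) reduce the amount of shared randomness in an arbitrary randomized code to $o(n)$ bits per user, and (ii) have each user transmit its (now short) shared-randomness index over a deterministic prefix using a positive-rate code, which exists by hypothesis.

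\emph{Stage (i).} Fix a sequence of randomized adversary identifying codes with $P^{\rand}_e=\epsilon_n\to 0$ achieving a target rate pair; replacing $\epsilon_n$ by $\epsilon_n':=\max(\epsilon_n,2^{-\sqrt n})$ we may assume $\log(1/\epsilon_n')=o(n)$ without increasing the error. Set $K_n:=\lceil n^2/\epsilon_n'\rceil$, so $\log K_n=o(n)$ and $K_n\epsilon_n'\to\infty$. Draw $K_n$ encoders (with their decoding maps) independently from $p_{\Fo}$ and, independently, $K_n$ from $p_{\Ft}$, and let user $i$ pick its index $J_i\sim\mathsf{Unif}[K_n]$. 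For each fixed adversarial input $\vecx\in\cX^n$ and each index $i_{\one}\in[K_n]$, the corresponding value of $P^{\rand}_{e,\malone}$ for the subsampled code is an average of $K_n$ i.i.d.\ $[0,1]$-valued random variables whose common conditional mean (given the realization of $G^{(i_{\one})}_{\one}$) is at most $\epsilon_n$, by the definition of $P^{\rand}_{e,\malone}$ for the original code. Hoeffding's inequality gives a bound $e^{-2K_n\epsilon_n'}$ on the probability that this average exceeds $2\sqrt{\epsilon_n'}$, so a union bound over the (at most $|\cX|^nK_n$) pairs $(\vecx,i_{\one})$, and symmetrically over $(\vecy,i_{\two})$ for user~\two, shows that for large $n$ the draw yields, with positive probability, a code with $P^{\rand}_{e,\malone},P^{\rand}_{e,\maltwo}\le 2\sqrt{\epsilon_n'}$, whence $P^{\rand}_{e,\na}\le 4\sqrt{\epsilon_n'}$ by \eqref{honest_error_ub}. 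Fixing such a draw gives a sequence of randomized codes with $P^{\rand}_e\to 0$, unchanged limiting rates, and $\log\numrand_{i,n}=o(n)$; write $G^{(1)}_i,\dots,G^{(K_n)}_i$ for the surviving encoders.

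\emph{Stage (ii).} By $\mathsf{int}(\cC)\neq\emptyset$, fix deterministic adversary identifying codes of growing blocklength with both rates bounded away from $0$ and probability of error $\to 0$. Build a deterministic code of blocklength $n+p_{A,n}+p_{B,n}$ with $p_{A,n},p_{B,n}=o(n)$ chosen large enough to convey $\log_2 K_n$ bits for one user: in sub-phase~$A$ the users run such a code in which user~\one conveys a private $J_{\one}\sim\mathsf{Unif}[K_n]$; in sub-phase~$B$, one in which user~\two conveys $J_{\two}\sim\mathsf{Unif}[K_n]$; in the main phase user $i$ transmits with $G^{(J_i)}_i$. The keys $J_i$ are folded into the users' messages, so the rates are unchanged in the limit. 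The decoder reads $\hat J_{\one}$ from sub-phase~$A$, reads $\hat J_{\two}$ from sub-phase~$B$ (in each case outputting $\oneb$ or $\twob$ and halting if the sub-code does), and finally applies $\phi_{G^{(\hat J_{\one})}_{\one},G^{(\hat J_{\two})}_{\two}}$ to the main block. When both users are honest, all three sub-codes decode correctly w.h.p., so $\hat J_i=J_i$ and both messages are recovered. The main obstacle is the case where, say, user~\one is adversarial: it can corrupt sub-phase~$A$ and bias $\hat J_{\one}$ arbitrarily, and a harmful correlation between $\hat J_{\one}$ and the honest user's key $J_{\two}$ would put the main decoder outside the scope of the $P^{\rand}_{e,\malone}$ bound of Stage~(i). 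Transmitting $J_{\two}$ in a separate sub-phase, in which user~\one's input is a fixed sequence, prevents this: $\hat J_{\one}$ is then a function of sub-phase-$A$ quantities that are independent of $J_{\two}$, while the adversary identification guarantee of sub-phase~$B$ forces $\hat J_{\two}=J_{\two}$ w.h.p.\ or else that the decoder outputs $\oneb$ (correctly). Conditioning on $\hat J_{\one}=i_{\one}$ and $J_{\two}=\hat J_{\two}=j$ and using this independence, the main-block probability of failing to output user~\two's message or $\oneb$ is at most $\max_{i_{\one}}\tfrac1{K_n}\sum_{j}\psi_{\two}(\vecx^{\mathrm{main}},G^{(i_{\one})}_{\one},G^{(j)}_{\two})\le P^{\rand}_{e,\malone}\le 2\sqrt{\epsilon_n'}\to 0$, where $\psi_{\two}(\vecx,f_{\one},f_{\two})$ is the average over user~\two's message of the probability that $\phi_{f_{\one},f_{\two}}$ produces neither $\oneb$ nor user~\two's correct message under attack $\vecx$. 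The case of user~\two adversarial is symmetric, with sub-phase~$A$ providing the decorrelation and the $P^{\rand}_{e,\maltwo}$ bound; a short case analysis confirms that all $\oneb$/$\twob$ declarations by the sub-codes are correct up to vanishing probability in each of the three scenarios. Summing the contributions, $P_e\to0$ at a rate pair tending to $(R_{\one},R_{\two})$, so $(R_{\one},R_{\two})\in\cC$, proving $\cC^{\rand}\subseteq\cC$; the remaining ingredients (Hoeffding concentration, the union bound over the exponentially many attack vectors, and the fact that $o(n)$ extra channel uses leave rates unchanged) are routine.
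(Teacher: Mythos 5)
Your overall architecture, (i) reduce the shared randomness of a good randomized code to $o(n)$ bits per user by subsampling encoder realizations plus a union bound over attack vectors, then (ii) convey the key indices over a short positive-rate deterministic prefix and run the reduced code on the main block, is the same as the paper's; your Stage (i) is sound and matches the paper's randomness-reduction lemma (the paper fixes $n^2$ realizations per user, uses an exponential-moment bound instead of Hoeffding, and handles the honest case by a cyclic-shift decomposition rather than your $P^{\rand}_{e,\na}\le P^{\rand}_{e,\malone}+P^{\rand}_{e,\maltwo}$ bound, all cosmetic differences). Where you diverge is the prefix: the paper sends \emph{both} keys as the uniform messages of a single deterministic adversary identifying prefix code and defines the randomized-code errors \eqref{eq:rand-mal1}, \eqref{eq:rand-mal2} with the adversary free to pick its own key realization, which is what its main-block bound then invokes; you instead split the prefix into two sub-phases in which the non-key user sends a fixed sequence, precisely to keep the decoded adversary key independent of the honest user's key. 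Your motivating concern about that correlation is a genuine subtlety of the single-prefix route, but your fix introduces a step that you do not have.

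The gap is the pinned sub-phases themselves. The hypothesis $\mathsf{int}(\cC)\neq\emptyset$ only supplies codes with \emph{message-averaged} guarantees \eqref{eq:mal1} and \eqref{eq:mal2}, and these do not survive pinning one user's input against a worst-case attack. Concretely, in your sub-phase $B$ user \one must send a fixed $\vecx^{B}_0$; when user \two is the adversary, the false-blame event that the sub-phase-$B$ decoder outputs $\oneb$ has probability $\Pr\big[\hat\phi^{B}(\vecZ^{B})=\oneb\,\big|\,\vecx^{B}_0,\vecy^{B}\big]$, and this must vanish for \emph{every} attack $\vecy^{B}$ with no averaging over user \one's message available. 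What you actually know is only $\max_{\vecy}\tfrac1{N_{\one}}\sum_{m_{\one}}\Pr[\oneb\mid f_{\one}(m_{\one}),\vecy]\le\epsilon$; since the bad set of $m_{\one}$ may differ for each of the exponentially many $\vecy$, no single pinned codeword that is good for all attacks can be extracted by Markov or union arguments, so the claim that "all $\oneb$/$\twob$ declarations by the sub-codes are correct up to vanishing probability in each of the three scenarios" is unsupported (the symmetric failure is a false $\twob$ in sub-phase $A$ when user \one attacks the pinned user \two). You also cannot strengthen sub-phase $B$ so that it never outputs $\oneb$: that would be reliable communication for user \two against a byzantine user \one, which requires non-symmetrizability and fails, e.g., for the binary erasure MAC even though $\mathsf{int}(\cC)\neq\emptyset$ there. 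Nor can the pinned user carry genuine message bits to restore averaging, since then the decoded adversary key becomes correlated with the honest user's key/message pair and the interchange with the fixed-adversary-key guarantee of Stage (i) is again unjustified. To close the argument you either need to prove the existence of such one-sided "never falsely blame the pinned user" sub-codes from the hypothesis (a nontrivial claim), or revert to the paper's single joint prefix in which both keys are bona fide uniform messages.
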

Since deterministic codes are a subset of randomized codes, $\cC\subseteq \cC^{\rand}$. We formally show that $ \cC\supseteq \cC^{\rand}$ in Appendix~\ref{app:rand_reduc}. For this, we first prove  that given any randomized adversary identifying code with a small probability of error, there exists another randomized adversary identifying code, also with a small probability of error, which requires only $O(\log\inp{n})$ shared random bits between each (honest) sender and the receiver (Lemma~\ref{thm:rand_reduc} in Appendix~\ref{app:rand_reduc}). This {\em randomness reduction} argument is along the lines of Ahlswede~\cite{Ahlswede78} (and its extension for AV-MAC by Jahn~\cite{Jahn81}).

With this, one may construct a two-phased code for showing the achievability of Theorem~\ref{thm:capacity_equivalence}. In the first phase of the code, a positive-rate, deterministic adversary identifying code (guaranteed to exist by the hypothesis of the theorem) is used to send $O(\log\inp{n})$ random bits as messages by each (honest) sender to the receiver. This establishes the $O(\log\inp{n})$ amount of shared randomness required for the randomized adversary identifying code. In the second phase, the shared randomness based code obtained from the randomness reduction argument is used. Note that an adversarial user can maliciously select their input in the first phase. This is why in \eqref{eq:rand-mal1} and \eqref{eq:rand-mal2} we defined the probability of errors of the randomized adversary identifying code such that the adversarial user may select their own randomness. 

\subsection{An outer bound to the randomized coding capacity region}\label{sec:outer_bd_proof}
Recall that $\cC^{\rand}_{\AVMAC}(\tilde{\cW}_W)$ denote the randomized coding capacity for the AV-MAC $\tilde{\cW}_W$ (see Definition~\ref{defn:AVMAC}). Here, each user shares independent randomness with the receiver which is unknown to the adversary and the other user. 
\begin{thm}[Outer bound]\label{thm:rand-outer_bd}
$\cC^{\rand} \subseteq \cC^{\rand}_{\AVMAC}(\tilde{\cW}_W)$. Moreover, there exists 
an AV-MAC ${\cW}_W$ such that $\cC^{\rand}_{\AVMAC}({\cW}_W)=\cC^{\rand}_{\AVMAC}(\tilde{\cW}_W)$ 
and\footnote{The capacity region of an AV-MAC only depends on its convex hull which is defined by taking convex combinations of channels under different states \cite{Jahn81,AhlswedeC99}. The AV-MAC ${\cW}_W$ has the same convex hull as $\tilde{\cW}_W$ but with only finitely many states.} $|{\cW}_W| \leq 2^{|\cX|^2 + |\cY|^2}$.
\end{thm}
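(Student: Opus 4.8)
The plan is to replay the proof of Theorem~\ref{thm:outer_bd} almost verbatim, with Lemma~\ref{lemma:AV_MAC} replaced by its randomized-coding analogue. Concretely, the first (and only substantial) step is to show: \emph{any $(\nummsg_{\one},\nummsg_{\two},\numrand_{\one},\numrand_{\two},n)$ randomized adversary identifying code $\inp{F_{\one},F_{\two},\phi_{F_{\one},F_{\two}}}$ for a MAC $W$ with $P^{\rand}_{e}\leq \epsilon$ is also an $(\nummsg_{\one},\nummsg_{\two},\numrand_{\one},\numrand_{\two},n)$ randomized communication code for the AV-MAC $\tilde{\cW}_W$ --- with each sender sharing its private randomness $F_{\one}$, resp.\ $F_{\two}$, with the receiver --- whose average probability of error is at most $2\epsilon$.} The argument copies that of Lemma~\ref{lemma:AV_MAC}. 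Fix an arbitrary state sequence of $\tilde{\cW}_W$; by Definition~\ref{defn:outerboundAVMAC} it is realized by a pair of memoryless attack channels $Q^n_{X|X'}=\prod_i Q_{i,X'|X}$ and $Q^n_{Y|Y'}=\prod_i Q_{i,Y'|Y}$ whose coordinates satisfy \eqref{eq:outer_bound}, and the resulting channel is $\tilde{W}^{(n)}$. When both senders are honest, draw realizations $\fo\sim p_{\Fo}$, $\ft\sim p_{\Ft}$, and feed $\fo(\mo)$, $\ft(\mt)$ into $\tilde{W}^{(n)}$, the output law coincides with that of two byzantine scenarios at once: (i) user~\two honest with realization $\ft$ sending $\ft(\mt)$, while an adversarial user~\one sets the realization of its shared randomness to $\fo$ and transmits $\vecx\sim Q^n_{X|X'}(\cdot|\fo(\mo))$; and (ii) the symmetric scenario with user~\one honest and an adversarial user~\two transmitting $\vecy\sim Q^n_{Y|Y'}(\cdot|\ft(\mt))$. (That the adversary may fix its shared-randomness realization is exactly why \eqref{eq:rand-mal1}--\eqref{eq:rand-mal2} maximize over $\fo$, resp.\ $\ft$.) Using $\cE^{\fo,\ft}_{\mo,\mt}\subseteq\cE^{\fo,\ft}_{\mo}\cup\cE^{\fo,\ft}_{\mt}$, the union bound, and the fact that randomized attacks are no worse than the worst deterministic one (Remark~\ref{remark:det_rand_attacks}), averaging $\tilde{W}^{(n)}(\cE^{\fo,\ft}_{\mo,\mt}|\fo(\mo),\ft(\mt))$ over $(\mo,\mt,\fo,\ft)$ is bounded by $P^{\rand}_{e,\maltwo}$ (controlling the $\cE^{\fo,\ft}_{\mo}$ contribution via scenario (ii)) plus $P^{\rand}_{e,\malone}$ (controlling the $\cE^{\fo,\ft}_{\mt}$ contribution via scenario (i)), i.e.\ by $2\epsilon$; since the state sequence was arbitrary, the code is a $2\epsilon$-error randomized code for $\tilde{\cW}_W$.

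Given this lemma, $\cC^{\rand}\subseteq\cC^{\rand}_{\AVMAC}(\tilde{\cW}_W)$ is immediate: a sequence of randomized adversary identifying codes of rate $(R_{\one},R_{\two})$ with $P^{\rand}_{e}\to 0$ yields, code by code, a sequence of randomized communication codes for $\tilde{\cW}_W$ of the same blocklengths and rates whose average error tends to $0$, so $(R_{\one},R_{\two})\in\cC^{\rand}_{\AVMAC}(\tilde{\cW}_W)$. For the ``moreover'' part, I would reuse the polytope argument from the proof of Theorem~\ref{thm:outer_bd} without change: the \emph{same} finite AV-MAC $\cW_W$ --- the vertices of the polytope cut out by \eqref{eq:outer_bound} together with the simplex constraints on $Q_{X'|X}$ and $Q_{Y'|Y}$, of which there are at most $2^{|\cX|^2+|\cY|^2}$ since a vertex is fixed by which of the $|\cX|^2+|\cY|^2$ inequalities are tight --- has the same convex hull as $\tilde{\cW}_W$, and the randomized coding capacity region of an AV-MAC depends only on its convex hull. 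The latter is built into Jahn's characterization $\cC^{\rand}_{\AVMAC}(\cW)=\cR(\cW)$ quoted earlier, since $\cR(\cW)$ is defined through the closed convex hull of the single-letter regions. Hence $\cC^{\rand}_{\AVMAC}(\cW_W)=\cC^{\rand}_{\AVMAC}(\tilde{\cW}_W)$ with $|\cW_W|\leq 2^{|\cX|^2+|\cY|^2}$.

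The main obstacle, and the only place requiring real care, is the randomized analogue of Lemma~\ref{lemma:AV_MAC}: one must check that ``simulate an honest sender (uniform message, code realization drawn from $p_{\Fo}$) and post-process the codeword through $Q^n_{X|X'}$'' is a legitimate strategy in the randomized byzantine model --- it is, precisely because the adversary is allowed to select the realization of the randomness it shares with the receiver --- and that after averaging over messages and over \emph{both} senders' code realizations the two error contributions remain below $P^{\rand}_{e,\malone}$ and $P^{\rand}_{e,\maltwo}$. This last point follows from the observation, already recorded in the text after \eqref{eq:rand-mal1}, that the error under any attack distribution $q_{\vecX,F_{\one}}$ is at most $P^{\rand}_{e,\malone}$ (and symmetrically for user~\two). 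Everything else is a transcription of the deterministic proof.
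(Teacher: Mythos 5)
Your proposal is correct and follows essentially the same route as the paper: simulate an honest sender whose codeword is post-processed through per-letter attack channels satisfying \eqref{eq:outer_bound}, bound the two error contributions by $P^{\rand}_{e,\malone}$ and $P^{\rand}_{e,\maltwo}$ using the observation after \eqref{eq:rand-mal1} that any attack distribution $q_{\vecX,F_{\one}}$ does no better than the maximizing deterministic choice, and reuse the polytope/convex-hull argument of Theorem~\ref{thm:outer_bd} for the finite-state AV-MAC. The only cosmetic difference is that you package the first step as an explicit randomized analogue of Lemma~\ref{lemma:AV_MAC}, whereas the paper inlines the computation (and frames the attacker as not maliciously selecting its shared randomness, per Remark~\ref{remark:outerbound_code}); the bounds obtained are identical.
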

\begin{remark}\label{remark:outerbound_code} We will prove the outer bound for a weaker adversary  who cannot choose the realization of the randomness it shares with the decoder. Thus, the outer bound holds even if the randomization is provided by the decoder (or an external agent). Note that the adversary knows the realization of its randomness and may choose its input to the channel based on this.\footnote{This is the intermediate model described in  \cite[Footnote 10]{NehaBDP23} for the reliable communication problem.}
\end{remark}
\begin{proof}
Let $\inp{F_{\one}, F_{\two}, \phi_{F_{\one},F_{\two}}}$ be an $(\nummsg_{\one}, \nummsg_{\two}, \numrand_{\one}, \numrand_{\two}, n)$ randomized adversary identifying code
with $P^{\rand}_{e} \leq \epsilon$. 
For $i\in [1:n]$, let $(Q_{i, X'|X}, Q_{i, Y'|Y})$ be an arbitrary sequence of pairs of conditional distributions satisfying \eqref{eq:outer_bound} and define $\tilde{W}_{i}$ as 
\begin{align}
\tilde{W}_{i}(z|x,y) \defineqq \sum_{x'}Q_{i,X'|X}(x'|x)W(z|x',y) = \sum_{y'}Q_{i,Y'|Y}(y'|y)W(z|x,y'), x\in\cX, y\in\cY, z\in\cZ.\label{eq:thm:rand-outer_bd:eq1}
\end{align}
Let $Q_{\vecX'|\vecX}\defineqq \prod_{i=1}^{n}Q_{i,X'|X}$, $Q_{\vecY'|\vecY}\defineqq \prod_{i=1}^{n}Q_{i,Y'|Y}$ and $\tilde{W}^{(n)} \defineqq \prod_{i=1}^{n}\tilde{W}_{i}$.

\noindent Let $\cE^{\two,\mt}_{f_{\one}, f_{\two}} := \inb{\vecz:\phi_{f_{\one}, f_{\two}}(\vecz) \in \{\twob\}\cup\inp{\cM_{\one}\times(\cM_{\two}\setminus\{\mt\})}}$ and $\cE^{\one,\mo}_{f_{\one}, f_{\two}} := \inb{\vecz:\phi_{f_{\one}, f_{\two}}(\vecz) \in \{\oneb\}\cup\inp{(\cM_{\one}\setminus\{\mo\})\times\cM_{\two}}}$. Consider a malicious user-\one who chooses $M_{\one}$ uniformly from $\cM_{\one} = [1:N_{\one}]$, passes $F_{\one}(M_{\one})$ over $Q_{\vecX'|\vecX}$, and transmits the resulting vector, that is, the user sends a vector distributed as $Q_{\vecX'|\vecX}(\cdot|F_{\one}(M_{\one}))$ as input to the channel. Note that, as pointed out in Remark~\ref{remark:outerbound_code}, the randomness in encoder $F_{\one}$ is not chosen maliciously by user-\one. We may conclude from \eqref{eq:rand-mal1} and \eqref{eq:rand-mal2} that 
\begin{align*}
P^{\rand}_{e, \malone} \geq \frac{1}{N_{\one}\cdot N_{\two}}\sum_{\mo, \mt}\sum_{\vecx}\sum_{f_{\one}, f_{\two}}p_{F_{\one}}(f_{\one})p_{F_{\two}}(f_{\two})Q_{\vecX'|\vecX}(\vecx|f_{\one}(\mo))W^n\inp{\cE^{\two,\mt}_{f_{\one},f_{\two}}\Big|\vecx,f_{\two}(\mt)}. 
\end{align*} Similarly,     
\begin{align*}
P^{\rand}_{e, \maltwo} \geq \frac{1}{N_{\one}\cdot N_{\two}}\sum_{\mo, \mt}\sum_{\vecy}\sum_{f_{\one}, f_{\two}}p_{F_{\one}}(f_{\one})p_{F_{\two}}(f_{\two})Q_{\vecY'|\vecY}(\vecy|f_{\two}(\mt))W^n\inp{\cE^{\one,\mo}_{f_{\one}, f_{\two}}\Big|f_{\one}(\mo), \vecy}.
\end{align*}
Using these inequalities,
\begin{align}
2\epsilon\geq P^{\rand}_{e, \malone} + P^{\rand}_{e, \maltwo} &\geq \frac{1}{N_{\one}\cdot N_{\two}}\sum_{\mo, \mt}\sum_{f_{\one}, f_{\two}}p_{F_{\one}}(f_{\one})p_{F_{\two}}(f_{\two})\Bigg(\sum_{\vecx}Q_{\vecX'|\vecX}(\vecx|f_{\one}(\mo))W^n\inp{\cE^{\two,\mt}_{f_{\one}f_{\two}}\Big|\vecx,f_{\two}(\mt)}\nonumber\\
&\qquad\qquad\qquad\qquad\qquad\qquad\qquad+ \sum_{\vecy}Q_{\vecY'|\vecY}(\vecy|f_{\two}(\mt))W^n\inp{\cE^{\one,\mo}_{f_{\one}, f_{\two}}\Big|f_{\one}(\mo), \vecy}\Bigg)\nonumber\\
&= \frac{1}{N_{\one}\cdot N_{\two}}\sum_{\mo, \mt}\sum_{f_{\one}, f_{\two}}p_{F_{\one}}(f_{\one})p_{F_{\two}}(f_{\two})\Bigg(\tilde{W}^n\inp{\cE^{\two,\mt}_{f_{\one}f_{\two}}\Big|f_{\one}(\mo),f_{\two}(\mt)}\nonumber\\
&\qquad\qquad\qquad\qquad\qquad\qquad\qquad\qquad\qquad\qquad+ \tilde{W}^n\inp{\cE^{\one,\mo}_{f_{\one}, f_{\two}}\Big|f_{\one}(\mo), f_{\two}(\mt)}\Bigg)\nonumber\\
& \geq \frac{1}{N_{\one}\cdot N_{\two}}\sum_{\mo, \mt}\sum_{f_{\one}, f_{\two}}p_{F_{\one}}(f_{\one})p_{F_{\two}}(f_{\two})\tilde{W}^{(n)}\inp{\cE^{\one,\mo}_{f_{\one}, f_{\two}}\cup\cE^{\two,\mt}_{f_{\one},f_{\two}}\Big|f_{\one}(\mo),f(\mt)},\label{eq:thm:rand-outer_bd:eq2}
\end{align} where we use \eqref{eq:thm:rand-outer_bd:eq1} and $\tilde{W}^{(n)} = \prod_{i=1}^{n}\tilde{W}_{i}$ in the penultimate step.
Notice that for $(\mo,\mt)\in \cM_{\one}\times\cM_{\two}$, $\inp{f_{\one}, f_{\two}}\in \Gamma_{\one}\times\Gamma_{\two}$ and $\vecz\in\cZ^n$,
\begin{align*}
&\inb{\vecz:\phi_{f_{\one},f_{\two}}(\vecz)\neq(\mo, \mt)}\\
&\quad= \inb{\vecz:\phi_{f_{\one},f_{\two}}(\vecz)\in\{\oneb\}}\bigcup\inb{\vecz:\phi_{f_{\one},f_{\two}}(\vecz)\in\{\twob\}}\bigcup \inb{\vecz:\phi_{f_{\one},f_{\two}}(\vecz) \in \inp{\cM_{\one}\times\cM_{\two}}\setminus\{(\mo, \mt)\}}\\
&\quad= \inb{\vecz:\phi_{f_{\one},f_{\two}}(\vecz) \in \{\twob\}\cup\inp{\cM_{\one}\times(\cM_{\two}\setminus\{\mt\})}} \bigcup
\inb{\vecz:\phi_{f_{\one},f_{\two}}(\vecz) \in \{\oneb\}\cup\inp{(\cM_{\one}\setminus\{\mo\})\times\cM_{\two}}}\\
&\quad= \cE^{\two,\mt}_{f_{\one},f_{\two}}\cup\cE^{\one,\mo}_{f_{\one}, f_{\two}}.
\end{align*}
Thus, from \eqref{eq:thm:rand-outer_bd:eq2},
\begin{align*}
&\frac{1}{N_{\one}\cdot N_{\two}}\sum_{\mo, \mt}\sum_{f_{\one}, f_{\two}}p_{F_{\one}}(f_{\one})p_{F_{\two}}(f_{\two})\tilde{W}^{(n)}\inp{\inb{\vecz:\phi_{f_{\one},f_{\two}}(\vecz) \neq (\mo, \mt)}|f_{\one}(\mo),f_{\two}(\mt)} \\
&=\frac{1}{N_{\one}\cdot N_{\two}}\sum_{\mo, \mt}\sum_{f_{\one}, f_{\two}}p_{F_{\one}}(f_{\one})p_{F_{\two}}(f_{\two})\tilde{W}^{(n)}\inp{\cE^{\one,\mo}_{f_{\one}, f_{\two}}\cup\cE^{\two,\mt}_{f_{\one},f_{\two}}\Big|f_{\one}(\mo),f_{\two}(\mt)}\\
&\leq  2\epsilon.
\end{align*}
Recall that every pair $(Q_{X'|X}, Q_{Y'|Y})$ satisfying \eqref{eq:outer_bound} corresponds to an element in $\tilde{\cW}_W$ which is a convex set (see the discussion in Section~\ref{sec:outer_bd}). Thus, for all $\epsilon>0$, any adversary identifying code for the \bmac $W$ with probability of error $\epsilon$ is also a communication code for the AV-MAC $\tilde{\cW}_{W}$ with probability of error at most $2\epsilon$. So, the randomized coding capacity region $\cC^{\rand}$ of $W$ is contained by the randomized coding capacity region $\cC^{\rand}_{\AVMAC}(\tilde{\cW}_W)$ of the AV-MAC $\tilde{\cW}_{W}$. 

Since the capacity region of an AV-MAC, both under randomized codes and deterministic codes,  only depends on its convex hull \cite{Jahn81,AhlswedeC99} \blue{(also see Section~\ref{sec:AVMAC})}, we may use a similar argument as the one used in the proof of Theorem~\ref{thm:outer_bd} to conclude that there exists 
an AV-MAC ${\cW}_W$ such that $\cC^{\rand}_{\AVMAC}({\cW}_W)=\cC^{\rand}_{\AVMAC}(\tilde{\cW}_W)$ 
and $|{\cW}_W| \leq 2^{|\cX|^2 + |\cY|^2}$.
\end{proof}

\section{Summary and discussions}\label{sec:summary}
\blue{In this work, we introduce the problem of communication with adversary identification in a \bmac. This is the second part of our study of different decoding guarantees in a \bmac, which models a Multiple Access Channel where users may maliciously deviate from the protocol. We study reliable communication in the first part~\cite{NehaBDP23}, where we require the decoder to always output correct messages for honest (non-malicious) senders. 
The present model relaxes the decoding guarantee in the presence of a malicious sender by allowing the decoder to identify the adversary, without decoding the message of the honest sender. A previous work \cite{NehaBDPISIT19} further relaxes the decoding guarantee in the presence of malicious senders, by allowing the decoding to declare adversarial interference, without decoding the messages of honest senders or identifying the malicious user. }

\blue{We study the present model for a two user \bmac. We characterize the feasibility conditions and provide inner and outer bound to the deterministic coding capacity region. We also show connections between the deterministic coding capacity region and randomized coding capacity region. In particular, Theorem~\ref{thm:capacity_equivalence} shows that whenever the deterministic coding capacity region has a non-empty interior, it is the same as the randomized coding capacity region. This is analogous to similar results in the AVC literature~\cite[Theorem~1]{Ahlswede78},\cite[Theorem~1, Section IV]{Jahn81}. We also contrast the present model with models of reliable communication and authentication communication through examples in Section~\ref{sec:comparison}.} 

\blue{We believe that our work leads to several intriguing directions for future research. Characterizing the exact capacity region for our problem remains an open question. Another open question is the generalization to more than two users. This was done for the problem of reliable communication in \cite{NehaBDP23}. 
All models of \bmac are studied without input constraints. It would be interesting to see how the results change with input constraints. We believe that results on AVCs and AV-MACs with state constraints can provide guidance in this. }

\appendices
\section{Proof of Theorem~\ref{thm:main_result}}\label{sec:proof_thm1}
\olive{We start with a proof of the converse.
We prove it  for the stronger version when both encoders can privately randomize.
\begin{lemma}\label{thm:converse}
 If a channel is \one-spoofable (resp. \two-\spoofable), then  for any $(N_{\one}, N_{\two}, n)$ adversary identifying code with $N_{\one}\geq 2$ (resp. $N_{\two}\geq 2$), the probability of error is at least $1/12$.  
 \end{lemma}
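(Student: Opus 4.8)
The plan is to prove the statement for an $\one$-spoofable channel; the $\two$-spoofable case follows by interchanging the roles of users $\one$ and $\two$. Fix an $\one$-spoofable $\mch$, so there are kernels $Q_{Y|\tilde{X}\tilde{Y}}$ and $Q_{X|\tilde{X}X'}$ satisfying \eqref{eq:spoof1}, and fix an arbitrary $(N_{\one},N_{\two},n)$ adversary identifying code with (possibly) stochastic encoders $f_{\one},f_{\two}$ and $N_{\one}\geq 2$. I would draw $M_{\one},M_{\one}',M_{\two}$ independent and uniform over $\cM_{\one},\cM_{\one},\cM_{\two}$, and let $\vecX_A\sim f_{\one}(M_{\one})$, $\vecX_B\sim f_{\one}(M_{\one}')$, $\vecY_C\sim f_{\two}(M_{\two})$ be conditionally independent given $(M_{\one},M_{\one}',M_{\two})$. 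Then I would run three coupled experiments sharing $(\vecX_A,\vecX_B,\vecY_C)$, with channel outputs $\vecZ_a,\vecZ_b,\vecZ_c$: \emph{(a)} user $\one$ honest transmitting $\vecX_A$ and user $\two$ adversarial feeding $\vecY\sim Q^n_{Y|\tilde{X}\tilde{Y}}(\cdot|\vecX_B,\vecY_C)$; \emph{(b)} user $\one$ honest transmitting $\vecX_B$ and user $\two$ adversarial feeding $\vecY\sim Q^n_{Y|\tilde{X}\tilde{Y}}(\cdot|\vecX_A,\vecY_C)$; \emph{(c)} user $\two$ honest transmitting $\vecY_C$ and user $\one$ adversarial feeding $\vecX$ obtained by passing $\vecX_A,\vecX_B$ through the memoryless kernel $Q_{X|\tilde{X}X'}$ in the slotting dictated by \eqref{eq:spoof1}. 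Applying \eqref{eq:spoof1} coordinatewise (all channels here being memoryless), conditioned on $(\vecX_A,\vecX_B,\vecY_C)$ the three output vectors $\vecZ_a,\vecZ_b,\vecZ_c$ have the same law, hence also conditioned on $(M_{\one},M_{\one}',M_{\two})$.

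Next I would invoke the average-error structure. In (a) user $\two$ is adversarial and its attack is a legitimate randomized attack: it only requires simulating $f_{\one}$ on the self-chosen uniform message $M_{\one}'$ and $f_{\two}$ on the self-chosen uniform message $M_{\two}$, and it is independent of the honest message $M_{\one}$; so by Remark~\ref{remark:det_rand_attacks}, $\Pr[\phi_{\one}(\vecZ_a)\notin\{M_{\one},\twob\}]\leq P_{e,\maltwo}$. Symmetrically $\Pr[\phi_{\one}(\vecZ_b)\notin\{M_{\one}',\twob\}]\leq P_{e,\maltwo}$, and in (c) $\Pr[\phi_{\two}(\vecZ_c)\notin\{M_{\two},\oneb\}]\leq P_{e,\malone}$. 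The combinatorial heart of the argument is the pointwise inequality: for every $\vecz\in\cZ^n$, every pair $m_1\neq m_1'$ in $\cM_{\one}$, and every $m_2\in\cM_{\two}$,
\[
\mathbbm{1}\{\phi_{\one}(\vecz)\notin\{m_1,\twob\}\}
+\mathbbm{1}\{\phi_{\one}(\vecz)\notin\{m_1',\twob\}\}
+\mathbbm{1}\{\phi_{\two}(\vecz)\notin\{m_2,\oneb\}\}\ \geq\ 1 .
\]
This follows from a short case split on $\phi(\vecz)$: if $\phi(\vecz)=\oneb$ the first two indicators are $1$; if $\phi(\vecz)=\twob$ the third is $1$; and if $\phi(\vecz)$ is a message pair then, since no single message can equal both $m_1$ and $m_1'$, at least one of the first two indicators is $1$.

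Finally I would integrate the pointwise inequality against the common conditional law of $\vecZ_a,\vecZ_b,\vecZ_c$ given $(M_{\one},M_{\one}',M_{\two})=(m_1,m_1',m_2)$ with $m_1\neq m_1'$, obtaining that the sum of the three conditional error probabilities is $\geq 1$; averaging over the messages and discarding the nonnegative contribution from the event $\{M_{\one}=M_{\one}'\}$ yields
\[
2P_{e,\maltwo}+P_{e,\malone}\ \geq\ \Pr[M_{\one}\neq M_{\one}']\ =\ 1-\frac{1}{N_{\one}}\ \geq\ \frac12 .
\]
Hence $\max\{P_{e,\malone},P_{e,\maltwo}\}\geq \tfrac16$, so the code's probability of error is at least $\tfrac16$, which is stronger than the asserted $\tfrac1{12}$. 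I expect the main obstacle to be the bookkeeping forced by stochastic encoders: verifying that each experiment's attack can in fact be mounted by the corresponding adversarial user without knowing the honest user's message, that the honest message remains independent of the attack so the average-error definitions of $P_{e,\malone},P_{e,\maltwo}$ (via Remark~\ref{remark:det_rand_attacks}) apply, and fixing the correct slotting of $\vecX_A,\vecX_B$ into $Q_{X|\tilde{X}X'}$ in experiment (c); the coordinatewise lift of \eqref{eq:spoof1} to the $n$-letter channel is routine.
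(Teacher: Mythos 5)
Your proposal is correct and follows essentially the same route as the paper's proof of Lemma~\ref{thm:converse}: the same three coupled scenarios built from an independent copy of user \one's encoder, the same use of \eqref{eq:spoof1} applied coordinatewise to equate the three output laws, and the same covering argument (your pointwise indicator inequality is just the paper's observation that the three error sets cover $\cZ^n$ when $m_1\neq m_1'$), followed by averaging over uniform messages. Your constant $1/6$ is indeed slightly better than the stated $1/12$ (the paper's chain keeps a conservative factor of $2$ in the counting step), but this does not change the substance of the argument.
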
}
\begin{proof}
\olive{The proof uses ideas from the proof of \cite[Lemma 1, page 187]{CsiszarN88}. Suppose the channel is \one-spoofable such that $Q_{Y|\tilde{X} \tilde{Y}}$ and $Q_{X|\tilde{X}X'}$ are conditional distributions satisfying \eqref{eq:spoof1}. Let $(F_{\one},F_{\two}, \phi)$ be a given  $(N_{\one}, N_{\two},n) $ code  where $F_{\one}: [1:N_{\one}]\rightarrow\cX^n$ and $F_{\two}:[1:N_{\two}]\rightarrow \cY^n$ are (privately) randomized maps and $\phi:\cZ^n\rightarrow \cM_{\one}\times\cM_{\two}\cup\inb{\oneb, \twob}$ is a deterministic map\footnote{\blue{In the AVC literature, a code where encoders have private randomness and decoders are deterministic, is called a stochastic code \cite[Chapter 12]{CsiszarKorner}. Here, we show the converse for stochastic codes.}}. We can define the probability of error for encoders with private randomness in a similar fashion as defined for the deterministic code in \eqref{eq:na}-\eqref{eq:mal2}. Recall that $\cE_{\mo,\mt} = \inb{\vecz:\phi(\vecz)\neq(\mo, \mt)}$, 
$\cE^{\two}_{\mt} = \inb{\vecz:\phi_{\two}(\vecz)\notin\{\mt, \oneb\}}$ and $\cE^{\one}_{\mo} = \inb{\vecz:\phi_{\one}(\vecz)\notin\{\mo,\twob\}}$.
When both users are honest, we define
\begin{align}
&P^{\textup{pvt}}_{e,\na}\hspace{-0.25em} \defineqq\frac{1}{N_{\one}\cdot N_{\two}} \sum_{\substack{(\mo, \mt)\\ \in\mathcal{M}_{\one}\times\mathcal{M}_{\two}}}\sum_{\vecx,\vecy}\bbP\inp{F_{\one}(\mo)=\vecx}\bbP\inp{F_{\two}(\mt) = \vecy}W^n\inp{\cE_{\mo,\mt}|\vecx, \vecy}.\label{eq:napvt}
\end{align} The probability in the terms $\bbP\inp{F_{\one}(\mo)=\vecx}$ and $\bbP\inp{F_{\two}(\mt) = \vecy}$ is over the randomness of the encoders $F_{\one}$ and $F_{\two}$ respectively.}

\olive{When a user is malicious, we define the probabilities of error under randomized attacks for convenience. Along the lines of Remark~\ref{remark:det_rand_attacks} (which considered a deterministic code), the probability of error is the same under both randomized and deterministic attacks even for privately randomized encoders. When user \one is malicious, we define 
\begin{align}
&P^{\textup{pvt}}_{e,\malone} \defineqq \max_{P_{\vecX}} \sum_{\vecx}P_{\vecX}(\vecx)\inp{\frac{1}{N_{\two}}\sum_{m_{\two}\in \mathcal{M}_{\two}}\sum_{{\vecy}}\bbP\inp{F_{\two}(\mt) = {\vecy}}W^n\inp{\cE^{\two}_{\mt}|\vecx, {\vecy}}}.\label{eq:mal1pvt}
\end{align}  Here, the maximization is over all randomized attacks $P_{\vecX}$ distributed on $\cX^n$. The probability in the term $\bbP\inp{F_{\two}(\mt) = {\vecy}}$ is over the randomness of the encoder $F_{\two}$.
Similarly,  the average probability of error when user \two is adversarial is 
\begin{align}
&P^{\textup{pvt}}_{e,\maltwo} \defineqq \max_{P_{\vecY}}\sum_{\vecy}P_{\vecY}(\vecy)\left(\frac{1}{N_{\one}}\sum_{m_{\one}\in \mathcal{M}_{\one}}\sum_{\vecx}\bbP\inp{F_{\one}(\mo) = \vecx}W^n\inp{\cE^{\one}_{\mo}|\vecx, \vecy}\right).\label{eq:mal2pvt}
\end{align}
We define the \emph{average probability of error} as
\begin{align*}
P^{\textup{pvt}}_{e}(F_{\one}, F_{\two}, \phi)\defineqq \max{\inb{P^{\textup{pvt}}_{e,\na},P^{\textup{pvt}}_{e,\malone},P^{\textup{pvt}}_{e,\maltwo}}}.
\end{align*}  
For the rest of the proof, we will use the notation $\vecz = (z_1, z_2, \ldots, z_n)$, $\vecx' = (x'_1, x'_2, \ldots, x'_n)$, $\tilde{\vecx} = (\tilde{x}_1, \tilde{x}_2, \ldots, \tilde{x}_n)$ and $\tilde{\vecy} = (\tilde{y}_1, \tilde{y}_2, \ldots, \tilde{y}_n)$ for $n$-length vectors $\vecz$, $\vecx'$, $\tilde{\vecx}$ and $\tilde{\vecy}$.}

\olive{Consider the following scenarios for $i,j\in\cM_{\one}$, $k\in\cM_{\two}$ and independent encoders $F_{\one}, F'_{\one}$ and $F_{\two}$, where $F_{\one}$ and $ F'_{\one}$ are two independent copies  of user \one's encoder:
\begin{enumerate}
	\item[(i)] User \one~sends input to the channel according to $F_{\one}(i)$. User $B$ uses an independent copy $F'_{\one}$ of user \one's encoder. 
	The input of user~\two to the channel is produced by passing $(F'_{\one}\inp{j}, F_{\two}(k))$ through $Q^n_{Y|\tilde{X} \tilde{Y}}$. For $\vecz\in \cZ^n$, the output distribution of the channel (denoted by $P_{i,j,k}(\vecz)$) is given by 
	\begin{align}
	P_{i,j,k}(\vecz) \defineqq \sum_{\vecx', \tilde{\vecx}, \tilde{\vecy}}\bbP\inp{F_{\one}(i) = \vecx'}\bbP\inp{F'_{\one}\inp{j} = \tilde{\vecx}}\bbP\inp{ F_{\two}(k) = \tilde{\vecy}}\inp{\prod_{t=1}^n\sum_{y\in\cY}Q_{Y|\tilde{X} \tilde{Y}}(y|\tilde{x}_t,\tilde{y}_{t})W_{Z|XY}(z_t|x'_t,y)}.\label{eq:dist1}
	\end{align}
	\item[(ii)] User~\two sends input according to $F_{\two}(k)$. The input of user~\one to the channel is produced by passing $(F_{\one}(i), F'_{\one}\inp{j})$ through $Q^n_{X|\tilde{X}, X'}$.  For $\vecz\in \cZ^n$, the output distribution of the channel (denoted by $Q_{i,j,k}(\vecz)$) is given by 
\begin{align}
	Q_{i,j,k}(\vecz) \defineqq \sum_{\vecx', \tilde{\vecx}, \tilde{\vecy}}\bbP\inp{F_{\one}(i) = \vecx'} \bbP\inp{F'_{\one}\inp{j} = \tilde{\vecx}}\bbP\inp{ F_{\two}(k) = \tilde{\vecy}}\inp{\prod_{t=1}^n\sum_{x\in\cX}Q_{X|\tilde{X} X'}(x|x'_t,\tilde{x}_{t})W_{Z|XY}(z_t|x,\tilde{y}_{t})}.\label{eq:dist2}
	\end{align}
\end{enumerate}
By \eqref{eq:spoof1} (also see Fig.~\ref{fig:spoof1}), we see that for all $i,j\in \cM_{\one}, k\in \cM_{\two}$ and $\vecz\in \cZ^n$, 
\begin{equation} P_{i,j,k}(\vecz)=P_{j,i,k}(\vecz)=Q_{i,j,k}(\vecz). \label{eq:feas_conv_eq}\end{equation}}

\olive{In scenario~(i), suppose user~\one chooses $i\in\cM_{\one}$ uniformly at random (and is hence honest) and, independently, the adversarial user~\two chooses $(j,k)\in\cM_{\one}\times\cM_{\two}$ uniformly at random. Then, from \eqref{eq:mal2pvt} and \eqref{eq:dist1}, we see that
\begin{align*}
P^{\textup{pvt}}_{e,\maltwo}&\geq \frac{1}{N^2_{\one}\times N_{\two}}\sum_{i,j\in\cM_{\one}}\sum_{k\in \cM_{\two}}\sum_{\vecz:\phi_{\one}(\vecz)\notin\{i,\twob\}}P_{i,j,k}(\vecz).
\end{align*} 
Interchanging the roles of $i$ and $j$, we have  
\begin{align*}
P^{\textup{pvt}}_{e,\maltwo}&\geq \frac{1}{N^2_{\one}\times N_{\two}}\sum_{i,j\in\cM_{\one}}\sum_{k\in \cM_{\two}}\sum_{\vecz:\phi_{\one}(\vecz)\notin\{j,\twob\}}P_{j,i,k}(\vecz).
\end{align*} 
In scenario~(ii), suppose user~\two chooses $k\in\cM_{\two}$ uniformly at random (and hence is honest), while, independently, the adversarial user~\one chooses $(i,j)\in \cM_{\one}^2$ uniformly at random. Using \eqref{eq:mal1pvt} and \eqref{eq:dist2}, we obtain
\begin{align*}
P^{\textup{pvt}}_{e,\malone}&\geq \frac{1}{N^2_{\one}\times N_{\two}}\sum_{i,j\in\cM_{\one}}\sum_{k\in \cM_{\two}}\sum_{\vecz:\phi_{\two}(\vecz)\notin\{k,\oneb\}}Q_{i,j,k}(\vecz).
\end{align*}
\noindent Thus,
\begin{align*}
3&P^{\textup{pvt}}_{e}(F_{\one},F_{\two},\phi)\geq P^{\textup{pvt}}_{e,\maltwo}+P^{\textup{pvt}}_{e,\maltwo}+P^{\textup{pvt}}_{e,\malone}\\
&\geq \frac{1}{N^2_{\one}\times N_{\two}}\sum_{i,j\in\cM_{\one}}\sum_{k\in \cM_{\two}}\inp{\sum_{\vecz:\phi_{\one}(\vecz)\notin\{i,\twob\}}P_{i,j,k}(\vecz) + \sum_{\vecz:\phi_{\one}(\vecz)\notin\{j,\twob\}}P_{j,i,k}(\vecz) +\sum_{\vecz:\phi_{\two}(\vecz)\notin\{k,\oneb\}}Q_{i,j,k}(\vecz) }\\
&\stackrel{\text{(a)}}{=}\frac{1}{N^2_{\one}\times N_{\two}}\sum_{i,j\in\cM_{\one}}\sum_{k\in \cM_{\two}}\inp{\sum_{\vecz:\phi_{\one}(\vecz)\notin\{i,\twob\}}P_{i,j,k}(\vecz) + \sum_{\vecz:\phi_{\one}(\vecz)\notin\{j,\twob\}}P_{i,j,k}(\vecz) +\sum_{\vecz:\phi_{\two}(\vecz)\notin\{k,\oneb\}}P_{i,j,k}(\vecz) }\\
&\stackrel{\text{(b)}}{\geq}\frac{1}{N^2_{\one}\times N_{\two}}\sum_{i,j\in\cM_{\one}, i\neq j}\sum_{k\in \cM_{\two}}\inp{\sum_{\vecz\in\cZ^n}P_{i,j,k}(\vecz)}\\
&=\frac{N_{\one}(N_{\one}-1)N_{\two}}{2N^2_{\one}\times N_{\two}}\\
& = \frac{N_{\one}-1}{2N_{\one}}\\
&\geq \frac{1}{4}\qquad \text{ for $N_{\one}\geq 2$},
\end{align*}
where (a) follows by noting from \eqref{eq:feas_conv_eq} that $P_{i,j,k}(\vecz)=P_{j,i,k}(\vecz)=Q_{i,j,k}(\vecz)$ and (b) follows by noting that $\inb{\vecz:\phi_{\one}(\vecz)\notin\{i,\twob\}}\cup \inb{\vecz:\phi_{\one}(\vecz)\notin\{j,\twob\}}\cup\inb{\vecz:\phi_{\two}(\vecz)\notin\{k,\oneb\}} = \inb{\vecz:\phi_{\one}(\vecz)\neq\twob}\cup\inb{\vecz:\phi_{\two}(\vecz)\notin\{k,\oneb\}}\supseteq \inb{\vecz:\phi_{\one}(\vecz)\neq\twob}\cup\inb{\vecz:\phi_{\two}(\vecz)=\twob} = \inb{\vecz:\phi_{\one}(\vecz)\neq\twob}\cup\inb{\vecz:\phi_{\one}(\vecz)=\twob}= \cZ^n.$ Here, the second last equality follows by recalling from \eqref{eq:feas_conv_dec1} and \eqref{eq:feas_conv_dec2} that if $\phi_{\two}(\vecz) = \twob$, then $\phi_{\one}(\vecz) = \twob$. Thus, for any given code $(F_{\one}, F_{\two}, \phi)$, for an \one-spoofable channel, $P^{\textup{pvt}}_e(F_{\one},F_{\two}, \phi)\geq \frac{1}{12}$.
A similar analysis follows when the channel is \two-\spoofable.}
\end{proof}

\olive{Next, for the proof of achievability of Theorem~\ref{thm:main_result}, we first state a {\em codebook lemma} which will be used to show all our achievability results. This gives a randomly generated codebook which satisfies certain properties. The technical proof of the lemma, which is along the lines of that of \cite[Lemma 3]{CsiszarN88}, is in Appendix~\ref{sec:codebooklemma}. The lemma can be thought of as a generalization of \cite[Lemma 3]{CsiszarN88} for two users. In particular, \eqref{codebook:1} is similar to \cite[Lemma 3, (3.2)]{CsiszarN88}, \eqref{codebook:2b} and \eqref{codebook:4} are generalizations of \cite[Lemma 3, (3.3)]{CsiszarN88}  and \eqref{codebook:3b} and \eqref{codebook:5} are generalizations of \cite[Lemma 3, (3.1)]{CsiszarN88} for a pair of messages.
\begin{lemma}[Codebook lemma]
\label{lemma:codebook}
	Suppose $\mathcal{X,Y,Z}$ are finite. Let $P_{\one}\in \cP^n_{\cX}$ and $P_{\two}\in \cP^n_{\cY}$. For any $\epsilon>0$, there exists $n_0(\epsilon)$ such that for all $n\geq n_0(\epsilon),\, N_{\one}, N_{\two}\geq \exp(n\epsilon)$, there are codewords $\vecx_1, \vecx_2, \ldots, \vecx_{N_{\one}}$ of type $P_{\one}$ and $\vecy_1, \vecy_2, \ldots, \vecy_{N_{\two}}$ of type $P_{\two}$ such that for all $\vecx, \vecx'\in \cX^n$  and $\vecy, \vecy'\in \cY^n$, and joint types $P_{X\tilde{X}\tilde{Y}Y}\in \cP^n_{\cX\times \cX\times \cY\times \cY}$ and $P_{X'\tilde{Y}_1\tilde{Y}_2Y'}\in \cP^n_{\cX\times \cY\times \cY\times \cY}$, 
	 and for $R_{\one}\defineqq (1/n)\log N_{\one}$ and $R_{\two}\defineqq (1/n)\log N_{\two}$, 
	 the following holds:\footnote{Note that $\exp$ and $\log$ are with respect to base 2.}
\begin{align}
&\blue{\frac{1}{N_{\one}}}\left|\inb{m_{\one}: (\vecx_{\mo}, \vecy)\in T_{XY}^n}\right|\leq \exp\inb{-n\epsilon/2},\text{ if }I(X;Y)> \epsilon;\label{codebook:1}\\
&\blue{\frac{1}{N_{\one}}}{\left|\inb{\mo:(\vecx_{\mo}, \vecx_{\tilde{m}_{\one}}, \vecy_{\mt}, \vecy)\in T^n_{X\tilde{X}\tilde{Y}Y} \text{ for some }\tilde{m}_{\one}\neq \mo\text{ and some }\mt} \right|} \leq \exp\inb{-n\epsilon/2}, \nonumber\\
& \qquad \text{ if }I(X;\tilde{X}\tilde{Y}Y)-|R_{\one}- I(\tilde{X};\tilde{Y}Y)|^{+}-|R_{\two}-I(\tilde{Y};Y)|^{+} >\epsilon;\label{codebook:2b}\\
&\left|\inb{(\tilde{m}_{\one}, \tilde{m}_{\two}):(\vecx, \vecx_{\tilde{m}_{\one}}, \vecy_{\tilde{m}_{\two}}, \vecy)\in T^n_{X\tilde{X}\tilde{Y}Y}}\right|\nonumber\\
&\qquad\leq \exp\inb{n\inp{|R_{\one}- I(\tilde{X};\tilde{Y}XY)|^{+}+|R_{\two}-I(\tilde{Y};XY)|^{+}+\epsilon}};\label{codebook:3b}\\
&\blue{\frac{1}{N_{\one}}}{\left|\inb{\mo:(\vecx_{\mo}, \vecy_{\tilde{m}_{\two 1}}, \vecy_{\tilde{m}_{\two 2}}, \vecy')\in T^n_{X'\tilde{Y}_1\tilde{Y}_2Y'} \text{ for some }\tilde{m}_{\two 1},\tilde{m}_{\two 2}, } \right|} \leq \exp\inb{-n\epsilon/2}, \nonumber\\
& \qquad \text{ if }I(X';\tilde{Y}_1\tilde{Y}_2Y')-|R_{\two}-I(\tilde{Y}_1;Y')|^{+}-|R_{\two}-I(\tilde{Y}_2;\tilde{Y}_1 Y')|^{+} >\epsilon;\label{codebook:4}\\
&\text{and }\left|\inb{(\tilde{m}_{\two 1}, \tilde{m}_{\two 2}):(\vecx', \vecy_{\tilde{m}_{\two 1}}, \vecy_{\tilde{m}_{\two 2}}, \vecy')\in T^n_{X'\tilde{Y}_1\tilde{Y}_2Y'} }\right|\nonumber\\
&\qquad\leq \exp\inb{n\inp{|R_{\two}-I(\tilde{Y}_1;X'Y')|^{+}+|R_{\two}-I(\tilde{Y}_2;\tilde{Y}_1 X'Y')|^{+}+\epsilon}},\label{codebook:5}
\end{align}
and statements analogous to \eqref{codebook:1}-\eqref{codebook:5} with the roles of users \one and \two are interchanged.\footnote{See Appendix~\ref{sec:codebooklemma} for the full statement.}
\end{lemma}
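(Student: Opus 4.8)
The plan is to prove all the bounds simultaneously by the probabilistic method, following the template of \cite[Lemma~3]{CsiszarN88}. First I would fix $P_{\one}\in\cP^n_{\cX}$ and $P_{\two}\in\cP^n_{\cY}$ and draw the codewords $\vecX_1,\ldots,\vecX_{N_{\one}}$ i.i.d.\ uniformly from the type class $T^n_{X}=T^n_{P_{\one}}$ and, independently, $\vecY_1,\ldots,\vecY_{N_{\two}}$ i.i.d.\ uniformly from $T^n_{Y}=T^n_{P_{\two}}$. Each of the claimed inequalities \eqref{codebook:1}--\eqref{codebook:5} (and their role-swapped analogues) asserts that a certain random count is small; the strategy is to show that for each fixed choice of the relevant joint type, the count exceeds its target with probability doubly-exponentially small in $n$, so that a union bound over the polynomially-many joint types in $\cP^n_{\cX\times\cX\times\cY\times\cY}$ (there are at most $(n+1)^{|\cX|^2|\cY|^2}$ of them) still leaves the bad event with vanishing probability; hence a good codebook exists.

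The key steps, in order, are as follows. (1) \emph{Single-codeword typicality counts.} For \eqref{codebook:1} and the ``$\mo$ is bad'' counts in \eqref{codebook:2b}, \eqref{codebook:4}, I would condition on the other codewords (or on the event that triggers badness for a given $\mo$) and observe that, for each fixed $\mo$, the probability that $\vecX_{\mo}$ falls into a prescribed conditionally-typical set is at most $\exp\{-n(I(\cdot\,;\cdot)-\epsilon/4)\}$ by the standard type-counting estimate $|T^n_{X|YZ\cdots}(\cdot)|\le\exp\{nH(X|YZ\cdots)\}$ together with $|T^n_X|\ge\exp\{n(H(X)-\epsilon/4)\}$. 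To handle \eqref{codebook:2b} I first bound, for fixed $\mo$, the number of pairs $(\tilde m_{\one},\mt)$ with $\tilde m_{\one}\ne\mo$ such that $(\vecX_{\tilde m_{\one}},\vecY_{\mt},\vecY)\in$ the appropriate type class; this is a Chernoff/Markov argument giving a bound of $\exp\{n(|R_{\one}-I(\tilde X;\tilde YY)|^++|R_{\two}-I(\tilde Y;Y)|^++\epsilon/4)\}$ with overwhelming probability (this is exactly \eqref{codebook:3b} in the role-swapped form), and then the probability that $\vecX_{\mo}$ is jointly typical with \emph{any} of these is at most that number times $\exp\{-n(I(X;\tilde X\tilde YY)-\epsilon/4)\}$; the hypothesis on the mutual information exponent makes this $\le\exp\{-n\epsilon/2\}$ in expectation, and then a Bernstein-type concentration (the $\mo$-indicators being independent given the $\vecY$'s and the other $\vecX$'s) gives the bound for the normalized count $\frac1{N_{\one}}|\{\cdots\}|$ with doubly-exponential probability. (2) \emph{Pair-count bounds \eqref{codebook:3b}, \eqref{codebook:5}.} These are pure first/second-moment estimates: the expected number of pairs $(\tilde m_{\one},\tilde m_{\two})$ with $(\vecx,\vecX_{\tilde m_{\one}},\vecY_{\tilde m_{\two}},\vecy)$ in a given type class is $N_{\one}N_{\two}\exp\{-n(I(\tilde X;\tilde Yxy)+I(\tilde Y;xy)-o(1))\}$, and since $N_i=\exp(nR_i)$ this is $\exp\{n(R_{\one}-I(\tilde X;\tilde Yxy))\}\cdot\exp\{n(R_{\two}-I(\tilde Y;xy))\}$; a Markov bound pushing past the $|\cdot|^+$ thresholds, followed by concentration, yields the claim. (3) \emph{Union bound and cleanup.} Combine all the bad events across all relevant joint types and across the role-swapped statements; since there are only polynomially many types and each bad event has probability $\le\exp\{-\exp(n\epsilon')\}$ for some $\epsilon'>0$ once $n\ge n_0(\epsilon)$, their union has probability $<1$, so a codebook meeting every requirement exists.

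The main obstacle I expect is getting the concentration step right for the normalized counts in \eqref{codebook:2b}, \eqref{codebook:4}: one needs the indicator that a given message $\mo$ is ``bad'' to be (conditionally) independent across $\mo$, which requires carefully structuring the conditioning — conditioning on all the $\vecY_{\mt}$'s and on all $\vecX_{\tilde m_{\one}}$ with $\tilde m_{\one}\ne\mo$ — and then arguing that the conditional bad-probability is uniformly $\le\exp\{-n\epsilon/2\}$ using the type-class cardinality bounds and the high-probability bound from \eqref{codebook:3b}. There is a minor circularity to manage (the bound for the ``$\mo$ bad'' count uses the pair-count bound), which is resolved by first establishing \eqref{codebook:3b} and \eqref{codebook:5} on a high-probability event and then conditioning on that event when analyzing \eqref{codebook:2b} and \eqref{codebook:4}. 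The remaining details — choosing $\epsilon/4$ versus $\epsilon/2$ slack, absorbing polynomial type-counting factors into the exponent, and verifying $n_0(\epsilon)$ — are routine and mirror \cite[Lemma~3]{CsiszarN88} line by line.
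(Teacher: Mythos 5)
Your construction and overall plan coincide with the paper's: i.i.d.\ codewords drawn uniformly from the two type classes, per-instance doubly-exponential concentration in the style of \cite[Lemma~3]{CsiszarN88}, the pair-count bounds \eqref{codebook:3b} and \eqref{codebook:5} established first and then fed into \eqref{codebook:2b} and \eqref{codebook:4} (the paper handles the resulting circularity by truncating the candidate set rather than by conditioning on a good event, but that is the same idea). The genuine gap is exactly at the step you flag as the main obstacle, and your proposed fix does not work: the badness indicators of the messages $\mo$ in \eqref{codebook:2b} are \emph{not} ``conditionally independent across $\mo$'' under any single conditioning, because each indicator is a function of $\vecX_{\mo}$ \emph{and} of all the other $\vecX_{\tilde m_{\one}}$'s; conditioning on ``all $\vecX_{\tilde m_{\one}}$ with $\tilde m_{\one}\neq \mo$'' is a different conditioning for each $\mo$, and conditioning on everything at once leaves no randomness. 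So a Bernstein bound for independent summands is not available, and this is the heart of the lemma, since the failure probability must be doubly exponential to survive the union bound. The paper's resolution is the martingale-type inequality \cite[Lemma~A1]{CsiszarN88} (Lemma~\ref{lemma:A1}): codewords are processed in a fixed order, the indicator for $\vecX_i$ is triggered only by pairs $(\vecx_j,\vecy_k)$ with $j<i$, the triggering set $A_i$ is truncated at the threshold coming from the pair-count bound (the truncation event being controlled separately via the \eqref{codebook:3b}-type estimate), only the conditional expectation given the past is bounded, and the omitted case $j>i$ is recovered by symmetry at the cost of a factor $2$. Some argument of this kind that genuinely handles the dependence is required; ``carefully structuring the conditioning'' does not by itself yield independence.

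Two further points you should repair. First, your union bound is taken only over the polynomially many joint types, but the lemma is quantified over all $\vecx,\vecx'\in\cX^n$ and $\vecy,\vecy'\in\cY^n$ as well, i.e.\ over exponentially many objects; this is harmless precisely because the per-instance failure probabilities are doubly exponential, but it must be stated, since an exponential union against a merely exponential bound would fail. Second, in \eqref{codebook:4} and \eqref{codebook:5} the indices $\tilde m_{\two 1},\tilde m_{\two 2}$ are allowed to coincide, and the diagonal case $\tilde m_{\two 1}=\tilde m_{\two 2}$ needs a separate (easy) treatment, since the ``pair'' is then a single codeword and the two-stage counting argument does not apply to it; the paper splits this case off explicitly.
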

\noindent With this, we are ready to prove the achievability of Theorem~\ref{thm:main_result}.
\begin{lemma}\label{thm:detCodesPositivity}
The rate region for deterministic codes is non-empty if the channel is non-\spoofable.
\end{lemma}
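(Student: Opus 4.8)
The plan is to instantiate the code sketched right after Theorem~\ref{thm:main_result} and show that it carries positive rates with vanishing error. Fix input types $P_{\one}\in\cP^n_{\cX}$ and $P_{\two}\in\cP^n_{\cY}$ (say, converging to distributions of full support), choose a small $\epsilon>0$, and set $N_{\one}=N_{\two}=\lceil\exp(n\epsilon)\rceil$, so $R_{\one}=R_{\two}=\epsilon>0$. Invoke the codebook lemma (Lemma~\ref{lemma:codebook}) with parameter $\epsilon$ to get codewords $\vecx_1,\dots,\vecx_{N_{\one}}$ of type $P_{\one}$ and $\vecy_1,\dots,\vecy_{N_{\two}}$ of type $P_{\two}$ satisfying \eqref{codebook:1}--\eqref{codebook:5} and their role-reversed analogues. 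Use the list decoder $\phi$ from the discussion after Theorem~\ref{thm:main_result}, with a threshold $\eta>0$ to be fixed later: $\mo\in D_{\one}(\eta,\vecz)$ iff $\mo$ is a \emph{candidate} (i.e. $(\vecx_{\mo},\vecy,\vecz)\in T^n_{XYZ}$ with $D(P_{XYZ}\|P_{\one}P_YW)\le\eta$ for some $\vecy$) and \emph{viable} (every confusable pair $(\tilde m_{\one},\tilde m_{\two})$, $\tilde m_{\one}\ne\mo$, jointly typical with $(\vecx_{\mo},\vecy,\vecz)$ has $I(\tilde X\tilde Y;XZ\mid Y)<\eta$, and every pair $(\tilde m_{\two 1},\tilde m_{\two 2})$ so typical has $I(\tilde Y_1\tilde Y_2;XZ\mid Y)<\eta$); $D_{\two}(\eta,\vecz)$ is defined symmetrically.

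The heart of the proof is the \emph{disambiguity lemma} (Lemma~\ref{lemma:disambiguity}): if the \bmac is non-\spoofable, there exist $\eta_0>0$ and $n_0$ such that for all $\eta\le\eta_0$, $n\ge n_0$ and the codebook above, whenever $D_{\one}(\eta,\vecz)$ and $D_{\two}(\eta,\vecz)$ are both non-empty they are both singletons. I would prove this by contradiction, in the style of \cite[Lemma~4]{CsiszarN88} and \cite[Lemma~1]{AhlswedeC99}: if it fails, then along a sequence with $\eta_k\to0$, $n_k\to\infty$ there are codebooks and outputs $\vecz$ with (say) $\mo\ne\mo'$ both in $D_{\one}$ and some $\mt$ in $D_{\two}$; unwinding the candidate and viability conditions produces $n_k$-sequences realising joint types $P_{X\tilde X\tilde YY}$ (and $P_{X'\tilde Y_1\tilde Y_2Y'}$) whose relevant divergences are $\le\eta_k$ and whose relevant conditional mutual informations are $<\eta_k$. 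Since the alphabets are finite, pass to a subsequence along which all these (finitely many) types converge; in the limit the Markov conditions become exact, and the resulting conditional distributions are seen to satisfy the chain of equalities in \eqref{eq:spoof1} (if the extra multiplicity is on user \one's side) or \eqref{eq:spoof2} (if on user \two's side), contradicting non-\spoofability. Matching the limiting Markov conditions precisely to the spoofing equations is the step I expect to be the main obstacle.

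Next I would show that an honest user's message lands on its list except on an exponentially small event. If both users are honest with $(\mo,\mt)$, then $\vecZ\sim W^n(\cdot\mid\vecx_{\mo},\vecy_{\mt})$, so by concentration for memoryless channels $(\vecx_{\mo},\vecy_{\mt},\vecZ)$ has joint type within $\eta$ of $P_{\one}\times P_{\two}\times W$ with probability $1-\exp(-nc)$; taking $\vecy=\vecy_{\mt}$, this makes $\mo$ and $\mt$ candidates. Viability of $\mo$ fails only if some pair $(\tilde m_{\one},\tilde m_{\two})$, $\tilde m_{\one}\ne\mo$, has $(\vecx_{\mo},\vecy_{\mt},\vecx_{\tilde m_{\one}},\vecy_{\tilde m_{\two}},\vecZ)$ of a joint type with $I(\tilde X\tilde Y;XZ\mid Y)\ge\eta$ (similarly for a user-\two pair); bounding the number of such codeword tuples via \eqref{codebook:2b}--\eqref{codebook:5} and then the channel probability of generating such a $\vecZ$ gives probability $\le\exp(-nc')$, provided $\eta$ is taken small relative to $\epsilon$. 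Hence $\Pr[\mo\notin D_{\one}(\eta,\vecZ)]+\Pr[\mt\notin D_{\two}(\eta,\vecZ)]\to0$. The same estimates — now pairing the honest user's codeword with an \emph{arbitrary} adversarial input vector, and using \eqref{codebook:1} (and its analogue) to know $\vecx$ is essentially independent of $\vecy_{\mt}$ for all but an exponentially small fraction of $\mt$ — show that when user \one is adversarial (sending any $\vecx$) and user \two sends $\vecy_{\mt}$, w.h.p. $\mt\in D_{\two}(\eta,\vecZ)$ (taking the adversary's own $\vecx$ as the attack vector), and symmetrically when user \two is adversarial.

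Finally I would combine these with $\eta\le\eta_0$ fixed small enough for the above counting bounds. If user \one is adversarial: w.h.p. $D_{\two}(\eta,\vecZ)\ni\mt$ is non-empty, so either $D_{\one}(\eta,\vecZ)$ is empty and the decoder outputs $\oneb$ (correctly naming the adversary), or $D_{\one}(\eta,\vecZ)$ is non-empty and the disambiguity lemma forces $D_{\two}(\eta,\vecZ)=\{\mt\}$, whence $\phi(\vecZ)$ is a pair with user-\two component $\mt$; in both cases $\phi_{\two}(\vecZ)\in\{\mt,\oneb\}$, so $P_{e,\malone}\to0$. Symmetrically $P_{e,\maltwo}\to0$, and then $P_{e,\na}\le P_{e,\malone}+P_{e,\maltwo}\to0$ by \eqref{honest_error_ub} (equivalently: when both are honest, w.h.p. $\mo\in D_{\one}$ and $\mt\in D_{\two}$, so both lists are non-empty, disambiguity gives $D_{\one}=\{\mo\}$, $D_{\two}=\{\mt\}$, and $\phi(\vecZ)=(\mo,\mt)$). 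In particular the branch $|D_{\one}(\eta,\vecZ)|=|D_{\two}(\eta,\vecZ)|=0$, and hence the spurious output $(1,1)$, has vanishing probability whenever the true input pair is not $(1,1)$. Thus $P_e(f_{\one},f_{\two},\phi)\to0$ with $R_{\one}=R_{\two}=\epsilon>0$, so the deterministic rate region is non-empty.
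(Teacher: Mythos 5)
Your plan is structurally the same as the paper's proof: the same codebook lemma, the same two-stage (candidate + viability) list decoder, the disambiguity lemma as the crux, and the same decomposition of $P_{e,\maltwo}$ into an atypical-input event, an atypical-output event, and a ``bad joint type'' event bounded through \eqref{codebook:1}--\eqref{codebook:5}. However, there are two concrete problems. First, your parameter ordering is backwards. You fix the rate $R_{\one}=R_{\two}=\epsilon$ and then say the counting step works ``provided $\eta$ is taken small relative to $\epsilon$.'' It is the opposite: the union-over-types bound for the viability-failure event is of the form $\exp\{n(|R_{\one}-I(\tilde X;\tilde Y XY)|^{+}+|R_{\two}-I(\tilde Y;XY)|^{+}+|R_{\one}-I(\tilde X;\tilde Y Y)|^{+}+|R_{\two}-I(\tilde Y;Y)|^{+}-\eta+3\epsilon\}\le\exp\{n(4\delta-\eta+3\epsilon)\}$ (cf.\ \eqref{eq:upperbound123}), and the atypical-output term decays like $\exp\{-n(\eta-\epsilon)\}$; both require $\eta>3\epsilon+4\delta$ as in \eqref{eq:eta_cond}. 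So $\eta$ must \emph{dominate} $\epsilon$ and the rates, while simultaneously being below the threshold $\eta_0$ demanded by the disambiguity lemma; the correct quantifier order is: get $\eta_0$ from Lemma~\ref{lemma:disambiguity} (which needs $\min_x P_{\one}(x),\min_y P_{\two}(y)\ge\alpha>0$), pick $\eta\le\eta_0$, and only then pick $\epsilon$ and the rates so small that $3\epsilon+4\delta<\eta$ (e.g.\ $\epsilon=\delta=\eta/8$). As written, with $\eta$ small relative to $\epsilon$, the key exponent is positive and the bound fails.

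Second, the step you yourself flag as the ``main obstacle'' in the disambiguity lemma is exactly where the real work is, and your compactness sketch does not yet close it. Passing to a limit of types satisfying conditions \ref{disamb:1}--\ref{disamb:6} gives you exact factorizations with \emph{two different} kernels: $P_{Z|X\tilde X\tilde Y}$ is simultaneously $\sum_y P_{Y|\tilde X\tilde Y}(y|\tilde x,\tilde y)W(z|x,y)$, $\sum_{y'}P_{Y'|X\tilde Y}(y'|x,\tilde y)W(z|\tilde x,y')$, and $\sum_{x'}P_{X'|X\tilde X}(x'|x,\tilde x)W(z|x',\tilde y)$, whereas Definition~\ref{defn:spoof} requires a \emph{single} kernel $Q_{Y|\tilde X\tilde Y}$ appearing with its arguments exchanged in the first two expressions. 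Bridging this gap needs the symmetrization device the paper uses (take $Q_{Y|\tilde X\tilde Y}=\tfrac12(P_{Y|\tilde X\tilde Y}+P_{Y'|X\tilde Y})$ and similarly average the $X$-kernel over the argument swap, then use the exchanged-argument identity), together with the uniform lower bound $\alpha$ on the marginals so the identities hold for \emph{all} $(x',\tilde x,\tilde y,z)$ as \eqref{eq:spoof1} demands, not just on a support; the paper carries this out quantitatively via Pinsker's inequality rather than a limiting argument. Without that step your proposed proof of the disambiguity lemma, and hence of the whole achievability claim, is incomplete.
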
}
\begin{proof}
\blue{Fix some $P_{\one}$ and $P_{\two}$ satisfying $\min_{x\in \cX}P_{\one}(x)\geq\alpha$ and $\min_{y\in \cY}P_{\two}(y)\geq\alpha$ respectively for some $\alpha>0$ and $\eta>0$ which is a function of $\alpha$ and the channel \footnote{We need  $\alpha>0$ to show Lemma~\ref{lemma:disambiguity}. This is similar to the requirement $\beta>0$ in \cite[Lemma~4]{CsiszarN88}.}. We also fix  $\epsilon>0$   $n>n_0(\epsilon)$, $\delta>\epsilon$  and $R_{\one}=R_{\two}=\delta$ such that $\eta> 3\epsilon + 4\delta$ holds. In particular, we may choose $\epsilon = \delta = \eta/8$. }\olive{The codebook is given by Lemma~\ref{lemma:codebook}. }

\olive{
\noindent {\em Encoding.} 
	Let $N_{\one} = 2^{nR_{\one}}$, $N_{\two} = 2^{nR_{\two}}$,  $\cM_{\one} = \inb{1,\ldots, N_{\one}}$ and $\cM_{\two} = \inb{1, \ldots, N_{\two}}$. For $\mo \in \cM_{\one}$, $f_{\one}(\mo) = \vecx_{\mo}$ and for $\mt \in \cM_{\two}$, $f_{\two}(\mt) = \vecy_{\mt}$.\\
{\em Decoding.} 
Let $\cD_{\eta}$ be the set of joint distributions defined as
\begin{align}
\blue{\cD_\eta} \defineqq  \big\{P_{XYZ}\in \cP^n_{\cX\times \cY \times \cZ}: \, D\inp{P_{XYZ}||P_XP_YW}\leq\eta \big\}.\label{eq:D_eta}
\end{align}
\begin{defn}[$D_{\one}(\eta, \vecz)$]\label{D_eta_z}
For the given codebook, the parameter $\eta>0$ and the received channel output sequence $\vecz$, let $D_{\one}(\eta, \vecz)$ be  the set of messages $m_{\one}\in\mathcal{M}_{\one}$ such that  there exists $\vecy\in \cY^n$ satisfying the following conditions:\footnote{In {\em\ref{check:1}}, when we write $\inp{f_{\one}(m_{\one}), \vecy, \vecz} \in T^{n}_{XYZ}$, we implicitly define the joint distribution $P_{XYZ}$ associated with the type $T^n_{XYZ}$ to be the joint empirical type of $\inp{f_{\one}(m_{\one}), \vecy, \vecz}$. This convention is followed throughout the paper.}
\begin{enumerate}[label=(\roman*)]
\item $\inp{f_{\one}(m_{\one}), \vecy, \vecz} \in T^{n}_{XYZ}$ such that $P_{XYZ}\in \cD_{\eta}.$ \label{check:1}
\item {For every} $(\tilde{m}_{\one},\, \tilde{m}_{\two})\in \cM_{\one}\times \cM_{\two}, \, \tilde{m}_{\one}\neq m_{\one}$ and  $(\vecy', \, \vecx')\in \cY^n\times \cX^n$ such that $\inp{f_{\one}(\mo), \vecy, f_{\one}(\tilde{m}_{\one}), \vecy',\vecx', f_{\two}(\tilde{m}_{\two}), \vecz}\in T^n_{XY\tilde{X}Y'X'\tilde{Y}Z}$ with $P_{\tilde{X}Y'Z}\in \cD_{\eta}$ and  $P_{X'\tilde{Y}Z}\in \cD_{\eta}$, {we have} $I(\tilde{X}\tilde{Y};XZ|Y)<\eta$ for  $P_{XY\tilde{X}\tilde{Y}Z}$.\label{check:2}
\item {For every} $\tilde{m}_{\two 1},\, \tilde{m}_{\two 2}\in \cM_{\two}$ where $\tilde{m}_{\two 1}\neq\tilde{m}_{\two 2}$ and $\vecx'_1, \, \vecx'_2\in \cX^n$ such that $\inp{f_{\one}(\mo), \vecy,\vecx'_1, f_{\two}(\tilde{m}_{\two 1}),\vecx'_2, f_{\two}(\tilde{m}_{\two 2}),\vecz}\in T^n_{XYX'_1\tilde{Y}_1X'_2\tilde{Y}_2Z}$ with $P_{X'_1\tilde{Y}_1Z}\in \cD_{\eta}$ and  $P_{X'_2\tilde{Y}_2Z}\in \cD_{\eta}$, {we have} $I(\tilde{Y}_1\tilde{Y}_2;XZ|Y)< \eta$ for $P_{XY\tilde{Y}_1\tilde{Y}_2Z}$.\label{check:3}
\end{enumerate}
\end{defn}
We define $D_{\two}(\eta, \vecz)$ analogously (by interchanging the roles of user \one and \two). The output of the decoder  is as follows: 
\begin{align*}
\phi(\vecz) \defineqq\begin{cases}(\mo,\mt), &\text{ if }D_{\one}(\eta, \vecz)\times D_{\two}(\eta, \vecz) = \{(\mo,\mt)\}\\ \oneb, &\text{ if }|D_{\one}(\eta, \vecz)| = 0, \, |D_{\two}(\eta, \vecz)| \neq 0\\\twob, &\text{ if }|D_{\two}(\eta, \vecz)| = 0, \, |D_{\one}(\eta, \vecz)| \neq 0\\(1,1)&\text{ otherwise.}
\end{cases}
\end{align*}
The last of the above cases (``otherwise'') occurs when either of the following two events occur: {\em (i)} $|D_{\two}(\eta, \vecz)| =|D_{\one}(\eta, \vecz)| = 0$, {\em (ii)} $|D_{\two}(\eta, \vecz)| \geq 1, \, |D_{\one}(\eta, \vecz)| \geq  1$ and $|D_{\two}(\eta, \vecz)|+|D_{\one}(\eta, \vecz)| \geq 3$, that is, both $D_{\two}(\eta, \vecz)$ and  $D_{\one}(\eta, \vecz)$ are non-empty and at least one of these two sets has two or more elements. As we will see, the first event is an atypical event and hence will occur with a vanishing probability.
The following lemma (proved in Appendix~\ref{app:disambiguity}) implies that the second event cannot occur  for non-spoofable channels if $\eta>0$ is chosen sufficiently small. 
\begin{lemma}\label{lemma:disambiguity}
{Suppose $\alpha>0$. For a channel which is not \one-spoofable, for sufficiently small $\eta>0$,} there does not exist a distribution $P_{XY\tilde{X}Y'X'\tilde{Y}Z}\in \cP^n_{XY\tilde{X}Y'X'\tilde{Y}Z}$ with $\min_{x}P_X(x), \min_{\tilde{x}}P_{\tilde{X}}(\tilde{x}), \min_{\tilde{y}}P_{\tilde{Y}}(\tilde{y})\geq \alpha$  which  satisfies the following:
\begin{enumerate}[label=(\Alph*)]
	\item $P_{XYZ}\in \blue{\cD_{\eta}}$,\label{disamb:1}
	\item $P_{\tilde{X}Y'Z}\in \blue{\cD_{\eta}}$,\label{disamb:2}
	\item $P_{X'\tilde{Y}Z}\in \blue{\cD_{\eta}}$,\label{disamb:3}
	\item $I(\tilde{X}\tilde{Y};XZ|Y)<\eta$,\label{disamb:4}
	\item $I(X\tilde{Y};\tilde{X}Z|Y')<\eta$ and\label{disamb:5}
	\item $I(X\tilde{X};\tilde{Y}Z|X')<\eta$.\label{disamb:6}
\end{enumerate}
\blue{The analogous} condition holds for a channel which is not \two-\spoofable.
\end{lemma}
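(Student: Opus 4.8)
The plan is a contradiction-plus-compactness argument in the spirit of \cite[Lemma~4]{CsiszarN88} and \cite[Lemma~1]{AhlswedeC99}: from a hypothetical ``stuck'' distribution with $\eta\to 0$ one extracts, in the limit, enough structure on $P^*$ to write down a spoofing attack. It suffices to prove the following (the lemma then follows since types are distributions): if $W$ is not $\one$-\spoofable there is $\eta_0>0$ such that \emph{no} distribution $P_{XY\tilde{X}Y'X'\tilde{Y}Z}$ with $\min_x P_X(x),\min_{\tilde{x}}P_{\tilde{X}}(\tilde{x}),\min_{\tilde{y}}P_{\tilde{Y}}(\tilde{y})\ge\alpha$ satisfies \ref{disamb:1}--\ref{disamb:6} with $\eta\le\eta_0$. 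Suppose not; then for each $k$ there is such a distribution $P^{(k)}$ with $\eta=1/k$. These lie in the compact probability simplex over the finite product alphabet, so pass to a subsequence $P^{(k)}\to P^*$. Relative entropy is lower semicontinuous and conditional mutual information is continuous (it is a fixed linear combination of entropies of marginals of the joint distribution, each continuous on the simplex), so in the limit \ref{disamb:1}--\ref{disamb:3} become the exact identities $P^*_{XYZ}=P^*_XP^*_YW$, $P^*_{\tilde{X}Y'Z}=P^*_{\tilde{X}}P^*_{Y'}W$, $P^*_{X'\tilde{Y}Z}=P^*_{X'}P^*_{\tilde{Y}}W$, while \ref{disamb:4}--\ref{disamb:6} become the Markov chains $\tilde{X}\tilde{Y}-Y-XZ$, $X\tilde{Y}-Y'-\tilde{X}Z$, $X\tilde{X}-X'-\tilde{Y}Z$ under $P^*$; moreover $\min_x P^*_X(x),\min_{\tilde{x}}P^*_{\tilde{X}}(\tilde{x}),\min_{\tilde{y}}P^*_{\tilde{Y}}(\tilde{y})\ge\alpha$.

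Next I would read off the structure of $P^*$. From $P^*_{XYZ}=P^*_XP^*_YW$ (hence $X\indep Y$) and the chain $\tilde{X}\tilde{Y}-Y-XZ$ one gets $X\indep(\tilde{X},\tilde{Y})$; symmetrically, from $P^*_{\tilde{X}Y'Z}=P^*_{\tilde{X}}P^*_{Y'}W$ and $X\tilde{Y}-Y'-\tilde{X}Z$ one gets $\tilde{X}\indep(X,\tilde{Y})$. Together these force $X,\tilde{X},\tilde{Y}$ to be mutually independent under $P^*$, so each of $P^*_{X\tilde{X}}$, $P^*_{X\tilde{Y}}$, $P^*_{\tilde{X}\tilde{Y}}$ has full support (its marginals are $\ge\alpha$). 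Set $h(z\mid a,b,c)\defineqq P^*(Z=z\mid X=a,\tilde{X}=b,\tilde{Y}=c)$, which is then defined for all $(a,b,c)$. Conditioning on $Y$, on $Y'$, and on $X'$, and using in each case the corresponding product identity and Markov chain, a short computation gives, for all $a,b,c,z$,
\begin{align*}
h(z\mid a,b,c) &= \sum_y P^*(Y=y\mid \tilde{X}=b,\tilde{Y}=c)\,W(z\mid a,y)\\
&= \sum_{y'} P^*(Y'=y'\mid X=a,\tilde{Y}=c)\,W(z\mid b,y')\\
&= \sum_{x'} P^*(X'=x'\mid X=a,\tilde{X}=b)\,W(z\mid x',c).
\end{align*}

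With these three representations of the single object $h$ in hand, I would define a \emph{symmetrized} spoofing attack: for $p\in\cX$, $q\in\cY$,
\[Q_{Y|\tilde{X}\tilde{Y}}(y\mid p,q)\defineqq\tfrac12\Big(P^*(Y=y\mid\tilde{X}=p,\tilde{Y}=q)+P^*(Y'=y\mid X=p,\tilde{Y}=q)\Big),\]
and for $p,r\in\cX$,
\[Q_{X|\tilde{X}X'}(x\mid p,r)\defineqq\tfrac12\Big(P^*(X'=x\mid X=r,\tilde{X}=p)+P^*(X'=x\mid X=p,\tilde{X}=r)\Big),\]
which are genuine conditional distributions precisely because the relevant pairwise marginals of $P^*$ have full support. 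Substituting into the three quantities in \eqref{eq:spoof1} and expanding each half with the appropriate representation of $h$, one finds that all three quantities equal $\tfrac12\big(h(z\mid x',\tilde{x},\tilde{y})+h(z\mid\tilde{x},x',\tilde{y})\big)$ for every $x',\tilde{x},\tilde{y},z$: the first uses the first and second representations of $h$, the second uses the same two but with the two $\cX$-arguments exchanged inside the summations, and the third uses the third representation twice. Hence \eqref{eq:spoof1} holds, i.e.\ $W$ is $\one$-\spoofable --- contradicting the hypothesis. The non-$\two$-\spoofable case is identical after interchanging users $\one$ and $\two$ (equivalently $\cX$ and $\cY$, and the pairs $(Y,Y')\leftrightarrow(X,X')$).

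The crux, and essentially the only delicate point, is the symmetrization in the last step. The naive choice $Q_{Y|\tilde{X}\tilde{Y}}=P^*_{Y|\tilde{X}\tilde{Y}}$, $Q_{X|\tilde{X}X'}=P^*_{X'|X\tilde{X}}$ already makes the first and third quantities of \eqref{eq:spoof1} agree, but leaves the first and second quantities equal to $h$ with its first two arguments transposed, and $h$ need not be symmetric in those arguments; averaging the ``two directions'' $P^*_{Y|\tilde{X}\tilde{Y}}$ and $P^*_{Y'|X\tilde{Y}}$ (and correspondingly symmetrizing $Q_{X|\tilde{X}X'}$) is exactly what repairs this. A secondary, routine point is bookkeeping: every conditional distribution used must be well-defined, which is where the mutual independence of $X,\tilde{X},\tilde{Y}$ --- and hence the hypotheses $\min_x P_X(x),\min_{\tilde{x}}P_{\tilde{X}}(\tilde{x}),\min_{\tilde{y}}P_{\tilde{Y}}(\tilde{y})\ge\alpha$ --- are used.
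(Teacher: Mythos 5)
Your proposal is correct, and its punchline is in fact the same as the paper's: the paper's proof of Lemma~\ref{lemma:disambiguity} also pairs (A)+(D), (B)+(E), (C)+(F) to compare $P_{X\tilde{X}\tilde{Y}Z}$ with the three attack-induced channels $V^{(1)},V^{(2)},V^{(3)}$, and then uses exactly your symmetrization, $Q_{Y|\tilde{X}\tilde{Y}}=\tfrac12\inp{P_{Y|\tilde{X}\tilde{Y}}+P_{Y'|X\tilde{Y}}}$ and $Q_{X'|X\tilde{X}}(x'|x,\tilde{x})=\tfrac12\inp{P_{X'|X\tilde{X}}(x'|\tilde{x},x)+P_{X'|X\tilde{X}}(x'|x,\tilde{x})}$, to contradict \eqref{eq:spoof1}; your diagnosis of why the naive (unsymmetrized) choice fails is precisely the paper's motivation. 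Where you differ is the analytic packaging: the paper stays at finite $\eta$, converting (A)--(F) via the log-sum inequality and Pinsker into total-variation bounds and using the $\alpha^3$ lower bound on $P_XP_{\tilde{X}}P_{\tilde{Y}}$ to turn them into a sup-norm defect of at most $8c\sqrt{\eta}/\alpha^3$ in \eqref{eq:spoof1}, whereas you pass to a limit $\eta=1/k\to 0$ by compactness, use lower semicontinuity of divergence and continuity of conditional mutual information to get the exact product laws and Markov chains, establish mutual independence of $X,\tilde{X},\tilde{Y}$ (your derivation of this, and of the three representations of $h$, is sound and is where the $\alpha$-bounds enter), and exhibit an exact spoofing pair. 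Your route buys a cleaner endgame: the paper's final contradiction invokes a gap $\zeta\inp{Q_{Y|\tilde{X}\tilde{Y}},Q_{X'|X\tilde{X}}}>0$ for attacks that themselves depend on the hypothetical distribution (hence on $\eta$), which strictly speaking needs an infimum over the compact set of attack pairs; your compactness argument absorbs this automatically. What it gives up is an explicit threshold $\eta_0(\alpha,W)$ and any quantitative rate, but since the lemma is only used through the existence of some sufficiently small $\eta>0$ (uniform in $n$, which your argument provides because it is stated for arbitrary distributions, not just types), this loss is immaterial. The reduction from types to general distributions and the monotonicity in $\eta$ that lets you take $\eta=1/k$ are both fine, as is deferring the non-\two-\spoofable case to the user-swapped argument, which is also how the paper handles it.
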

For a non-spoofable channel, given $\alpha>0$ and small enough $\eta>0$, Lemma~\ref{lemma:disambiguity} implies that if $|D_{\one}(\eta, \vecz)|, \, |D_{\two}(\eta, \vecz)| \geq 1$, then $|D_{\two}(\eta, \vecz)|=|D_{\one}(\eta, \vecz)| = 1$. To see this, suppose $|D_{\one}(\eta, \vecz)|\geq 2$ and $|D_{\two}(\eta, \vecz)| \geq 1$. Let $m_{\one}, \tilde{m}_{\one}\in D_{\one}(\eta, \vecz)$ and $\tilde{m}_{\two}\in D_{\two}(\eta, \vecz)$. Then, Definition~\ref{D_eta_z} implies that there exist $\vecx$, $\vecy$ and $\vecy'$ such that for $\inp{f_{\one}(\mo), \vecy, f_{\one}(\tilde{m}_{\one}), \vecy',\vecx', f_{\two}(\tilde{m}_{\two}), \vecz}\in T^n_{XY\tilde{X}Y'X'\tilde{Y}Z}$, $P_{XYZ}\in \cD_{\eta}$, $P_{\tilde{X}Y'Z}\in \cD_{\eta}$, $P_{X'\tilde{Y}Z}\in \cD_{\eta}$, $I(\tilde{X}\tilde{Y};XZ|Y)<\eta$, $I(X\tilde{Y};\tilde{X}Z|Y')<\eta$ and $I(X\tilde{X};\tilde{Y}Z|X')<\eta$.  However, such joint distribution $P_{XY\tilde{X}Y'X'\tilde{Y}Z}$ cannot exist as per Lemma~\ref{lemma:disambiguity}. Analogously, for a channel which is not \two-\spoofable, $|D_{\one}(\eta, \vecz)|\geq 1$ and $|D_{\two}(\eta, \vecz)| \geq 2$ is not possible. \blue{As mentioned earlier,} we choose $\eta$, $\epsilon$ and $\delta$$(>\epsilon)$ small enough so that Lemma~\ref{lemma:disambiguity} holds and 
\begin{align}
\eta> 3\epsilon + 4\delta. \label{eq:eta_cond}
\end{align} 
To analyze the probability of error, we first recall from \eqref{honest_error_ub} that $P_{e,\na}\leq P_{e,\malone} + P_{e,\maltwo}$. So, we only need to analyse the case when one of the users is malicious. }

\blue{We will use the `method of types' for the analysis \cite{720546,CsiszarKorner}. Before we begin, we restate some basic properties from \cite[Chapter~2]{CsiszarKorner} that we need for the proof.}
\olive{Let $X$ and $Y$ be two jointly distributed random variables such that $P_{XY}\in \cP^n_{\cX\times\cY}$. For $\vecx\in T^n_X$, a  distribution $Q$ on $\cX$ and a discrete memoryless channel $U_{Y|X}$ from $\cX$ to $\cY$, 
\begin{align}
|\cP^n_{\cX}|&\leq (n+1)^{|\cX|},\label{eq:type_property1}\\
(n+1)^{-|\cX|}\exp\inp{nH(X)}&\leq |T^n_{X}|\leq \exp\inp{nH(X)},\label{eq:type_property2}\\
(n+1)^{-|\cX||\cY|}\exp\inp{nH(Y|X)}&\leq |T^n_{Y|X}(\vecx)|\leq \exp\inp{nH(Y|X)},\label{eq:type_property3}\\
(n+1)^{-|\cX|}\exp\inb{-nD(P_X||Q)}&\leq \sum_{\vecx'\in T^n_{X}}Q^n(\vecx')\leq \exp\inb{-nD(P_X||Q)}\text{ and}\label{eq:type_property4}\\
\sum_{\vecy\in T^n_{Y|X}(\vecx)}U_{Y|X}^n(\vecy|\vecx)&\leq \exp\inb{-nD(P_{XY}||P_XU_{Y|X})}.\label{eq:type_property5}
\end{align}
We consider the case when user \two is malicious.
We will analyse $P_{e,\maltwo}$. Suppose a malicious user \two sends $\vecy$. Let $P_{e,\maltwo}(\vecy)$ denote the probability of error when user \two is malicious and sends $\vecy$. That is, for $\cE^{\one}_{\mo} \defineqq \inb{\vecz:\phi_{\one}(\vecz)\notin\{\mo,\twob\}}$,  
\begin{align*}
P_{e,\maltwo}(\vecy) = \frac{1}{N_{\one}}\sum_{m_{\one}\in \mathcal{M}_{\one}}W^n\inp{\cE^{\one}_{\mo}|f_{\one}(\mo), \vecy}
\end{align*}  and $$P_{e,\maltwo} = \max_{\vecy}P_{e,\maltwo}(\vecy).$$ We will show that $P_{e,\maltwo}(\vecy)$ is small for each $\vecy\in \cY^n$. The analysis follows the flowchart given in Figure~\ref{fig:flowchart1_authcom}.}

\olive{\noindent We define the following sets.
\begin{align*}
\cH_1&\defineqq \inb{m_{\one}: (\vecx_{m_{\one}}, \vecy)\in T_{XY}^n \text{ such that } I(X;Y)> \epsilon}, \text{ and}\\
\cH_2&\defineqq \inb{m_{\one}: (\vecx_{m_{\one}}, \vecy)\in T_{XY}^n \text{ such that }  I(X;Y)\leq \epsilon}
\end{align*}
\begin{align}
P_{e,\maltwo}(\vecy) &\leq \frac{1}{N_{\one}}|\cH_1| + \frac{1}{N_{\one}}\sum_{m_{\one}\in \cH_2}\inp{\sum_{P_{XYZ}\in \cD_{\eta}^c}\sum_{\vecz\in T^n_{Z|XY}(\vecx_{m_{\one}},\vecy)}W^n(\vecz|\vecx_{m_{\one}}, \vecy)} \nonumber\\
&\qquad\qquad + \frac{1}{N_{\one}}\sum_{m_{\one}\in \cH_2}\inp{\sum_{P_{XYZ}\in \cD_{\eta}}\sum_{\substack{\vecz\in T^n_{Z|XY}(\vecx_{m_{\one}},\vecy)\\: \phi_{\one}(\vecz)\notin\inb{m_{\one}, \twob}}}W^n(\vecz|\vecx_{m_{\one}}, \vecy)}\nonumber\\
&=: P_1(\vecy) +P_2(\vecy)+P_3(\vecy).\label{eq:error}
\end{align}}

\begin{figure}[!h]
\begin{centering}
\begin{tikzpicture}[scale=1.2]
 
  \node (A1) at (-1.5, 0) [rectangle, draw] {$P_{e,\maltwo}(\vecy)$};
  \node (C) at (-1.5, -2) [rounded corners=3pt, draw] {Union bound};
  \node[xshift = 0.8cm, right of = C] {eq.~\eqref{eq:error}};
  \node (D) at (1.5, -2) {};

  \node (E) at (-3, -4)  {\small $\substack{\text{\small  small}\\\text{ (atypical event)}}$};
\node (F) at (-1.5, -4.5)  {$\substack{ \text{\small small}\\\text{ (atypical event)}}$};
 \node (G) at (0, -4) [rounded corners=3pt, draw] {Union bound};
  \node[xshift = 0.8cm, right of = G] {eq.~\eqref{err:upperbound_p}};
  \node (G1) at (-1.5, -6) [rounded corners=3pt, draw] {Union bound}; 
  \node (G3) at (1.5, -6) [rounded corners=3pt, draw] {Union bound};
 
  \node (G11) at (-2.7, -8)  {$\substack{\text{\small  small } \\ \text{(by codebook}\\\text{ property\eqref{codebook:2b})}}$};  
  \node (G12) at (-0.6, -7.5) {$\substack{\text{\small small }\\ \text{(as $\eta>3\epsilon + 4\delta$}\\\text{{\em i.e.,}$R_{\one}=R_{\two}=\delta$ is small }\\\text{ enough such that }\\\eta>3\epsilon+4\delta \text{  and } \delta>\epsilon.)}$}; 

 \node (G31) at (2.3, -8)  {$\substack{\text{\small small }\\ \text{(as $\eta>3\epsilon + 4\delta$}\\\text{{\em i.e.,}$R_{\one}=R_{\two}=\delta$ is small }\\\text{ enough such that }\\\eta>3\epsilon+4\delta \text{  and } \delta>\epsilon.)}$}; 
  \node (G32) at (0.3, -8.7) {$\substack{\text{\small  small } \\ \text{(by codebook}\\\text{ property\eqref{codebook:4})}}$};  

 	\draw[->] (G) -- (G1) node[midway, right]{$\scriptstyle P_{\cE_{\mo,1}}(\vecy)$};
 	\draw[->] (G) -- (G3) node[midway, right]{$\scriptstyle P_{\cE_{\mo,2}}(\vecy)$};
 	\draw[->] (G1) -- (G11) node[midway, left]{$\small \stackrel{\text{\eqref{eq:1_p} does }}{\text{\scriptsize not hold}}$};
 	\draw[->] (G1) -- (G12) node[midway, xshift = 0cm, yshift = 0.2cm,  right]{$\scriptstyle \eqref{eq:1_p} \text{ holds}$};

 	 	\draw[->] (G3) -- (G32) node[xshift = 0.3cm, yshift = 0.6cm,midway, left]{$\small \stackrel{\text{\eqref{eq:2_p} does }}{\text{\scriptsize not hold}}$};
 	\draw[->] (G3) -- (G31) node[midway, xshift = 0.2cm, yshift = -0.2cm,  right]{$\scriptstyle \eqref{eq:2_p} \text{ holds}$};

 	\draw[->] (C) -- (E) node[midway, right]{$\scriptstyle P_1(\vecy)$};
 	\draw[->] (C) -- (F) node[yshift = -0.2cm, midway, right]{$\scriptstyle P_2(\vecy)$};
 	\draw[->] (C) -- (G) node[midway, right]{$\scriptstyle \scriptstyle P_3(\vecy)$};
  \draw[->] (A1) -- (C) node[midway, right]{};
\node at (7, -3.5)  {
\begin{tabular}{p{1.5cm}|p{6.5cm}} 
$P_{e,\maltwo}(\vecy)$& the average probability of error when malicious user \two sends $\vecy$\\
\hline
$P_1(\vecy)$ & the average probability that channel inputs are atypical\\
\hline	
$P_2(\vecy)$ & the average probability that the channel output is atypical \\
\hline	
$P_3(\vecy)$ & the average probability of error when channel inputs and output are typical\\
\hline	
$P_{\cE_{\mo,1}}(\vecy)$& condition \ref{check:2} in Definition~\ref{D_eta_z} does not hold\\
\hline	
$P_{\cE_{\mo,2}}(\vecy)$& condition ~\ref{check:3}  in Definition~\ref{D_eta_z} does not hold\\
\end{tabular}
};
\end{tikzpicture}
\caption{Flowchart depicting the flow of analysis of $P_{e,\maltwo}(\vecy)$, the average probability of error when user \two is malicious and sends $\vecy$.} \label{fig:flowchart1_authcom}
\end{centering}
\end{figure}

\olive{The first term $
P_1(\vecy) = \frac{1}{N_{\one}}|\cH_1|
$ is upper bounded by
\begin{align*}
\blue{\frac{|\cP^n_{\cX\times \cY}|}{N_{\one}}}\times|\inb{m_{\one}: (\vecx_{m_{\one}}, \vecy)\in T_{XY}^n, \, I(X;Y)> \epsilon}|
\end{align*}
which goes to zero as $n\rightarrow \infty$ by~\eqref{codebook:1} and noting that there are only polynomially many types. 
We now analyse the second term
\begin{align*}
P_2(\vecy)=\frac{1}{N_{\one}}\sum_{m_{\one}\in \cH_2}\inp{\sum_{P_{XYZ}\in \cD_{\eta}^c}\sum_{\vecz\in T^n_{Z|XY}(\vecx_{m_{\one}},\vecy)}W^n(\vecz|\vecx_{m_{\one}}, \vecy)}.
\end{align*}
For any $m_{\one}\in \cH_2$,
\begin{align*}
\sum_{P_{XYZ}\in \cD_{\eta}^c}\sum_{\vecz\in T^n_{Z|XY}(\vecx_{m_{\one}},\vecy)}W^n(\vecz|\vecx_{m_{\one}}, \vecy) &\stackrel{\text{(a)}}{\leq} |\cD_{\eta}^c|\exp{\inp{-nD(P_{XYZ}||P_{XY}W)}}\\
& = |\cD_{\eta}^c|\exp{\inp{-n\inp{D(P_{XYZ}||P_{X}P_{Y}W) -I(X;Y)}}}\\
& \leq |\cD_{\eta}^c|\exp{\inp{-n\inp{\eta-\epsilon}}} \rightarrow 0 \text{ as }\eta>\epsilon\text{ and }|\cD_{\eta}^c|\leq (n+1)^{|\cX||\cY||\cZ|},
\end{align*}
where (a) follows from \eqref{eq:type_property5}. We are left to analyse the last term
\begin{align}
P_3(\vecy)= \frac{1}{N_{\one}}\sum_{m_{\one}\in \cH_2}\inp{\sum_{P_{XYZ}\in \cD_{\eta}}\sum_{\substack{\vecz\in T^n_{Z|XY}(\vecx_{m_{\one}},\vecy)\\: \phi_{\one}(\vecz)\notin\inb{m_{\one}, \twob}}}W^n(\vecz|\vecx_{m_{\one}}, \vecy)}\label{eq:P_3_vecy}.
\end{align}}

\olive{\noindent Recall that because of Lemma~\ref{lemma:disambiguity}, whenever $|D_{\one}(\eta, \vecz)|, |D_{\two}(\eta, \vecz)|>0$, we have $|D_{\one}(\eta, \vecz)| = |D_{\two}(\eta, \vecz)| = 1$. This implies that for $(\vecx_{\mo}, \vecy, \vecz)\in P_{XYZ}$ such that $P_{XYZ}\in \cD_{\eta}$ and $\mo\in \cH_2$, the output of $\phi_{\one}(\vecz)$ is not in the set $\inb{m_{\one}, \twob}$ only if one of the following happens:
\begin{itemize}
	\item $|D_{\one}(\eta, \vecz)| = |D_{\two}(\eta, \vecz)| = 1$, but $\mo\notin D_{\one}(\eta, \vecz)$.
	\item $|D_{\one}(\eta, \vecz)| = 0$.
\end{itemize} 
To formalize this, we define the following sets for $\mo\in \cM_{\one}$.
\begin{align*}
\cG_{\mo} &= \inb{\vecz: (\vecx_{\mo}, \vecy, \vecz)\in P_{XYZ}, P_{XYZ}\in \cD_{\eta}, I(X; Y)\leq \epsilon},\\
\cG_{\mo,0} &= \cG_{\mo} \cap \inb{\vecz: \phi_{\one}(\vecz)\notin\inb{m_{\one}, \twob} },\\
\cG_{\mo,1} &= \cG_{\mo} \cap\inb{\vecz:  |D_{\one}(\eta, \vecz)| = |D_{\two}(\eta, \vecz)| = 1 , \mo\notin D_{\one}(\eta, \vecz) },\\
\cG_{\mo,2} &= \cG_{\mo} \cap\inb{\vecz:  |D_{\one}(\eta, \vecz)| = 0},\text{ and }\\
\cG_{\mo,3} &= \cG_{\mo} \cap\inb{\vecz: \mo\notin D_{\one}(\eta, \vecz) }.
\end{align*}
We are interested in $\cG_{\mo,0}$. Note that $\cG_{\mo,0} \subseteq \cG_{\mo,1}\cup \cG_{\mo,2}\subseteq \cG_{\mo,3}$. So, it suffices to upper bound the probability of $\cG_{\mo,3}$ when $\vecx_{\mo}$ is sent by user \one and $\vecy$ by user \two. From the definition of $D_{\one}(\eta, \vecz)$, we see that $\cG_{\mo,3}$ is the set of $\vecz \in \cZ^n$ which satisfy decoding condition~\ref{check:1} (this is because $\vecz\in\cG_{\mo,3}$ implies $\vecz\in \cG_{\mo}$ since $\cG_{\mo,3}\subseteq \cG_{\mo}$) but do not satisfy either decoding condition~\ref{check:2} or decoding condition~\ref{check:3}. We capture this by defining the following sets of distributions:
\begin{align}
\cP_1& =  \{P_{X\tilde{X}Y\tilde{Y}Z}\in \cP^n_{\cX\times\cX\times\cY\times\cY\times\cZ}: P_{XYZ}\in \cD_{\eta},I(X;Y)\leq \epsilon,\, P_{\tilde{X}Y'Z}\in \cD_{\eta} \text{ for some }P_{Y'|\tilde{X}Z},\nonumber\\
&\qquad  \, P_{X'\tilde{Y}Z}\in \cD_{\eta} \text{ for some }P_{X'|\tilde{Y}Z}, P_{X}=P_{\tilde{X}}=P_{\one}, P_{\tilde{Y}} = P_{\two}\text{ and }I(\tilde{X}\tilde{Y};XZ|Y)\geq\eta\}\label{eq:achiev_P1}\\
\cP_2& = \{P_{X\tilde{Y}_1\tilde{Y}_2YZ}\in \cP^n_{\cX\times\cY\times\cY\times\cY\times\cZ}: P_{XYZ}\in \cD_{\eta},I(X;Y)\leq \epsilon,\, P_{X'_1\tilde{Y}_1Z}\in \cD_{\eta} \text{ for some }P_{X'_1|\tilde{Y}_1Z},\nonumber\\
&\qquad \, P_{X'_2\tilde{Y}_2Z}\in \cD_{\eta} \text{ for some }P_{X'_2|\tilde{Y}_2Z}, P_{X}=P_{\one}, P_{\tilde{Y}_1}=P_{\tilde{Y}_2} = P_{\two}\text{ and }I(\tilde{Y}_1\tilde{Y}_2;XZ|Y)\geq\eta\}\label{eq:achiev_P2}.
\end{align}
For $P_{X\tilde{X}Y\tilde{Y}Z}\in \cP_1$ and $P_{X\tilde{Y}_1\tilde{Y}_2YZ}\in \cP_2 $, let
\begin{align*}
\cE_{\mo,1}(P_{X\tilde{X}Y\tilde{Y}Z}) & = \big\{\vecz: \exists(\tilde{m}_{\one},\, \tilde{m}_{\two})\in \cM_{\one}\times \cM_{\two}, \, \tilde{m}_{\one}\neq m_{\one}, \,  \inp{\vecx_{\mo},\vecx_{\tilde{m}_{\one}}, \vecy,   \vecy_{\tilde{m}_{\two}}, \vecz}\in T^n_{X\tilde{X}Y\tilde{Y}Z} \big\} \text{ and }\\
\cE_{\mo,2}(P_{X\tilde{Y}_1\tilde{Y}_2YZ}) & = \big\{\vecz: \exists \tilde{m}_{\two 1},\, \tilde{m}_{\two 2}\in \cM_{\two}, \tilde{m}_{\two 1}\neq \tilde{m}_{\two 2},\,\inp{\vecx_{\mo},  \vecy_{\tilde{m}_{\two 1}}, \vecy_{\tilde{m}_{\two 2}},\vecy,\vecz}\in T^n_{X\tilde{Y}_1\tilde{Y}_2YZ}\big\}.
\end{align*}
Note that $\cG_{\mo,3}= \inp{\cup_{P_{X\tilde{X}Y\tilde{Y}Z}\in \cP_1}\cE_{\mo,1}(P_{X\tilde{X}Y\tilde{Y}Z})}\cup \inp{\cup_{P_{X\tilde{Y}_1\tilde{Y}_2YZ}\in \cP_2}\cE_{\mo,2}(P_{X\tilde{Y}_1\tilde{Y}_2YZ})}$.}

\olive{Thus, \eqref{eq:P_3_vecy} can be analyzed as below.
\begin{align}
P_3(\vecy)&=\frac{1}{N_{\one}}\sum_{m_{\one}\in \cH_2}\inp{\sum_{P_{XYZ}\in \cD_{\eta}}\sum_{\substack{\vecz\in T^n_{Z|XY}(\vecx_{m_{\one}},\vecy):\\ \phi_{\one}(\vecz)\notin\inb{m_{\one}, \twob}}}W^n(\vecz|\vecx_{m_{\one}}, \vecy)}\nonumber\\
&\leq\frac{1}{N_{\one}}\sum_{\mo\in \cH_2}\sum_{P_{X\tilde{X}Y\tilde{Y}Z}\in \cP_1}  W^n\inp{\cE_{\mo,1}(P_{X\tilde{X}Y\tilde{Y}Z})|\vecx_{\mo}, \vecy} \nonumber\\
&\qquad \qquad+ \frac{1}{N_{\one}}\sum_{\mo\in \cH_2}\sum_{P_{X\tilde{Y}_1\tilde{Y}_2YZ}\in \cP_2}  W^n\inp{\cE_{\mo,2}(P_{X\tilde{Y}_1\tilde{Y}_2YZ})|\vecx_{\mo}, \vecy}\\
&=:P_{\cE_{\mo,1}}(\vecy) + P_{\cE_{\mo,2}}(\vecy). \label{err:upperbound_p}
\end{align}
We see that $|\cP_1|$ and $|\cP_2|$ increase at most polynomially in $n$ (see \eqref{eq:type_property1}) and clearly $|\cH_2|\leq N_{\one}$. So, it will  suffice to uniformly upper bound $W^n\inp{\cE_{\mo,1}(P_{X\tilde{X}Y\tilde{Y}Z})|\vecx_{\mo}, \vecy}$ and $W^n\inp{\cE_{\mo,2}(P_{X\tilde{Y}_1\tilde{Y}_2YZ})|\vecx_{\mo}, \vecy}$ by a term exponentially decreasing in $n$ for all $P_{X\tilde{X}Y\tilde{Y}Z}\in \cP_1$ and $P_{X\tilde{Y}_1\tilde{Y}_2YZ}\in \cP_2 $ respectively. 
We start the analysis of $P_{\cE_{\mo,1}}(\vecy)$ by upper bounding $W^n\inp{\cE_{\mo,1}(P_{X\tilde{X}Y\tilde{Y}Z})|\vecx_{\mo}, \vecy}$. By using~\eqref{codebook:2b}, we see that for $P_{X\tilde{X}Y\tilde{Y}Z}\in \cP_1$ such that
\begin{align}
I\inp{X;\tilde{X}\tilde{Y}Y}> |R_{\one}- I(\tilde{X};\tilde{Y}Y)|^{+}+|R_{\two}-I(\tilde{Y};Y)|^{+}+\epsilon,\,\label{eq:1_p_complement}
\end{align}we have,
\begin{align*}
\blue{\frac{1}{N_{\one}}}{\left|\inb{\mo:(\vecx_{\mo}, \vecx_{\tilde{m}_{\one}}, \vecy_{\mt}, \vecy)\in T^n_{X\tilde{X}\tilde{Y}Y} \text{ for some }\tilde{m}_{\one}\neq \mo\text{ and some }\mt} \right|} \leq \exp\inb{-n\epsilon/2}.
\end{align*}
So, for all such $P_{X\tilde{X}Y\tilde{Y}Z}$,
\begin{align*}
&\frac{1}{N_{\one}}\sum_{\mo\in \cH_2} W^n\inp{\cE_{\mo,1}(P_{X\tilde{X}Y\tilde{Y}Z})|\vecx_{\mo}, \vecy} \\
& = \frac{1}{N_{\one}}\sum_{\substack{\mo:(\vecx_{\mo}, \vecx_{\tilde{m}_{\one}}, \vecy_{\mt}, \vecy)\in T^n_{X\tilde{X}\tilde{Y}Y},\\ \tilde{m}_{\one}\in \cM_{\one},\tilde{m}_{\one}\neq \mo,\mt\in\cM_{\two}}} \,\,\sum_{\vecz\in T^{n}_{Z|X\tilde{X}Y\tilde{Y}}(\vecx_{\mo},\vecx_{\tilde{m}_{\one}},\vecy,\vecy_{\tilde{m}_{\two}})}W^n\inp{\vecz|\vecx_{\mo}, \vecy}\\
&\leq \exp\inb{-n\epsilon/2}.
\end{align*}
Thus, it is sufficient to consider distributions $P_{X\tilde{X}Y\tilde{Y}Z}\in \cP_1$ for which 
\begin{align}
I\inp{X;\tilde{X}\tilde{Y}Y}\leq |R_{\one}- I(\tilde{X};\tilde{Y}Y)|^{+}+|R_{\two}-I(\tilde{Y};Y)|^{+}+\epsilon.\label{eq:1_p}
\end{align}
For $P_{X\tilde{X}Y\tilde{Y}Z}\in \cP_1$ satisfying~\eqref{eq:1_p},
\begin{align}
&\sum_{\vecz\in \cE_{\mo,1}(P_{X\tilde{X}Y\tilde{Y}Z})}W^n(\vecz|\vecx_{\mo}, \vecy)\nonumber\\
&\qquad=\sum_{\substack{\tilde{m}_{\one}, \tilde{m}_{\two}: \tilde{m}_{\one}\neq \mo, \\(\vecx_{\mo}, \vecx_{\tilde{m}_{\one}}, \vecy_{\tilde{m}_{\two}},\vecy)\in T^{n}_{X\tilde{X}\tilde{Y}Y}}}\sum_{\vecz\in T^n_{Z|X\tilde{X}\tilde{Y}Y}(\vecx_{\mo}, \vecx_{\tilde{m}_{\one}}, \vecy_{\tilde{m}_{\two}},\vecy)}W^n(\vecz|\vecx_{\mo}, \vecy)\nonumber\\
&\qquad\stackrel{\text{(a)}}{\leq} \sum_{\substack{\tilde{m}_{\one}, \tilde{m}_{\two}: \tilde{m}_{\one}\neq \mo,\\(\vecx_{\mo}, \vecx_{\tilde{m}_{\one}}, \vecy_{\tilde{m}_{\two}},\vecy)\in T^{n}_{X\tilde{X}\tilde{Y}Y}}}\frac{|T^{n}_{Z|X\tilde{X}\tilde{Y}Y}(\vecx_{\mo},\vecx_{\tilde{m}_{\one}}, \vecy_{\tilde{m}_{\two}}, \vecy)|}{|T^n_{Z|XY}(\vecx_{\mo},\vecy)|}\nonumber\\
&\qquad \stackrel{\text{(b)}}{\leq} \sum_{\substack{\tilde{m}_{\one}, \tilde{m}_{\two}: \tilde{m}_{\one}\neq \mo,\\(\vecx_{\mo}, \vecx_{\tilde{m}_{\one}}, \vecy_{\tilde{m}_{\two}},\vecy)\in T^{n}_{X\tilde{X}\tilde{Y}Y}}}\frac{\exp\inp{nH(Z|X\tilde{X}\tilde{Y}Y)}}{(n+1)^{-|\cX||\cY||\cZ|}\exp\inp{nH(Z|XY)}}\nonumber\\
&\qquad \stackrel{\text{(c)}}\leq \sum_{\substack{\tilde{m}_{\one}, \tilde{m}_{\two}:\\(\vecx_{\mo}, \vecx_{\tilde{m}_{\one}}, \vecy_{\tilde{m}_{\two}},\vecy)\in T^{n}_{X\tilde{X}\tilde{Y}Y}}}\exp\inp{-n\inp{I(Z;\tilde{X}\tilde{Y}|XY)-\epsilon}}\nonumber\\
&\qquad\stackrel{\text{(d)}}{\leq}\exp\inp{n\inp{|R_{\one}- I(\tilde{X};\tilde{Y}XY)|^{+}+|R_{\two}-I(\tilde{Y};XY)|^{+}-I(Z;\tilde{X}\tilde{Y}|XY)+2\epsilon}},\label{eq:upperbound2_p}
\end{align}
where (a) follows by noting that whenever $\vecz$ belongs to $T^n_{Z|X\tilde{X}\tilde{Y}Y}(\vecx_{\mo}, \vecx_{\tilde{m}_{\one}}, \vecy_{\tilde{m}_{\two}},\vecy)$, $\vecz$ also belongs to $T^n_{Z|XY}(\vecx_{\mo},\vecy)$;	 and for each $\vecz$ in $T^n_{Z|XY}(\vecx_{\mo},\vecy)$,  the value of $W^n(\vecz|\vecx_{\mo}, \vecy)$ is the same and hence is upper bounded by $1/|T^n_{Z|XY}(\vecx_{\mo}, \vecy)|$.  (b) follows from \eqref{eq:type_property3}, (c) holds for sufficiently large $n$ and (d) follows from~\eqref{codebook:3b}. 
We see that
\begin{align*}
I(Z;\tilde{X}\tilde{Y}|XY) &= I(XZ;\tilde{X}\tilde{Y}|Y)-I(X;\tilde{X}\tilde{Y}|Y)\\
&\stackrel{\text{(a)}}{\geq} \eta-I(X;\tilde{X}\tilde{Y}|Y)\\
&\stackrel{\text{(b)}}{\geq} \eta-I(X;\tilde{X}\tilde{Y}Y)\\
&\stackrel{\text{(c)}}{\geq} \eta- |R_{\one}- I(\tilde{X};\tilde{Y}Y)|^{+}-|R_{\two}-I(\tilde{Y};Y)|^{+}-\epsilon
\end{align*} 
where (a) follows from the fact that $I(XZ;\tilde{X}\tilde{Y}|Y)\geq \eta$ for $P_{X\tilde{X}Y\tilde{Y}Z}\in\cP_1$ (see \eqref{eq:achiev_P1}), (b) from $I(X;\tilde{X}\tilde{Y}|Y)\leq I(X;\tilde{X}\tilde{Y}Y)$ and (c) follows from \eqref{eq:1_p}.
Applying this to \eqref{eq:upperbound2_p},
\begin{align}
&\sum_{\vecz\in \cE_{\mo,1}(P_{X\tilde{X}Y\tilde{Y}Z})}W^n(\vecz|\vecx_{\mo}, \vecy)\nonumber\\
&\leq \exp\inp{n\inp{|R_{\one}- I(\tilde{X};\tilde{Y}XY)|^{+}+|R_{\two}-I(\tilde{Y};XY)|^{+}+|R_{\one}- I(\tilde{X};\tilde{Y}Y)|^{+}+|R_{\two}-I(\tilde{Y};Y)|^{+}-\eta+3\epsilon}}\nonumber\\
&\stackrel{\text{(a)}}{\leq} \exp\inp{n\inp{4\delta-\eta+3\epsilon}}\label{eq:upperbound123}\\
&{\rightarrow} 0 \text{ when }{\eta>3\epsilon+4\delta \quad(\text{see}~\eqref{eq:eta_cond}).}\nonumber
\end{align}
Here, (a) follows by recalling that $R_{\one} = R_{\two} = \delta$ and noting that $|R_{\one}- I(\tilde{X};\tilde{Y}XY)|^{+}+|R_{\two}-I(\tilde{Y};XY)|^{+}+|R_{\one}- I(\tilde{X};\tilde{Y}Y)|^{+}+|R_{\two}-I(\tilde{Y};Y)|^{+}\leq 2R_{\one}+2R_{\two} = 4\delta$.
Now, we move on to the second term $P_{\cE_{\mo,2}}(\vecy)$ in~\eqref{err:upperbound_p}. Proceeding in a similar fashion, we see that by using~\eqref{codebook:4}, it is sufficient to consider distributions $P_{X\tilde{Y}_1\tilde{Y}_2YZ}\in \cP_2$ for which  
\begin{align}
I\inp{X;\tilde{Y}_1\tilde{Y}_2Y}\leq|R_{\two}-I(\tilde{Y}_1;Y)|^{+}+|R_{\two}-I(\tilde{Y}_2;\tilde{Y}_1 Y)|^{+} +\epsilon.\label{eq:2_p}
\end{align}
For $P_{X\tilde{Y}_1\tilde{Y}_2YZ}\in \cP_2$ satisfying~\eqref{eq:2_p}, along similar lines as the steps which led to \eqref{eq:upperbound2_p} and \eqref{eq:upperbound123},
\begin{align}
&\sum_{\vecz\in \cE_{\mo, 2}(P_{X\tilde{Y}_1\tilde{Y}_2YZ})}W^n(\vecz|\vecx_{\mo}, \vecy)\nonumber\\
&\qquad=\sum_{\substack{\tilde{m}_{\two 1},  \tilde{m}_{\two 2}:\tilde{m}_{\two 1}\neq \tilde{m}_{\two 2},\\(\vecx_{\mo}, \vecy_{\tilde{m}_{\two 1}}, \vecy_{\tilde{m}_{\two 2}},\vecy)\in T^{n}_{X\tilde{Y}_1\tilde{Y}_2Y}}}\sum_{\vecz:(\vecx_{\mo}, \vecy_{\tilde{m}_{\two 1}}, \vecy_{\tilde{m}_{\two 2}},\vecy, \vecz)\in T^{n}_{X\tilde{Y}_1\tilde{Y}_2YZ}}W^n(\vecz|\vecx_{\mo}, \vecy)\nonumber\\
&\qquad\leq \sum_{\substack{\tilde{m}_{\two 1}, \tilde{m}_{\two 2}:\tilde{m}_{\two 1}\neq \tilde{m}_{\two 2},\\(\vecx_{\mo}, \vecy_{\tilde{m}_{\two 1}}, \vecy_{\tilde{m}_{\two 2}},\vecy)\in T^{n}_{X\tilde{Y}_1\tilde{Y}_2Y}}}\frac{|T^{n}_{Z|X\tilde{Y}_1\tilde{Y}_2Y}(\vecx_{\mo},\vecy_{\tilde{m}_{\two 1}}, \vecy_{\tilde{m}_{\two 2}}, \vecy)|}{|T^n_{Z|XY}(\vecx_{\mo},\vecy)|}\nonumber\\
&\qquad \leq \sum_{\substack{\tilde{m}_{\two 1}, \tilde{m}_{\two 2}:\tilde{m}_{\two 1}\neq \tilde{m}_{\two 2},\\(\vecx_{\mo}, \vecy_{\tilde{m}_{\two 1}}, \vecy_{\tilde{m}_{\two 2}},\vecy)\in T^{n}_{X\tilde{Y}_1\tilde{Y}_2Y}}}\frac{\exp\inp{nH(Z|X\tilde{Y}_1\tilde{Y}_2Y)}}{(n+1)^{-|\cX||\cY||\cZ|}\exp\inp{nH(Z|XY)}}\nonumber\\
&\qquad \stackrel{\text{(a)}}{\leq} \sum_{\substack{\tilde{m}_{\two 1}, \tilde{m}_{\two 2}:\tilde{m}_{\two 1}\neq \tilde{m}_{\two 2},\\(\vecx_{\mo}, \vecy_{\tilde{m}_{\two 1}}, \vecy_{\tilde{m}_{\two 2}},\vecy)\in T^{n}_{X\tilde{Y}_1\tilde{Y}_2Y}}}\exp\inp{-n\inp{I(Z;\tilde{Y}_1\tilde{Y}_2|XY)-\epsilon}} \nonumber\\
&\qquad\stackrel{\text{(b)}}{\leq}\exp\inp{n\inp{|R_{\two}-I(\tilde{Y}_1;XY)|^{+}+|R_{\two}-I(\tilde{Y}_2;\tilde{Y}_1 XY)|^{+}-I(Z;\tilde{Y}_1\tilde{Y}_2|XY)+2\epsilon}}\label{eq:upperbound3_p}\\
&\qquad = \exp\inp{n\inp{|R_{\two}-I(\tilde{Y}_1;XY)|^{+}+|R_{\two}-I(\tilde{Y}_2;\tilde{Y}_1 XY)|^{+}+I(X;\tilde{Y}_1\tilde{Y}_2|Y)-I(XZ;\tilde{Y}_1\tilde{Y}_2|Y)+2\epsilon}}\nonumber\\
&\qquad\stackrel{\text{(c)}}{\leq}\exp\Big(n\Big(|R_{\two}-I(\tilde{Y}_1;XY)|^{+}+|R_{\two}-I(\tilde{Y}_2;\tilde{Y}_1 XY)|^{+}+|R_{\two}-I(\tilde{Y}_1;Y)|^{+}\nonumber\\
&\qquad\qquad+|R_{\two}-I(\tilde{Y}_2;\tilde{Y}_1 Y)|^{+}-\eta+3\epsilon\Big)\Big)\nonumber\\
&\qquad\stackrel{\text{(d)}}{\leq} \exp\inp{n\inp{4\delta-\eta+3\epsilon}}\label{eq:upperbound123_p}\\
&\qquad\rightarrow 0\text{ when }{\eta>3\epsilon+4\delta \quad(\text{see}~\eqref{eq:eta_cond}).\nonumber}
\end{align}
where (a) holds for large $n$, (b) follows from~\eqref{codebook:5}, (c) from \eqref{eq:2_p} (note that $I(X;\tilde{Y}_1\tilde{Y}_2|Y)\leq I(X;\tilde{Y}_1\tilde{Y}_2Y)$) and the fact that $I(XZ;\tilde{Y}_1\tilde{Y}_2|Y)\geq \eta$ since $P_{X\tilde{Y}_1\tilde{Y}_2YZ}\in \cP_{2}$ (see \eqref{eq:achiev_P2}) and (d) by recalling that $R_{\two} = \delta$ and hence $|R_{\two}-I(\tilde{Y}_1;XY)|^{+}+|R_{\two}-I(\tilde{Y}_2;\tilde{Y}_1 XY)|^{+}+|R_{\two}-I(\tilde{Y}_1;Y)|^{+}+|R_{\two}-I(\tilde{Y}_2;\tilde{Y}_1 Y)|^{+}\leq 4R_{\two} = 4\delta$.}

\olive{Thus, $P_{e,\maltwo}\rightarrow 0$ as $n\rightarrow \infty$ for the sufficiently small $\delta>\epsilon>0$ so that $\eta>3\epsilon+ 4\delta$ is sufficiently small for Lemma~\ref{lemma:disambiguity} to hold. 
Similarly, because of symmetry, we can show that if $\eta>3\epsilon+4\delta$ the probability of error goes to zero as $n$ goes to infinity when user \one is malicious.}

\end{proof}


\section{Codebook for Theorems~\ref{thm:main_result} and~\ref{thm:inner_bd}} \label{sec:codebooklemma}
We first restate Lemma~\ref{lemma:codebook} by including the analogous statements when roles of users \one and \two are interchanged.
\begin{Lemma}
    Suppose $\mathcal{X,Y,Z}$ are finite. Let $P_{\one}\in \cP^n_{\cX}$ and $P_{\two}\in \cP^n_{\cY}$. For any $\epsilon>0$, there exists $n_0(\epsilon)$ such that for all $n\geq n_0(\epsilon),\, N_{\one}, N_{\two}\geq \exp(n\epsilon)$, there are codewords $\vecx_1, \vecx_2, \ldots, \vecx_{N_{\one}}$ of type $P_{\one}$ and $\vecy_1, \vecy_2, \ldots, \vecy_{N_{\two}}$ of type $P_{\two}$ such that for all $\vecx, \vecx', \tilde{\vecx}, \bar{\vecx}\in \cX^n$  and $\vecy, \vecy', \tilde{\vecy}, \bar{\vecy}\in \cY^n$, and joint types $P_{X\tilde{X}\tilde{Y}Y}\in \cP^n_{\cX\times \cX\times \cY\times \cY}$, $P_{X'\tilde{Y}_1\tilde{Y}_2Y'}\in \cP^n_{\cX\times \cY\times \cY\times \cY}$, $P_{\hat{X}\bar{X}\bar{Y}\hat{Y}}\in \cP^n_{\cX\times \cX\times \cY\times \cY}$ and $P_{X''\hat{X}_1\hat{X}_2Y''}\in \cP^n_{\cX\times \cX\times \cX\times \cY}$ 
     and for $R_{\one}\defineqq (1/n)\log N_{\one}$ and $R_{\two}\defineqq (1/n)\log N_{\two}$, 
     the following holds:\footnote{Note that $\exp$ and $\log$ are with respect to base 2.} 
\begin{align}
&\blue{\frac{1}{N_{\one}}}{\left|\inb{m_{\one}: (\vecx_{\mo}, \vecy)\in T_{XY}^n}\right|}\leq \exp\inb{-n\epsilon/2},\text{ if }I(X;Y)> \epsilon;\hspace{7cm}&\text{\eqref{codebook:1}}\\
&\blue{\frac{1}{N_{\one}}}\left|\inb{\mo:(\vecx_{\mo}, \vecx_{\tilde{m}_{\one}}, \vecy_{\mt}, \vecy)\in T^n_{X\tilde{X}\tilde{Y}Y} \text{ for some }\tilde{m}_{\one}\neq \mo\text{ and some }\mt} \right| \leq \exp\inb{-n\epsilon/2}, \nonumber\\
& \qquad \text{ if }I(X;\tilde{X}\tilde{Y}Y)-|R_{\one}- I(\tilde{X};\tilde{Y}Y)|^{+}-|R_{\two}-I(\tilde{Y};Y)|^{+} >\epsilon;&\text{\eqref{codebook:2b}}\\
&\left|\inb{(\tilde{m}_{\one}, \tilde{m}_{\two}):(\vecx, \vecx_{\tilde{m}_{\one}}, \vecy_{\tilde{m}_{\two}}, \vecy)\in T^n_{X\tilde{X}\tilde{Y}Y}}\right|\nonumber\\
&\qquad\leq \exp\inb{n\inp{|R_{\one}- I(\tilde{X};\tilde{Y}XY)|^{+}+|R_{\two}-I(\tilde{Y};XY)|^{+}+\epsilon}};&\text{\eqref{codebook:3b}}\\
&\blue{\frac{1}{N_{\one}}}\left|\inb{\mo:(\vecx_{\mo}, \vecy_{\tilde{m}_{\two 1}}, \vecy_{\tilde{m}_{\two 2}}, \vecy')\in T^n_{X'\tilde{Y}_1\tilde{Y}_2Y'} \text{ for some }\tilde{m}_{\two 1},\tilde{m}_{\two 2}, } \right| \leq \exp\inb{-n\epsilon/2}, \nonumber\\
& \qquad \text{ if }I(X';\tilde{Y}_1\tilde{Y}_2Y')-|R_{\two}-I(\tilde{Y}_1;Y')|^{+}-|R_{\two}-I(\tilde{Y}_2;\tilde{Y}_1 Y')|^{+} >\epsilon;&\text{\eqref{codebook:4}}\\
&\left|\inb{(\tilde{m}_{\two 1}, \tilde{m}_{\two 2}):(\vecx', \vecy_{\tilde{m}_{\two 1}}, \vecy_{\tilde{m}_{\two 2}}, \vecy')\in T^n_{X'\tilde{Y}_1\tilde{Y}_2Y'} }\right|\nonumber\\
&\qquad\leq \exp\inb{n\inp{|R_{\two}-I(\tilde{Y}_1;X'Y')|^{+}+|R_{\two}-I(\tilde{Y}_2;\tilde{Y}_1 X'Y')|^{+}+\epsilon}},\qquad\qquad\qquad\qquad\qquad\qquad \hspace{2.35cm}&\text{\eqref{codebook:5}}\\
&\blue{\frac{1}{N_{\one}}}\left|\inb{m_{\two}: (\bar{\vecx}, \vecy_{\mt})\in T_{\bar{X}\bar{Y}}^n}\right|\leq \exp\inb{-n\epsilon/2},\text{ if }I(\bar{X};\bar{Y})> \epsilon;\label{codebook:1q}\\
&\blue{\frac{1}{N_{\two}} }\left|\inb{\mt:(\vecx_{\mo}, \vecy_{\tilde{m}_{\two}}, \vecy_{\mt}, \bar{\vecx})\in T^n_{\hat{X}\hat{Y}\bar{Y}\bar{X}} \text{ for some }\tilde{m}_{\two}\neq \mt \text{ and some }\mo} \right|\leq \exp\inb{-n\epsilon/2}, \nonumber\\
& \qquad \text{ if }I(\bar{Y};\hat{Y}\hat{X}\bar{X})-|R_{\two}- I(\hat{Y};\hat{X}\bar{X})|^{+}-|R_{\one}-I(\hat{X};\bar{X})|^{+} >\epsilon;\label{codebook:2bq}\\
&\left|\inb{(\tilde{m}_{\one}, \tilde{m}_{\two}):(\bar{\vecy},  \vecy_{\tilde{m}_{\two}}, \vecx_{\tilde{m}_{\one}},\bar{\vecx})\in T^n_{\bar{Y}\hat{Y}\hat{X}\bar{X}}}\right|\nonumber\\
&\qquad\leq \exp\inb{n\inp{|R_{\two}- I(\hat{Y};\hat{X}\bar{X}\bar{Y})|^{+}+|R_{\one}-I(\hat{X};\bar{X}\bar{Y})|^{+}+\epsilon}};\label{codebook:3bq}\\
&\blue{\frac{1}{N_{\two}}}\left|\inb{\mt:(\vecy_{\mt}, \vecx_{\tilde{m}_{\one 1}}, \vecx_{\tilde{m}_{\one 2}}, \tilde{\vecx})\in T^n_{Y''\hat{X}_1\hat{X}_2X''} \text{ for some }\tilde{m}_{\one 1},\tilde{m}_{\one 2}, } \right| \leq \exp\inb{-n\epsilon/2}, \nonumber\\
& \qquad \text{ if }I(Y'';\hat{X}_1\hat{X}_2X'')-|R_{\one}-I(\hat{X}_1;X'')|^{+}-|R_{\one}-I(\hat{X}_2;\hat{X}_1 X'')|^{+} >\epsilon;\label{codebook:4q}\\
&\text{and }\left|\inb{(\tilde{m}_{\one 1}, \tilde{m}_{\one 2}):(\tilde{\vecy}, \vecx_{\tilde{m}_{\one 1}}, \vecx_{\tilde{m}_{\one 2}}, \tilde{\vecx})\in T^n_{Y''\hat{X}_1\hat{X}_2X''} }\right|\nonumber\\
&\qquad\leq \exp\inb{n\inp{|R_{\one}-I(\hat{X}_1;X''Y'')|^{+}+|R_{\one}-I(\hat{X}_2;\hat{X}_1 X''Y'')|^{+}+\epsilon}}.\label{codebook:5q}
\end{align}
Here, statements \eqref{codebook:1q}-\eqref{codebook:5q} are analogous to \eqref{codebook:1}-\eqref{codebook:5} with the roles of users \one and \two are interchanged. In particular, $(\vecx, \vecx', \vecy, \vecy')$ is replaced with $(\bar{\vecy}, \tilde{\vecy}, \bar{\vecx}, \tilde{\vecx})$, $(X, \tilde{X}, \tilde{Y}, Y)$ with $(\bar{Y}, \hat{Y}, \hat{X}, \bar{X})$, $(X', \tilde{Y}_1, \tilde{Y}_2, Y')$ with $(Y'', \hat{X}_1, \hat{X}_2, X'')$ and $R_{\one}$ with $R_{\two}$.
\end{Lemma}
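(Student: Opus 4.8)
The plan is to establish the lemma by the probabilistic method, following the blueprint of \cite[Lemma~3]{CsiszarN88} but carrying two independent codebooks and allowing pairs of same-user codewords. I would fix $\epsilon>0$, draw $\vecX_1,\dots,\vecX_{N_\one}$ i.i.d.\ uniformly from $T^n_{P_\one}$ and $\vecY_1,\dots,\vecY_{N_\two}$ i.i.d.\ uniformly from $T^n_{P_\two}$, all mutually independent, and show that with probability tending to $1$ the resulting codebook satisfies every one of \eqref{codebook:1}--\eqref{codebook:5} together with their mirror images \eqref{codebook:1q}--\eqref{codebook:5q} simultaneously; any realization in this event then furnishes the claimed deterministic codewords. (Equivalently one may generate twice as many codewords and expurgate; I would use whichever keeps the dependence bookkeeping cleanest.)

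The workhorse is the method of types, via \eqref{eq:type_property1}--\eqref{eq:type_property5}: for a fixed tuple of sequences and a fixed joint type, a freshly drawn codeword lands in the corresponding conditional type class with probability that equals $\exp(-nI)$ up to polynomial factors, where $I$ is the conditional mutual information the type prescribes. I would split the analysis into two families of claims. First, the ``counting'' bounds \eqref{codebook:3b} and \eqref{codebook:5}: here the number of codeword pairs jointly typical with a fixed external tuple is bounded by (the number of one user's codewords of the prescribed conditional type with respect to the fixed tuple) times (the maximum over \emph{all} external tuples of the number of the other user's codewords of the prescribed conditional type), each factor being a sum of i.i.d.\ Bernoullis whose mean is at most $\max\{1,\exp(n|R-I|^{+})\}$; the $|\cdot|^{+}$'s and the particular nesting of the mutual-information terms (e.g.\ $I(\tilde{X};\tilde{Y}XY)$ paired with $I(\tilde{Y};XY)$, or $I(\tilde{Y}_2;\tilde{Y}_1X'Y')$ paired with $I(\tilde{Y}_1;X'Y')$) come precisely from this \emph{ordered} splitting. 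Second, the ``few bad codewords'' bounds \eqref{codebook:1}, \eqref{codebook:2b}, \eqref{codebook:4}: conditioning on the counting bounds just established, the probability that a given message index is ``bad'' for a fixed external sequence and a fixed joint type is at most $\exp\big(-n(\text{the condition quantity})\big)\le\exp(-n\epsilon)$ — the condition $(\,\cdot\,)>\epsilon$ in each of these three displays is exactly what makes this so — whence the expected number of bad indices is at most $N\exp(-n\epsilon)$.

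The crucial point, and the step I expect to be the main obstacle, is upgrading ``holds for one fixed external sequence'' to ``holds simultaneously for all $\vecx,\vecx'\in\cX^n$ and $\vecy,\vecy'\in\cY^n$'': there are exponentially many such sequences, so a first-moment/Markov bound is too weak to survive the union. The resolution, as in \cite{CsiszarN88}, is that the lemma is stated with the \emph{crude} exponential bounds ($\exp(-n\epsilon/2)$ rather than the sharp $\exp(-n(I-\epsilon))$): since $N_\one,N_\two\ge\exp(n\epsilon)$, the relevant threshold always exceeds the mean by a factor $\ge\exp(n\epsilon/2)$, so a Chernoff bound for sums of Bernoullis makes each bad event have \emph{doubly} exponentially small probability $\exp(-c\exp(n\epsilon/2))$, which then absorbs the exponential union bound over external sequences and the polynomial union bound over joint types (using \eqref{eq:type_property1}). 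For the ``few bad codewords'' claims this doubly-exponential concentration must be applied to the \emph{count} of bad indices, which are not independent across indices; I would handle this by conditioning on the $\vecY$-codebook and on the (probability-$(1-o(1))$) event that the $\vecX$-codebook meets the counting bounds, so that the set of sequences that would render an index bad is essentially the same for every index, reducing the count to a sum of conditionally independent Bernoullis to which the Chernoff bound again applies.

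Finally, the mirror statements \eqref{codebook:1q}--\eqref{codebook:5q} follow verbatim by interchanging the roles of $(\cX,P_\one,\{\vecX_i\})$ and $(\cY,P_\two,\{\vecY_j\})$ in the above, so only one side needs to be written out in detail. The genuinely two-user feature is limited to carrying the extra independent codebook and to permitting a pair of codewords of the \emph{same} user inside a single typicality constraint, which is the source of the chained terms such as $I(\tilde{Y}_2;\tilde{Y}_1X'Y')$ and $I(\hat{X}_2;\hat{X}_1X''Y'')$; everything else is a straightforward transcription of the single-user argument.
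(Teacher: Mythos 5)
Your overall route is the same as the paper's: draw the two codebooks i.i.d.\ uniformly on the type classes of $P_{\one}$ and $P_{\two}$, show that each statement fails for a \emph{fixed} external tuple and joint type with doubly exponentially small probability (this is exactly where $N_{\one},N_{\two}\geq\exp(n\epsilon)$ and the relaxed thresholds $\exp(-n\epsilon/2)$ enter), absorb the exponential union over external sequences and the polynomial union over joint types, and get the mirrored statements \eqref{codebook:1q}--\eqref{codebook:5q} by interchanging the roles of the two codebooks. Your split into ``counting'' bounds (\eqref{codebook:3b}, \eqref{codebook:5}) and ``few bad indices'' bounds (\eqref{codebook:1}, \eqref{codebook:2b}, \eqref{codebook:4}) is also how the paper organizes the argument.

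The one step that would not survive as you describe it is the dependence argument for the statements in which the badness of an index is determined partly by \emph{other codewords of the same codebook}, namely \eqref{codebook:2b}, \eqref{codebook:2bq} and the chained counts in \eqref{codebook:5}, \eqref{codebook:5q}. Conditioning on the $\vecY$-codebook does not decouple the indices there (the target set for index $i$ is built from the other $\vecX_j$'s), and conditioning additionally on the global event that the counting bounds hold destroys the i.i.d.\ structure, so the number of bad indices is not a sum of conditionally independent Bernoullis and a plain Chernoff bound does not apply. The paper's mechanism is the Csisz\'ar--Narayan device reproduced as Lemma~\ref{lemma:A1}: order the indices, let the candidate set $A_i$ for codeword $i$ be built only from codewords of index $j<i$ (together with the $\vecY$-codebook), and \emph{cap} $|A_i|$ at the already-established counting bound (setting $A_i=\emptyset$ otherwise) so that the conditional-expectation hypothesis of Lemma~\ref{lemma:A1} holds almost surely rather than merely on a likely event; the discrepancy between the resulting indicator sum and the true count is contained in the doubly exponentially unlikely failure of the counting bound, and the omitted $j>i$ contribution is recovered by symmetry (for \eqref{codebook:5} one additionally splits the pair count into $\tilde{m}_{\two 1}\neq\tilde{m}_{\two 2}$ and $\tilde{m}_{\two 1}=\tilde{m}_{\two 2}$). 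With that replacement for your ``conditional independence'' step, your outline coincides with the paper's proof.
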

\begin{proof}
This proof is along the lines of the proof of \cite[Lemma 3]{CsiszarN88}. We will generate the codebook by a random experiment. For fixed vectors $\vecx, \vecx', \tilde{\vecx}, \bar{\vecx}, \vecy,\vecy', \tilde{\vecy}, \bar{\vecy}$ and joint types $P_{X\tilde{X}\tilde{Y}Y}$, $P_{X'\tilde{Y}_1\tilde{Y}_2Y'}$, $P_{\hat{X}\bar{X}\bar{Y}\hat{Y}}$ and $P_{X''\hat{X}_1\hat{X}_2Y''}$ satisfying the conditions of the lemma, we will show that for each of the  statements~\eqref{codebook:1}-\eqref{codebook:5} and \eqref{codebook:1q}-\eqref{codebook:5q}, the probability that the statement does not hold falls doubly exponentially in $n$. Since, $|\cX^n|$, $|\cY^n|$, $|\cP^n_{\cX\times \cX\times \cY\times \cY}|$, $|\cP^n_{\cX\times \cY\times \cY\times \cY}|$ and $|\cP^n_{\cX\times \cX\times \cX\times \cY}|$ grow at most exponentially in $n$, a union bound will imply that the probability that any of the statements~\eqref{codebook:1}-\eqref{codebook:5} and~\eqref{codebook:1q}-\eqref{codebook:5q} fail for some $\vecx, \vecx', \tilde{\vecx}, \bar{\vecx}, \vecy,\vecy', \tilde{\vecy}, \bar{\vecy}$,\,$P_{X\tilde{X}\tilde{Y}Y}$, $P_{X'\tilde{Y}_1\tilde{Y}_2Y'}$, $P_{\hat{X}\bar{X}\bar{Y}\hat{Y}}$ and $P_{X''\hat{X}_1\hat{X}_2Y''}$ also falls doubly exponentially. This will show the existence of a codebook as required by the lemma. The proof will employ \cite[Lemma A1]{CsiszarN88}, which is reproduced below.
\begin{lemma}\cite[Lemma A1]{CsiszarN88}\label{lemma:A1}
Let $Z_1, \ldots, Z_{N}$ be arbitrary random variables, and let $f_i(Z_1, \ldots, Z_i)$ be arbitrary with $0\leq f_i\leq 1$, $i = 1, \ldots, N$. Then the condition
\begin{align}\label{eq:A1}
E\insq{f_i(Z_1,\ldots, Z_i)|Z_1, \ldots, Z_{i-1}}\leq a \textup{ a.s.,} \quad i = 1, \ldots, N, 
\end{align}
implies that 
\begin{align}\label{eq:A2}
\bbP\inb{\frac{1}{N}\sum_{i = 1}^{N}f_i(Z_1, \ldots, Z_i)>t}\leq \exp{\inb{-N(t-a\log{e})}}.
\end{align}
\end{lemma}

We denote the  type classes of $P_{\one}$ and $P_{\two}$ by $T^n_{\one}$ and $T^n_{\two}$ respectively. Let $\vecX_1, \vecX_2, \ldots, \vecX_{N_{\one}}$ be independent random vectors each uniformly distributed on $T^n_{\one}$ and $\vecY_1, \vecY_2, \ldots, \vecY_{N_{\two}}$ be another set of independent random vectors (independent of $\vecX_1, \vecX_2, \ldots, \vecX_{N_{\one}}$) with each element uniformly distributed on $T^n_{\two}$.  $(\vecX_1, \vecX_2, \ldots, \vecX_{N_{\one}})$ and $(\vecY_1, \vecY_2, \ldots, \vecY_{N_{\two}})$ are the random codebooks for user \one and \two respectively.
Fix $P_{X\tilde{X}\tilde{Y}Y}\in \cP^n_{\cX\times \cX\times \cY\times \cY}$, $P_{X'\tilde{Y}_1\tilde{Y}_2Y'}\in \cP^n_{\cX\times \cY\times \cY\times \cY}$, $P_{\hat{X}\bar{X}\bar{Y}\hat{Y}}\in \cP^n_{\cX\times \cX\times \cY\times \cY}$, $P_{X''\hat{X}_1\hat{X}_2Y''}\in \cP^n_{\cX\times \cX\times \cX\times \cY}$, such that $P_X=P_{\tilde{X}} = P_{X'} = P_{\hat{X}} = P_{\hat{X}_1} = P_{\hat{X}_2} = P_{\one}$ and $P_{\tilde{Y}} = P_{\tilde{Y}_1} = P_{\tilde{Y}_2}= P_{\bar{Y}} = P_{\hat{Y}} = P_{Y''} = P_{\two}$, and $(\vecx,\vecy)\in T^n_{XY}$, $(\vecx',\vecy')\in T^n_{X'Y'}$,  $(\tilde{\vecx}, \tilde{\vecy})\in T^n_{X''Y''}$ and $(\bar{\vecx},\bar{\vecy})\in T^n_{\bar{X}\bar{Y}}$\footnote{Note that the lemma statements hold trivially if these conditions do not hold. For example, \eqref{codebook:1} holds trivially if $\vecy\notin T^n_{Y}.$}. We will often use $(\vecx_1, \ldots, \vecx_{N_{\one}})$ and $(\vecy_1, \ldots, \vecy_{N_{\two}})$ as placeholders  to denote the realizations of for  $(\vecX_1, \ldots, \vecX_{N_{\one}})$ and $(\vecY_1, \ldots, \vecY_{N_{\two}})$ respectively (for example, see \eqref{func:g_i}).

\noindent\underline{\em Analysis of \eqref{codebook:3b}}\\
For $i\in [1:n]$, define
\begin{align}\label{func:g_i}
g_i(\vecy_1, \vecy_2, \ldots, \vecy_i) \defineqq \begin{cases}
    1, & \text{if } \vecy_i\in T^n_{\tilde{Y}|{X}Y}(\vecx,\vecy) \\
    0, & \text{otherwise,}
   \end{cases}
\end{align}
and for  $\tilde{\vecy}\in T^n_{\tilde{Y}|{X}Y}(\vecx,\vecy)$,
\begin{align}\label{func:h_i}
h^{\tilde{\vecy}}_i(\vecx_1, \vecx_2, \ldots, \vecx_i) \defineqq \begin{cases}
    1, & \text{if } \vecx_i\in T^n_{\tilde{X}|\tilde{Y}XY}(\tilde{\vecy}, \vecx,\vecy) \\
    0, & \text{otherwise.}
   \end{cases}
\end{align}
Let events $\cE, \cE_1$ and $\cE^{\tilde{\vecy}}_2$ be defined as 
\begin{align*}
\cE &\defineqq \Big\{\left|\inb{(\tilde{m}_{\one}, \tilde{m}_{\two}):(\vecx, \vecX_{\tilde{m}_{\one}}, \vecY_{\tilde{m}_{\two}}, \vecy)\in T^n_{X\tilde{X}\tilde{Y}Y}}\right|\\
&\qquad> \exp\inb{n\inp{|R_{\one}- I(\tilde{X};\tilde{Y}XY)|^{+}+|R_{\two}-I(\tilde{Y};XY)|^{+}+\epsilon}}\Big\},\\
\cE_1 &\defineqq \inb{\sum_{i = 1}^{N_{\two}}g_i(\vecY_1, \vecY_2, \ldots, \vecY_i)>\exp{\inb{n\inp{|R_{\two}-I(\tilde{Y};XY)|^{+}+\frac{\epsilon}{2}}}}}\text{, and }\\
\text{ for }\tilde{\vecy}\in T^n_{\tilde{Y}|{X}Y}(\vecx,\vecy),\hspace{0.2cm}\cE_2^{\tilde{\vecy}} &\defineqq \inb{\sum_{j = 1}^{N_{\one}} h^{\tilde{\vecy}}_j(\vecX_1, \vecX_2, \ldots, \vecX_j)>\exp{\inb{n\inp{|R_{\one}-I(\tilde{X};\tilde{Y}XY)|^{+}+\frac{\epsilon}{2}}}}}.
\end{align*}
We note that
\begin{align*}
&\left|\inb{(\tilde{m}_{\one}, \tilde{m}_{\two}):(\vecx, \vecX_{\tilde{m}_{\one}}, \vecY_{\tilde{m}_{\two}}, \vecy)\in T^n_{X\tilde{X}\tilde{Y}Y}}\right|\\
&\qquad \qquad \qquad= \sum_{i = 1}^{N_{\two}}g_i(\vecY_1, \vecY_2, \ldots, \vecY_i)\inp{\sum_{j = 1}^{N_{\one}} h^{\vecY_i}_j(\vecX_1, \vecX_2, \ldots, \vecX_j)}.
\end{align*}
Thus, 
\begin{align} \cE \subseteq \inp{\cup_{\tilde{\vecy}\in T_{\tilde{Y}|XY}(\vecx, \vecy)}\cE_2^{\tilde{\vecy}}}\cup \cE_1.\label{eq:subsetting}
\end{align} In order to apply Lemma~\ref{lemma:A1} to \eqref{func:g_i} with $(\vecY_1, \ldots, \vecY_{N_{\two}})$ as the random variables $(Z_1, \ldots, Z_N)$, we note that 
\begin{align*}
E\insq{g_i(\vecY_1,\ldots, \vecY_i)|\vecY_1, \ldots, \vecY_{i-1}} = &\bbP\inb{\vecY_i\in T^n_{\tilde{Y}|{X}Y}(\vecx,\vecy)}\\
=&\frac{|T^n_{\tilde{Y}|{X}Y}(\vecx,\vecy)|}{|T^n_{\two}|}\\
\stackrel{\text{(a)}}{\leq}&\frac{\exp\inp{nH(\tilde{Y}|{X}Y)}}{(n+1)^{-|\cY|}\exp\inp{nH(\tilde{Y})}}\\
=&(n+1)^{|\cY|}\exp{\inp{-nI(\tilde{Y};XY)}},
\end{align*}
where (a) follows from \eqref{eq:type_property2},\,\eqref{eq:type_property3} and because $P_{\two}=P_{\tilde{Y}}$ (which implies $|T^n_{\two}| = |T^n_{\tilde{Y}}|$). Taking $t =\frac{1}{N_{\two}}\exp{\inb{n\inp{|R_{\two}-I(\tilde{Y};XY)|^{+}+\frac{\epsilon}{2}}}}$ and $n\geq n_1(\epsilon)$, where $n_1(\epsilon) \defineqq \min{\inb{n:(n+1)^{|\cY|}\log{e}<\frac{1}{2}\exp(\frac{n\epsilon}{2})}},$
we see that $N_{\two}(t-a\log{e})\geq (1/2)\exp(\frac{n\epsilon}{2})$. Using~\eqref{eq:A2}, this gives us
\begin{align} \label{eq:e1}
  \bbP(\cE_1) \leq \exp\inb{-\frac{1}{2}\exp\inb{\frac{n\epsilon}{2}}}.
\end{align}
Similarly, we apply Lemma~\ref{lemma:A1} to \eqref{func:h_i} with $(\vecX_1, \ldots, \vecX_{N_{\one}})$ as the random variables $(Z_1, \ldots, Z_N)$. We can show that $a= (n+1)^{|\cX|}\exp{\inp{-nI(\tilde{X};\tilde{Y}XY)}}$ satisfies \eqref{eq:A1}. We take $t =\frac{1}{N_{\one}}\exp{\inb{n\inp{|R_{\one}-I(\tilde{X};\tilde{Y}XY)|^{+}+\frac{\epsilon}{2}}}}$ and $n\geq n_2(\epsilon)$ where $n_2(\epsilon) \defineqq \min{\inb{n:(n+1)^{|\cX|}\log{e}<\frac{1}{2}\exp(\frac{n\epsilon}{2})}}$. This gives $N_{\one}(t-a\log{e})\geq (1/2)\exp(\frac{n \epsilon}{2})$ which, when plugged into \eqref{eq:A2}, gives 
\begin{align}\label{eq:e2}
\bbP\inp{\cE_2^{\tilde{\vecy}}} \leq \exp\inb{-\frac{1}{2}\exp\inb{\frac{n\epsilon}{2}}}.
\end{align}
Using \eqref{eq:subsetting}, \eqref{eq:e1} and \eqref{eq:e2}, 
\begin{align}\label{eq:prob1}
\bbP\inp{\cE}\leq \inp{|T^n_{\tilde{Y}|XY}(\vecx, \vecy)|+1}\exp\inb{-\frac{1}{2}\exp\inb{\frac{n\epsilon}{2}}}.
\end{align}
This shows that the probability that \eqref{codebook:3b} does not hold falls doubly exponentially.

\noindent\underline{\em Analysis of \eqref{codebook:1}} \\
We will use the same arguments as used in obtaining \eqref{eq:e2}. We replace $\tilde{X}$ with $X$ and $(\tilde{Y}, X, Y)$ with $Y$, to obtain
\begin{align}\label{eq:intermediate:app1:2}
\bbP\inb{\left|\inb{m_{\one}: (\vecX_{\mo}, \vecy)\in T_{XY}^n }\right|> \exp\inb{n\inp{|R_{\one}-I(X;Y)|^{+}+\frac{\epsilon}{2}}}}\leq \exp\inb{-\frac{1}{2}\exp\inb{\frac{n \epsilon}{2}}}.
\end{align}
So, 
\begin{align*}
\bbP\inb{\frac{1}{N_{\one}}\left|\inb{m_{\one}: (\vecX_{\mo}, \vecy)\in T_{XY}^n }\right|> \exp\inb{n\inp{|R_{\one}-I(X;Y)|^{+}-R_{\one}+\frac{\epsilon}{2}}}}\leq \exp\inb{-\frac{1}{2}\exp\inb{\frac{n \epsilon}{2}}}.
\end{align*}
We are given that $I(X;Y)> \epsilon$. When $R_{\one}>I(X;Y)$, we have  $|R_{\one}-I(X;Y)|^{+}$$-R_{\one}+\frac{\epsilon}{2}$ = $\frac{\epsilon}{2} - I(X;Y)\leq  -\frac{\epsilon}{2}$. When $R_{\one}\leq I(X;Y)$, we have  $|R_{\one}-I(X;Y)|^{+}$$-R_{\one}+\frac{\epsilon}{2}$ = $\frac{\epsilon}{2} - R_{\one}\leq  -\frac{\epsilon}{2}$ (because $R_{\one}\geq \epsilon$). Thus, $|R_{\one}-I(X;Y)|^{+}-R_{\one}+\frac{\epsilon}{2} \leq  -\frac{\epsilon}{2}$ and 
\begin{align*}
&\bbP\inb{\frac{1}{N_{\one}}\left|\inb{m_{\one}: (\vecX_{\mo}, \vecy)\in T_{XY}^n }\right|> \exp\inb{\frac{-n\epsilon}{2}}}\\
&\leq \bbP\inb{\frac{1}{N_{\one}}\left|\inb{m_{\one}: (\vecX_{\mo}, \vecy)\in T_{XY}^n }\right|> \exp\inb{n\inp{|R_{\one}-I(X;Y)|^{+}-R_{\one}+\frac{\epsilon}{2}}}}\\
&\leq \exp\inb{-\frac{1}{2}\exp\inb{\frac{n \epsilon}{2}}}.
\end{align*}

\noindent\underline{\em Analysis of \eqref{codebook:2b}}\\
For $i\in [1:N_{\one}]$, let $A_i$ be the set of indices $(j,k)\in [1:N_{\one}]\times[1:N_{\two}],\, j<i$ such that $(\vecx_j, \vecy_k)\in T^{n}_{\tilde{X}\tilde{Y}|Y}(\vecy)$ provided $|A_i|\leq \exp{\inb{n\inp{|R_{\one}- I(\tilde{X};\tilde{Y}Y)|^{+}+|R_{\two}-I(\tilde{Y};Y)|^{+}} +\frac{\epsilon}{4}}}$. Otherwise, $A_i = \emptyset$. That is,
{\footnotesize
\begin{align}\label{A_1}
A_i\defineqq
\begin{cases}
\inb{(j,k)\in [1:N_{\one}]\times[1:N_{\two}]: j<i,  (\vecx_j, \vecy_k)\in T^{n}_{\tilde{X}\tilde{Y}|Y}(\vecy)},&\text{if }|A_i|\leq \exp{\inb{n\inp{|R_{\one}- I(\tilde{X};\tilde{Y}Y)|^{+}+|R_{\two}-I(\tilde{Y};Y)|^{+}} +\frac{\epsilon}{4}}},\\
\emptyset,&\text{ otherwise}.
\end{cases}
\end{align}}

For $\vecy_1,\vecy_2,\ldots,\vecy_{N_{\two}}\in \cY^n$, let
\begin{align}\label{func:f_i}
f_i^{[\vecy_1,\vecy_2,\ldots,\vecy_{N_{\two}}]}\inp{\vecx_1,\vecx_2,\ldots, \vecx_i} \defineqq \begin{cases}
    1, & \text{if } \vecx_i\in \cup_{(j,k)\in A_i} T^n_{X|\tilde{X}\tilde{Y}Y}(\vecx_j,\vecy_k,\vecy) \\
    0, & \text{otherwise.}
   \end{cases}
\end{align}
Then, 
\begin{align}
\bbP&\inb{\sum_{i=1}^{N_\one}f^{[\vecY_1,\vecY_2,\ldots,\vecY_{N_{\two}}]}_{i}\inp{\vecX_1,\vecX_2,\ldots, \vecX_i}\neq \left|\inb{i:\vecX_i\in T^n_{X|\tilde{X}\tilde{Y}Y}(\vecX_j,\vecY_k,\vecy) \text{ for some }j<i \text{ and some }k}\right|}\nonumber\\
=&\bbP\inb{\left|\inb{(\tilde{m}_{\one}, \tilde{m}_{\two}):(\vecX_{\tilde{m}_{\one}}, \vecY_{\tilde{m}_{\two}}, \vecy)\in T^n_{\tilde{X}\tilde{Y}Y}}\right|> \exp{\inb{n\inp{|R_{\one}- I(\tilde{X};\tilde{Y}Y)|^{+}+|R_{\two}-I(\tilde{Y};Y)|^{+}} +\frac{\epsilon}{4}}}}\nonumber\\
\leq& \inp{|T_{\tilde{Y}|Y}(\vecy)|+1}\exp\inb{-\frac{1}{2}\exp\inb{\frac{n\epsilon}{8}}},\label{eq:conditioning}
\end{align}
where the last inequality can be obtained from the definition of event $\cE$ and \eqref{eq:prob1} where we replace $(X, Y)$ with $Y$, $(\vecx, \vecy)$ with $\vecy$, and $\epsilon$ with $\epsilon/4$.

For $\vecy_i\in T^n_{\two}, \, i = 1, \ldots, N_{\two}$, we will now apply Lemma~\ref{lemma:A1} on $f_i^{[\vecy_1,\vecy_2,\ldots,\vecy_{N_{\two}}]}$ with $(\vecX_1,\ldots,\vecX_{N_{\one}})$ as the random variables $(Z_1, \ldots, Z_N)$. We will first compute the value of $a$ in \eqref{eq:A1}. We note that, for $i\in [1:N_{\one}]$, \\
 $E\insq{f^{[\vecy_1,\vecy_2,\ldots,\vecy_{N_{\two}}]}_{i}\inp{\vecX_1,\vecX_2,\ldots, \vecX_i}\Big|\vecX_1,\vecX_2,\ldots, \vecX_{i-1}}$, being a random function of $(\vecX_1,\vecX_2,\ldots, \vecX_{i-1})$, is a random variable. We will compute it for $(\vecX_1,\vecX_2,\ldots, \vecX_{i-1})= (\vecx_1,\vecx_2,\ldots, \vecx_{i-1})$.

\begin{align*}
&E\insq{f^{[\vecy_1,\vecy_2,\ldots,\vecy_{N_{\two}}]}_{i}\inp{\vecX_1,\vecX_2,\ldots, \vecX_i}\Big|(\vecX_1,\vecX_2,\ldots, \vecX_{i-1}) = (\vecx_1,\vecx_2,\ldots, \vecx_{i-1})}\\
&=\bbP\inp{\vecX_i\in \cup_{(j,k)\in A_i} T^n_{X|\tilde{X}\tilde{Y}Y}(\vecx_j,\vecy_k,\vecy)}\\
&\stackrel{\text{(a)}}{\leq} |A_{i}|\frac{\exp\inb{nH(X|\tilde{X}\tilde{Y}Y)}}{(n+1)^{-|\cX|}\exp(nH(X))}\\
& = (n+1)^{|\cX|}\exp\inb{n\inp{|R_{\one}- I(\tilde{X};\tilde{Y}Y)|^{+}+|R_{\two}-I(\tilde{Y};Y)|^{+}} -I(X;\tilde{X}\tilde{Y}Y)+\frac{\epsilon}{4}},
\end{align*}
where (a) follows from \eqref{eq:type_property2},\,\eqref{eq:type_property3}, a union bound over $(j,k)\in A_{i}$ and by noting that $|T^n_{\one}| = |T^n_{X}|$. For all $i\in [1:N_{\one}]$, this upper bound holds for every realization of $(\vecX_1,\vecX_2,\ldots, \vecX_{i-1})$. Thus, in \eqref{eq:A1}, we may take $a=(n+1)^{|\cX|}\exp$$\Big\{n\Big(|R_{\one}- I(\tilde{X};\tilde{Y}Y)|^{+}$$+|R_{\two}-I(\tilde{Y};Y)|^{+}\Big) -I(X;\tilde{X}\tilde{Y}Y)+\frac{\epsilon}{4}\Big\}$.
If $I(X;\tilde{X}\tilde{Y}Y) >|R_{\one}- I(\tilde{X};\tilde{Y}Y)|^{+}+|R_{\two}-I(\tilde{Y};Y)|^{+} +\epsilon$ (as postulated in~\eqref{codebook:2b}), \eqref{eq:A1} holds with $a = (n+1)^{|\cX|}\exp\inb{-\frac{3}{4}n\epsilon}$. For $t = \exp\inb{\frac{-n\epsilon}{2}}$ and $n\geq n_2(\epsilon)$ with $n_2(\epsilon) \defineqq \min{\inb{n:(n+1)^{|\cX|}\log e<\frac{1}{2}\exp{\inb{\frac{n\epsilon}{4}}}}},$ 
we get
\begin{align}
\bbP&\inb{\frac{1}{N_{\one}}\sum_{i=1}^{N_{\one}}f_i^{[\vecy_1,\vecy_2,\ldots,\vecy_{N_{\two}}]}\inp{\vecX_1,\vecX_2,\ldots, \vecX_i}>\exp\inb{\frac{-n\epsilon}{2}}}\nonumber \\
&\leq \exp\inb{-\frac{N_{\one}}{2}\exp\inb{-\frac{n\epsilon}{2}}}\nonumber\\
&\leq \exp\inb{-\frac{1}{2}\exp\inb{\frac{n\epsilon}{2}}}, \nonumber
\end{align}
where the last inequality uses the assumption that $N_\one\geq \exp\inb{n\epsilon}$.  Averaging over $(\vecY_1, \ldots, \vecY_{\two})$, we get
\begin{align}
\bbP&\inb{\frac{1}{N_{\one}}\sum_{i=1}^{N_{\one}}f_i^{[\vecY_1,\vecY_2,\ldots,\vecY_{N_{\two}}]}\inp{\vecX_1,\vecX_2,\ldots, \vecX_i}>\exp\inb{\frac{-n\epsilon}{2}}}\nonumber \\
&\leq \exp\inb{-\frac{1}{2}\exp\inb{\frac{n\epsilon}{2}}}. \label{eq:prob}
\end{align}
Let events $\cF_1$ and $\cF_2$ be defined as 
\begin{align*}
\cF_1 &\defineqq \inb{\frac{1}{N_{\one}}\left|\inb{i:\vecX_i\in T^n_{X|\tilde{X}\tilde{Y}Y}(\vecX_j, \vecY_k, \vecy)\text{ for some } j<i \text{ and some }k}\right|>\exp\inb{\frac{-n\epsilon}{2}}},\\
\cF_2 &\defineqq \inb{\sum_{i=1}^{N_\one}f^{[\vecY_1,\vecY_2,\ldots,\vecY_{N_{\two}}]}_{i}\inp{\vecX_1,\vecX_2,\ldots, \vecX_i}\neq \left|\inb{i:\vecX_i\in T^n_{X|\tilde{X}\tilde{Y}Y}(\vecX_j,\vecY_k,\vecy) \text{ for some }j<i \text{ and some }k}\right|},\\
\cF_3 &\defineqq \inb{\sum_{i=1}^{N_\one}f^{[\vecY_1,\vecY_2,\ldots,\vecY_{N_{\two}}]}_{i}\inp{\vecX_1,\vecX_2,\ldots, \vecX_i}>\exp\inb{\frac{-n\epsilon}{2}}}.
\end{align*}
We are interested in $\bbP\inp{\cF_1}$. We see that 
\begin{align*}
\bbP\inp{\cF_1}& = \bbP\inp{\cF_1\cap \cF_2} +\bbP\inp{\cF_1\cap \cF^{c}_2}\\
& \leq \bbP\inp{\cF_2} +\bbP\inp{\cF_1\cap \cF^{c}_2}\\
& \leq \bbP\inp{\cF_2} +\bbP\inp{\cF_3}\\
&\stackrel{\text{(a)}}{\leq}\inp{|T_{\tilde{Y}|Y}(\vecy)|+1}\exp\inb{-\frac{1}{2}\exp\inb{\frac{n\epsilon}{8}}}+\exp\inb{-\frac{1}{2}\exp\inb{\frac{n\epsilon}{2}}}\\
& \leq \inp{|T_{\tilde{Y}|Y}(\vecy)|+2}\exp\inb{-\frac{1}{2}\exp\inb{\frac{n\epsilon}{8}}},
\end{align*}
where (a) follows from \eqref{eq:conditioning} and \eqref{eq:prob}. Thus, 
\begin{align*}
\bbP&\inp{\frac{1}{N_{\one}}\left|\inb{i:\vecX_i\in T^n_{X|\tilde{X}\tilde{Y}Y}(\vecX_j, \vecY_k, \vecy)\text{ for some}, j<i \text{ and }k}\right|>\exp\inb{\frac{-n\epsilon}{2}}}\\
&\leq \inp{|T_{\tilde{Y}|Y}(\vecy)|+2}\exp\inb{-\frac{1}{2}\exp\inb{\frac{n\epsilon}{8}}}.
\end{align*}
By symmetry, we get the same upper bound when $j>i$. Thus,
\begin{align*}
\bbP&\inb{\frac{\left|\inb{\mo:(\vecX_{\mo}, \vecX_{\tilde{m}_{\one}}, \vecY_{\mt}, \vecy)\in T^n_{X\tilde{X}\tilde{Y}Y} \text{ for some }\tilde{m}_{\one}\neq \mo\text{ and some }\mt} \right|}{N_{\one}} >\exp\inb{-n\epsilon/2} }\\
&< 2\inp{|T_{\tilde{Y}|Y}(\vecy)|+2}\exp\inb{-\frac{1}{2}\exp\inb{\frac{n\epsilon}{8}}}.
\end{align*}
This completes the analysis for \eqref{codebook:2b}. 

\noindent\underline{\em Analysis of \eqref{codebook:5}}\\
We will split the analysis in two parts as suggested by the inequalities below, where the first inequality is a union bound,
\begin{align*}
&\bbP\inb{\left|\inb{(\tilde{m}_{\two 1}, \tilde{m}_{\two 2}):(\vecx', \vecY_{\tilde{m}_{\two 1}}, \vecY_{\tilde{m}_{\two 2}}, \vecy')\in T^n_{X'\tilde{Y}_1\tilde{Y}_2Y'} }\right|
> \exp\inb{n\inp{|R_{\two}-I(\tilde{Y}_1;X'Y')|^{+}+|R_{\two}-I(\tilde{Y}_2;\tilde{Y}_1 X'Y')|^{+}+\epsilon}}}\\
&\leq \bbP\Big\{\left|\inb{(\tilde{m}_{\two 1}, \tilde{m}_{\two 2}):\tilde{m}_{\two 1}\neq  \tilde{m}_{\two 2},(\vecx', \vecY_{\tilde{m}_{\two 1}}, \vecY_{\tilde{m}_{\two 2}}, \vecy')\in T^n_{X'\tilde{Y}_1\tilde{Y}_2Y'} }\right|\\
 &\qquad\qquad>\frac{1}{2}\exp\inb{n\inp{|R_{\two}-I(\tilde{Y}_1;X'Y')|^{+}+|R_{\two}-I(\tilde{Y}_2;\tilde{Y}_1 X'Y')|^{+}+\epsilon}}\Big\}\\
&+\bbP\Big\{\left|\inb{(\tilde{m}_{\two 1}, \tilde{m}_{\two 2}):\tilde{m}_{\two 1}= \tilde{m}_{\two 2},(\vecx', \vecY_{\tilde{m}_{\two 1}}, \vecY_{\tilde{m}_{\two 2}}, \vecy')\in T^n_{X'\tilde{Y}_1\tilde{Y}_2Y'} }\right|\\
&\qquad\qquad> \frac{1}{2}\exp\inb{n\inp{|R_{\two}-I(\tilde{Y}_1;X'Y')|^{+}+|R_{\two}-I(\tilde{Y}_2;\tilde{Y}_1 X'Y')|^{+}+\epsilon}}\Big\}\\
&\leq \bbP\Big\{\left|\inb{(\tilde{m}_{\two 1}, \tilde{m}_{\two 2}):\tilde{m}_{\two 1}\neq  \tilde{m}_{\two 2},(\vecx', \vecY_{\tilde{m}_{\two 1}}, \vecY_{\tilde{m}_{\two 2}}, \vecy')\in T^n_{X'\tilde{Y}_1\tilde{Y}_2Y'} }\right|\\
 &\qquad\qquad>\exp\inb{n\inp{|R_{\two}-I(\tilde{Y}_1;X'Y')|^{+}+|R_{\two}-I(\tilde{Y}_2;\tilde{Y}_1 X'Y')|^{+}+\epsilon'}}\Big\}\\
&+\bbP\Big\{\left|\inb{(\tilde{m}_{\two 1}, \tilde{m}_{\two 2}):\tilde{m}_{\two 1}= \tilde{m}_{\two 2},(\vecx', \vecY_{\tilde{m}_{\two 1}}, \vecY_{\tilde{m}_{\two 2}}, \vecy')\in T^n_{X'\tilde{Y}_1\tilde{Y}_2Y'} }\right|\\
&\qquad\qquad> \exp\inb{n\inp{|R_{\two}-I(\tilde{Y}_1;X'Y')|^{+}+|R_{\two}-I(\tilde{Y}_2;\tilde{Y}_1 X'Y')|^{+}+\epsilon'}}\Big\}
\end{align*}
for $\epsilon' = \epsilon/2$ and sufficiently large $n$.
We first consider the first term  where $\tilde{m}_{\two 1}\neq \tilde{m}_{\two 2}$.
We follow arguments similar to those for \eqref{codebook:3b} and get an upper bound. We define
\begin{align}\label{func:tg}
\tilde{g}_i(\vecy_1, \vecy_2, \ldots, \vecy_i) \defineqq \begin{cases}
    1, & \text{if } \vecy_i\in T^n_{\tilde{Y}_1|{X}'Y'}(\vecx',\vecy') \\
    0, & \text{otherwise.}
   \end{cases}
\end{align}
For $\tilde{\vecy}\in T^n_{\tilde{Y}_1|{X}'Y'}(\vecx',\vecy')$,
\begin{align}\label{func:th}
\tilde{h}^{\tilde{\vecy}}_i(\vecy_1, \vecy_2, \ldots, \vecy_i) \defineqq \begin{cases}
    1, & \text{if } \vecy_i\in T^n_{\tilde{Y}_2|\tilde{Y}_1X'Y'}(\tilde{\vecy}, \vecx',\vecy') \\
    0, & \text{otherwise.}
   \end{cases}
\end{align}
Define events $\tilde{\cE}$ and $ \tilde{\cE}_1$ as 
\begin{align*}
\tilde{\cE} = &\Big\{\left|\inb{(\tilde{m}_{\two 1}, \tilde{m}_{\two 2}):\tilde{m}_{\two 1}\neq \tilde{m}_{\two 2},\,(\vecx', \vecY_{\tilde{m}_{\two 1}}, \vecY_{\tilde{m}_{\two 2}}, \vecy')\in T^n_{X'\tilde{Y}_1\tilde{Y}_2Y'}}\right|\\
&\qquad> \exp\inb{n\inp{|R_{\two}- I(\tilde{Y}_2;X'Y'\tilde{Y}_1)|^{+}+|R_{\two}-I(\tilde{Y}_1;X'Y')|^{+}+\epsilon'}}\Big\},\\
\tilde{\cE}_1 =& \inb{\sum_{i = 1}^{N_{\two}}\tilde{g}_i(\vecY_1, \vecY_2, \ldots, \vecY_i)>\exp{\inb{n\inp{|R_{\two}-I(\tilde{Y}_1;X'Y')|^{+}+\frac{\epsilon'}{2}}}}}.
\end{align*}
Let $R'_{\two}\defineqq \frac{\log{\inp{N_{\two}-1}}}{n} =\frac{\log{\inp{2^{nR_{\two}}-1}}}{n} $. For $i\in [1:N_{\two}]$ and $\tilde{\vecy}\in T^n_{\tilde{Y}_1|{X}'Y'}(\vecx',\vecy')$, define events $\tilde{\cE}_{2}^{i, \tilde{\vecy}}$ and $\tilde{\cE}_{2,\dagger}^{i, \tilde{\vecy}}$ as 
\begin{align*}
\tilde{\cE}_{2}^{i, \tilde{\vecy}} &= \inb{\sum_{j = 1, j\neq i}^{N_{\two}} \tilde{h}^{\tilde{\vecy}}_j(\vecY_1, \vecY_2, \ldots, \vecY_j)>\exp{\inb{n\inp{|R_{\two}-I(\tilde{Y}_2;\tilde{Y}_1X'Y')|^{+}+\frac{\epsilon'}{2}}}}}\\
\tilde{\cE}_{2,\dagger}^{i, \tilde{\vecy}} &= \inb{\sum_{j = 1, j\neq i}^{N_{\two}} \tilde{h}^{\tilde{\vecy}}_j(\vecY_1, \vecY_2, \ldots, \vecY_j)>\exp{\inb{n\inp{|R'_{\two}-I(\tilde{Y}_2;\tilde{Y}_1X'Y')|^{+}+\frac{\epsilon'}{2}}}}}
\end{align*}
Note that 
\begin{align*}
&\left|\inb{(\tilde{m}_{\two 1}, \tilde{m}_{\two 2}):\tilde{m}_{\two 1}\neq  \tilde{m}_{\two 2}\text{ and }(\vecx', \vecY_{\tilde{m}_{\two 1}}, \vecY_{\tilde{m}_{\two 2}}, \vecy')\in T^n_{X'\tilde{Y}_1\tilde{Y}_2Y'}}\right|\\
&\qquad \qquad \qquad= \sum_{i = 1}^{N_{\two}}\tilde{g}_i(\vecY_1, \vecY_2, \ldots, \vecY_i)\inp{\sum_{j = 1, j\neq i}^{N_{\two}} \tilde{h}^{\vecY_i}_j(\vecY_1, \vecY_2, \ldots, \vecY_j)}.
\end{align*}
We see that 
\begin{align*}
\tilde{\cE} &\subseteq \inp{\cup_{i\in 2^{nR_{\two}}}\cup_{\tilde{\vecy}\in T_{\tilde{Y}_1|X'Y'}(\vecx', \vecy')}\tilde{\cE}_2^{i,\tilde{\vecy}}}\cup \tilde{\cE_1}. 
\end{align*} Notice that we need to only consider union over $\tilde{\vecy}\in T_{\tilde{Y}_1|X'Y'}(\vecx', \vecy')$ and not  $\tilde{\vecy}\in \cY^n$. 
Since, $\tilde{\cE}_{2}^{i, \tilde{\vecy}}\subseteq \tilde{\cE}_{2,\dagger}^{i, \tilde{\vecy}}$ for all $i\in [1:N_{\two}]$,
\begin{align}
\tilde{\cE}&\subseteq \inp{\cup_{i\in 2^{nR_{\two}}}\cup_{\tilde{\vecy}\in T_{\tilde{Y}_1|X'Y'}(\vecx', \vecy')}\tilde{\cE}_{2,\dagger}^{i,\tilde{\vecy}}}\cup \tilde{\cE_1}.\label{eq:intermediate:appA:1}
\end{align}

We apply Lemma~\ref{lemma:A1} to \eqref{func:tg} with $(\vecY_1, \ldots, \vecY_{N_{\two}})$ as the random variables $(Z_1, \ldots, Z_N)$. We can show that $a= (n+1)^{|\cY|}\exp{\inp{-nI(\tilde{Y}_1;X'Y')}}$ satisfies \eqref{eq:A1}. We take $t =\frac{1}{N_{\two}}\exp{\inb{n\inp{|R_{\two}-I(\tilde{Y}_1;X'Y')|^{+}+\frac{\epsilon'}{2}}}}$ and $n\geq n_1(\epsilon')$ (recall that $n_1(\epsilon') = \min{\inb{n:(n+1)^{|\cY|}\log{e}<\frac{1}{2}\exp(\frac{n\epsilon'}{2})}}$). This gives $N_{\two}(t-a\log{e})\geq (1/2)\exp(\frac{n \epsilon'}{2})$ which, when plugged into \eqref{eq:A2}, gives 
\begin{align}\label{eq:te1}
\bbP\inp{\tilde{\cE}_1} \leq \exp\inb{-\frac{1}{2}\exp\inb{\frac{n\epsilon'}{2}}}.
\end{align}

Similarly, for $i\in [1:N_{\two}]$, we can apply Lemma~\ref{lemma:A1} to \eqref{func:th} with $(\vecY_1, \ldots,\vecY_{i-1}, \vecY_{i+1}, \vecY_{N_{\two}})$ as the random variables $(Z_1, \ldots, Z_N)$. We can show that $a= (n+1)^{|\cY|}\exp{\inp{-nI(\tilde{Y}_2;\tilde{Y}_1X'Y')}}$ satisfies \eqref{eq:A1}. \\Choose, $t =\frac{1}{N_{\two}-1}\exp{\inb{n\inp{|R'_{\two}-I(\tilde{Y}_2;\tilde{Y}_1X'Y')|^{+}+\frac{\epsilon'}{2}}}}$ and $n\geq n_1(\epsilon')$ to obtain

\begin{align}\label{eq:te2}
\bbP\inp{\tilde{\cE}_{2,\dagger}^{i, \tilde{\vecy}}} \leq \exp\inb{-\frac{1}{2}\exp\inb{\frac{n\epsilon'}{2}}}, \,\text{ for } \tilde{\vecy}\in T_{\tilde{Y}_1|X'Y'}(\vecx', \vecy').
\end{align}
Using \eqref{eq:intermediate:appA:1}, \eqref{eq:te1} and \eqref{eq:te2}, we see that
\begin{align*}
\bbP\inp{\tilde{\cE}}\leq \inp{2^{nR_{\two}}|T_{\tilde{Y}_1|X'Y'}(\vecx', \vecy')|+1}\exp\inb{-\frac{1}{2}\exp\inb{\frac{n\epsilon'}{2}}}.
\end{align*}
When $\tilde{m}_{\two 1}= \tilde{m}_{\two 2}$ and $P_{\tilde{Y}_1\tilde{Y}_2}$ is such that $\tilde{Y}_1\neq \tilde{Y}_2$,
\begin{align*}
\left|\inb{(\tilde{m}_{\two 1}, \tilde{m}_{\two 2}):(\vecx', \vecY_{\tilde{m}_{\two 1}}, \vecY_{\tilde{m}_{\two 2}}, \vecy')\in T^n_{X'\tilde{Y}_1\tilde{Y}_2Y'}}\right| = 0 \text{ w.p. }1.
\end{align*} When $\tilde{m}_{\two 1}= \tilde{m}_{\two 2}$ and $P_{\tilde{Y}_1\tilde{Y}_2}$ is such that $\tilde{Y}_1= \tilde{Y}_2$,
\begin{align*}
&\bbP\inb{\left|\inb{(\tilde{m}_{\two 1}, \tilde{m}_{\two 2}):(\vecx', \vecY_{\tilde{m}_{\two 1}}, \vecY_{\tilde{m}_{\two 2}}, \vecy')\in T^n_{X'\tilde{Y}_1\tilde{Y}_2Y'}}\right|> \exp\inb{n\inp{|R_{\two}- I(\tilde{Y}_2;X'Y'\tilde{Y}_1)|^{+}+|R_{\two}-I(\tilde{Y}_1;X'Y')|^{+}+\epsilon'}}}\\
&\leq \bbP\inb{\left|\inb{\tilde{m}_{\two 1}:(\vecx',  \vecY_{\tilde{m}_{\two 1}}, \vecy')\in T^n_{X'\tilde{Y}_1Y'}}\right|> \exp\inp{n\inp{|R_{\two}-I(\tilde{Y}_1;X'Y')|^{+}+\epsilon'}}}\\
&\leq \exp\inb{-\frac{1}{2}\exp(n\epsilon')}.
\end{align*}
The last  inequality follows from \eqref{eq:intermediate:app1:2} where we substitute $X$ with $\tilde{Y}_1$, $Y$ with $(X',Y')$, $\vecy$ with $(\vecx', \vecy')$, $R_{\one}$ with $R_{\two}$ and $\epsilon/2$ with $\epsilon'$. 
Thus, 
\begin{align}
&\bbP\inb{\left|\inb{(\tilde{m}_{\two 1}, \tilde{m}_{\two 2}):(\vecx', \vecY_{\tilde{m}_{\two 1}}, \vecY_{\tilde{m}_{\two 2}}, \vecy')\in T^n_{X'\tilde{Y}_1\tilde{Y}_2Y'} }\right|
> \exp\inb{n\inp{|R_{\two}-I(\tilde{Y}_1;X'Y')|^{+}+|R_{\two}-I(\tilde{Y}_2;\tilde{Y}_1 X'Y')|^{+}+\epsilon'}}}\nonumber\\
&\leq \inp{2^{nR_{\two}}|T_{\tilde{Y}|X'Y'}(\vecx', \vecy')|+1}\exp\inb{-\frac{1}{2}\exp\inb{\frac{n\epsilon'}{2}}} + \exp\inb{-\frac{1}{2}\exp(n\epsilon')}\nonumber\\
&\leq \inp{2^{nR_{\two}}|T_{\tilde{Y}|X'Y'}(\vecx', \vecy')|+1}\exp\inb{-\frac{1}{2}\exp\inb{\frac{n\epsilon}{4}}} + \exp\inb{-\frac{1}{2}\exp\inb{\frac{n\epsilon}{2}}}.\label{eq:prob2}
\end{align}
This completes the analysis of \eqref{codebook:5}.\\

\noindent\underline{\em Analysis of \eqref{codebook:4}}\\
Let $A$ be the set of indices $(j,k)\in [1:N_{\two}]\times[1:N_{\two}]$  such that $(\vecy_j, \vecy_k)\in T^{n}_{\tilde{Y}_{1}\tilde{Y}_2|Y'}(\vecy')$ provided \\$|A|\leq \exp{\inb{n\inp{|R_{\two}- I(\tilde{Y}_2;\tilde{Y}_1Y')|^{+}+|R_{\two}-I(\tilde{Y}_1;Y')|^{+}} +\frac{\epsilon}{4}}}$. Otherwise, $A = \emptyset$.
Let 
\begin{align*}
\tilde{f}_i^{[\vecy_1,\vecy_2,\ldots,\vecy_{N_{\two}}]}\inp{\vecx_1,\vecx_2,\ldots, \vecx_i} = \begin{cases}
    1, & \text{if } \vecx_i\in \cup_{(j,k)\in A} T^n_{X'|\tilde{Y}_1\tilde{Y}_2Y'}(\vecy_j,\vecy_k,\vecy') \\
    0, & \text{otherwise.}
   \end{cases}
\end{align*}
\begin{align}
\bbP&\inb{\sum_{i=1}^{N_\one}\tilde{f}^{[\vecY_1,\vecY_2,\ldots,\vecY_{N_{\two}}]}_{i}\inp{\vecX_1,\vecX_2,\ldots, \vecX_i}\neq \left|\inb{i:\vecX_i\in T^n_{X|\tilde{Y}_1\tilde{Y}_2Y'}(\vecY_j,\vecY_k,\vecy') \text{ for some }j\text{ and } k \right|}}\nonumber\\
=&\bbP\Bigg\{\left|\inb{(\tilde{m}_{\two1}, \tilde{m}_{\two2}):(\vecY_{\tilde{m}_{\two 1}}, \vecY_{\tilde{m}_{\two 2}}, \vecy')\in T^n_{\tilde{Y}_1\tilde{Y}_2Y'}}\right|\\
&\qquad \qquad \qquad\qquad> \exp{\inb{n\inp{|R_{\two}- I(\tilde{Y}_1;\tilde{Y}_2Y')|^{+}+|R_{\two}-I(\tilde{Y}_1;Y')|^{+}} +\frac{\epsilon}{4}}}\Bigg\}\nonumber\\
\leq& \inp{2^{nR_{\two}}|T_{\tilde{Y}|Y'}(\vecy')|+1}\exp\inb{-\frac{1}{2}\exp\inb{\frac{n\epsilon}{8}}}+ \exp\inb{-\frac{1}{2}\exp\inb{{n\epsilon}}},\label{eq:conditioning2}
\end{align}
where last inequality follows from \eqref{eq:prob2} by replacing $(\vecx', \vecy')$ with $\vecy'$, $(\vecX',\vecY')$ with $\vecY'$ and $\frac{\epsilon}{2}$ (or $\epsilon'$) with $\frac{\epsilon}{4}$.

For $\vecy_1,\vecy_2,\ldots,\vecy_{N_{\two}}\in \cY^n$, we will apply Lemma~\ref{lemma:A1} on $\tilde{f}_i^{[\vecy_1,\vecy_2,\ldots,\vecy_{N_{\two}}]}$ with $(\vecX_1,\ldots,\vecX_{N_{\one}})$ as the random variables $(Z_1, \ldots, Z_N)$. We will first compute the value of $a$ in \eqref{eq:A1}. 

\begin{align*}
&E\insq{\tilde{f}^{[\vecy_1,\vecy_2,\ldots,\vecy_{N_{\two}}]}_{i}\inp{\vecX_1,\vecX_2,\ldots, \vecX_i}\Big|(\vecX_1,\vecX_2,\ldots, \vecX_{i-1})}\\
&=\bbP\inp{\vecX_i\in \cup_{(j,k)\in A} T^n_{X'|\tilde{Y}_1\tilde{Y}_2Y'}(\vecy_j,\vecy_k,\vecy')}\\
&\stackrel{\text{(a)}}{\leq} |A|\frac{\exp\inb{nH(X'|\tilde{Y}_1\tilde{Y}_2Y')}}{(n+1)^{-|\cX|}\exp(nH(X'))}\\
& \leq  (n+1)^{|\cX|}\exp\inb{n\inp{|R_{\two}- I(\tilde{Y}_2;\tilde{Y}_1Y')|^{+}+|R_{\two}-I(\tilde{Y}_1;Y')|^{+}} -I(X';\tilde{Y}_1\tilde{Y}_2Y')+\frac{\epsilon}{4}}.
\end{align*}
where (a) follows from \eqref{eq:type_property2},\,\eqref{eq:type_property3}, a union bound over $(j,k)\in A$ and by noting that $|T^n_{\one}| = |T^n_{X'}|$. 
If $I(X';\tilde{Y}_1\tilde{Y}_2Y') >|R_{\two}- I(\tilde{Y}_2;\tilde{Y}_1Y')|^{+}+|R_{\two}-I(\tilde{Y}_1;Y')|^{+} +\epsilon$ (which~\eqref{codebook:4} postulates), \eqref{eq:A1} holds with $a = (n+1)^{|\cX|}\exp\inb{-\frac{3}{4}n\epsilon}$. For $t = \exp\inb{\frac{-n\epsilon}{2}}$ and $n\geq n_2(\epsilon)$ (recall that $n_2(\epsilon) = \min{\inb{n:(n+1)^{|\cX|}\log e<\frac{1}{2}\exp{\inb{\frac{n\epsilon}{4}}}}},$ 
we get
\begin{align}
\bbP&\inb{\frac{1}{N_{\one}}\sum_{i=1}^{N_{\one}}\tilde{f}_i^{[\vecy_1,\vecy_2,\ldots,\vecy_{N_{\two}}]}\inp{\vecX_1,\vecX_2,\ldots, \vecX_i}>\exp\inb{\frac{-n\epsilon}{2}}}\nonumber \\
&\leq \exp\inb{-\frac{N_{\one}}{2}\exp\inb{-\frac{n\epsilon}{2}}}\nonumber\\
&\leq \exp\inb{-\frac{1}{2}\exp\inb{\frac{n\epsilon}{2}}} \nonumber
\end{align}
where the last inequality uses the assumption that $N_\one\geq \exp\inb{n\epsilon}$. Averaging over $(\vecY_1, \ldots, \vecY_{\two})$, we get
\begin{align}
\bbP&\inb{\frac{1}{N_{\one}}\sum_{i=1}^{N_{\one}}\tilde{f}_i^{[\vecY_1,\vecY_2,\ldots,\vecY_{N_{\two}}]}\inp{\vecX_1,\vecX_2,\ldots, \vecX_i}>\exp\inb{\frac{-n\epsilon}{2}}}\nonumber \\
&\leq \exp\inb{-\frac{1}{2}\exp\inb{\frac{n\epsilon}{2}}}.   \label{eq:tprob}
\end{align}
Let events $\tilde{\cF_1}$ and $\tilde{\cF_2}$ be defined as 
\begin{align*}
\tilde{\cF_1} &= \inb{\frac{1}{N_{\one}}\left|\inb{i:\vecX_i\in T^n_{X|\tilde{Y}_1\tilde{Y}_2Y'}(\vecY_j, \vecY_k, \vecy')\text{ for some } j\text{ and }k}\right|>\exp\inb{\frac{-n\epsilon}{2}}},\\
\tilde{\cF_2} &= \inb{\sum_{i=1}^{N_\one}\tilde{f}^{[\vecY_1,\vecY_2,\ldots,\vecY_{N_{\two}}]}_{i}\inp{\vecX_1,\vecX_2,\ldots, \vecX_i}\neq \left|\inb{i:\vecX_i\in T^n_{X|\tilde{Y}_1\tilde{Y}_2Y'}(\vecY_j,\vecY_k,\vecy') \text{ for some }j\text{ and }k}\right|},\\
\tilde{\cF_3} &= \inb{\sum_{i=1}^{N_\one}\tilde{f}^{[\vecY_1,\vecY_2,\ldots,\vecY_{N_{\two}}]}_{i}\inp{\vecX_1,\vecX_2,\ldots, \vecX_i}>\exp\inb{\frac{-n\epsilon}{2}}}.
\end{align*}
We are interested in $\bbP\inp{\tilde{\cF_1}}$. We see that 
\begin{align*}
\bbP\inp{\tilde{\cF}_1}& = \bbP\inp{\tilde{\cF}_1\cap \tilde{\cF}_2} +\bbP\inp{\tilde{\cF}_1\cap \tilde{\cF}^{c}_2}\\
& \leq \bbP\inp{\tilde{\cF}_2} +\bbP\inp{\tilde{\cF}_1\cap \tilde{\cF}^{c}_2}\\
& \leq \bbP\inp{\tilde{\cF}_2} +\bbP\inp{\tilde{\cF}_3}\\
&\stackrel{\text{(a)}}{\leq} \inp{2^{nR_{\two}}|T_{\tilde{Y}|Y'}(\vecy')|+1}\exp\inb{-\frac{1}{2}\exp\inb{\frac{n\epsilon}{8}}}+ \exp\inb{-\frac{1}{2}\exp\inb{{n\epsilon}}}+\exp\inb{-\frac{1}{2}\exp\inb{\frac{n\epsilon}{2}}}\\
& =\inp{2^{nR_{\two}}|T_{\tilde{Y}|Y'}(\vecy')|+3}\exp\inb{-\frac{1}{2}\exp\inb{\frac{n\epsilon}{8}}},
\end{align*}
where (a) follows from \eqref{eq:conditioning2} and \eqref{eq:tprob}.  Statements \eqref{codebook:1q}-\eqref{codebook:5q} are analogous to \eqref{codebook:1}-\eqref{codebook:5} with the roles of users \one and \two are interchanged. In particular, $(\vecx, \vecx', \vecy, \vecy')$ is replaced with $(\bar{\vecy}, \tilde{\vecy}, \bar{\vecx}, \tilde{\vecx})$, $(X, \tilde{X}, \tilde{Y}, Y)$ with $(\bar{Y}, \hat{Y}, \hat{X}, \bar{X})$, $(X', \tilde{Y}_1, \tilde{Y}_2, Y')$ with $(Y'', \hat{X}_1, \hat{X}_2, X'')$ and $R_{\one}$ with $R_{\two}$. Thus, we can follow a similar analysis and show that the probability that that any of the statements \eqref{codebook:1q}, \eqref{codebook:2bq}, \eqref{codebook:3bq}, \eqref{codebook:4q} or \eqref{codebook:5q} do not hold for  some $\tilde{\vecx}, \bar{\vecx},  \tilde{\vecy}, \bar{\vecy}$,\, $P_{\hat{X}\bar{X}\bar{Y}\hat{Y}}$ and $P_{X''\hat{X}_1\hat{X}_2Y''}$ also falls doubly exponentially. 
\end{proof}

\section{Proof of Lemma~\ref{lemma:disambiguity}}\label{app:disambiguity}
\begin{Lemma}
{Suppose $\alpha>0$. For a channel which is not \one-spoofable, for sufficiently small $\eta>0$,} there does not exist a distribution $P_{XY\tilde{X}Y'X'\tilde{Y}Z}\in \cP^n_{XY\tilde{X}Y'X'\tilde{Y}Z}$ with $\min_{x}P_X(x), \min_{\tilde{x}}P_{\tilde{X}}(\tilde{x}), \min_{\tilde{y}}P_{\tilde{Y}}(\tilde{y})\geq \alpha$  which  satisfies the following:
\begin{enumerate}[label=(\Alph*)]
	\item $P_{XYZ}\in \blue{\cD_{\eta}}$,
	\item $P_{\tilde{X}Y'Z}\in \blue{\cD_{\eta}}$,
	\item $P_{X'\tilde{Y}Z}\in \blue{\cD_{\eta}}$,
	\item $I(\tilde{X}\tilde{Y};XZ|Y)<\eta$,
	\item $I(X\tilde{Y};\tilde{X}Z|Y')<\eta$ and
	\item $I(X\tilde{X};\tilde{Y}Z|X')<\eta$.
\end{enumerate}
{Similarly, for a channel which is not \two-\spoofable, there does not exist a distribution $P_{X'_1\tilde{Y}_1X'_2\tilde{Y}_2XYZ}\in \cP^n_{X'_1\tilde{Y}_1X'_2\tilde{Y}_2XYZ}$ with $\min_{x}P_X(x), \min_{\tilde{y}_1}P_{\tilde{Y}_1}(\tilde{y}_1), \min_{\tilde{y}_2}P_{\tilde{Y}_2}(\tilde{y}_2)\geq \alpha>0$ which satisfies the following:
\begin{enumerate}[label=(\Alph*)]
\setcounter{enumi}{6}
	\item $P_{XYZ}\in \blue{\cD_{\eta}}$,
	\item $P_{X'_1\tilde{Y}_1Z}\in \blue{\cD_{\eta}}$,
	\item $P_{X'_2\tilde{Y}_2Z}\in \blue{\cD_{\eta}}$,
	\item $I(\tilde{Y}_1\tilde{Y}_2;XZ|Y)<\eta$,
	\item $I(X\tilde{Y}_2;\tilde{Y}_1Z|X'_1)<\eta$ and
	\item $I(X\tilde{Y}_1;\tilde{Y}_2Z|X'_2)<\eta$.
\end{enumerate}}
\end{Lemma}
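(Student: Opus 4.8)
The plan is to prove the contrapositive: I will show that if such a distribution $P_{XY\tilde{X}Y'X'\tilde{Y}Z}$ exists (with all the listed Markov-chain-like and divergence constraints holding for arbitrarily small $\eta$), then the channel $W$ is $\one$-\spoofable, i.e.\ conditional distributions $Q_{Y|\tilde{X}\tilde{Y}}$ and $Q_{X|\tilde{X}X'}$ satisfying \eqref{eq:spoof1} can be constructed. The natural guess for these attack kernels comes from the joint type itself: set $Q_{Y|\tilde{X}\tilde{Y}}$ to be (a version of) the conditional $P_{Y|\tilde{X}\tilde{Y}}$ derived from the $\eta\to 0$ limit of our distribution, and $Q_{X|\tilde{X}X'}$ to be the conditional derived from the limiting $P_{X|\tilde{X}X'}$, after appropriate relabelling of the dummy variables in \eqref{eq:spoof1}. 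The role of the hypothesis $\min_x P_X(x),\min_{\tilde x}P_{\tilde X}(\tilde x),\min_{\tilde y}P_{\tilde Y}(\tilde y)\geq\alpha>0$ is exactly to make these conditionals well-defined and to convert small mutual informations into small total-variation perturbations on the relevant conditional channels (via Pinsker plus the lower bound on marginals, as in \cite[Lemma~4]{CsiszarN88}).

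The key steps, in order. First, I would take a sequence $\eta_n\to 0$ and, for each $n$, a distribution satisfying \ref{disamb:1}--\ref{disamb:6}; by compactness of the probability simplex extract a convergent subsequence with limit $P^{*}$, which then satisfies all six conditions with $\eta=0$: namely $P^{*}_{XYZ}=P^{*}_XP^{*}_YW$, $P^{*}_{\tilde X Y'Z}=P^{*}_{\tilde X}P^{*}_{Y'}W$, $P^{*}_{X'\tilde Y Z}=P^{*}_{X'}P^{*}_{\tilde Y}W$, and the three conditional-independence statements $I(\tilde X\tilde Y;XZ|Y)=0$, $I(X\tilde Y;\tilde X Z|Y')=0$, $I(X\tilde X;\tilde Y Z|X')=0$, with the marginal lower bounds $\geq\alpha$ preserved in the limit. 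Second, I unpack the three conditional-independence statements as Markov chains: $XZ - Y - \tilde X\tilde Y$, $\tilde X Z - Y' - X\tilde Y$, and $\tilde Y Z - X' - X\tilde X$. Third, using condition \ref{disamb:1} ($Z$ depends on $(X,Y)$ through $W$ only, and $X\indep Y$) together with the first Markov chain, I want to compute the conditional law of $Z$ given $(X,\tilde X,\tilde Y)$ in two ways: on one hand marginalizing $Y$ out through $W$ (which will produce a term of the form $\sum_y P(y|\tilde X,\tilde Y)W(z|X,y)$, matching the first line of \eqref{eq:spoof1}); on the other hand, the second Markov chain rewrites the same conditional law with the roles of $X$ and $\tilde X$ swapped (giving $\sum_y P(y|\tilde X,\tilde Y)W(z|\tilde X,y)$ after the swap induced by $X\leftrightarrow\tilde X$ symmetry built into \ref{disamb:2}); and the third Markov chain together with \ref{disamb:3} rewrites it as $\sum_x P(x|\tilde X,X')W(z|x,\tilde Y)$. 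Matching these three expressions, after renaming $X\to x'$, $\tilde X\to \tilde x$, $X'\to $ the appropriate dummy and $\tilde Y\to\tilde y$, is precisely \eqref{eq:spoof1}, establishing $\one$-\spoofability. Fourth, I repeat the entire argument verbatim with users $\one$ and $\two$ interchanged to get the ``analogous condition'' for the $\two$-\spoofable case (conditions \ref{disamb:1}--\ref{disamb:6} with the last three replaced by the primed versions in the statement), which is a pure relabelling of the same algebra.

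The main obstacle I anticipate is step three: bookkeeping the dummy variables so that the three marginalized expressions genuinely coincide with the two kernels $Q_{Y|\tilde X\tilde Y}$ and $Q_{X|\tilde X X'}$ that appear in Definition~\ref{defn:spoof}, because \eqref{eq:spoof1} mixes $x'$ and $\tilde x$ in a non-obvious way (the kernel $Q_{Y|\tilde X\tilde Y}$ is evaluated at $(\tilde x,\tilde y)$ in the first line but at $(x',\tilde y)$ in the second line). Getting this right requires being careful that the \emph{same} conditional $P^{*}_{Y|\tilde X\tilde Y}$ controls both lines, which in turn is what the pair of Markov chains $XZ-Y-\tilde X\tilde Y$ and $\tilde X Z-Y'-X\tilde Y$ (with the marginal constraint $P^{*}_{Y}=P^{*}_{Y'}$, implicit from \ref{disamb:1} and \ref{disamb:2} both being product-with-$W$ forms) is designed to deliver. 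A secondary technical point is justifying that conditionals are well-defined and that the argument is ``robust'' — but since I pass to the exact $\eta=0$ limit first, I avoid approximate-Markov-chain manipulations entirely and only need the qualitative (zero mutual information) statements, so no quantitative continuity lemma is actually needed here; the $\alpha>0$ hypothesis is used only to ensure the limiting conditionals $P^{*}_{Y|\tilde X\tilde Y}$, $P^{*}_{X|\tilde X X'}$ are defined on the full support, which makes them legitimate stochastic kernels for Definition~\ref{defn:spoof}.
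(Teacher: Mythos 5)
Your overall architecture — pass to a sequence $\eta_k\downarrow 0$, extract a limit distribution $P^{*}$ by compactness, and then argue with the exact $\eta=0$ conditions — is legitimate and genuinely different from the paper's proof, which never takes a limit: the paper turns \ref{disamb:1}--\ref{disamb:6} into total-variation statements via Pinsker's inequality, uses $\alpha$ to lower bound the marginals, and shows that suitably \emph{symmetrized} kernels violate the spoofability identities by at most $O(\sqrt{\eta}/\alpha^{3})$, contradicting a strictly positive gap when $\eta$ is small. Your route would avoid that quantitative bookkeeping, which is a real simplification for this qualitative statement, and the reduction to $\eta=0$ itself is sound (lower semicontinuity of divergence and continuity of mutual information on finite alphabets).

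The genuine gap is in your step three, and the justification you offer there is incorrect. At $\eta=0$ the three pairs of conditions give three factorizations of the same marginal $P^{*}_{X\tilde{X}\tilde{Y}Z}$, and after checking that $X,\tilde{X},\tilde{Y}$ are mutually independent with marginals $\geq\alpha$ (so every conditioning cell has mass $\geq\alpha^{3}$) what you actually obtain is, for all $x,\tilde{x},\tilde{y},z$,
\begin{align}
\sum_{y}P^{*}_{Y|\tilde{X}\tilde{Y}}(y|\tilde{x},\tilde{y})W(z|x,y)
=\sum_{y'}P^{*}_{Y'|X\tilde{Y}}(y'|x,\tilde{y})W(z|\tilde{x},y')
=\sum_{x'}P^{*}_{X'|X\tilde{X}}(x'|x,\tilde{x})W(z|x',\tilde{y}).
\end{align}
This is a \emph{cross} identity between two different kernels, $P^{*}_{Y|\tilde{X}\tilde{Y}}$ and $P^{*}_{Y'|X\tilde{Y}}$, whereas \eqref{eq:spoof1} requires a single kernel $Q_{Y|\tilde{X}\tilde{Y}}$ in both the first and second expressions. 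Your claim that the same conditional controls both lines because ``$P^{*}_{Y}=P^{*}_{Y'}$ is implicit from \ref{disamb:1} and \ref{disamb:2}'' is unsupported: those conditions constrain $(X,Y,Z)$ and $(\tilde{X},Y',Z)$ separately and do not force equal $Y$-marginals, and even equal marginals would not give equality of the two conditional kernels; setting $x=\tilde{x}$ in the display only shows the two kernels induce the same output law through $W(\cdot|x,\cdot)$, not that they coincide. So taking $Q_{Y|\tilde{X}\tilde{Y}}=P^{*}_{Y|\tilde{X}\tilde{Y}}$ as you propose does not verify \eqref{eq:spoof1}. The fix is the paper's symmetrization, which works exactly at $\eta=0$: set $Q_{Y|\tilde{X}\tilde{Y}}(y|\tilde{x},\tilde{y})\defineqq\frac{1}{2}\inp{P^{*}_{Y|\tilde{X}\tilde{Y}}(y|\tilde{x},\tilde{y})+P^{*}_{Y'|X\tilde{Y}}(y|\tilde{x},\tilde{y})}$ and $Q_{X|\tilde{X}X'}(x|\tilde{x},x')\defineqq\frac{1}{2}\inp{P^{*}_{X'|X\tilde{X}}(x|x',\tilde{x})+P^{*}_{X'|X\tilde{X}}(x|\tilde{x},x')}$; applying the display twice, once as written and once with the roles of $x$ and $\tilde{x}$ exchanged, then yields all the equalities of \eqref{eq:spoof1} exactly, contradicting non-\one-spoofability. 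With this replacement (and its analogue, with the stated variable substitutions, for the \two case) your limit argument goes through.
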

\begin{proof}
Suppose for a channel which is not \one-\spoofable, there exists $P_{XY\tilde{X}Y'X'\tilde{Y}Z}\in \cP^n_{XY\tilde{X}Y'X'\tilde{Y}Z}$ which satisfies \ref{disamb:1}-\ref{disamb:6}.
Using~\ref{disamb:1} and \ref{disamb:4}, we obtain that
{
\begin{align}
2\eta&> D(P_{XYZ}||P_XP_YW_{Z|XY})+I(\tilde{X}\tilde{Y};XZ|Y)\\
&=D(P_{XYZ}||P_XP_YW_{Z|XY})+D(P_{XY\tilde{X}\tilde{Y}Z}||P_YP_{\tilde{X}\tilde{Y}|Y}P_{XZ|Y})\\
&=\sum_{x,y, \tilde{x}, \tilde{y}, z}P_{XY\tilde{X}\tilde{Y}Z}(x,y, \tilde{x}, \tilde{y}, z)\left(\log{\left\{\frac{P_{XYZ}(x, y, z)}{P_X(x)P_Y(y)W_{Z|XY}(z|x,y)}\right\}} + \log{\left\{\frac{P_{XY\tilde{X}\tilde{Y}Z}(x, y, \tilde{x}, \tilde{y}, z)}{P_Y(y)P_{\tilde{X}\tilde{Y}|Y}(\tilde{x}, \tilde{y}|y)P_{XZ|Y}(x,z|y)}\right\}}\right)\\
&=\sum_{x,y, \tilde{x}, \tilde{y}, z}P_{XY\tilde{X}\tilde{Y}Z}(x,y, \tilde{x}, \tilde{y}, z)\left(\log{\left\{\frac{P_{XYZ}(x, y, z)\times P_{XY\tilde{X}\tilde{Y}Z}(x, y, \tilde{x}, \tilde{y}, z)}{P_X(x)P_Y(y)W_{Z|XY}(z|x,y)\times P_Y(y)P_{\tilde{X}\tilde{Y}|Y}(\tilde{x}, \tilde{y}|y)P_{XZ|Y}(x,z|y)}\right\}}\right)\\
&=\sum_{x,y, \tilde{x}, \tilde{y}, z}P_{XY\tilde{X}\tilde{Y}Z}(x,y, \tilde{x}, \tilde{y}, z)\left(\log{\left\{\frac{ P_{XY\tilde{X}\tilde{Y}Z}(x, y, \tilde{x}, \tilde{y}, z)}{P_X(x)P_Y(y)W_{Z|XY}(z|x,y) P_{\tilde{X}\tilde{Y}|Y}(\tilde{x}, \tilde{y}|y)}\right\}}\right)\\
&=\sum_{x,y, \tilde{x}, \tilde{y}, z}P_{XY\tilde{X}\tilde{Y}Z}(x,y, \tilde{x}, \tilde{y}, z)\left(\log{\left\{\frac{ P_{XY\tilde{X}\tilde{Y}Z}(x, y, \tilde{x}, \tilde{y}, z)}{P_X(x) W_{Z|XY}(z|x,y) P_{Y\tilde{X}\tilde{Y}}(y, \tilde{x}, \tilde{y})}\right\}}\right)\\
& = D(P_{XY\tilde{X}\tilde{Y}Z}||P_XP_{\tilde{X}\tilde{Y}}P_{Y|\tilde{X}\tilde{Y}}W_{Z|XY})\\
&\stackrel{(a)}{\geq} D(P_{X\tilde{X}\tilde{Y}Z}||P_XP_{\tilde{X}\tilde{Y}}V^1_{Z|X\tilde{X}\tilde{Y}})\label{eq:lemma15},
\end{align}
}where $V^{(1)}_{Z|X\tilde{X}\tilde{Y}}(z|x,\tilde{x},\tilde{y}) \defineqq \sum_{y}P_{Y|\tilde{X}\tilde{Y}}(y|\tilde{x},\tilde{y}) W_{Z|XY}(z|x,y)$ and $(a)$ follows from the log sum inequality. 

Before we proceed, recall that the total variation distance between distributions $P_X$ and $Q_X$  distributed on an alphabet $\cX$, $d_{TV}(P_X, Q_X)$ is defined as
\begin{align*}
d_{TV}(P_X, Q_X) \defineqq \frac{1}{2}\sum_{x\in \cX}|P_{X}(x)-Q_{X}(x)|.
\end{align*}
The total variation distance and KL divergence between two distributions are related by the Pinsker's inequality as stated below. 
\begin{align}
D(P_X||Q_X)\geq \frac{1}{\ln(2)}\inp{d_{TV}(P_X, Q_X)}^2.\label{eq:pinsker}
\end{align}
With this, using Pinsker's inequality \eqref{eq:pinsker} on \eqref{eq:lemma15}, we obtain
\begin{align}
d_{TV}\inp{P_{X\tilde{X}\tilde{Y}Z},P_XP_{\tilde{X}\tilde{Y}}V^{(1)}_{Z|X\tilde{X}\tilde{Y}}}< c\sqrt{\eta}.\label{disambeq:1}
\end{align} where $c = \sqrt{2\ln{2}}$.
Similarly, using~\ref{disamb:2} and \ref{disamb:5}, we obtain 

\begin{align}
d_{TV}\inp{P_{\tilde{X}X\tilde{Y}Z},P_{\tilde{X}}P_{{X}\tilde{Y}}V^{(2)}_{Z|X\tilde{X}\tilde{Y}}}< c\sqrt{\eta}\label{disambeq:2}
\end{align}
where $V^{(2)}_{Z|X\tilde{X}\tilde{Y}}(z|x,\tilde{x},\tilde{y}) \defineqq \sum_{y'}P_{Y'|{X}\tilde{Y}}(y'|{x},\tilde{y}) W_{Z|XY}(z|\tilde{x},y')$. Finally, using~\ref{disamb:3} and \ref{disamb:6}, we get
\begin{align}
d_{TV}\inp{P_{X\tilde{X}\tilde{Y}Z},P_{X\tilde{X}}P_{\tilde{Y}}V^{(3)}_{Z|X\tilde{X}\tilde{Y}}}< c\sqrt{\eta}\label{disambeq:3}
\end{align}
where $V^{(3)}_{Z|X\tilde{X}\tilde{Y}}(z|x,\tilde{x},\tilde{y})\defineqq \sum_{x'}P_{X'|X\tilde{X}}(x'|x,\tilde{x})W_{Z|XY}(z|x',\tilde{y})$. 
Next, we note that
\begin{align*}
&2d_{TV}\inp{P_XP_{\tilde{X}\tilde{Y}}V^{(1)}_{Z|X\tilde{X}\tilde{Y}},P_XP_{\tilde{X}}P_{\tilde{Y}}V^{(1)}_{Z|X\tilde{X}\tilde{Y}}}\\
&=\sum_{x,\tilde{x},\tilde{y},z}\left|P_X(x)P_{\tilde{X}\tilde{Y}}(\tilde{x},\tilde{y})V^{(1)}_{Z|X\tilde{X}\tilde{Y}}(z|x,\tilde{x},\tilde{y})-P_X(x)P_{\tilde{X}}(\tilde{x})P_{\tilde{Y}}(\tilde{y})V^{(1)}_{Z|X\tilde{X}\tilde{Y}}(z|x,\tilde{x},\tilde{y})\right|\\
& = \inp{\sum_xP_X(x)}\inp{\sum_{z}V^{(1)}_{Z|X\tilde{X}\tilde{Y}}(z|x,\tilde{x},\tilde{y})}\sum_{\tilde{x},\tilde{y}}\left|P_{\tilde{X}\tilde{Y}}(\tilde{x},\tilde{y})-P_{\tilde{X}}(\tilde{x})P_{\tilde{Y}}(\tilde{y})\right|\\
& = 2d_{TV}\inp{P_{\tilde{X}\tilde{Y}},P_{\tilde{X}}P_{\tilde{Y}}} <2c\sqrt{\eta} \text{ by~\eqref{disambeq:2}}.
\end{align*}
Using this and \eqref{disambeq:1},
\begin{align}
2c\sqrt{\eta} &> d_{TV}\inp{P_{X\tilde{X}\tilde{Y}Z},P_XP_{\tilde{X}\tilde{Y}}V^{(1)}_{Z|X\tilde{X}\tilde{Y}}} + d_{TV}\inp{P_XP_{\tilde{X}\tilde{Y}}V^{(1)}_{Z|X\tilde{X}\tilde{Y}},P_XP_{\tilde{X}}P_{\tilde{Y}}V^{(1)}_{Z|X\tilde{X}\tilde{Y}}}\\
&\stackrel{\text{(a)}}{\geq} d_{TV}\inp{P_{X\tilde{X}\tilde{Y}Z},P_XP_{\tilde{X}}P_{\tilde{Y}}V^{(1)}_{Z|X\tilde{X}\tilde{Y}}},
\end{align} 
where (a) is the triangle inequality.
Thus, 
\begin{align}
d_{TV}\inp{P_{X\tilde{X}\tilde{Y}Z},P_XP_{\tilde{X}}P_{\tilde{Y}}V^{(1)}_{Z|X\tilde{X}\tilde{Y}}} < 2c\sqrt{\eta}.\label{eq:sym1}
\end{align}
Similarly, using \eqref{disambeq:1} to show that $
d_{TV}\inp{P_{\tilde{X}}P_{X}P_{\tilde{Y}}V^{(2)}_{Z|X\tilde{X}\tilde{Y}},P_{\tilde{X}}P_{{X}\tilde{Y}}V^{(2)}_{Z|X\tilde{X}\tilde{Y}}}<c\sqrt{\eta}$ and \eqref{disambeq:2}, we obtain
\begin{align}
d_{TV}\inp{P_{X\tilde{X}\tilde{Y}Z},P_XP_{\tilde{X}}P_{\tilde{Y}}V^{(2)}_{Z|X\tilde{X}\tilde{Y}}}< 2c\sqrt{\eta},\label{eq:sym2}
\end{align} and using \eqref{disambeq:1} to show that $d_{TV}\inp{P_{X}P_{\tilde{X}}P_{\tilde{Y}}V^{(3)}_{Z|X\tilde{X}\tilde{Y}},P_{X\tilde{X}}P_{\tilde{Y}}V^{(3)}_{Z|X\tilde{X}\tilde{Y}}}<c\sqrt{\eta}$ and \eqref{disambeq:3}, we obtain
\begin{align}
d_{TV}\inp{P_{X\tilde{X}\tilde{Y}Z},P_XP_{\tilde{X}}P_{\tilde{Y}}V^{(3)}_{Z|X\tilde{X}\tilde{Y}}}< 2c\sqrt{\eta}.\label{eq:sym3}
\end{align}
 We use \eqref{eq:sym1} and \eqref{eq:sym3} to write the following:
\begin{align}
d_{TV}&\inp{P_XP_{\tilde{X}}P_{\tilde{Y}}V^{(1)}_{Z|X\tilde{X}\tilde{Y}}, P_XP_{\tilde{X}}P_{\tilde{Y}}V^{(3)}_{Z|X\tilde{X}\tilde{Y}}} \nonumber\\
&\leq d_{TV}\inp{P_{X\tilde{X}\tilde{Y}Z},P_XP_{\tilde{X}}P_{\tilde{Y}}V^{(1)}_{Z|X\tilde{X}\tilde{Y}}} +d_{TV}\inp{P_{X\tilde{X}\tilde{Y}Z},P_XP_{\tilde{X}}P_{\tilde{Y}}V^{(3)}_{Z|X\tilde{X}\tilde{Y}}}\nonumber \\
&< 4c\sqrt{\eta}\label{eq:intermediate1}.
\end{align}
Similarly, using \eqref{eq:sym2} and \eqref{eq:sym3}, we may write the following:
\begin{align}
d_{TV}&\inp{P_XP_{\tilde{X}}P_{\tilde{Y}}V^{(2)}_{Z|X\tilde{X}\tilde{Y}}, P_XP_{\tilde{X}}P_{\tilde{Y}}V^{(3)}_{Z|X\tilde{X}\tilde{Y}}} < 4c\sqrt{\eta}\label{eq:intermediate2},
\end{align}
and using \eqref{eq:sym1} and \eqref{eq:sym2}, we may write the following:
\begin{align}
d_{TV}&\inp{P_XP_{\tilde{X}}P_{\tilde{Y}}V^{(1)}_{Z|X\tilde{X}\tilde{Y}}, P_XP_{\tilde{X}}P_{\tilde{Y}}V^{(2)}_{Z|X\tilde{X}\tilde{Y}}} < 4c\sqrt{\eta}\label{eq:intermediate22}.
\end{align}
Let $Q_{Y|\tilde{X}\tilde{Y}}(y|\tilde{x}, \tilde{y})\defineqq \frac{P_{Y'|{X}\tilde{Y}}(y|\tilde{x},\tilde{y})+P_{Y|\tilde{X}\tilde{Y}}(y|\tilde{x},\tilde{y})}{2}$ for all $\tilde{x}\in \cX$, $y,\tilde{y}\in \cY$ and $Q_{X'|X\tilde{X}}(x'|x,\tilde{x}) = \frac{P_{X'|X\tilde{X}}(x'|\tilde{x},{x})+P_{X'|X\tilde{X}}(x'|x,\tilde{x})}{2}$ for all $x', x, \tilde{x}\in \cX$.
With this,  
\begin{align}
&\max_{x,\tilde{x}, \tilde{y}, z}2\alpha^3\left|\sum_{y}Q_{Y|\tilde{X}\tilde{Y}}(y|\tilde{x},\tilde{y}) W_{Z|XY}(z|x,y) -  \sum_{x'}Q_{X'|X\tilde{X}}(x'|x,\tilde{x})W_{Z|XY}(z|x',\tilde{y})\right|\\
&=\max_{x,\tilde{x}, \tilde{y}, z}\alpha^3\Bigg|\inp{\sum_{y}P_{Y'|{X}\tilde{Y}}(y|\tilde{x},\tilde{y}) W_{Z|XY}(z|x,y) -  \sum_{x'}P_{X'|X\tilde{X}}(x'|\tilde{x},x)W_{Z|XY}(z|x',\tilde{y})}\\
&\hspace{0.5cm}+\inp{\sum_{y}P_{Y|\tilde{X}\tilde{Y}}(y|\tilde{x},\tilde{y}) W_{Z|XY}(z|x,y) -  \sum_{x'}P_{X'|X\tilde{X}}(x'|x,\tilde{x})W_{Z|XY}(z|x',\tilde{y})}\Bigg|\\
&\leq \max_{x,\tilde{x}, \tilde{y}, z}\alpha^3\left|\inp{\sum_{y}P_{Y'|{X}\tilde{Y}}(y|\tilde{x},\tilde{y}) W_{Z|XY}(z|x,y) -  \sum_{x'}P_{X'|X\tilde{X}}(x'|\tilde{x},x)W_{Z|XY}(z|x',\tilde{y})}\right|\\
&\hspace{0.5cm}+\max_{x,\tilde{x}, \tilde{y}, z}\alpha^3\left|\inp{\sum_{y}P_{Y|\tilde{X}\tilde{Y}}(y|\tilde{x},\tilde{y}) W_{Z|XY}(z|x,y) -  \sum_{x'}P_{X'|X\tilde{X}}(x'|x,\tilde{x})W_{Z|XY}(z|x',\tilde{y})}\right|\\
&\stackrel{(a)}{=} \max_{\tilde{x},x, \tilde{y}, z}\alpha^3\left|\inp{\sum_{y}P_{Y'|{X}\tilde{Y}}(y|x,\tilde{y}) W_{Z|XY}(z|\tilde{x},y) -  \sum_{x'}P_{X'|X\tilde{X}}(x'|x,\tilde{x})W_{Z|XY}(z|x',\tilde{y})}\right|\\
&\hspace{0.5cm}+\max_{x,\tilde{x}, \tilde{y}, z}\alpha^3\left|\inp{\sum_{y}P_{Y|\tilde{X}\tilde{Y}}(y|\tilde{x},\tilde{y}) W_{Z|XY}(z|x,y) -  \sum_{x'}P_{X'|X\tilde{X}}(x'|x,\tilde{x})W_{Z|XY}(z|x',\tilde{y})}\right|\\
&\stackrel{(b)}{\leq} \max_{\tilde{x},x, \tilde{y}, z}P_X(x)P_{\tilde{X}}(\tilde{x})P_{\tilde{Y}}(\tilde{y})\left|\inp{\sum_{y}P_{Y'|{X}\tilde{Y}}(y|x,\tilde{y}) W_{Z|XY}(z|\tilde{x},y) -  \sum_{x'}P_{X'|X\tilde{X}}(x'|x,\tilde{x})W_{Z|XY}(z|x',\tilde{y})}\right|\\
&\hspace{0.5cm}+\max_{x,\tilde{x}, \tilde{y}, z}P_X(x)P_{\tilde{X}}(\tilde{x})P_{\tilde{Y}}(\tilde{y})\left|\inp{\sum_{y}P_{Y|\tilde{X}\tilde{Y}}(y|\tilde{x},\tilde{y}) W_{Z|XY}(z|x,y) -  \sum_{x'}P_{X'|X\tilde{X}}(x'|x,\tilde{x})W_{Z|XY}(z|x',\tilde{y})}\right|\\
&\leq \max_{x,\tilde{x}, \tilde{y}, z}\left|\inp{P_X(x)P_{\tilde{X}}(\tilde{x})P_{\tilde{Y}}(\tilde{y})\sum_{y}P_{Y'|{X}\tilde{Y}}(y|x,\tilde{y}) W_{Z|XY}(z|\tilde{x},y) -  P_X(x)P_{\tilde{X}}(\tilde{x})P_{\tilde{Y}}(\tilde{y})\sum_{x'}P_{X'|X\tilde{X}}(x'|x,\tilde{x})W_{Z|XY}(z|x',\tilde{y})}\right|\\
&\hspace{0.5cm}+\max_{x,\tilde{x}, \tilde{y}, z}\left|\inp{P_X(x)P_{\tilde{X}}(\tilde{x})P_{\tilde{Y}}(\tilde{y})\sum_{y}P_{Y|\tilde{X}\tilde{Y}}(y|\tilde{x},\tilde{y}) W_{Z|XY}(z|x,y) -  P_X(x)P_{\tilde{X}}(\tilde{x})P_{\tilde{Y}}(\tilde{y})\sum_{x'}P_{X'|X\tilde{X}}(x'|x,\tilde{x})W_{Z|XY}(z|x',\tilde{y})}\right|\\
&\leq \sum_{x,\tilde{x}, \tilde{y}, z}\left|\inp{P_X(x)P_{\tilde{X}}(\tilde{x})P_{\tilde{Y}}(\tilde{y})\sum_{y}P_{Y'|{X}\tilde{Y}}(y|x,\tilde{y}) W_{Z|XY}(z|\tilde{x},y) -  P_X(x)P_{\tilde{X}}(\tilde{x})P_{\tilde{Y}}(\tilde{y})\sum_{x'}P_{X'|X\tilde{X}}(x'|x,\tilde{x})W_{Z|XY}(z|x',\tilde{y})}\right|\\
&\hspace{0.5cm}+\sum_{x,\tilde{x}, \tilde{y}, z}\left|\inp{P_X(x)P_{\tilde{X}}(\tilde{x})P_{\tilde{Y}}(\tilde{y})\sum_{y}P_{Y|\tilde{X}\tilde{Y}}(y|\tilde{x},\tilde{y}) W_{Z|XY}(z|x,y) -  P_X(x)P_{\tilde{X}}(\tilde{x})P_{\tilde{Y}}(\tilde{y})\sum_{x'}P_{X'|X\tilde{X}}(x'|x,\tilde{x})W_{Z|XY}(z|x',\tilde{y})}\right|\\
&= 2d_{TV}\inp{P_XP_{\tilde{X}}P_{\tilde{Y}}V^{(2)}_{Z|X\tilde{X}\tilde{Y}}, P_XP_{\tilde{X}}P_{\tilde{Y}}V^{(3)}_{Z|X\tilde{X}\tilde{Y}}}+ 2d_{TV}\inp{P_XP_{\tilde{X}}P_{\tilde{Y}}V^{(1)}_{Z|X\tilde{X}\tilde{Y}}, P_XP_{\tilde{X}}P_{\tilde{Y}}V^{(3)}_{Z|X\tilde{X}\tilde{Y}}}\\
&\stackrel{(c)}< 16c\sqrt{\eta},
\end{align}
where $(a)$ is obtained by exchanging $x$ and $\tilde{x}$ in the first term, $(b)$ follows by recalling that $\min_{\tilde{x}}P_{\tilde{X}}(\tilde{x}), \min_{\tilde{y}}P_{\tilde{Y}}(\tilde{y})\geq \alpha$ and $(c)$ follows from \eqref{eq:intermediate1} and \eqref{eq:intermediate2}. Thus, we have shown that 
\begin{align}\label{eq:intermediate3}
\max_{x,\tilde{x}, \tilde{y}, z}\left|\sum_{y}Q_{Y|\tilde{X}\tilde{Y}}(y|\tilde{x},\tilde{y}) W_{Z|XY}(z|x,y) -  \sum_{x'}Q_{X'|X\tilde{X}}(x'|x,\tilde{x})W_{Z|XY}(z|x',\tilde{y})\right|< \frac{8c\sqrt{\eta}}{\alpha^3}.
\end{align}
Next, we will show the following similar statement: 
\begin{align}\label{eq:intermediate4}
\max_{x,\tilde{x}, \tilde{y}, z}\left|\sum_{y}Q_{Y|\tilde{X}\tilde{Y}}(y|\tilde{x},\tilde{y})W_{Z|XY}(z|x,y)-\sum_{y'}Q_{Y|\tilde{X}\tilde{Y}}(y'|{x},\tilde{y}) W_{Z|XY}(z|\tilde{x},y')\right|< \frac{8c\sqrt{\eta}}{\alpha^3}.
\end{align}
Consider,
\begin{align*}
& 2\alpha^3\max_{x,\tilde{x}, \tilde{y}, z}\left|\sum_{y}Q_{Y|\tilde{X}\tilde{Y}}(y|\tilde{x},\tilde{y})W_{Z|XY}(z|x,y)-\sum_{y'}Q_{Y|\tilde{X}\tilde{Y}}(y'|{x},\tilde{y}) W_{Z|XY}(z|\tilde{x},y')\right|\\
&=\max_{x,\tilde{x}, \tilde{y}, z}\alpha^3\Bigg|\sum_{y}\inp{P_{Y|\tilde{X}\tilde{Y}}(y|\tilde{x},\tilde{y})+P_{Y'|{X}\tilde{Y}}(y|\tilde{x},\tilde{y})}W_{Z|XY}(z|x,y)-\sum_{y'}\inp{P_{Y'|{X}\tilde{Y}}(y'|{x},\tilde{y})+P_{Y|\tilde{X}\tilde{Y}}(y'|{x},\tilde{y}))} W_{Z|XY}(z|\tilde{x},y')\Bigg| \\
&\leq \max_{x,\tilde{x}, \tilde{y}, z}\alpha^3\left|\sum_{y}P_{Y|\tilde{X}\tilde{Y}}(y|\tilde{x},\tilde{y}) W_{Z|XY}(z|x,y)-\sum_{y'}P_{Y'|{X}\tilde{Y}}(y'|{x},\tilde{y}) W_{Z|XY}(z|\tilde{x},y')\right| \\
&\qquad + \max_{x,\tilde{x}, \tilde{y}, z}\alpha^3\left|\sum_{y}P_{Y'|{X}\tilde{Y}}(y|\tilde{x},\tilde{y}) W_{Z|XY}(z|x,y) - \sum_{y'}P_{Y|\tilde{X}\tilde{Y}}(y'|{x},\tilde{y}) W_{Z|XY}(z|\tilde{x},y')\right|\\
&=2\max_{x,\tilde{x}, \tilde{y}, z}\alpha^3\left|\sum_{y}P_{Y|\tilde{X}\tilde{Y}}(y|\tilde{x},\tilde{y}) W_{Z|XY}(z|x,y)-\sum_{y'}P_{Y'|{X}\tilde{Y}}(y'|{x},\tilde{y}) W_{Z|XY}(z|\tilde{x},y')\right|\\
&\leq2\max_{x,\tilde{x}, \tilde{y}, z}P_X(x)P_{\tilde{X}}(\tilde{x})P_{\tilde{Y}}(\tilde{y})\left|\sum_{y}P_{Y|\tilde{X}\tilde{Y}}(y|\tilde{x},\tilde{y}) W_{Z|XY}(z|x,y)-\sum_{y'}P_{Y'|{X}\tilde{Y}}(y'|{x},\tilde{y}) W_{Z|XY}(z|\tilde{x},y')\right|\\
&\leq2\sum_{x,\tilde{x}, \tilde{y}, z}\left|P_X(x)P_{\tilde{X}}(\tilde{x})P_{\tilde{Y}}(\tilde{y})\sum_{y}P_{Y|\tilde{X}\tilde{Y}}(y|\tilde{x},\tilde{y}) W_{Z|XY}(z|x,y)-P_X(x)P_{\tilde{X}}(\tilde{x})P_{\tilde{Y}}(\tilde{y})\sum_{y'}P_{Y'|{X}\tilde{Y}}(y'|{x},\tilde{y}) W_{Z|XY}(z|\tilde{x},y')\right|\\
&=4d_{TV}\inp{P_XP_{\tilde{X}}P_{\tilde{Y}}V^{(1)}_{Z|X\tilde{X}\tilde{Y}}, P_XP_{\tilde{X}}P_{\tilde{Y}}V^{(2)}_{Z|X\tilde{X}\tilde{Y}}}\\
&\stackrel{\text{(a)}}{<} 16c\sqrt{\eta},
\end{align*}
where (a) follows from \eqref{eq:intermediate22}. With this, we have shown \eqref{eq:intermediate4}. 
Using \eqref{eq:intermediate3} and \eqref{eq:intermediate4}, we have shown that for the given channel there exist conditional distributions $Q_{Y|\tilde{X}\tilde{Y}}$ and $Q_{X'|X\tilde{X}}$ such that
\begin{align}
&\max\Bigg\{\max_{x,\tilde{x}, \tilde{y}, z}\left|\sum_{y}Q_{Y|\tilde{X}\tilde{Y}}(y|\tilde{x},\tilde{y}) W_{Z|XY}(z|x,y) -  \sum_{x'}Q_{X'|X\tilde{X}}(x'|x,\tilde{x})W_{Z|XY}(z|x',\tilde{y})\right|,\nonumber\\
&\hspace{1cm}\max_{x,\tilde{x}, \tilde{y}, z}\Bigg|\sum_{y}Q_{Y|\tilde{X}\tilde{Y}}(y|\tilde{x},\tilde{y})W_{Z|XY}(z|x,y)-\sum_{y'}Q_{Y|\tilde{X}\tilde{Y}}(y'|{x},\tilde{y}) W_{Z|XY}(z|\tilde{x},y')\Bigg|\Bigg\}< \frac{8c\sqrt{\eta}}{\alpha^3}. \label{eq:intermediate5}
\end{align}
Suppose the channel is not \one-\spoofable (i.e. \eqref{eq:spoof1} does not hold), then for every pair $\inp{Q_{Y|\tilde{X} \tilde{Y}},Q_{X'|X\tilde{X}}}$, there exists \\$\zeta\inp{Q_{Y|\tilde{X} \tilde{Y}},Q_{X'|X\tilde{X}}}>0$ such that  the following holds:
\begin{align}
&\max\Bigg\{\max_{x,\tilde{x}, \tilde{y}, z}\left|\sum_{y}Q_{Y|\tilde{X}\tilde{Y}}(y|\tilde{x},\tilde{y}) W_{Z|XY}(z|x,y) -  \sum_{x'}Q_{X'|X\tilde{X}}(x'|x,\tilde{x})W_{Z|XY}(z|x',\tilde{y})\right|,\\
&\hspace{1cm}\max_{x,\tilde{x}, \tilde{y}, z}\Bigg|\sum_{y}Q_{Y|\tilde{X}\tilde{Y}}(y|\tilde{x},\tilde{y})W_{Z|XY}(z|x,y)-\sum_{y'}Q_{Y|\tilde{X}\tilde{Y}}(y'|{x},\tilde{y}) W_{Z|XY}(z|\tilde{x},y')\Bigg|\Bigg\}> \zeta\inp{Q_{Y|\tilde{X} \tilde{Y}},Q_{X'|X\tilde{X}}}. 
\end{align}This contradicts \eqref{eq:intermediate5} for $\eta$ small enough such that $\zeta\inp{Q_{Y|\tilde{X} \tilde{Y}},Q_{X'|X\tilde{X}}}>8c\sqrt{\eta}/\alpha^3$. This completes the proof for a channel which is not \one \spoofable. The proof for a channel which is not \two \spoofable is along the similar lines. It can be obtained by interchanging the roles of users \one and \two and making the following replacements in the above proof: $ \tilde{X}\rightarrow \tilde{Y}_1, \, {Y}'\rightarrow X'_1, \, X'\rightarrow X'_2, \text{ and }\tilde{Y}\rightarrow \tilde{Y}_2$.

\end{proof}


\section{Proof of Theorem~\ref{thm:inner_bd}}\label{sec:inner_bd_proof}
\begin{proof}
\blue{Fix some $P_{\one}$ and $P_{\two}$ satisfying $\min_{x\in \cX}P_{\one}(x)\geq\alpha$ and $\min_{y\in \cY}P_{\two}(y)\geq\alpha$ respectively for some $\alpha>0$ and $\eta>0$ which is a function of $\alpha$ and the channel (See Lemma~\ref{lemma:disambiguity}). We also fix  $\eta/3>\epsilon>0$ and  $n>n_0(\epsilon)$.
For a rate pair $(R_{\one}, R_{\two})\in \cR_{2}(P_{\one}, P_{\two})$} given by \eqref{eq:inner_bd_2} with $R_{\one}, R_{\two}>\epsilon$, consider the codebook given by Lemma~\ref{lemma:codebook}. \\
\noindent {\em Encoding.} 
 Let $N_{\one}= 2^{nR_{\one}}, \, N_{\two}=2^{nR_{\two}}, \, \cM_{\one} = \inb{1, \ldots, N_{\one}}$ and $\cM_{\two} =\inb{1, \ldots, N_{\two}}$. For $\mo \in \cM_{\one}$, $f_{\one}(\mo) = \vecx_{\mo}$ and for $\mt \in \cM_{\two}$, $f_{\two}(\mt) = \vecy_{\mt}$.\\
{\em Decoding.} 
Let $\cD_{\eta}$ be the set of joint distributions defined as
\begin{align*}
\blue{\cD_{\eta}} \defineqq  \inb{P_{XYZ}\in \cP^n_{\cX\times \cY \times \cZ}:\, D\inp{P_{XYZ}||P_XP_YW}\leq\eta }.
\end{align*}
For a given codebook, the parameter $\eta$ and the received channel output sequence $\vecz$,   decoding happens in five steps. In the first step, we populate sets $A_1$ and $B_1$ containing candidate messages for user \one and \two respectively. In steps $2$-$5$, we prune these sets by sequentially removing the candidates which do not satisfy certain conditions. 
\begin{description}
\item  [{\em Step 1}:]\label{step:1} Let $A_1 = \{m_{\one}\in\mathcal{M}_{\one}:\inp{f_{\one}(m_{\one}), \bar{\vecy}, \vecz} \in T^{n}_{XYZ}$ for some $\bar{\vecy}\in \cY^n$ such that $P_{XYZ}\in \cD_{\eta}$\} and\\
$B_1 = \{m_{\two}\in\mathcal{M}_{\two}:\inp{\bar{\vecx}, f_{\two}(m_{\two}), \vecz} \in T^{n}_{XYZ}$ for some $\bar{\vecx}\in \cX^n$ such that $P_{XYZ}\in \cD_{\eta}$\}.

\item [{\em Step 2}:] \label{step:2} Let $C_1 = \{\mo\in A_1:$ {for every } $\tilde{m}_{\two 1},\, \tilde{m}_{\two 2}\in B_1$ and $\bar{\vecy}\in \cY^n$ with $\inp{f_{\one}(\mo), \bar{\vecy},f_{\two}(\tilde{m}_{\two 1}), f_{\two}(\tilde{m}_{\two 2}),\vecz}\in T^n_{XY\tilde{Y}_1\tilde{Y}_2Z}$
 that satisfies $P_{XYZ}\in\cD_{\eta}$, we have $I(\tilde{Y}_1\tilde{Y}_2;XZ|Y)> \eta\}$. Let $A_2 = A_1\setminus C_1$.

\item [{\em Step 3}:]\label{step:3} Let $C_2 = \{\mt\in B_1:$ {for every } $\tilde{m}_{\one 1},\, \tilde{m}_{\one 2}\in A_2$ and $\bar{\vecx}\in \cX^n$ with $\inp{\bar{\vecx}, f_{\two}(\mt),  f_{\one}(\tilde{m}_{\one 1}),f_{\one}(\tilde{m}_{\one 2}),\vecz}\in T^n_{XY\tilde{X}_1\tilde{X}_2Z}$ that satisfies $P_{XYZ}\in \cD_{\eta}$, we have $I(\tilde{X}_1\tilde{X}_2;YZ|X)> \eta\}$. Let $B_2 = B_1\setminus C_2$.

\item [{\em Step 4}:]\label{step:4} Let $C_3 = \{\mo\in A_2:$ {for every } $(\tilde{m}_{\one},\, \tilde{m}_{\two})\in A_2\times B_2, \, \tilde{m}_{\one}\neq m_{\one}$ and $\bar{\vecy}\in \cY^n$ with $\inp{f_{\one}(\mo), \bar{\vecy}, f_{\one}(\tilde{m}_{\one}), f_{\two}(\tilde{m}_{\two}), \vecz}\in T^n_{XY\tilde{X}\tilde{Y}Z}$ such that $P_{XYZ}\in \cD_{\eta}$, we have $I(\tilde{X}\tilde{Y};XZ|Y)>\eta\}$. Let $A_3 = A_2\setminus C_3$.

\item [{\em Step 5}:]\label{step:5} Let $C_4 = \{\mt\in B_2:$ {for every } $(\tilde{m}_{\one},\, \tilde{m}_{\two})\in A_3\times B_2, \, \tilde{m}_{\two}\neq m_{\two}$ and $\bar{\vecx}\in \cX^n$ with $\inp{\bar{\vecx},f_{\two}(\mt), f_{\one}(\tilde{m}_{\one}), f_{\two}(\tilde{m}_{\two}), \vecz}\in T^n_{XY\tilde{X}\tilde{Y}Z}$ such that $P_{XYZ}\in \cD_{\eta}$, we have $I(\tilde{X}\tilde{Y};YZ|X)>\eta\}$. Let $B_3 = B_2\setminus C_4$.

\end{description}
After steps 1-5, the decoded output is as follows.
\begin{align*}
\phi(\vecz) = \begin{cases}(\mo,\mt) &\text{ if }A_3\times B_3 = \{(\mo,\mt)\},\\ \oneb &\text{ if }|A_3| = 0, \, |B_3| \neq 0,\\\twob &\text{ if }|A_3| \neq 0, \, |B_3| = 0\text{ and}\\(1,1)&\text{ otherwise.}
\end{cases}
\end{align*}
Similar to the decoder for achievability of Theorem~\ref{thm:main_result}, the last of the above cases ({\em i.e.} $\phi(\vecz) = (1,1)$) occurs when either of the following two events occur: (i) $|A_3|=|B_3|=0$ and (ii) $|A_3|, |B_3|\geq 1$ and $|A_3|+|B_3|\geq 3$. The first event will be shown to be an atypical event and hence will occur with vanishing probability. For small enough choices of $\eta>0$, Lemma~\ref{lemma:disambiguity} implies that the second event cannot happen in a non-spoofable channel. We consider such an $\eta$.

\begin{figure}[!h]
\begin{centering}
\begin{tikzpicture}[scale=1.2]
  \node (A1) at (-1.5, 0) [rectangle, draw] {$P_{e,\maltwo}(\vecy)$};
  \node (C) at (-1.5, -2) [rounded corners=3pt, draw] {Union bound};
  \node[xshift = 1.4cm, right of = C] {eq.~\eqref{eq:appendix_union_1} and \eqref{eq:appendix_union_2}};
  \node (D) at (1.5, -2) {};

  \node (E) at (-3, -4)  {\small $\substack{\text{\small  small}\\\text{ (atypical event)}}$};
\node (F) at (-1.5, -4.5)  {$\substack{ \text{\small small}\\\text{ (atypical event)}}$};
 \node (G) at (0, -4) [rounded corners=3pt, draw] {Union bound};
  \node[xshift = 0.8cm, right of = G] {eq.~\eqref{err:upperbound}};
  \node (G1) at (-1.5, -6) [rounded corners=3pt, draw] {Union bound}; 
  \node (G3) at (1.5, -6) [rounded corners=3pt, draw] {Union bound};
 
\node (H1) at (-5, -8.7) {};
\node (H2) at (5.7, -8.7)  {};

  \node (G11) at (-2.7, -9.5) [rounded corners=3pt, draw] {$\substack{\text{Case analysis}\\ \text{of \eqref{eq:upperbound2}}}$};
  \node (G111) at (-4.7, -11)  {\small small};
  \node (G112) at (-3.5, -11.5)  {$\substack{\text{\small  small } \\ \text{(by $R_{\one}$}\\\text{  bound in \eqref{eq:rbound1})}}$};
 \node (G113) at (-2.1, -11.5)  {$\substack{\text{\small  small } \\ \text{(by $R_{\two}$ bound}\\\text{ in \eqref{eq:rbound2})}}$}; 
  \node (G114) at (-0.6, -11.5)  {$\substack{\text{\small  small } \\ \text{(by $R_{\one}+ R_{\two}$ }\\\text{bound in \eqref{eq:rbound3})}}$};

  \node (G12) at (-0.9, -7.5)  {$\substack{\text{\small  small } \\ \text{(by codebook}\\\text{ property\eqref{codebook:2b})}}$};

 \node (G31) at  (0.3, -8.3) {$\substack{\text{\small  small } \\ \text{(by codebook}\\\text{ property\eqref{codebook:4})}}$}; 
  \node (G32) at (3, -9.2) [rounded corners=3pt, draw]{$\substack{\text{Case analysis}\\ \text{of \eqref{eq:upperbound3}}}$};
  \node (G321) at (-4.7+5.7, -11.3)  {\small small};
  \node (G322) at (-3.5+5.7, -11.3)  {$\substack{\text{\small  small } \\ \text{(by $R_{\two}$}\\\text{  bound in \eqref{eq:Rbound4})}}$};
 \node (G323) at (-2.1+5.7, -11.3)  {$\substack{\text{\small  small } \\ \text{(by $R_{\two}$ bound}\\\text{ in \eqref{eq:Rbound5})}}$}; 
  \node (G324) at (-0.6+5.7, -11.3)  {$\substack{\text{\small  small } \\ \text{(by $2R_{\two}$ }\\\text{bound in \eqref{eq:Rbound6})}}$};

 	\draw[->] (G) -- (G1) node[midway, right]{$\scriptstyle P_{\cE_{\mo,1}}(\vecy)$};
 	\draw[->] (G) -- (G3) node[midway, right]{$\scriptstyle P_{\cE_{\mo,2}}(\vecy)$};
 	\draw[->] (G1) -- (G12) node[midway, xshift = 0cm, yshift = 0.1cm,  right]{$ \stackrel{\text{\eqref{chapter:4:cond} }}{\text{\scriptsize holds}}$};
 	\draw[->] (G1) -- (G11) node[midway, left]{$\scriptstyle \eqref{eq:1} \text{ holds}$};
 	\draw[ dashed] (H1) -- (H2) ;

 	\draw[->] (G11) -- (G111) node[midway, left]{\small 1};
 	\draw[->] (G11) -- (G112) node[midway, left]{\small 2};
 	\draw[->] (G11) -- (G113) node[midway, left]{\small 3};
 	\draw[->] (G11) -- (G114) node[midway, left]{\small 4};

 	\draw[->] (G32) -- (G321) node[midway, left]{\small 1};
 	\draw[->] (G32) -- (G322) node[midway, left]{\small 2};
 	\draw[->] (G32) -- (G323) node[midway, left]{\small 3};
 	\draw[->] (G32) -- (G324) node[midway, left]{\small 4};

 	 	\draw[->] (G3) -- (G31) node[xshift = 0.2cm, yshift = 0.3cm,midway, left]{$ \stackrel{\text{\eqref{eq:2} does }}{\text{\scriptsize not hold}}$};
 	\draw[->] (G3) -- (G32) node[midway, xshift = 0.2cm, yshift = -0.2cm,  right]{$\scriptstyle \eqref{eq:2} \text{ holds}$};

 	\draw[->] (C) -- (E) node[midway, right]{$\scriptstyle P_1(\vecy)$};
 	\draw[->] (C) -- (F) node[yshift = -0.2cm, midway, right]{$\scriptstyle P_2(\vecy)$};
 	\draw[->] (C) -- (G) node[midway, right]{$\scriptstyle \scriptstyle P_3(\vecy)$};
  \draw[->] (A1) -- (C) node[midway, right]{};
\node at (7, -2.5)  {
\begin{tabular}{p{1.5cm}|p{6.5cm}} 
$P_{e,\maltwo}(\vecy)$& the average probability of error when malicious user \two sends $\vecy$\\
\hline
$P_1(\vecy)$ & the average probability that channel inputs are atypical\\
\hline	
$P_2(\vecy)$ & the average probability that the channel output is atypical \\
\hline	
$P_3(\vecy)$ & the average probability of error when channel inputs and output are typical\\
\hline	
$P_{\cE_{\mo,1}}(\vecy)$& condition \ref{check:2} in Definition~\ref{D_eta_z} does not hold\\
\hline	
$P_{\cE_{\mo,2}}(\vecy)$& condition ~\ref{check:3}  in Definition~\ref{D_eta_z} does not hold\\
\end{tabular}
};
\end{tikzpicture}
\caption{Flowchart depicting the flow of analysis of $P_{e,\maltwo}(\vecy)$, the average probability of error when user \two is malicious and sends $\vecy$. The details of the case analysis of \eqref{eq:upperbound3} are skipped in the proof as it is similar to the case analysis of \eqref{eq:upperbound2}. This flowchart is same as the Flowchart~\ref{fig:flowchart1_authcom} for the proof of feasibility till the dashed line.} \label{fig:flowchart1_authcom_appendix}
\end{centering}
\end{figure}

As noted previously (see \eqref{honest_error_ub}) $P_{e,\na}\leq P_{e,\malone}+ P_{e,\maltwo}$. Thus, it is sufficient to analyze the case when one of the users is adversarial.
We consider the case when user \two is malicious.
We will analyse $P_{e,\maltwo}$. Suppose a malicious user \two sends $\vecy$. Let $P_{e,\maltwo}(\vecy)$ denote the probability of error when user \two is malicious and sends $\vecy$. That is, for $\cE^{\one}_{\mo} \defineqq \inb{\vecz:\phi_{\one}(\vecz)\notin\{\mo,\twob\}}$,  
\begin{align*}
P_{e,\maltwo}(\vecy) = \frac{1}{N_{\one}}\sum_{m_{\one}\in \mathcal{M}_{\one}}W^n\inp{\cE^{\one}_{\mo}|f_{\one}(\mo), \vecy}
\end{align*}  and $$P_{e,\maltwo} = \max_{\vecy}P_{e,\maltwo}(\vecy).$$ We will show that $P_{e,\maltwo}(\vecy)$ is small for each $\vecy\in \cY^n$. The analysis follows the flowchart given in Figure~\ref{fig:flowchart1_authcom}.
For some $\epsilon$ satisfying $0<\epsilon<\eta/3$, let
\begin{align*}
\cH&\defineqq \inb{m_{\one}: (\vecx_{m_{\one}}, \vecy)\in T^n_{XY}\text{ such that } I(X;Y)> \epsilon}.
\end{align*}
Then,
\begin{align}
P_{e,\maltwo}(\vecy) &\leq \frac{1}{N_{\one}}|\cH| + \frac{1}{N_{\one}}\sum_{\mo\in \cH^c}W^n(\cE^{\one}_{\mo}|f_{\one}(\mo),\vecy)=: P_1(\vecy) +\tilde{P}(\vecy)\label{eq:appendix_union_1}.
\end{align} Here, $\cH^c$ denotes the complement of $\cH$. 
The first term on the RHS, 
\begin{align*}
P_1(\vecy)\leq |\cP^n_{\cX\times \cY}|\times\frac{|\inb{m_{\one}: (\vecx_{m_{\one}}, \vecy)\in T_{XY}^n, \, I(X;Y)> \epsilon}|}{N_{\one}}
\end{align*}
which goes to zero as $n\rightarrow \infty$ by ~\eqref{codebook:1} and noting that there are only polynomially many types. 
In order to analyze the second term $\tilde{P}(\vecy)$, for $\vecy'\in \cY^n$, let $\cE_1(\vecy')$ be defined as
\begin{align*}
\cE_1(\vecy') = \{\vecz: (\vecx_{\mo}, \vecy', \vecz)\in T^n_{XYZ} \text{ for some }P_{XYZ}\in \blue{\cD_{\eta}}\}.
\end{align*}
Then,
\begin{align}
\tilde{P}(\vecy)&=   {\frac{1}{N_{\one}}\sum_{\mo\in \cH^c}W^n(\cE^{\one}_{\mo}|f_{\one}(\mo),\vecy)}\nonumber\\
&=\frac{1}{N_{\one}}\sum_{\mo\in \cH^c}W^n(\inp{\cE_1(\vecy)^c\cap\cE^{\one}_{\mo}}\cup\inp{\cE_1(\vecy)\cap\cE^{\one}_{\mo}}|f_{\one}(\mo),\vecy) \nonumber\\
&\leq \frac{1}{N_{\one}}\sum_{\mo\in \cH^c}W^n(\inp{\cE_1(\vecy)^c}|f_{\one}(\mo),\vecy) + W^n(\inp{\cE_1(\vecy)\cap\cE^{\one}_{\mo}}|f_{\one}(\mo),\vecy) \nonumber\\
&=\frac{1}{N_{\one}}\sum_{m_{\one}\in \cH^c}\inp{\sum_{P_{XYZ}\in \cD_{\eta}^c}\sum_{\vecz\in T^n_{Z|XY}(\vecx_{m_{\one}},\vecy)}W^n(\vecz|\vecx_{m_{\one}}, \vecy)} + \frac{1}{N_{\one}}\sum_{m_{\one}\in \cH^c}\inp{\sum_{P_{XYZ}\in \cD_{\eta}}\sum_{\vecz\in T^n_{Z|XY}(\vecx_{m_{\one}},\vecy)\cap\cE^{\one}_{\mo}}W^n(\vecz|\vecx_{m_{\one}}, \vecy)}\nonumber\\
&=:P_{2}(\vecy) +P_{3}(\vecy)\label{eq:appendix_union_2}.
\end{align} 
To analyze $P_{2}(\vecy)$, consider any  $\mo \in \cH^c$,
\begin{align*}
\sum_{\stackrel{P_{XYZ}\in \cD_{\eta}^c,}{I(X;Y)\leq \epsilon}}\sum_{\vecz\in T^n_{Z|XY}(\vecx_{m_{\one}},\vecy)}W^n(\vecz|\vecx_{m_{\one}}, \vecy) &\leq \sum_{\stackrel{P_{XYZ}\in \cD_{\eta}^c,}{I(X;Y)\leq \epsilon}}\exp{\inp{-nD(P_{XYZ}||P_{XY}W)}}\\
& = \sum_{\stackrel{P_{XYZ}\in \cD_{\eta}^c,}{I(X;Y)\leq \epsilon}}\exp{\inp{-n\inp{D(P_{XYZ}||P_{X}P_{Y}W) -I(X;Y)}}}\\
& = \sum_{\stackrel{P_{XYZ}\in \cD_{\eta}^c,}{I(X;Y)\leq \epsilon}}\exp{\inp{-n\inp{\eta-\epsilon}}}\\
& \leq |\cD_{\eta}^c|\exp{\inp{-n\inp{\eta-\epsilon}}}. 
\end{align*}
Thus, 
\begin{align*}
P_{2}(\vecy)&\leq \frac{|\cH^c|}{N_{\one}}|\cD_{\eta}^c|\exp{\inp{-n\inp{\eta-\epsilon}}}\\
&\rightarrow 0 \text{ as }\epsilon<\eta/3 \text{ and $|\cD_{\eta}^c|\leq (n+1)^{|\cX||\cY||\cZ|}$,}
\end{align*}

We are left to analyze 
\begin{align*}
P_{3}(\vecy) = \frac{1}{N_{\one}}\sum_{m_{\one}\in \cH^c}\inp{\sum_{P_{XYZ}\in \cD_{\eta}}\sum_{\vecz\in T^n_{Z|XY}(\vecx_{m_{\one}},\vecy)\cap\cE^{\one}_{\mo}}W^n(\vecz|\vecx_{m_{\one}}, \vecy)}.
\end{align*} 
In order to proceed, we define the following sets,
\begin{align*}
\cP_1^{\eta}& =  \{P_{X\tilde{X}\tilde{Y}YZ}\in \cP^n_{\cX\times\cX\times\cY\times\cY\times\cZ}: P_{XYZ}\in \cD_{\eta},\, P_{\tilde{X}Y'Z}\in \cD_{\eta} \text{ for some }P_{Y'|\tilde{X}Z},
 P_{X'\tilde{Y}Z}\in \cD_{\eta} \\&\qquad  \,\text{ for some }P_{X'|\tilde{Y}Z}, P_{X}=P_{\tilde{X}}=P_{\one}, P_{\tilde{Y}} = P_{\two}, \, I(\tilde{Y};X)\leq 2\eta, \, I(\tilde{Y};\tilde{X})\leq 2\eta\text{ and }I(\tilde{X}\tilde{Y};XZ|Y)\geq\eta\}\\
\cP_2^{\eta}& = \{P_{X\tilde{Y}_1\tilde{Y}_2YZ}\in \cP^n_{\cX\times\cY\times\cY\times\cY\times\cZ}: P_{XYZ}\in \cD_{\eta}, P_{X'_1\tilde{Y}_1Z}\in \cD_{\eta} \text{ for some }P_{X'_1|\tilde{Y}_1Z}, P_{X'_2\tilde{Y}_2Z}\in \cD_{\eta}\\
&\qquad \, \text{ for some }P_{X'_2|\tilde{Y}_2Z}, P_{X}=P_{\one}, P_{\tilde{Y}_1}=P_{\tilde{Y}_2} = P_{\two}\text{ and }I(\tilde{Y}_1\tilde{Y}_2;XZ|Y)\geq\eta\}.
\end{align*}
For $P_{X\tilde{X}\tilde{Y}YZ}\in \cP_1^{\eta}$ and $P_{X\tilde{Y}_1\tilde{Y}_2YZ}\in \cP_2^{\eta} $, let
\begin{align*}
\cE_{\mo,1}(P_{X\tilde{X}\tilde{Y}YZ}) & = \big\{\vecz: \exists(\tilde{m}_{\one},\, \tilde{m}_{\two})\in \cM_{\one}\times \cM_{\two}, \, \tilde{m}_{\one}\neq m_{\one}, \,  \inp{\vecx_{\mo},\vecx_{\tilde{m}_{\one}}, \vecy,   \vecy_{\tilde{m}_{\two}}, \vecz}\in T^n_{X\tilde{X}Y\tilde{Y}Z} \big\} \text{ and }\\
\cE_{\mo,2}(P_{X\tilde{Y}_1\tilde{Y}_2YZ}) & = \big\{\vecz: \exists \tilde{m}_{\two 1},\, \tilde{m}_{\two 2}\in \cM_{\two}, \,\inp{\vecx_{\mo},  \vecy_{\tilde{m}_{\two 1}}, \vecy_{\tilde{m}_{\two 2}},\vecy,\vecz}\in T^n_{X\tilde{Y}_1\tilde{Y}_2YZ}\big\}.
\end{align*}
With these definitions, 
\begin{align}
\inb{\vecz\in T^n_{Z|XY}(\vecx_{\mo}, \vecy)\cap\cE^{\one}_{\mo}\text{ where }P_{XYZ}\in \cD_{\eta}}&\stackrel{(a)}{\subseteq} \{\vecz:\mo\in A_1\cap C_1\}\cup\{\vecz: \mo\in A_2\cap C_3\}\\
&\stackrel{(b)}{\subseteq}\inp{\cup_{P_{X\tilde{Y}_1\tilde{Y}_2YZ}\in \cP_2^{\eta}}\cE_{\mo,2}(P_{X\tilde{Y}_1\tilde{Y}_2YZ})}\cup\inp{\cup_{P_{X\tilde{X}\tilde{Y}YZ}\in \cP_1^{\eta}}\cE_{\mo,1}(P_{X\tilde{X}\tilde{Y}YZ})}\label{eq:assym_up} 
\end{align} where $(a)$ follows from the decoder definition (Note that $\vecz\in T^n_{Z|XY}(\vecx_{\mo}, \vecy)$ for $P_{XYZ}\in \cD_{\eta}$ implies that $\vecz\in A_1$ and $\vecz\in \cE^{\one}_{\mo}$ further implies that it was eliminated in {\em \textbf{Step 2}} or {\em \textbf{Step 4}}, i.e., $\vecz\in A_1\cap C_1$ or $\vecz\in A_2\cap C_3$). To see $(b)$, first notice that $\{\vecz:\mo\in A_1\cap C_1\}\subseteq \inp{\cup_{P_{X\tilde{Y}_1\tilde{Y}_2YZ}\in \cP_2^{\eta}}\cE_{\mo,2}(P_{X\tilde{Y}_1\tilde{Y}_2YZ})}$. 
Further, for $\mo\in C_3$, we only consider pairs  $(\tilde{m}_{\one}, \tilde{m}_{\two})$ belonging to $A_2\times B_2$ defined in {\em \textbf{Step 2}} and {\em \textbf{Step 3}}. Thus, for  $(\vecx_{\mo}, \vecy, \vecx_{\tilde{m}_{\one}}, \vecy_{\tilde{m}_{\two}}, \vecz)\in T^n_{XY\tilde{X}\tilde{Y}Z}$, we have $P_{XYZ}\in \cD_{\eta}$, $P_{\tilde{X}Y'Z}\in \cD_{\eta}$ for some $P_{Y'|\tilde{X}Z}$, $P_{X'\tilde{Y}Z}\in \cD_{\eta}$ for some $P_{X'|\tilde{Y}Z}$ and $I(X\tilde{X};\tilde{Y}Z|X')\leq \eta$ (see  {\em \textbf{Step 3}}).
For such distributions, the following lemma implies that $\{\vecz: \mo\in A_2\cap C_3\}\subseteq \inp{\cup_{P_{X\tilde{X}\tilde{Y}YZ}\in \cP_1^{\eta}}\cE_{\mo,1}(P_{X\tilde{X}\tilde{Y}YZ})}$.
\begin{lemma}\label{lemma:indep}
For a distribution $P_{XY\tilde{X}Y'X'\tilde{Y}Z}\in \cP^n_{\cX\times\cY\times\cX\times\cY\times\cX\times\cY\times\cZ}$ satisfying
\begin{enumerate}[label=(\Alph*)]
	\item $P_{XYZ}\in \blue{\cD_{\eta}}$
	\item $P_{\tilde{X}Y'Z}\in \blue{\cD_{\eta}}$
	\item $P_{X'\tilde{Y}Z}\in \blue{\cD_{\eta}}$, and
	\item $I(X\tilde{X};\tilde{Y}Z|X')<\eta$,
\end{enumerate}  
we have  $I(\tilde{Y};X)\leq 2\eta$ and $I(\tilde{Y};\tilde{X})\leq 2\eta$.
\end{lemma}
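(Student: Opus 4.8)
The plan is to exploit the same divergence-to-total-variation machinery already used in the proof of Lemma~\ref{lemma:disambiguity}. First I would combine hypotheses (A) and (D): exactly as in the chain leading to \eqref{eq:lemma15}, the sum $D(P_{XYZ}\|P_XP_YW_{Z|XY}) + I(X\tilde{X};\tilde{Y}Z|X')$ can be rewritten (using the definition of conditional mutual information as a KL divergence and cancelling the $W_{Z|XY}$ factor) as $D(P_{XY\tilde{X}\tilde{Y}Z}\|P_{X'}P_{X\tilde{X}|X'}P_{\tilde{Y}}P_{Y|X\tilde{X}}W_{Z|XY})$ — wait, more carefully: the point is that hypotheses (A)+(D) together bound $D$ of the true joint distribution against a product-form distribution in which $\tilde Y$ is (conditionally on $X'$) independent of $(X,\tilde X,Z)$. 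Since $2\eta$ bounds this divergence, Pinsker's inequality \eqref{eq:pinsker} gives $d_{TV}$ of $P_{X\tilde X\tilde Y}$ from a distribution in which $\tilde Y \perp (X,\tilde X)$ of order $\sqrt{\eta}$.

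The cleaner route, which I would actually take, is to extract independence statements directly. From $I(X\tilde{X};\tilde{Y}Z|X')<\eta$ we get $I(X\tilde X;\tilde Y|X') \le I(X\tilde X;\tilde Y Z|X')<\eta$, hence $I(X;\tilde Y|X')<\eta$ and $I(\tilde X;\tilde Y|X')<\eta$. To convert these into unconditional bounds on $I(\tilde Y;X)$ and $I(\tilde Y;\tilde X)$ I need to control the extra term coming from conditioning on $X'$; this is where hypotheses (B) and (C) enter. Hypotheses (C) says $P_{X'\tilde Y Z}\in\cD_\eta$, i.e. $D(P_{X'\tilde Y Z}\|P_{X'}P_{\tilde Y}W_{Z|X'\tilde Y})\le \eta$... hmm, but note the channel $W$ here is $W_{Z|XY}$ applied with arguments $X',\tilde Y$, so $P_{X'\tilde Y Z}\in\cD_\eta$ forces in particular that $D(P_{X'\tilde Y}\|P_{X'}P_{\tilde Y})$ is small after marginalizing — actually $D(P_{X'\tilde Y Z}\|P_{X'}P_{\tilde Y}W)\ge D(P_{X'\tilde Y}\|P_{X'}P_{\tilde Y}) = I(X';\tilde Y)$, so $I(X';\tilde Y)\le\eta$. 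Similarly (B) gives $I(\tilde X;Y')\le\eta$ but that is not directly what I want. The key inequality is then the chain-rule identity $I(\tilde Y;X) \le I(\tilde Y;XX') = I(\tilde Y;X') + I(\tilde Y;X|X') \le \eta + \eta = 2\eta$, and identically $I(\tilde Y;\tilde X)\le I(\tilde Y;\tilde X X') = I(\tilde Y;X') + I(\tilde Y;\tilde X|X') \le 2\eta$. This is exactly the claimed bound.

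So the proof is short: (i) from hypothesis (D), $I(X;\tilde Y|X')\le\eta$ and $I(\tilde X;\tilde Y|X')\le\eta$ by the chain rule and data-processing for mutual information; (ii) from hypothesis (C), $I(X';\tilde Y)\le\eta$ because $P_{X'\tilde Y Z}\in\cD_\eta$ implies (by dropping $Z$, i.e. the data-processing inequality for KL divergence) $D(P_{X'\tilde Y}\|P_{X'}P_{\tilde Y})\le\eta$; (iii) add, using $I(\tilde Y;X)\le I(\tilde Y;XX')=I(\tilde Y;X')+I(\tilde Y;X\mid X')$ and the analogous expansion for $\tilde X$. I expect the main (very minor) obstacle is just bookkeeping: confirming that the distribution $P_{X'}P_{\tilde Y}W_{Z|XY}(\cdot|X',\tilde Y)$ is indeed the reference distribution in the definition of $\cD_\eta$ applied to the triple $(X',\tilde Y,Z)$, i.e. that $\cD_\eta$ as defined in \eqref{eq:D_eta} is being instantiated with $(X',\tilde Y)$ playing the roles of the two inputs, so that the marginalization step in (ii) is legitimate; hypotheses (B) and (C) are stated precisely so that this holds. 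No heavy computation is needed — this lemma is purely an information-inequality manipulation, in contrast to Lemma~\ref{lemma:disambiguity} which needed the full Pinsker/triangle-inequality argument to produce an approximate spoofing attack.
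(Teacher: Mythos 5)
Your proof is correct and relies on exactly the two hypotheses, (C) and (D), that the paper's own proof uses, via the same mechanism: add the two divergence bounds and invoke monotonicity of divergence under marginalization. The only difference is packaging --- the paper merges (C) and (D) into the single divergence $D(P_{X'\tilde{Y}X\tilde{X}Z}\,\|\,P_{\tilde{Y}}P_{X\tilde{X}}P_{X'|X\tilde{X}}W)$ and applies the log-sum inequality to conclude $I(\tilde{Y};X\tilde{X})\le 2\eta$ directly, whereas you marginalize $Z$ in each hypothesis separately (getting $I(X';\tilde{Y})\le\eta$ from (C) and $I(X\tilde{X};\tilde{Y}|X')\le\eta$ from (D)) and add them via the chain rule to obtain $I(\tilde{Y};X'X\tilde{X})\le 2\eta$, which yields the same conclusion.
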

\noindent The proof of this lemma is given after the current proof. It follows by adding $(C)$ and $(D)$ and an application of log-sum inequality. 
Continuing the analysis of $P_{3}(\vecy)$, we use \eqref{eq:assym_up} to write
\begin{align}
P_{3}(\vecy) =& \frac{1}{N_{\one}}\sum_{m_{\one}\in \cH^c}\inp{\sum_{P_{XYZ}\in \cD_{\eta}}\sum_{\vecz\in T^n_{Z|XY}(\vecx_{m_{\one}},\vecy)\cap\cE^{\one}_{\mo}}W^n(\vecz|\vecx_{m_{\one}}, \vecy)}\nonumber\\
\leq&\frac{1}{N_{\one}}\sum_{\mo\in \cM_{\one}}\sum_{P_{X\tilde{X}\tilde{Y}YZ}\in \cP_1^{\eta}}  W^n\inp{\cE_{\mo,1}(P_{X\tilde{X}\tilde{Y}YZ})|\vecx_{\mo}, \vecy} \nonumber\\
&\qquad \qquad+ \frac{1}{N_{\one}}\sum_{\mo\in \cM_{\one}}\sum_{P_{X\tilde{Y}_1\tilde{Y}_2YZ}\in \cP_2^{\eta}}  W^n\inp{\cE_{\mo,2}(P_{X\tilde{Y}_1\tilde{Y}_2YZ})|\vecx_{\mo}, \vecy}\\
&=:P_{\cE_{\mo,1}}(\vecy) + P_{\cE_{\mo,2}}(\vecy).  \label{err:upperbound}
\end{align}
We see that $|\cP_1^{\eta}|$ and $|\cP_2^{\eta}|$ grow at most polynomial in $n$. So, it will  suffice to uniformly upper bound $W^n\inp{\cE_{\mo,1}(P_{X\tilde{X}\tilde{Y}YZ})|\vecx_{\mo}, \vecy}$ and $W^n\inp{\cE_{\mo,2}(P_{X\tilde{Y}_1\tilde{Y}_2YZ})|\vecx_{\mo}, \vecy}$ by a term exponentially decreasing in $n$ for all $P_{X\tilde{X}\tilde{Y}YZ}\in \cP_1^{\eta}$, $P_{X\tilde{Y}_1\tilde{Y}_2YZ}\in \cP_2^{\eta}$ and $\mo\in \cM_{\one}$. 
We start the analysis of $P_{\cE_{\mo,1}}(\vecy) $ by upper bounding $W^n\inp{\cE_{\mo,1}(P_{X\tilde{X}\tilde{Y}YZ})|\vecx_{\mo}, \vecy}$. By using~\eqref{codebook:2b}, we see that for $P_{X\tilde{X}\tilde{Y}YZ}\in \cP_1^{\eta}$ such that
\begin{align}
I\inp{X;\tilde{X}\tilde{Y}Y}> |R_{\one}- I(\tilde{X};\tilde{Y}Y)|^{+}+|R_{\two}-I(\tilde{Y};Y)|^{+}+\epsilon,\label{chapter:4:cond}
\end{align}we have,
\begin{align*}
\frac{\left|\inb{\mo:(\vecx_{\mo}, \vecx_{\tilde{m}_{\one}}, \vecy_{\mt}, \vecy)\in T^n_{X\tilde{X}\tilde{Y}Y} \text{ for some }\tilde{m}_{\one}\neq \mo\text{ and some }\mt} \right|}{N_{\one}} \leq \exp\inb{-n\epsilon/2}.
\end{align*}
So, for all $P_{X\tilde{X}\tilde{Y}YZ}\in \cP_1^{\eta}$ satisfying \eqref{chapter:4:cond},
\begin{align*}
&\frac{1}{N_{\one}}\sum_{\mo\in \cH^c} W^n\inp{\cE_{\mo,1}(P_{X\tilde{X}\tilde{Y}YZ})|\vecx_{\mo}, \vecy} \\
& = \frac{1}{N_{\one}}\sum_{\substack{\mo:(\vecx_{\mo}, \vecx_{\tilde{m}_{\one}}, \vecy_{\mt}, \vecy)\in T^n_{X\tilde{X}\tilde{Y}Y},\\ \tilde{m}_{\one}\in \cM_{\one},\tilde{m}_{\one}\neq \mo,\mt\in\cM_{\two}}}\,\, \sum_{\vecz\in T^{n}_{Z|X\tilde{X}Y\tilde{Y}}(\vecx_{\mo},\vecx_{\tilde{m}_{\one}},\vecy,\vecy_{\tilde{m}_{\two}})}W^n\inp{\vecz|\vecx_{\mo}, \vecy}\\
&\leq \exp\inb{-n\epsilon/2}.
\end{align*}
Thus, it is sufficient to consider distributions $P_{X\tilde{X}\tilde{Y}YZ}\in \cP_1^{\eta}$ for which 
\begin{align}
I\inp{X;\tilde{X}\tilde{Y}Y}\leq |R_{\one}- I(\tilde{X};\tilde{Y}Y)|^{+}+|R_{\two}-I(\tilde{Y};Y)|^{+}+\epsilon.\label{eq:1}
\end{align}
For $P_{X\tilde{X}\tilde{Y}YZ}\in \cP_1^{\eta}$ satisfying~\eqref{eq:1},
\begin{align}
&\sum_{\vecz\in \cE_{\mo,1}(P_{X\tilde{X}\tilde{Y}YZ})}W^n(\vecz|\vecx_{\mo}, \vecy)\nonumber\\
&\qquad=\sum_{\substack{\tilde{m}_{\one}, \tilde{m}_{\two}:\tilde{m}_{\one}\neq \mo\\(\vecx_{\mo}, \vecx_{\tilde{m}_{\one}}, \vecy_{\tilde{m}_{\two}},\vecy)\in T^{n}_{X\tilde{X}\tilde{Y}Y}}}\sum_{\vecz:(\vecx_{\mo}, \vecx_{\tilde{m}_{\one}}, \vecy_{\tilde{m}_{\two}},\vecy, \vecz)\in T^{n}_{X\tilde{X}\tilde{Y}YZ}}W^n(\vecz|\vecx_{\mo}, \vecy)\nonumber\\
&\qquad\stackrel{(a)}{\leq} \sum_{\substack{\tilde{m}_{\one}, \tilde{m}_{\two}:\tilde{m}_{\one}\neq \mo\\(\vecx_{\mo}, \vecx_{\tilde{m}_{\one}}, \vecy_{\tilde{m}_{\two}},\vecy)\in T^{n}_{X\tilde{X}\tilde{Y}Y}}}\frac{|T^{n}_{Z|X\tilde{X}\tilde{Y}Y}(\vecx_{\mo},\vecx_{\tilde{m}_{\one}}, \vecy_{\tilde{m}_{\two}}, \vecy)|}{|T^n_{Z|XY}(\vecx_{\mo},\vecy)|}\nonumber\\
&\qquad \stackrel{(b)}{\leq} \sum_{\substack{\tilde{m}_{\one}, \tilde{m}_{\two}:\tilde{m}_{\one}\neq \mo\\(\vecx_{\mo}, \vecx_{\tilde{m}_{\one}}, \vecy_{\tilde{m}_{\two}},\vecy)\in T^{n}_{X\tilde{X}\tilde{Y}Y}}}\frac{\exp\inp{nH(Z|X\tilde{X}\tilde{Y}Y)}}{(n+1)^{-|\cX||\cY||\cZ|}\exp\inp{nH(Z|XY)}}\nonumber\\
&\qquad \stackrel{(c)}\leq \sum_{\substack{\tilde{m}_{\one}, \tilde{m}_{\two}:\\(\vecx_{\mo}, \vecx_{\tilde{m}_{\one}}, \vecy_{\tilde{m}_{\two}},\vecy)\in T^{n}_{X\tilde{X}\tilde{Y}Y}}}\exp\inp{-n\inp{I(Z;\tilde{X}\tilde{Y}|XY)-\epsilon}} \nonumber\\
&\qquad\stackrel{\text{(d)}}{\leq}\exp\inp{n\inp{|R_{\one}- I(\tilde{X};\tilde{Y}XY)|^{+}+|R_{\two}-I(\tilde{Y};XY)|^{+}-I(Z;\tilde{X}\tilde{Y}|XY)+2\epsilon}}\label{eq:upperbound2}
\end{align}
where (a) follows by noting that whenever $\vecz$ belongs to $T^n_{Z|X\tilde{X}\tilde{Y}Y}(\vecx_{\mo}, \vecx_{\tilde{m}_{\one}}, \vecy_{\tilde{m}_{\two}}, \vecy)$, $\vecz$ also belongs to $T^n_{Z|XY}(\vecx_{\mo}, \vecy)$; and for each $\vecz\in T^n_{Z|XY}(\vecx_{\mo}, \vecy)$, the value of $W^n(\vecz|\vecx_{\mo}, \vecy)$ is the same and is upper bounded by $1/|T^n_{Z|XY}(\vecx_{\mo}, \vecy)|$. $(b)$ follows from \eqref{eq:type_property3}, $(c)$ holds for large $n$ and $(d)$ follows from \eqref{codebook:3b}. 
To analyze \eqref{eq:upperbound2}, we will separately consider the following cases which together cover all the possibilities.
\begin{enumerate}
	\item $R_{\one}\leq I(\tilde{X};\tilde{Y}Y)$ and $R_{\two}\leq I(\tilde{Y};Y)$ \label{case1}
	\item $  I(\tilde{X};\tilde{Y}Y)<R_{\one}$ and $R_{\two}\leq I(\tilde{Y};XY)$\label{case2}
	\item $R_{\one}\leq  I(\tilde{X};\tilde{Y}XY) $ and $I(\tilde{Y};Y)<R_{\two}$\label{case3}
	\item $I(\tilde{X};\tilde{Y}XY)<R_{\one} $ and $I(\tilde{Y};XY)<R_{\two}$\label{case4}
\end{enumerate}

\noindent\underline{Case \ref{case1}: $R_{\one}\leq I(\tilde{X};\tilde{Y}Y)$ and $R_{\two}\leq I(\tilde{Y};Y)$}\\
In this case, \eqref{eq:1} implies that $I(X;\tilde{X}\tilde{Y}Y) \leq \epsilon$ which implies that $I(X;\tilde{X}\tilde{Y}|Y) \leq \epsilon$.  Thus, using the condition $I(XZ;\tilde{X}\tilde{Y}|Y)\geq \eta$ from definition of $\cP_1^{\eta}$, we see that
\begin{align*}
I(Z;\tilde{X}\tilde{Y}|XY) &= I(XZ;\tilde{X}\tilde{Y}|Y)-I(X;\tilde{X}\tilde{Y}|Y)\\
&\geq \eta-\epsilon.
\end{align*} 
Using \eqref{eq:upperbound2}, this implies that
\begin{align*}
\sum_{\vecz\in \cE_{\mo,1}(P_{X\tilde{X}\tilde{Y}YZ})}W^n(\vecz|\vecx_{\mo}, \vecy)&\leq \exp\inp{-n\inp{\eta-3\epsilon}}\\
&\rightarrow 0\text{ because }\eta>3\epsilon.
\end{align*}

\noindent \underline{Case ~\ref{case2}: $  I(\tilde{X};\tilde{Y}Y)<R_{\one}$ and $R_{\two}\leq I(\tilde{Y};XY)$}\\
 Using~\eqref{eq:1}, we have
 \begin{align*}
 -|R_{\two}-I(\tilde{Y};Y)|^{+} &\leq R_{\one} - I(\tilde{X};\tilde{Y}Y)-I(X;\tilde{X}\tilde{Y}Y)+\epsilon\\
 & = R_{\one} - I(\tilde{X};\tilde{Y}Y)-I(X;\tilde{Y}Y)-I(X;\tilde{X}|\tilde{Y}Y)+\epsilon\\
 & = R_{\one} - I(\tilde{X};\tilde{Y}XY)-I(X;\tilde{Y}Y)+\epsilon.
 \end{align*}
 This implies that 
 \begin{align*}
 R_{\one} - I(\tilde{X};\tilde{Y}XY)+\epsilon\geq I(X;\tilde{Y}Y) -|R_{\two}-I(\tilde{Y};Y)|^{+}.
 \end{align*}
 We will argue that the RHS of the above inequality is non-negative. When $R_{\two}\leq I(\tilde{Y};Y)$, the RHS is $I(X;\tilde{Y}Y)$ which is non-negative. Otherwise, when $I(\tilde{Y};Y)<R_{\two}\leq I(\tilde{Y};XY)$,
 \begin{align*}
  I(X;\tilde{Y}Y) -|R_{\two}-I(\tilde{Y};Y)|^{+} &=I(X;\tilde{Y}Y) -R_{\two}+I(\tilde{Y};Y)\\
  & = I(X;Y)+I(X;\tilde{Y}|Y)-R_{\two} +I(\tilde{Y};Y)\\
  &=I(\tilde{Y};XY)-R_{\two}+I(X;Y)\geq 0.
  \end{align*} 
 So, again the RHS is non-negative. This implies that $ R_{\one} \geq  I(\tilde{X};\tilde{Y}XY)-\epsilon$. Hence $ |R_{\one} -  I(\tilde{X};\tilde{Y}XY)|^{+}$ $\leq R_{\one} -  I(\tilde{X};\tilde{Y}XY) +\epsilon$. Thus, from \eqref{eq:upperbound2},
 \begin{align}
 \sum_{\vecz\in \cE_{\mo,1}}W^n(\vecz|\vecx_{\mo}, \vecy)&\leq \exp\inp{n\inp{R_{\one}-I(\tilde{X};\tilde{Y}XY) -I(Z;\tilde{X}\tilde{Y}|XY)+3\epsilon}}\label{rate_Ra_1}\\
 &= \exp\inp{n\inp{R_{\one}-I(\tilde{X};\tilde{Y}XY) -I(Z;\tilde{Y}|XY)-I(Z;\tilde{X}|\tilde{Y}XY)+3\epsilon}}\nonumber \\
 \qquad \qquad&= \exp\inp{n\inp{R_{\one}-I(\tilde{X};Z\tilde{Y}XY) -I(Z;\tilde{Y}|XY)+3\epsilon}}\nonumber \\
 \qquad \qquad&\leq \exp\inp{n\inp{R_{\one}-I(\tilde{X};Z\tilde{Y}) +3\epsilon}} \label{eq:cf_1}\\
  \qquad \qquad&\leq \exp\inp{n\inp{R_{\one}-I(\tilde{X};Z|\tilde{Y}) +3\epsilon}} \nonumber\\
 &\rightarrow 0
 \text{ for }R_{\one}\leq \min_{P_{X\tilde{X}\tilde{Y}YZ}\in \cP_1^{\eta}}{I(\tilde{X};Z|\tilde{Y})-3\epsilon}.
 \end{align}

Taking limit $\cP^{\eta}_1\rightarrow \cP_1^{0}$ (recall that $\epsilon <\eta/3)$, we get the following rate bound
 \begin{align}
 R_{\one}\leq \min_{P_{X\bar{X}\bar{Y}YZ}\in \cP_1^0}{I(\bar{X};Z|\bar{Y})}.\label{eq:rbound1}
\end{align}
\noindent \underline{Case~\ref{case3}: $R_{\one}\leq  I(\tilde{X};\tilde{Y}XY) $ and $I(\tilde{Y};Y)<R_{\two}$}\\
Using~\eqref{eq:1}, we obtain that 
 \begin{align*}
-|R_{\one}-I(\tilde{X};\tilde{Y}Y)|^{+}&\leq R_{\two} - I(\tilde{Y};Y)-I(X;\tilde{X}\tilde{Y}Y)+\epsilon\\
& = R_{\two} - I(\tilde{Y};Y)-I(X;Y)-I(X;\tilde{Y}|Y)-I(X;\tilde{X}|\tilde{Y}Y)+\epsilon\\
& = R_{\two} - I(\tilde{Y};XY)-I(X;Y)-I(X;\tilde{X}|\tilde{Y}Y)+\epsilon
\end{align*}
This implies that 
\begin{align*}
R_{\two} - I(\tilde{Y};XY)+\epsilon\geq I(X;Y)+I(X;\tilde{X}|\tilde{Y}Y) -|R_{\one}-I(\tilde{X};\tilde{Y}Y)|^{+}.
\end{align*}
Similar to the previous case, we will argue that RHS of the above inequality is non-negative. When $R_{\one}\leq I(\tilde{X};\tilde{Y}Y)$, it is clearly true. Otherwise, when $I(\tilde{X};\tilde{Y}Y)<R_{\one} \leq I(\tilde{X};\tilde{Y}XY)$, then 
\begin{align*}
I(X;\tilde{X}|\tilde{Y}Y) -|R_{\one}-I(\tilde{X};\tilde{Y}Y)|^{+} &=I(X;\tilde{X}|\tilde{Y}Y) -R_{\one}+I(\tilde{X};\tilde{Y}Y)\\
&=I(\tilde{X};\tilde{Y}XY)-R_{\one} \geq 0.
\end{align*}
Thus, for $R_{\one}\leq  I(\tilde{X};\tilde{Y}XY) $ and $I(\tilde{Y};Y)<R_{\two}$, we have $R_{\two} - I(\tilde{Y};XY)+\epsilon\geq0$. This implies that $|R_{\two} - I(\tilde{Y};XY)|^+\leq R_{\two} - I(\tilde{Y};XY) +\epsilon$. So, from \eqref{eq:upperbound2},
 \begin{align}
 \sum_{\vecz\in \cE_{\mo,1}(P_{X\tilde{X}\tilde{Y}YZ})}W^n(\vecz|\vecx_{\mo}, \vecy)&\leq \exp\inp{n\inp{R_{\two}-I(\tilde{Y};XY)-I(Z;\tilde{X}\tilde{Y}|XY)+3\epsilon}}\label{rate_Rb_1}\\
 &= \exp\inp{n\inp{R_{\two}-I(\tilde{Y};XY)-I(Z;\tilde{Y}|XY)-I(Z;\tilde{X}|XY\tilde{Y})+3\epsilon}}\nonumber \\
 &=\exp\inp{n\inp{R_{\two}-I(\tilde{Y};XYZ)-I(Z;\tilde{X}|XY\tilde{Y})+3\epsilon}}\nonumber \\
 &\leq\exp\inp{n\inp{R_{\two}-I(\tilde{Y};XZ)+3\epsilon}}\label{eq:cf_2}\\
 &\leq\exp\inp{n\inp{R_{\two}-I(\tilde{Y};Z|X)+3\epsilon}}\nonumber \\
 &\rightarrow 0 \text{ if }R_{\two}<\min_{P_{X\tilde{X}\tilde{Y}YZ}\in \cP^{\eta}_1}I(\tilde{Y};Z|X)-3\epsilon.\nonumber
 \end{align}
Taking limit $\cP_1^{\eta}\rightarrow \cP_1^{0}$, we get the rate bound,
 \begin{align}
 R_{\two}\leq \min_{P_{X\bar{X}\bar{Y}YZ}\in \cP_1^0}{I(\bar{Y};Z|X)}.\label{eq:rbound2}
\end{align}

\noindent\underline{Case 4: $I(\tilde{X};\tilde{Y}XY)<R_{\one} $ and $I(\tilde{Y};XY)<R_{\two}$}\\
From \eqref{eq:upperbound2}, we have
 \begin{align}
 \sum_{\vecz\in \cE_{\mo,1}(P_{X\tilde{X}\tilde{Y}YZ})}W^n(\vecz|\vecx_{\mo}, \vecy)&\leq \exp\inp{n\inp{R_{\one}- I(\tilde{X};\tilde{Y}XY)+R_{\two}-I(\tilde{Y};XY)-I(Z;\tilde{X}\tilde{Y}|XY)+2\epsilon}}\label{eq:cf_3}\\
 &\leq \exp\inp{n\inp{R_{\one}- I(\tilde{X};XY|\tilde{Y})+R_{\two}-I(\tilde{Y};XY)-I(Z;\tilde{X}\tilde{Y}|XY)+2\epsilon}}\\
 \qquad \qquad&= \exp\inp{n\inp{R_{\one}+R_{\two}-I(\tilde{X}\tilde{Y};XYZ)+3\epsilon}} \\
  \qquad \qquad&\leq \exp\inp{n\inp{R_{\one}+R_{\two}-I(\tilde{X}\tilde{Y};Z)+3\epsilon}} \\
 \qquad\qquad&\rightarrow 0 \text{ if }R_{\one}+R_{\two}<\min_{P_{X\tilde{X}\tilde{Y}YZ}\in \cP^{\eta}_1}I(\tilde{X}\tilde{Y};Z)-3\epsilon.
 \end{align}
Taking limit $\cP_1^{\eta}\rightarrow \cP_1^{0}$, we get the rate bound,
 \begin{align}
 R_{\one}+R_{\two}\leq \min_{P_{X\bar{X}\bar{Y}YZ}\in \cP_1^0}{I(\bar{X}\bar{Y};Z)}\label{eq:rbound3}
\end{align}
We define 
\begin{align*}
\cP_1^{+}& \defineqq  \{P_{X\bar{X}\bar{Y}YZ}\in \cP^n_{\cX\times\cY\times\cX\times\cY\times\cZ}: P_{XYZ}\in \cD_{0},\, P_{\bar{X}Y'Z}\in \cD_{0} \text{ for some }P_{Y'|\bar{X}Z}, P_{X'\bar{Y}Z}\in \cD_{0}\\
&\qquad  \,  \text{ for some }P_{X'|\bar{Y}Z},\, P_{X}=P_{\bar{X}}=P_{\one}, P_{\bar{Y}} = P_{\two}\text{ and }I(\bar{X};\bar{Y})=0,\,I(X;\bar{Y})=0\}
\end{align*} We see that $\cP_1^{0}\subseteq \cP_1^{+}$. Using this and 
collecting \eqref{eq:rbound1}, \eqref{eq:rbound2} and \eqref{eq:rbound3}, we obtain, 
\begin{align}
R_{\one}&\leq \min_{P_{X\bar{X}\bar{Y}YZ}\in \cP_1^{+}}{I(\bar{X};Z|\bar{Y})}\label{eq:Rbound1}\\
R_{\two}&\leq \min_{P_{X\bar{X}\bar{Y}YZ}\in \cP_1^{+}}{I(\bar{Y};Z|X)}\label{eq:Rbound2}\\
R_{\one}+R_{\two}&\leq \min_{P_{X\bar{X}\bar{Y}YZ}\in \cP_1^{+}}{I(\bar{X}\bar{Y};Z)}\label{eq:Rbound3}
\end{align}

Now, we analyze $P_{\cE_{\mo,2}}(\vecy)$, the second term in the RHS of~\eqref{err:upperbound}. We will upper bound $W^n\inp{\cE_{\mo,2}(P_{X\tilde{Y}_1\tilde{Y}_2YZ})|\vecx_{\mo}, \vecy}$. We see that by using~\eqref{codebook:4}, it is sufficient to consider distribution $P_{X\tilde{Y}_1\tilde{Y}_2YZ}\in \cP_2^\eta$ for which 
\begin{align}
I\inp{X;\tilde{Y}_1\tilde{Y}_2Y}\leq|R_{\two}-I(\tilde{Y}_1;Y)|^{+}+|R_{\two}-I(\tilde{Y}_2;\tilde{Y}_1 Y)|^{+} +\epsilon.\label{eq:2}
\end{align}
For $P_{X\tilde{Y}_1\tilde{Y}_2YZ}\in \cP_2^\eta$ satisfying~\eqref{eq:2},

\begin{align}
&\sum_{\vecz\in \cE_{\mo, 2}(P_{X\tilde{Y}_1\tilde{Y}_2YZ})}W^n(\vecz|\vecx_{\mo}, \vecy)\nonumber\\
&\qquad\leq\sum_{\substack{\tilde{m}_{\two 1},  \tilde{m}_{\two 2}:\tilde{m}_{\two 1}\neq  \tilde{m}_{\two 2},\\(\vecx_{\mo}, \vecy_{\tilde{m}_{\two 1}}, \vecy_{\tilde{m}_{\two 2}},\vecy)\in T^{n}_{X\tilde{Y}_1\tilde{Y}_2Y}}}\sum_{\vecz:(\vecx_{\mo}, \vecy_{\tilde{m}_{\two 1}}, \vecy_{\tilde{m}_{\two 2}},\vecy, \vecz)\in T^{n}_{X\tilde{Y}_1\tilde{Y}_2YZ}}W^n(\vecz|\vecx_{\mo}, \vecy)\nonumber\\
&\qquad\leq \sum_{\substack{\tilde{m}_{\two 1}, \tilde{m}_{\two 2}:\tilde{m}_{\two 1}\neq  \tilde{m}_{\two 2},\\(\vecx_{\mo}, \vecy_{\tilde{m}_{\two 1}}, \vecy_{\tilde{m}_{\two 2}},\vecy)\in T^{n}_{X\tilde{Y}_1\tilde{Y}_2Y}}}\frac{|T^{n}_{Z|X\tilde{Y}_1\tilde{Y}_2Y}(\vecx_{\mo},\vecy_{\tilde{m}_{\two 1}}, \vecy_{\tilde{m}_{\two 2}}, \vecy)|}{|T^n_{Z|XY}(\vecx_{\mo},\vecy)|}\nonumber\\
&\qquad \leq \sum_{\substack{\tilde{m}_{\two 1}, \tilde{m}_{\two 2}:\tilde{m}_{\two 1}\neq  \tilde{m}_{\two 2},\\(\vecx_{\mo}, \vecy_{\tilde{m}_{\two 1}}, \vecy_{\tilde{m}_{\two 2}},\vecy)\in T^{n}_{X\tilde{Y}_1\tilde{Y}_2Y}}}\frac{\exp\inp{nH(Z|X\tilde{Y}_1\tilde{Y}_2Y)}}{(n+1)^{-|\cX||\cY||\cZ|}\exp\inp{nH(Z|XY)}}\nonumber\\
&\qquad \stackrel{(a)}{\leq} \sum_{\substack{\tilde{m}_{\two 1}, \tilde{m}_{\two 2}:\tilde{m}_{\two 1}\neq  \tilde{m}_{\two 2},\\(\vecx_{\mo}, \vecy_{\tilde{m}_{\two 1}}, \vecy_{\tilde{m}_{\two 2}},\vecy)\in T^{n}_{X\tilde{Y}_1\tilde{Y}_2Y}}}\exp\inp{-n\inp{I(Z;\tilde{Y}_1\tilde{Y}_2|XY)-\epsilon}} \nonumber\\
&\qquad\stackrel{(b)}{\leq}\exp\inp{n\inp{|R_{\two}-I(\tilde{Y}_1;XY)|^{+}+|R_{\two}-I(\tilde{Y}_2;\tilde{Y}_1 XY)|^{+}-I(Z;\tilde{Y}_1\tilde{Y}_2|XY)+2\epsilon}},\label{eq:upperbound3}
\end{align}
where (a) holds for large $n$ and (b) follows from \eqref{codebook:5}.
Note that, in the analysis of $W^n\inp{\cE_{\mo,1}(P_{X\tilde{X}\tilde{Y}YZ})|\vecx_{\mo}, \vecy}$ (see the steps leading to \eqref{eq:upperbound2}), if we replace $R_{\one}$ with $R_{\two}$, $\tilde{Y}$ with $\tilde{Y}_1$ and $\tilde{X}$ with $\tilde{Y}_2$, \eqref{eq:upperbound2} changes to \eqref{eq:upperbound3} and the condition \eqref{eq:1} on the distribution changes to \eqref{eq:2}. With these replacements, we see that~\eqref{eq:upperbound3} goes to zero when the following hold (cf. \eqref{eq:cf_1},\eqref{eq:cf_2},\eqref{eq:cf_3}):
\begin{align}
	R_{\two}&<I(\tilde{Y}_{2};Z\tilde{Y}_1)-3\epsilon\label{cf:11}\\
	R_{\two}&<I(\tilde{Y}_1;XZ)-3\epsilon\label{cf:22}\\
	2R_{\two}&< I(\tilde{Y}_2;\tilde{Y}_1XYZ) + I(\tilde{Y}_1;XYZ)-3\epsilon.\label{cf:33}
\end{align} Here, \eqref{cf:33} is obtained by noting that $I(\tilde{Y}_2;\tilde{Y}_1XY) + I(\tilde{Y}_1;XY) + I(Z; \tilde{Y}_1\tilde{Y}_2|XY) = I(\tilde{Y}_2;\tilde{Y}_1XY) + I(\tilde{Y}_1;XY) + I(Z; \tilde{Y}_1|XY)+I(Z; \tilde{Y}_2|XY\tilde{Y}_1)$.
We first note that the rate bounds in \eqref{cf:11}, \eqref{cf:22} and \eqref{cf:33} still hold when
\begin{align}
	R_{\two}&<I(\tilde{Y}_{2};Z)-3\epsilon\label{cf:1}\\
	R_{\two}&<I(\tilde{Y}_1;Z)-3\epsilon\label{cf:2}\\
	2R_{\two}&< I(\tilde{Y}_2;Z) + I(\tilde{Y}_1;Z)-3\epsilon.\label{cf:3}
\end{align}
With this, we proceed similar to the previous case and define \begin{align*}
\cP_2^+& \defineqq \{P_{X\tilde{Y}_1\tilde{Y}_2YZ}\in \cP^n_{\cX\times\cY\times\cY\times\cY\times\cZ}: P_{XYZ}\in \cD_{0},\, P_{X'_1\tilde{Y}_1Z}\in \cD_{0}\text{ for some }P_{X'_1|\tilde{Y}_1Z}, \\
&\qquad \, P_{X'_2\tilde{Y}_2Z}\in \cD_{0} \text{ for some }P_{X'_2|\tilde{Y}_2Z}, \,   P_{X}=P_{\one}, P_{\tilde{Y}_1}=P_{\tilde{Y}_2} = P_{\two}\}.
\end{align*} Note that $\cP_2^0\subseteq\cP_2^+$. Using this and 
taking limit $\cP_2^{\eta}\rightarrow \cP_2^{0}$ 
we get the following rate bounds, 
\begin{align}
R_{\two}&\leq \min_{P_{XY\bar{Y}_1\bar{Y}_2Z}\in \cP_2^0}{I(\bar{Y}_2;Z)}\label{eq:Rbound4}\\
R_{\two}&\leq \min_{P_{XY\bar{Y}_1\bar{Y}_2Z}\in \cP_2^0}{I(\bar{Y}_1;Z)}\label{eq:Rbound5}\\
2R_{\two}&\leq \min_{P_{XY\bar{Y}_1\bar{Y}_2Z}\in \cP_2^0}{I(\bar{Y}_1;Z)+I(\bar{Y}_2;Z)}\label{eq:Rbound6}
\end{align}
When user \one is malicious, error will occur either in {\em Step 1} or {\em Step 3} or {\em Step 5}. Similar to the current case, we can show that error will not happen in {\em Step 1} w.h.p. because of typicality. For {\em Step 3} and {\em Step 5}, we will get bounds of the form \eqref{eq:Rbound1}, \eqref{eq:Rbound2} and \eqref{eq:Rbound3}. This is because we only consider the candidates which have passed {\em Step 2}. Hence, we get independence conditions (as we got from Lemma~\ref{lemma:indep} here).
Thus, combining \eqref{eq:Rbound1}, \eqref{eq:Rbound2}, \eqref{eq:Rbound3}, \eqref{eq:Rbound4}, \eqref{eq:Rbound5},\eqref{eq:Rbound6} and bounds from the case when user \one is malicious, we get the following rate region:
\begin{align*}
R_{\one}&\leq \min_{P_{XY'X'YZ}\in \cP}I(X;Z|Y)\\
R_{\two}&\leq \min_{P_{XY'X'YZ}\in \cP}I(Y;Z)
\end{align*}where $\cP$ is the set of distributions 
\begin{align*}
\cP \defineqq \{P_{XY'X'YZ}: P_{XY'Z}=P_{\one}P_{Y'}W, \,P_{X'YZ}=P_{X'}P_{\two}W, \,X\indep Y\}.
\end{align*}
This gives us one corner point (given by \eqref{eq:inner_bd_2}) of the rate region, we get the other corner point (given by \eqref{eq:inner_bd_1}) by  changing the order of decoding by performing {\em Step 3} before {\em Step 2}. 

\begin{remark}\label{remark:new_code_pos1}It is not clear if the statement of Theorem~\ref{thm:inner_bd} implies the achievability direction of Theorem~\ref{thm:main_result} (i.e., whether the inner bound in Theorem~\ref{thm:inner_bd} has a non-empty interior for all non-spoofable channels). However, we argue that the code (codebook and decoder) here can also be used to achieve positive rates for both users of a non-spoofable channel, which is the forward direction of Theorem~\ref{thm:main_result}.  In the current proof, the probability of error is analyzed by using the upper bounds \eqref{eq:upperbound2} and \eqref{eq:upperbound3} for a class of joint types. To show the achievability of Theorem~\ref{thm:main_result}, from \eqref{eq:upperbound2} (and \eqref{eq:upperbound3} respectively), we can instead follow the analysis from \eqref{eq:upperbound2_p} to \eqref{eq:upperbound123} (and \eqref{eq:upperbound3_p} to \eqref{eq:upperbound123_p} respectively) in the proof of achievability of Theorem~\ref{thm:main_result} in Section~\ref{sec:proof_thm1}. Note that \eqref{eq:upperbound2} is identical to \eqref{eq:upperbound2_p} and \eqref{eq:upperbound3} is identical to \eqref{eq:upperbound3_p}. Moreover, in the proof of achievability of Theorem~\ref{thm:main_result}, \eqref{eq:upperbound2_p} (and \eqref{eq:upperbound3_p} respectively) is analyzed for a class of joint types which is a superset of the class of joint types needed in the analysis of \eqref{eq:upperbound2} (and \eqref{eq:upperbound3} respectively).
The difference between the two proofs is highlighted in Flowchart~\ref{fig:flowchart1_authcom_appendix}. It might be possible to pursue an analysis of the current proof which also gives the achievability of Theorem~\ref{thm:main_result}. However, we could not find a way to obtain a simple inner bound using such an approach. 
\end{remark}

\end{proof}

\begin{proof}[Proof of Lemma~\ref{lemma:indep}]
Adding $(C)$ and $(D)$, 
\begin{align}
2\eta&> D(P_{X'\tilde{Y}Z}||P_{X'}P_{\tilde{Y}}W)+I(X\tilde{X};\tilde{Y}Z|X')\\
&=D(P_{X'\tilde{Y}Z}||P_{X'}P_{\tilde{Y}}W)+D(P_{X'\tilde{Y}X\tilde{X}Z}||P_{X'}P_{X\tilde{X}|X'}P_{\tilde{Y}Z|X'})\\
&=\sum_{x',\tilde{y}, x, \tilde{x}, z}P_{X'\tilde{Y}X\tilde{X}Z}(x',\tilde{y}, x, \tilde{x}, z)\left(\log{\left\{\frac{P_{X'\tilde{Y}Z}(x',\tilde{y}, z)}{P_{X'}(x')P_{\tilde{Y}}(\tilde{y})W(z|x', \tilde{y})}\right\}} + \log{\left\{\frac{P_{X'\tilde{Y}X\tilde{X}Z}(x', \tilde{y}, x, \tilde{x}, z)}{P_{X'}(x')P_{X\tilde{X}|X'}(x, \tilde{x}|x')P_{\tilde{Y}Z|X'}(\tilde{y},z|x')}\right\}}\right)\\
&=\sum_{x',\tilde{y}, x, \tilde{x}, z}P_{X'\tilde{Y}X\tilde{X}Z}(x',\tilde{y}, x, \tilde{x}, z)\left(\log{\left\{\frac{P_{X'\tilde{Y}Z}(x',\tilde{y}, z)\times P_{X'\tilde{Y}X\tilde{X}Z}(x',\tilde{y}, x, \tilde{x}, z)}{P_{X'}(x')P_{\tilde{Y}}(\tilde{y})W(z|x', \tilde{y})\times P_{X'}(x')P_{X\tilde{X}|X'}(x, \tilde{x}|x')P_{\tilde{Y}Z|X'}(\tilde{y},z|x')}\right\}}\right)\\
&=\sum_{x',\tilde{y}, x, \tilde{x}, z}P_{X'\tilde{Y}X\tilde{X}Z}(x',\tilde{y}, x, \tilde{x}, z)\left(\log{\left\{\frac{ P_{X'\tilde{Y}X\tilde{X}Z}(x',\tilde{y}, x, \tilde{x}, z)}{P_{X'}(x')P_{\tilde{Y}}(\tilde{y})W(z|x', \tilde{y}) P_{X\tilde{X}|X'}(x, \tilde{x}|x')}\right\}}\right)\\
&=\sum_{x',\tilde{y}, x, \tilde{x}, z}P_{X'\tilde{Y}X\tilde{X}Z}(x',\tilde{y}, x, \tilde{x}, z)\left(\log{\left\{\frac{ P_{X'\tilde{Y}X\tilde{X}Z}(x',\tilde{y}, x, \tilde{x}, z)}{P_{\tilde{Y}}(\tilde{y})P_{X'X\tilde{X}}(x', x, \tilde{x}) W(z|x', \tilde{y}) }\right\}}\right)\\
& = D(P_{X'\tilde{Y}X\tilde{X}Z}||P_{\tilde{Y}}P_{X\tilde{X}}P_{X'|X\tilde{X}}W)\\
&\stackrel{(a)}{\geq} D(P_{\tilde{Y}X\tilde{X}}||P_{\tilde{Y}}P_{X\tilde{X}})\\
& = I(\tilde{Y};X\tilde{X}),
\end{align} where $(a)$ follows from the log-sum inequality. Note that $I(\tilde{Y};X\tilde{X}) = I(\tilde{Y};X)+ I(\tilde{Y};\tilde{X}|X) = I(\tilde{Y};\tilde{X})+ I(\tilde{Y};X|\tilde{X})$. Thus, $I(\tilde{Y};X\tilde{X})\leq 2\eta$ implies $I(\tilde{Y};\tilde{X})\leq 2\eta$ and $I(\tilde{Y};X)\leq 2\eta$ as mutual information is always non-negative.
\end{proof}

\section{Proof of Theorem~\ref{thm:capacity_equivalence}}\label{app:rand_reduc}
We first prove Lemma~\ref{thm:rand_reduc} which gives the randomness reduction argument. 
\begin{lemma}[Randomness reduction]\label{thm:rand_reduc}
There exists $n_0(\cdot):\mathbb{R}^+\rightarrow\mathbb{N}$ such that given any $(N_{\one}, N_{\two}, L_{\one}, L_{\two}, n)$ adversary identifying code $(F_{\one}, F_{\two}, \phi_{F_{\one}, F_{\two}})$ with $P^{\rand}_{e}$ denoting its average probability of error and $\epsilon>0$, if 
\begin{align}\label{eq:rand_reduc_param}
\epsilon>2\log(1+P^{\rand}_{e}), 
\end{align} and $n\geq n_0(\epsilon)$
 there exists an $(N_{\one}, N_{\two}, n^2, n^2, n)$ adversary identifying code  $(F'_{\one}, F'_{\two}, \phi_{F'_{\one}, F'_{\two}})$  where the distributions $p_{F'_{\one}}$ and $p_{F'_{\two}}$ are the uniform distributions over encoder sets $\codeset'_{\one}\subseteq \codeset_{\one}$ and $\codeset'_{\two}\subseteq \codeset_{\two}$ (with $|\codeset'_{\one}|=|\codeset'_{\two}|=n^2$) respectively,  and the average probability of error is at most $\epsilon$. That is,
\begin{align}
    \frac{1}{N_{\one}\cdot N_{\two}} 
        \sum_{\substack{\mo\in\mathcal{M}_{\one}\\\mt\in\mathcal{M}_{\two}}}
            \sum_{\substack{f'_{\one}\in\codeset'_{\one}\\f'_{\two}\in\codeset'_{\two}}}
                \frac{1}{n^2\times n^2}\sum_{\substack{\vecz: \phi_{f'_{\one}, f'_{\two}}(\vecz)\\\notin \{(m_{\one}, m_{\two})\}}}W^{n}\inp{\vecz|f'_{\one}(\mo), f'_{\two}(\mt)}<\epsilon,\label{eq:rand_red_1}
\end{align}
\begin{align}
        \max_{\substack{\vecx\in\cX^{n}\\f'_{\one}\in\codeset'_{\one}}} 
            \left(\frac{1}{N_{\two}}\sum_{m_{\two}\in \mathcal{M}_{\two}}
                \sum_{f'_{\two}\in\codeset'_{\two}}
                    \frac{1}{n^2}\sum_{\substack{\vecz: \phi_{f'_{\one}, f'_{\two}}(\vecz)\\\notin \inb{(\cM_{\one}\times \inb{m_{\two}})\cup\inb{\oneb}}}}W^{n}\inp{\vecz|\vecx, f'_{\two}(\mt)}\right)< \epsilon \text{ and }\label{eq:rand_red_2}
\end{align}
\begin{align}
    \max_{\substack{\vecy\in\cY^{n}\\f'_{\two}\in\codeset'_{\two}}}
        \left(\frac{1}{N_{\one}}\sum_{m_{\one}\in \mathcal{M}_{\one}}
                \sum_{f'_{\one}\in\codeset'_{\one}}
                    \frac{1}{n^2}\sum_{\substack{\vecz: \phi_{f'_{\one}, f'_{\two}}(\vecz)\\\notin \inb{(\inb{m_{\one}}\times\cM_{\two})\cup\inb{\twob}}}}W^{n}\inp{\vecz|f'_{\one}(\mo), \vecy}\right)< \epsilon.\label{eq:rand_red_3}
\end{align}

\end{lemma}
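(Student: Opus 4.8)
## Proof proposal for Lemma~\ref{thm:rand_reduc} (randomness reduction)

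\textbf{Overall approach.} The plan is to follow the classical random selection argument of Ahlswede~\cite{Ahlswede78} (and its AV-MAC extension by Jahn~\cite{Jahn81}), adapted to the three separate error events of the adversary identification problem. Starting from the given $(N_{\one},N_{\two},L_{\one},L_{\two},n)$ randomized code $(F_{\one},F_{\two},\phi_{F_{\one},F_{\two}})$, I would independently sample $n^2$ encoders $f'_{\one,1},\dots,f'_{\one,n^2}$ from $\codeset_{\one}$ according to $p_{F_{\one}}$, and independently $n^2$ encoders $f'_{\two,1},\dots,f'_{\two,n^2}$ from $\codeset_{\two}$ according to $p_{F_{\two}}$, keeping the same family of decoders $\phi_{\fo,\ft}$. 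The reduced code uses the uniform distribution over these subsampled encoder sets. The goal is to show that with positive probability the subsampled code satisfies all three bounds \eqref{eq:rand_red_1}--\eqref{eq:rand_red_3} simultaneously, so such a code exists.

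\textbf{Key steps, in order.} First, for the honest case \eqref{eq:rand_red_1}: for each fixed $(\mo,\mt)$ the quantity $\tfrac{1}{n^2 \times n^2}\sum_{i,j}W^n(\cE^{f'_{\one,i},f'_{\two,j}}_{\mo,\mt}\,|\,f'_{\one,i}(\mo),f'_{\two,j}(\mt))$ is an average of $n^4$ bounded ($[0,1]$-valued) random variables whose expectation equals the corresponding honest-case error probability of the original code (which is $\leq P^{\rand}_e$); a Chernoff/Hoeffding bound shows the probability it exceeds, say, $P^{\rand}_e + \epsilon/2$ (or more simply exceeds $\exp(\tfrac{\epsilon}{2})(1+P^{\rand}_e)-1$, matching the form of hypothesis \eqref{eq:rand_reduc_param}) is doubly-exponentially small in $n$. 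Since there are only $N_{\one}N_{\two}$ pairs $(\mo,\mt)$, a union bound still leaves the failure probability small, for $n$ large. Second, for the adversarial cases \eqref{eq:rand_red_2} and \eqref{eq:rand_red_3}, the essential subtlety is the maximization over $\vecx\in\cX^n$ (resp. $\vecy\in\cY^n$) \emph{and} over the chosen encoder realization $f'_{\one,i}$ (resp. $f'_{\two,j}$): the adversary may pick the worst attack vector and the worst realization of its own shared randomness. So I would fix $\vecx\in\cX^n$ and an index $i\in[1:n^2]$, and consider $\tfrac{1}{n^2}\sum_{j=1}^{n^2} \tfrac1{N_{\two}}\sum_{\mt}W^n(\cE^{f'_{\one,i},f'_{\two,j}}_{\mt}\,|\,\vecx,f'_{\two,j}(\mt))$; conditioned on $f'_{\one,i}$, this is an average of $n^2$ independent bounded random variables (over $j$) with conditional expectation at most $P^{\rand}_{e,\malone}\le P^{\rand}_e$, by \eqref{eq:rand-mal1}. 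Again Chernoff gives a doubly-exponential tail, and now the union bound is over $|\cX|^n$ choices of $\vecx$ and $n^2$ choices of $i$ --- the $|\cX|^n$ factor is only singly exponential in $n$, so it is absorbed by the doubly-exponential decay once $n\ge n_0(\epsilon)$. The symmetric argument handles \eqref{eq:rand_red_3} with $\vecy$ and the $\codeset_{\two}$-realization. Third, combine all three union bounds: the total failure probability is still $<1$ for $n$ large enough, so a good subsampled code exists. The constant $n_0(\epsilon)$ comes from requiring the doubly-exponential bounds $\exp(-c\, n^2 \cdot g(\epsilon))$ (for an appropriate gap function $g$ coming from \eqref{eq:rand_reduc_param}) to dominate the $|\cX|^n$, $|\cY|^n$, $N_{\one}$, $N_{\two}$, $n^2$ prefactors.

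\textbf{Main obstacle.} The delicate point is getting the error-amplification bookkeeping to match the precise hypothesis $\epsilon > 2\log(1+P^{\rand}_e)$. The factor of $2$ and the logarithm suggest that the intended Chernoff bound is applied multiplicatively: writing $1+(\text{empirical error})$ and bounding $\bbP\{\tfrac1{n^2}\sum_j Z_j > t\}$ via $\bbE[2^{\sum_j Z_j}]$-type moment generating functions, so that thresholds naturally appear as $\log(1+\cdot)$ quantities, and the ``$2$'' accounts for handling the two layers of averaging (over $i$ and over $j$, or equivalently the union of the adversarial cases with the honest case which itself is controlled via $P_{e,\na}\le P_{e,\malone}+P_{e,\maltwo}$, cf.~\eqref{honest_error_ub}). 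I would set up the MGF bound carefully so that the threshold in each Chernoff estimate is exactly $\epsilon/2 > \log(1+P^{\rand}_e)$, making the per-event tail $\exp(-n^2(\epsilon/2 - \log(1+P^{\rand}_e))\log e)$ or similar, and then verify the union bounds close. The rest is routine method-of-types / concentration bookkeeping.

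\textbf{How this yields Theorem~\ref{thm:capacity_equivalence}.} Once Lemma~\ref{thm:rand_reduc} is in hand, the two-phase construction described in Section~\ref{sec:radn_det_capacity_connection} gives $\cC\supseteq\cC^{\rand}$: phase one uses a positive-rate deterministic adversary identifying code (guaranteed by the hypothesis $(R_{\one},R_{\two})\in\cC$ with $R_{\one},R_{\two}>0$) to convey $O(\log n)$ uniformly random bits from each honest sender to the receiver, establishing the shared randomness; phase two runs the $(N_{\one},N_{\two},n^2,n^2,n)$ code from the lemma (note $\log(n^2)=O(\log n)$ shared bits suffice). An adversarial user can corrupt phase one, but since the error definitions \eqref{eq:rand-mal1}--\eqref{eq:rand-mal2} already allow the adversary to choose its own randomness realization, the overall probability of error is still driven to zero. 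Combined with the trivial inclusion $\cC\subseteq\cC^{\rand}$, this proves $\cC=\cC^{\rand}$.
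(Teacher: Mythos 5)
Your overall route is the same as the paper's: sample $n^2$ encoders i.i.d.\ from $p_{F_{\one}}$ and $n^2$ from $p_{F_{\two}}$, keep the decoder family, and show via exponential-moment (Chernoff-type) bounds plus union bounds over $\vecx\in\cX^n$ (resp.\ $\vecy\in\cY^n$) and over the adversary's $n^2$ randomness indices that a good subselection exists; your treatment of the two adversarial cases \eqref{eq:rand_red_2}--\eqref{eq:rand_red_3} (fix $\vecx$ and the index $i$, condition on $f'_{\one,i}$, use that the conditional expectation is at most $P^{\rand}_{e,\malone}$ because \eqref{eq:rand-mal1} maximizes over $\vecx$ and $f_{\one}$, then beat the $|\cX|^n\cdot n^2$ union bound) matches the paper's argument.

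The gap is in the honest case \eqref{eq:rand_red_1}. First, your expectation claim is wrong: for a \emph{fixed} message pair $(\mo,\mt)$, the expectation of $\frac{1}{n^4}\sum_{i,j}W^n\bigl(\cE^{f'_{\one,i},f'_{\two,j}}_{\mo,\mt}\big|f'_{\one,i}(\mo),f'_{\two,j}(\mt)\bigr)$ is the per-message-pair error of the original randomized code, which need not be small -- only its average over $(\mo,\mt)$ is bounded by $P^{\rand}_{e}$. So a per-message concentration-plus-union-bound argument cannot give \eqref{eq:rand_red_1}; you must concentrate the message-averaged error $e(f_{\one},f_{\two})$ directly. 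Second, even then, the $n^4$ terms $e(F_{\one,i},F_{\two,j})$ are \emph{not} independent (terms sharing an encoder index are dependent), so the ``average of $n^4$ bounded random variables, apply Chernoff/Hoeffding'' step fails as stated. The paper resolves exactly this by decomposing the double average over $(i,j)$ into an average over the $n^2$ cyclic shifts $\sigma$ of $\frac{1}{n^2}\sum_{j}e(F_{\one,j},F_{\two,\sigma(j)})$; for each fixed $\sigma$ the $n^2$ summands are i.i.d., so the MGF bound (with $\exp(\alpha)=2^{\alpha}\le 1+\alpha$, which is where $\log(1+P^{\rand}_{e})$ enters) applies, and a union bound over the $n^2$ permutations costs only a polynomial factor. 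Your parenthetical remark that the honest case could instead be controlled via $P_{e,\na}\le P_{e,\malone}+P_{e,\maltwo}$ (cf.\ \eqref{honest_error_ub}) is a legitimate rescue -- for the reduced code the honest error is indeed bounded by the sum of its two adversarial errors, giving $2\epsilon$ and requiring only a rescaling of $\epsilon$ -- but as your primary argument stands, the honest-case step is not correct.
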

\begin{proof}
The proof is along the lines of \cite[Lemma 12.8]{CsiszarKorner} and Jahn~\cite[Theorem~1]{Jahn81}. 
Let $\inb{F_{\one, i}}, i=1,\ldots, n^2$ be i.i.d. according to $p_{F_{\one}}$. Similarly, let $\inb{F_{\two, j}}, j=1,\ldots, n^2$ be i.i.d. according to $p_{F_{\two}}$. Further, let $\inb{F_{\one, i}}_{j=1}^{n^2}$ and $\inb{F_{\two, j}}_{j=1}^{n^2}$ be independent. 


Define $e_{\one}(f_{\one}, f_{\two}, \vecx)$ to be the error probability for fixed encoding maps $f_{\one}$ for user \one and $f_{\two}$ for user \two,  and the channel inputs chosen by the adversarial user $\one$ as $\vecx$, {\em i.e.},
\begin{align*}
&e_{\one}(f_{\one}, f_{\two}, \vecx)\defineqq\frac{1}{N_{\two}}\sum_{m_{\two}\in \cM_{\two}}\sum_{\substack{\vecz: \phi_{f_{\one}, f_{\two}}(\vecz)\\\notin (\cM_{\one}\times \inb{m_{\two}})\cup\inb{\oneb}}}W^n\inp{\vecz|\vecx, f_{\two}(m_{\two})}.
\end{align*} Similarly,  for adversarial user $\two$ with input $\vecy\in \cY^n$, let $e_{\two}(f_{\one}, f_{\two}, \vecy)$ be defined as 
\begin{align*}
&e_{\two}(f_{\one}, f_{\two}, \vecy)\defineqq\frac{1}{N_{\one}}\sum_{m_{\one}\in \cM_{\one}}\sum_{\substack{\vecz: \phi_{f_{\one}, f_{\two}}(\vecz)\\\notin (\inb{m_{\one}}\times\cM_{\two})\cup\inb{\twob}}}W^n\inp{\vecz| f_{\one}(m_{\one}), \vecy}. 
\end{align*}
When both users are honest, we define
\begin{align*}
&e(f_{\one}, f_{\two})\defineqq\frac{1}{N_{\one}N_{\two}}\sum_{(m_{\one}, m_{\two})\in \cM_{\one}\times\cM_{\two}}\,\,\,\sum_{\substack{\vecz: \phi_{f_{\one}, f_{\two}(\vecz)}\\\neq (m_{\one}, m_{\two})}}W^n\inp{\vecz| f_{\one}(m_{\one}), f_{\two}(m_{\two})}.
\end{align*} For any $j, j'\in [1:n^2]$, note that $e_{\one}(F_{\one, j}, F_{\two, j'}, \vecx), \, e_{\two}(F_{\one, j}, F_{\two, j'}, \vecy)$ and $e(F_{\one, j}, F_{\two, j'})$,  as functions of $F_{\one, j}$ and  $F_{\two, j'}$, are random variables. We will show that 
\begin{align}
\bbP\Bigg\{\Bigg(\frac{1}{n^2\times n^2}\sum_{j, j'\in[1:n^2]}e(F_{\one, j}, F_{\two, j'})\geq \epsilon\Bigg)&\bigcup\Bigg(\bigcup_{\substack{\vecx\in \cX^n, j\in [1:n^2]}}\bigg(\frac{1}{n^2}\sum_{j'\in[1:n^2]}e_{\one}(F_{\one, j}, F_{\two, j'}, \vecx)\geq \epsilon\bigg)\Bigg)\nonumber\\
&\bigcup\Bigg(\bigcup_{\vecy\in \cY^n, j'\in [1:n^2]}\bigg(\frac{1}{n^2}\sum_{j\in[1:n^2]}e_{\two}(F_{\one, j}, F_{\two, j'}, \vecy)\geq \epsilon\bigg)\Bigg)\Bigg\}\label{eq:rand_reduction_proof}
\end{align} is less than $1$. This will imply the existence of $n^2$ deterministic encoders $\codeset'_{\one}=\inb{f_{\one, j}:{j\in[1:n^2]}}$ and $\codeset'_{\two}=\inb{f_{\two, j}:j\in[1:n^2]}$ satisfying \eqref{eq:rand_red_1}, \eqref{eq:rand_red_2} and \eqref{eq:rand_red_3}.

For $j'\in n^2$, a fixed encoder $f_{\one}$ for user $\one$ and an input vector $\vecx$ for malicious user $\one$,  let $e^{*}_{\one}(f_{\one}, p_{F_{\two}}, \vecx)  \defineqq \bbE_{F_{\two, j'}}e_{\one}(f_{\one}, F_{\two, j'}, \vecx)$.  Note that for the given code $(F_{\one}, F_{\two}, \phi_{F_{\one}, F_{\two}})$, the average probability of error when user \one is malicious,\\   $P^{\rand}_{e,\malone} =
        \max_{\substack{\vecx\in\cX^n\\\fo\in\codeset_{\one}}} e^{*}_{\one}(f_{\one}, p_{F_{\two}}, \vecx)$.

\noindent For any $ j\in[1:n^2]$,
\begin{align*}
\bbP\inp{\frac{1}{n^2}\sum_{j'\in[1:n^2]}e_{\one}(F_{\one, j}, F_{\two, j'}, \vecx)\geq \epsilon}&= \bbP\inp{\exp\inp{\sum_{j'\in[1:n^2]}e_{\one}(F_{\one, j}, F_{\two, j'}, \vecx)}\geq \exp\inp{n^2\epsilon}}\\
&\leq \exp\inp{-n^2\epsilon}\bbE\insq{\exp\inp{\sum_{j'\in[1:n^2]}e_{\one}(F_{\one, j}, F_{\two, j'}, \vecx)}}\\
&=\exp\inp{-n^2\epsilon}\bbE\insq{\prod_{j'\in[1:n^2]}\exp\inp{e_{\one}(F_{\one, j}, F_{\two, j'}, \vecx)}}
\end{align*}
But,
\begin{align*}
\bbE\insq{\prod_{j'\in[1:n^2]}\exp\inp{e_{\one}(F_{\one, j}, F_{\two, j'}, \vecx)}} &= \bbE_{F_{\one, j}}\insq{\bbE_{(F_{\two, 1}, \ldots, F_{\two, n^2})|F_{\one, j}}\insq{\prod_{j'\in[1:n^2]}\exp\inp{e_{\one}(F_{\one, j}, F_{\two, j'}, \vecx)}}}\\
&=\sum_{f_{\one}\in \Gamma_{\one}}p_{F_{\one}}(f_{\one})\bbE_{(F_{\two, 1}, \ldots, F_{\two, n^2})|F_{\one, j}=f_{\one}}\insq{\prod_{j'\in[1:n^2]}\exp\inp{e_{\one}(F_{\one, j}, F_{\two, j'}, \vecx)}}\\
&\stackrel{(a)}=\sum_{f_{\one}\in \Gamma_{\one}}p_{F_{\one}}(f_{\one})\prod_{j'\in[1:n^2]}\bbE_{F_{\two, j'}}\insq{\exp\inp{e_{\one}(f_{\one}, F_{\two, j'}, \vecx)}}\\
&\stackrel{(b)}=\sum_{f_{\one}\in \Gamma_{\one}}p_{F_{\one}}(f_{\one})\inp{\bbE_{F_{\two, 1}}\insq{\exp\inp{e_{\one}(f_{\one}, F_{\two, 1}, \vecx)}}}^{n^2}\\
&\stackrel{(c)}\leq \sum_{f_{\one}\in \Gamma_{\one}}p_{F_{\one}}(f_{\one})\inp{\bbE_{F_{\two, 1}}\insq{1+e_{\one}(f_{\one}, F_{\two, 1}, \vecx)}}^{n^2}\\
&= \sum_{f_{\one}\in \Gamma_{\one}}p_{F_{\one}}(f_{\one})\inp{1+\bbE_{F_{\two, 1}}\insq{e_{\one}(f_{\one}, F_{\two, 1}, \vecx)}}^{n^2}\\
&= \sum_{f_{\one}\in \Gamma_{\one}}p_{F_{\one}}(f_{\one})\inp{1+e^{*}_{\one}(f_{\one}, p_{F_{\two}}, \vecx)}^{n^2}\\
&\leq  \sum_{f_{\one}\in \Gamma_{\one}}p_{F_{\one}}(f_{\one})\inp{1+P^{\rand}_{e,\malone}}^{n^2}\\
&\leq (1+P^{\rand}_{e})^{n^2},
\end{align*} 
where $(a)$ holds because $F_{\one, j}\indep \inp{F_{\two, 1}, \ldots, F_{\two, n^2}}$ and $\inb{\exp\inp{e_{\one}(f_{\one}, F_{\two, j'}, \vecx)}}_{j'=1}^{n^2}$ are i.i.d. random variables. $(b)$  also follows from the i.i.d. nature of $\inb{\exp\inp{e_{\one}(f_{\one}, F_{\two, j'}, \vecx)}}_{j'=1}^{n^2}$. To see inequality $(c)$, recall that $\exp\inp{\alpha}$ stands for $2^{\alpha}$ which is upper bounded by $(1+\alpha)$ for $0\leq \alpha\leq 1$. Thus, 

\begin{align}
&\bbP\inp{\frac{1}{n^2}\sum_{j'\in[1:n^2]}e_{\one}(F_{\one, j}, F_{\two, j'}, \vecx)\geq \epsilon \text{ for any }\vecx\in \cX^n\text{ and }j\in [1:n^2]}\\
&\hspace{7cm}\leq |\cX|^n\cdot n^2\cdot\exp\inp{-n^2\epsilon}(1+P^{\rand}_{e})^{n^2}\nonumber\\
&\hspace{7cm}=\exp\inb{-n^2\inp{\epsilon-\frac{\log{|\cX|}}{n}-\frac{2\log{n}}{n^2}-\log(1+P^{\rand}_{e})}}.\label{eq:rand_red_proof1}
\end{align}
Similarly, 
we have
\begin{align}
&\bbP\inp{\frac{1}{n^2}\sum_{j\in[1:n^2]}e_{\two}(F_{\one, j}, F_{\two, j'}, \vecy)\geq \epsilon \text{ for any }\vecy\in \cY^n\text{ and }j'\in [1:n^2]}\\
&\hspace{7cm}\leq |\cY|^n\cdot n^2 \cdot \exp\inp{-n^2\inp{\epsilon-\log(1+P^{\rand}_{e})}}\nonumber\\
&\hspace{7cm}=\exp\inb{-n^2\inp{\epsilon-\frac{\log{|\cY|}}{n}-\frac{2\log{n}}{n^2}-\log(1+P^{\rand}_{e})}}.\label{eq:rand_red_proof2}
\end{align}
Next, we will compute an  upper bound on  $\bbP\inp{\frac{1}{n^2\times n^2}\sum_{j\in[1:n^2]}\sum_{j'\in[1:n^2]}e(F_{\one, j}, F_{\two, j'})\geq \epsilon}$. 
To this end, let $\Sigma_{n^2}\defineqq\{\tau_i: i\in [0:n^2-1]\}$ be a set of permutations (in fact cyclic shifts) of $(1, 2, \ldots, n^2)$ such that $$\tau_i(j) = (i+j)\, \textsf{mod} \,n^2\mbox{ for all }j\in [1:n^2].$$ With this,
\begin{align*}
\bbP\inp{\frac{1}{n^2\times n^2}\sum_{j\in[1:n^2]}\sum_{j'\in[1:n^2]}e(F_{\one, j}, F_{\two, j'})\geq \epsilon} & = \bbP\inp{\frac{1}{n^2}\sum_{\sigma\in \Sigma_{n^2}}\inp{\frac{1}{n^2}\sum_{j\in[1:n^2]}e(F_{\one, j}, F_{\two, \sigma(j)})}\geq \epsilon}.
\end{align*}
For $\sigma\in \Sigma_{n^2}$, let $P_{\sigma}\defineqq\frac{1}{n^2}\sum_{j\in[1:n^2]}e(F_{\one, j}, F_{\two, \sigma(j)})$. 
\begin{align*}
\bbP\inp{\sum_{\sigma\in \Sigma_{n^2}}{P_{\sigma}\geq n^2\epsilon}}&\leq \bbP\inp{\cup_{\sigma\in \Sigma_{n^2}}\inp{P_{\sigma}\geq \epsilon}}\\
&\leq \sum_{\sigma\in \Sigma_{n^2}}\bbP\inp{P_{\sigma}\geq \epsilon}.
\end{align*}
Note that for $\sigma\in \Sigma_{n^2}$, $P_{\sigma}$ are identically distributed random variables. Thus,
$$\bbP\inp{\frac{1}{n^2\times n^2}\sum_{j\in[1:n^2]}\sum_{j'\in[1:n^2]}e(F_{\one, j}, F_{\two, j'})\geq \epsilon} \leq n^2\bbP\inp{P_{\tau_0}\geq \epsilon}.$$
But, 
\begin{align*}
\bbP\inp{P_{\tau_0}\geq \epsilon}& = \bbP\inp{\frac{1}{n^2}\sum_{j\in[1:n^2]}e(F_{\one, j}, F_{\two, \tau_0(j)})\geq \epsilon}\\
& =\bbP\inp{\frac{1}{n^2}\sum_{j\in[1:n^2]}e(F_{\one, j}, F_{\two, j})\geq \epsilon}\\
& =\bbP\inp{\sum_{j\in[1:n^2]}e(F_{\one, j}, F_{\two, j})\geq n^2\epsilon}\\
& =\bbP\inp{\exp\inp{\sum_{j\in[1:n^2]}e(F_{\one, j}, F_{\two, j})}\geq \exp\inp{n^2\epsilon}}\\
& \leq \exp\inp{-n^2\epsilon}\bbE\insq{\exp\inp{\sum_{j\in[1:n^2]}e(F_{\one, j}, F_{\two, j})}}.
\end{align*}
Note that $\inb{e(F_{\one, j}, F_{\two, j})}_{j=1}^{n^2}$ are i.i.d. random variables. Hence,
\begin{align*}
\bbE\insq{\exp\inp{\sum_{j\in[1:n^2]}e(F_{\one, j}, F_{\two, j})}}&=\bbE\insq{\prod_{j\in[1:n^2]}\exp\inp{e(F_{\one, j}, F_{\two, j})}}\\
&=\prod_{j\in[1:n^2]}\bbE\insq{\exp\inp{e(F_{\one, j}, F_{\two, j})}}\\
&=\inp{\bbE\insq{\exp\inp{e(F_{\one, 1}, F_{\two, 1})}}}^{n^2}\\
&\stackrel{(a)}{\leq}\inp{1+\bbE\inp{e(F_{\one, 1}, F_{\two, 1})}}^{n^2}\\
&\stackrel{(b)}{\leq}\inp{1+P^{\rand}_{e}}^{n^2}\\
\end{align*}
where $(a)$ holds because for $0\leq\alpha\leq 1$, $\exp(\alpha) = 2^{\alpha}\leq 1+\alpha$ and $(b)$ holds because for the given code $(F_{\one}, F_{\two}, \phi_{F_{\one}, F_{\two}})$, $P^{\rand}_{e,\na} = \bbE\inp{e(F_{\one, 1}, F_{\two, 1})}$. Thus, 
\begin{align}
\bbP\inp{\frac{1}{n^2\times n^2}\sum_{j\in[1:n^2]}\sum_{j'\in[1:n^2]}e(F_{\one, j}, F_{\two, j'})\geq \epsilon}&\leq n^2\exp\inp{-n^2\epsilon}\inp{1+P^{\rand}_{e}}^{n^2}\\
&=\exp\inp{2\log{n}-n^2\epsilon+n^2\log\inp{1+P^{\rand}_{e}}}\\
&=\exp\inp{-n^2\inp{\epsilon-\frac{2\log{n}}{n^2}-\log\inp{1+P^{\rand}_{e}}}}.\label{eq:rand_red_proof3}
\end{align}
From \eqref{eq:rand_red_proof1}, \eqref{eq:rand_red_proof2} and \eqref{eq:rand_red_proof3}, we note that by using a union bound, \eqref{eq:rand_reduction_proof} is upper bounded by 
\begin{align*}
&\exp\inb{-n^2\inp{\epsilon-\frac{\log{|\cX|}}{n}-\frac{2\log{n}}{n^2}-\log(1+P^{\rand}_{e})}}+ \exp\inb{-n^2\inp{\epsilon-\frac{\log{|\cY|}}{n}-\frac{2\log{n}}{n^2}-\log(1+P^{\rand}_{e})}}\\&+\exp\inp{-n^2\inp{\epsilon-\frac{2\log{n}}{n^2}-\log\inp{1+P^{\rand}_{e}}}},
\end{align*} which is less than $1$ for large enough $n (=: n_0(\epsilon))$ which depends only on the input alphabet sizes and $\epsilon$. 
\end{proof}

\section{Proof of Lemma~\ref{lemma:AV_MAC}}\label{proof_outer_bound}
\begin{proof}
\olive{Consider an $(N_{\one}, N_{\two}, n)$ adversary identifying code $(f_{\one},f_{\two}, \phi)$ such that $P_{e}(f_{\one},f_{\two}, \phi) \leq \epsilon$ 
For $i\in [1:n]$, let $(Q_{i,X'|X}, Q_{i,Y'|Y})$ be an arbitrary sequence of pairs of channel distributions such that each pair satisfies \eqref{eq:outer_bound} and define $\tilde{W}_{i}$ as 
\begin{align}
\tilde{W}_{i}(z|x,y) \defineqq \sum_{x'}Q_{i,X'|X}(x'|x)W(z|x',y) = \sum_{y'}Q_{i,Y'|Y}(y'|y)W(z|x,y') \label{eq:thm:outer_bd}
\end{align}
for all $x,y,z$. Let $Q_{\vecX'|\vecX}\defineqq \prod_{i=1}^{n}Q_{i,X'|X}$, $Q_{\vecY'|\vecY}\defineqq \prod_{i=1}^{n}Q_{i,Y'|Y}$ and $\tilde{W}^{(n)} \defineqq \prod_{i=1}^{n}\tilde{W}_{i}$. Recall that for $\mt \in \cM_{\two}$ and $\phi_{\two}$ as defined in \eqref{eq:feas_conv_dec2}, 
we define $\cE^{\two}_{\mt} = \inb{\vecz:\phi_{\two}(\vecz)\notin\{\mt,\oneb\}}$. Consider a malicious user-\one who chooses $M_{\one}$ uniformly from $[1:N_{\one}]$, passes $f_{\one}(M_{\one})$ over $Q_{\vecX'|\vecX}$, and transmits the resulting vector. We may conclude that  (see \eqref{eq:mal1}),
\begin{align}
P_{e, \malone} 
  &= \max_{\vecx\in\cX^n} \left(\frac{1}{N_{\two}}\sum_{\mt}W^n\inp{\cE^{\two}_{\mt}\Big|\vecx, f_{\two}(\mt)}\right)\notag\\
  &\geq \frac{1}{N_{\two}}\sum_{\mt} \sum_{\vecx}\inp{\frac{1}{N_{\one}}\sum_{\mo} Q_{\vecX'|\vecX}(\vecx|f_{\one}(\mo))} W^n\inp{\cE^{\two}_{\mt}\Big|\vecx, f_{\two}(\mt)}\notag\\
  &= \frac{1}{N_{\one}\cdot N_{\two}}\sum_{\mo, \mt}\sum_{\vecx}Q_{\vecX'|\vecX}(\vecx|f_{\one}(\mo))W^n\inp{\cE^{\two}_{\mt}\Big|\vecx,f_{\two}(\mt)}\notag\\
  &= \frac{1}{N_{\one}\cdot N_{\two}}\sum_{\mo, \mt}\tilde{W}^{(n)}\inp{\cE^{\two}_{\mt}\Big|f_{\one}(\mo),f_{\two}(\mt)}.\label{eq:outerbdconverse:malone}
\end{align}
Similarly, for $\phi_{\one}$ as defined in \eqref{eq:feas_conv_dec1}, $\mo\in\cM_{\one}$ and $\cE^{\one}_{\mo} = \inb{\vecz:\phi_{\one}(\vecz)\notin\{\mo,\twob\}}$,
\begin{align}
P_{e, \maltwo} \geq 
\frac{1}{N_{\one}\cdot N_{\two}}\sum_{\mo, \mt} \tilde{W}^{(n)}\inp{\cE^{\one}_{\mo}\Big|f_{\one}(\mo),f_{\two}(\mt)}.
\label{eq:outerbdconverse:maltwo}
\end{align}
First notice that for $(\mo,\mt)\in \cM_{\one}\times\cM_{\two}$ and $\vecz\in\cZ^n$,
\begin{align*}
&\inb{\vecz:\phi(\vecz)\neq(\mo, \mt)}\\
&=\inb{\vecz:\phi_{\one}(\vecz)\neq\mo}\cup \inb{\vecz:\phi_{\two}(\vecz)\neq\mt}\\
& \stackrel{(a)}{=} \inb{\vecz:\phi_{\one}(\vecz)\notin\{\mo,\twob\}}\cup \inb{\vecz:\phi_{\two}(\vecz)\notin\{\mt,\oneb\}}\\
&=\cE^{\one}_{\mo}\cup\cE^{\two}_{\mt}
\end{align*} where $(a)$ holds by noting from definitions \eqref{eq:feas_conv_dec1} and \eqref{eq:feas_conv_dec2} that $\phi_{\one} = \oneb$ (resp. $\twob$) if and only if $\phi_{\two} = \oneb$ (resp. $\twob$).
Thus,
\begin{align}
&\frac{1}{N_{\one}\cdot N_{\two}}\sum_{\mo, \mt}\tilde{W}^{(n)}\inp{\inb{\vecz:\phi(\vecz) \neq (\mo, \mt)}\Big|f_{\one}(\mo),f_{\two}(\mt)} \nonumber\\
&\qquad = \frac{1}{N_{\one}\cdot N_{\two}}\sum_{\mo, \mt}\tilde{W}^{(n)}\inp{\cE^{\two}_{\mt}\cup\cE^{\one}_{\mo}\Big|f_{\one}(\mo),f_{\two}(\mt)} \nonumber\\
&\qquad\stackrel{\text{(a)}}{\leq}  P_{e, \malone} + P_{e, \maltwo}\nonumber\\
&\qquad\leq 2\epsilon,\nonumber
\end{align}
where (a) follows from a union bound and \eqref{eq:outerbdconverse:malone}-\eqref{eq:outerbdconverse:maltwo}.
Recall that every pair $(Q_{X'|X}, Q_{Y'|Y})$ satisfying \eqref{eq:outer_bound} corresponds to an element in $\tilde{\cW}_W$, which is a convex set (see the discussion after Definition~\ref{defn:outerboundAVMAC}). Thus, for any $\epsilon>0$, an adversary identifying code for the MAC $W$ with an average probability of error $\epsilon$ is also a communication code for the AV-MAC $\tilde{\cW}_{W}$ with an average probability of error at most $2\epsilon$. So, the deterministic coding capacity region $\cC$ of $W$ is contained by the deterministic coding capacity region $\cC_{\AVMAC}(\tilde{\cW}_W)$ of the AV-MAC $\tilde{\cW}_{W}$. }
\end{proof} 

\section{Proof of Theorem~\ref{thm:capacity_equivalence}}
\begin{proof}[Proof of Theorem~\ref{thm:capacity_equivalence}]
We first restate the theorem below.
\begin{Theorem}
 $ \cC=\cC^{\rand}$ whenever $(R_{\one}, R_{\two})\in \cC$ for some $R_{\one}, R_{\two}>0$.
\end{Theorem}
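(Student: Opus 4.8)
The inclusion $\cC \subseteq \cC^{\rand}$ is immediate, since every deterministic adversary identifying code is a (degenerate) randomized adversary identifying code with $L_{\one} = L_{\two} = 1$; moreover, because in our definition of the randomized code we allow the adversarial user to choose its own randomness, the degenerate randomized code has exactly the same probability of error. So the content of the theorem is the reverse inclusion $\cC^{\rand} \subseteq \cC$ under the hypothesis that $(R_{\one}, R_{\two}) \in \cC$ for some $R_{\one}, R_{\two} > 0$. The plan is to take an arbitrary rate pair $(R'_{\one}, R'_{\two}) \in \cC^{\rand}$ together with its sequence of randomized adversary identifying codes driving $P^{\rand}_{e} \to 0$, and build from these a sequence of \emph{deterministic} adversary identifying codes achieving $(R'_{\one}, R'_{\two})$. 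This is done in two phases.

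First I would invoke the randomness reduction lemma (Lemma~\ref{thm:rand_reduc}) to replace each randomized code by one that uses only $n^2$ equiprobable encoder choices per user, i.e.\ only $2\log n$ bits of shared randomness per (honest) user, while keeping the probability of error small (at most $\epsilon$, where $\epsilon$ can be taken to go to $0$ as $n \to \infty$ since $P^{\rand}_e \to 0$ makes the constraint \eqref{eq:rand_reduc_param} eventually satisfiable for any fixed small $\epsilon$, and then $n \geq n_0(\epsilon)$ is eventually met). Second, I would construct a concatenated (two-phase) code: in Phase~1, use the guaranteed positive-rate deterministic adversary identifying code $(f_{\one}, f_{\two}, \phi)$ (which exists by the hypothesis $\mathsf{int}$-type condition, namely $(R_{\one},R_{\two})\in\cC$ with $R_{\one},R_{\two}>0$) for a block of length $n_1 = \Theta(\log n)$ so that each honest user can reliably transmit its $2\log n$ shared-randomness bits to the decoder; in Phase~2, each user uses the $n$-length reduced randomized code with the codeword-collection index determined by the bits sent in Phase~1. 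The total blocklength is $n + n_1 = n(1 + o(1))$, so the rate pair is asymptotically $(R'_{\one}, R'_{\two})$, and we need only check that the overall probability of error still vanishes.

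The key subtlety — and the step I expect to be the main obstacle — is the error analysis across the two phases in the presence of a byzantine user, and this is precisely why the definitions \eqref{eq:rand-mal1} and \eqref{eq:rand-mal2} were set up to let the adversarial user pick its own shared randomness. If, say, user~\one\ is malicious, then in Phase~1 it may send an arbitrary sequence; the Phase~1 decoder's guarantee is that either it correctly recovers user~\two's Phase-1 message (the $2\log n$ bits selecting user~\two's Phase-2 encoder) or it outputs $\oneb$ (blame user~\one). In the former case, Phase~2 proceeds with the decoder knowing user~\two's true encoder index $f_{\two}$; user~\one\ then effectively ``knows'' whatever Phase-2 encoder it will be associated with (since the decoder may have decoded some index $\fo$ for user \one, or user \one\ forced one), matching exactly the adversary model in \eqref{eq:rand-mal1} where the adversary chooses both $\vecx$ and $\fo$; so the Phase-2 error is bounded by $P^{\rand}_{e,\malone} \leq \epsilon$. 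In the latter case (Phase~1 outputs $\oneb$), the overall decoder can simply output $\oneb$ and this is a correct identification, so no error. One must handle the bookkeeping: the final decoder outputs $\oneb$/$\twob$ if \emph{either} phase flags that user; otherwise it outputs the Phase-2 message pair. A union bound over the (vanishing) Phase-1 error, the (vanishing) Phase-2 error, and the honest-case error of both phases gives $P_e \to 0$. I would also need to note, as the excerpt already does, that for the honest-users case $P_{e,\na}$ is controlled by $P_{e,\malone} + P_{e,\maltwo}$ (cf.\ \eqref{honest_error_ub}), which slightly streamlines the argument. The cleanest way to organize the writeup is: (i) state and prove Lemma~\ref{thm:rand_reduc} (already in Appendix~\ref{app:rand_reduc}); (ii) describe the two-phase code; (iii) case-analyze the three error events (both honest; \one\ malicious; \two\ malicious) phase by phase; (iv) conclude by a union bound and sending $n \to \infty$.
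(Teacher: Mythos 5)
Your proposal is correct and follows essentially the same route as the paper: the randomness reduction lemma (Lemma~\ref{thm:rand_reduc}) to bring the shared randomness down to $n^2$ encoders per user, followed by a two-phase concatenated code in which the positive-rate deterministic code sends the encoder indices over a $k_n = o(n)$ prefix block and the reduced randomized code is used in the second phase, with the phase-by-phase case analysis exploiting exactly the adversary-chooses-its-own-randomness feature of \eqref{eq:rand-mal1}--\eqref{eq:rand-mal2} and the bound $P_{e,\na}\leq P_{e,\malone}+P_{e,\maltwo}$. The only cosmetic difference is that the paper absorbs each user's $2\log n$ randomness bits into an enlarged message set $\{1,\ldots,n^2\}\times\cM_i$ to keep the final code deterministic, which is the same bookkeeping you sketch.
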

Since deterministic codes are a subset of randomized codes, $\cC\subseteq \cC^{\rand}$. It only remains to show that $ \cC\supseteq \cC^{\rand}$ whenever $(R'_{\one}, R'_{\two})\in \cC$ for some $R'_{\one}, R'_{\two}>0$.

Let $\epsilon>0$  and large enough $n$ satisfying $n\geq n_0(\epsilon)$ (given by Lemma~\ref{thm:rand_reduc}). Consider an achievable rate pair $(R_{\one}, R_{\two})\in \cC^{\rand}$. This implies that for $\inp{N_{\one}, N_{\two}} = \inp{2^{nR_{\one}}, 2^{nR_{\two}}}$, there exists an $(N_{\one}, N_{\two}, L_{\one}, L_{\two}, n)$ adversary identifying code $(F_{\one}, F_{\two}, \phi_{F_{\one}, F_{\two}})$ with probability of error $P^{\rand}_{e}$ which vanishes with $n$. 
Choose $\epsilon>0$ such that 
\begin{align}
\epsilon>2\log(1+P^{\rand}_{e}).
\end{align} By Lemma~\ref{thm:rand_reduc}, there exists an $(N_{\one}, N_{\two}, n^2, n^2, n)$ adversary identifying code  $(F'_{\one}, F'_{\two}, \phi_{F'_{\one}, F'_{\two}})$  where the distributions $p_{F'_{\one}}$ and $p_{F'_{\two}}$ on the encoder sets $\codeset'_{\one}$ and $\codeset'_{\two}$ (with $|\codeset'_{\one}|=|\codeset'_{\two}|=n^2$) respectively, are uniform,  and the average probability of error is at most $\epsilon$. That is,
\begin{align}
    \frac{1}{N_{\one}\cdot N_{\two}} 
        \sum_{\substack{\mo\in\mathcal{M}_{\one}\\\mt\in\mathcal{M}_{\two}}}
            \sum_{\substack{l_{\one}\in n^2\\l_{\two}\in n^2}}
                \frac{1}{n^2\times n^2}\sum_{\substack{\vecz: \phi_{f_{\one, l_{\one}}, f_{\two, l_{\two}}}(\vecz)\\\notin \{(m_{\one}, m_{\two})\}}}W^{n}\inp{\vecz|f_{\one, l_{\one}}(\mo), f_{\two, l_{\two}}(\mt)}<\epsilon,\label{eq:rand_red_11}
\end{align}
\begin{align}
        \max_{\substack{\vecx\in\cX^{n}\\l_{\one}\in n^2}} 
            \left(\frac{1}{N_{\two}}\sum_{m_{\two}\in \mathcal{M}_{\two}}
                \sum_{l_{\two}\in n^2}
                    \frac{1}{n^2}\sum_{\substack{\vecz: \phi_{f_{\one, l_{\one}}, f_{\two, l_{\two}}}(\vecz)\\\notin \inb{(\cM_{\one}\times \inb{m_{\two}})\cup\inb{\oneb}}}}W^{n}\inp{\vecz|\vecx, f_{\two, l_{\two}}}\right)< \epsilon \text{ and }\label{eq:rand_red_22}
\end{align}
\begin{align}
    \max_{\substack{\vecy\in\cY^{n}\\l_{\two}\in n^2}}
        \left(\frac{1}{N_{\one}}\sum_{m_{\one}\in \mathcal{M}_{\one}}
                \sum_{l_{\one}\in n^2}
                    \frac{1}{n^2}\sum_{\substack{\vecz: \phi_{f_{\one, l_{\one}}, f_{\two, l_{\two}}}(\vecz)\\\notin \inb{(\inb{m_{\one}}\times\cM_{\two})\cup\inb{\twob}}}}W^{n}\inp{\vecz|f_{\one, l_{\one}}(\mo), \vecy}\right)< \epsilon.\label{eq:rand_red_33}
\end{align}

Further, since $(R'_{\one}, R'_{\two})\in \cC$ for $R'_{\one}, R'_{\two}>0$ ({\em i.e.} $(R'_{\one}, R'_{\two})$ is an achievable rate pair), there exists an $(n^2, n^2, k_n)$ code $(\hat{f}_{\one},\hat{f}_{\two}, \hat{\phi})$ where $k_n/n\rightarrow 0$ and 
\begin{align}P_{e}(\hat{f}_{\one},\hat{f}_{\two}, \hat{\phi})\leq \epsilon\label{eq:two}\end{align}
for large enough $n$. We choose sufficiently large $n$ such that \eqref{eq:rand_red_11}, \eqref{eq:rand_red_22}, \eqref{eq:rand_red_33} and \eqref{eq:two} hold. For a vector sequence $\tilde{\vecs}\in \cS^{k_n+n}$ for any alphabet $\cS$, we write $\tilde{\vecs}= ({\hat{\vecs},\vecs})$, where $\hat{\vecs}$ denotes the first $k_n$-length part of $\tilde{\vecs}$ and ${\vecs}$ denotes the last $n$-length part of the $\tilde{\vecs}$. 
Let $(\tilde{f}_{\one},\tilde{f}_{\two}, \tilde{\phi})$ be a new $(\tilde{N}_{\one}, \tilde{N}_{\two},\tilde{n})$ code where $\tilde{n}:=k_n+n$, message set for user-$i\in \{\one, \two\}$, $\tilde{\cM}_i=[1:\tilde{N}_i]:=\{1, 2, \ldots, n^2\}\times [1:{N}_i]$. Further, for $l_{\one}\in [1:n^2],\,    m_{\one}\in [1:{N}_{\one}]$, let $\tilde{m}_{\one} := (l_{\one},m_{\one})$. We define $\tilde{f}_{\one}(\tilde{m}_{\one}) = \tilde{f}_{\one}(l_{\one}, m_{\one}):= \inp{\hat{f}_{\one}(l_{\one}),f_{\one,l_{\one}}(m_{\one})}$. For $\tilde{\vecz}=(\hat{\vecz},\vecz)$, if $\hat{\phi}(\hat{\vecz})=(\hat{l}_{\one}, \hat{l}_{\two})$, we define $\tilde{\phi}(\tilde{\vecz})= \phi_{f_{\one,\hat{l}_{\one}},f_{\two,\hat{l}_{\two}} }(\vecz)$. Otherwise, if $\hat{\phi}(\hat{\vecz})\in\{\oneb, \twob\}$, $\tilde{\phi}(\tilde{\vecz})=\hat{\phi}(\hat{\vecz})$.
Then, for the new code, 

\begin{align}
P_{e,\maltwo} &= \max_{\tilde{\vecy}\in\cY^{\tilde{n}}}\frac{1}{\tilde{N}_{\one}}\sum_{\tilde{m}_{\one}\in \tilde{\mathcal{M}}_{\one}}\bbP\inp{\inp{\tilde{\phi}(\tilde{\vecZ})\notin {(\inb{\tilde{m}_{\one}}\times\tilde{\cM}_{\two})\cup\inb{\twob}}}\bigg|\tilde{\vecX} = \tilde{f}_{\one}(\tilde{m}_{\one}), \tilde{\vecY} = \tilde{\vecy}}\\
&=\max_{(\hat{\vecy}\times \vecy)\in\cY^{\tilde{n}}}\Bigg(\frac{1}{n^2{N}_{\one}}\sum_{(l_{\one}, {m}_{\one})\in \tilde{\mathcal{M}}_{\one}}\bbP\Big(\inp{\hat{\phi}(\hat{\vecZ})\notin \inb{(\inb{l_{\one}}\times[1:n^2])}\cup\inb{\twob}}\bigcup\\
& \qquad\inp{\hat{\phi}(\hat{\vecZ}) = (l_{\one}, l_{\two})\text{ for some }l_{\two}, \, \phi_{f_{l_{\one}},f_{l_{\two}}}(\vecZ)\notin {(\inb{{m}_{\one}}\times{\cM}_{\two})\cup\inb{\twob}}}\bigg|\tilde{\vecX} = \tilde{f}_{\one}(l_{\one}, {m}_{\one}), \tilde{\vecY} = (\hat{\vecy}\times \vecy)\Big)\Bigg)\\
&\leq \max_{\hat{\vecy}\in\cY^{k_{n}}}\frac{1}{n^2}\sum_{l_{\one}\in [1:n^2]}\bbP\Big(\inp{\hat{\phi}(\hat{\vecZ})\notin \inb{(\inb{l_{\one}}\times[1:n^2])}\cup\inb{\twob}}\bigg|\hat{\vecX} = \hat{f}_{\one}(l_{\one}), \hat{\vecY} = \hat{\vecy}\Big)\\
&\qquad+\max_{\stackrel{\vecy\in\cY^{n}}{l_{\two}\in [1:n^2]}}\frac{1}{n^2{N}_{\one}}\sum_{\stackrel{l_{\one}\in [1:n^2]}{{m}_{\one}\in {\mathcal{M}}_{\one}}}\bbP\inp{\phi_{f_{l_{\one}},f_{l_{\two}}}(\vecZ)\notin {(\inb{{m}_{\one}}\times{\cM}_{\two})\cup\inb{\twob}}}\bigg|{\vecX} = f_{\one, l_{\one}}( {m}_{\one}), {\vecY} =  \vecy\Big)\\
&\leq 2\epsilon
\end{align} where the last inequality follows from \eqref{eq:rand_red_33} and \eqref{eq:two}. Similarly, we can argue that $P_{e,\malone}\leq 2\epsilon$. Next, we note from \eqref{honest_error_ub} that $P_{e,\na}\leq P_{e,\malone} +P_{e,\maltwo}\leq 4\epsilon$.
\end{proof}

\newpage
\bibliography{refs} 
\bibliographystyle{ieeetr}


\end{document}